\documentclass[reqno,11pt,titlepage]{amsart}

\usepackage[backend=biber,style=alphabetic]{biblatex}
\DeclareFieldFormat{labelalpha}{\thefield{entrykey}}

\DeclareFieldFormat{labelalpha}{%
  \ifnameundef{labelname}
    {\thefield{year}}
    {%
      \printnames[labelalpha][1-1]{labelname}
      \printfield{year}%
    }%
}
\DeclareNameFormat{labelalpha}{%
  \nameparts{#1}%
  \usebibmacro{name:family}{\namepartfamily}{}{}{}%
}

\usepackage{fancyhdr}

\usepackage{comment}
\usepackage{tikz}
\usetikzlibrary{arrows.meta}

\newtheorem{proposition}{Proposition}
\newtheorem{lemma}{Lemma}
\newtheorem{thm}{Theorem}
\newtheorem{prp}[thm]{Proposition}
\newtheorem{cor}[thm]{Corollary}
\newtheorem{hyp}{Hypothesis}
\newtheorem{definition}{Definition}
\newtheorem{example}{Example}

\usepackage{wrapfig} 
\usepackage{graphicx}
\usepackage{amssymb}
\usepackage{amsmath}
\usepackage{amsfonts}
\usepackage{amsthm}
\usepackage{color}
\usepackage[colorlinks=true, allcolors=blue]{hyperref}
\usepackage[mathscr]{eucal}
\usepackage{enumerate}
\usepackage{times}
\usepackage{algorithm}
\usepackage{algorithmic}
\usepackage[skip=10pt]{caption} 

\usepackage{xpatch}

\usepackage{lipsum} 

\makeatletter
\xpatchcmd{\@setaddresses}
  {\addvspace\bigskipamount}
  {\penalty\z@\vskip\bigskipamount}
  {}{}
\makeatother
\usepackage{booktabs}
\usepackage{changepage}
\usepackage{pifont}

\usepackage{longtable} 

\usepackage[a4paper,margin=3cm]{geometry} 

\allowdisplaybreaks[1]

\newtheorem{Def}{Definition}

\newtheorem{Remark}[Def]{Remark}
\newtheorem{Claim}[Def]{Claim}

\newcommand{\beq}{\begin{equation}}
\newcommand{\eeq}{\end{equation}}
\newcommand{\beqa}{\begin{eqnarray}}
\newcommand{\eeqa}{\end{eqnarray}}
\newcommand{\Proof}{\begin{proof}}
\newcommand{\QED}{\end{proof} \noindent}

\newtheorem{Conj}{Conjecture}

\newcommand{\Cay}{\textrm{Cay}}
\newcommand{\Sch}{\textrm{Sch}}
\newcommand{\Inv}{\mathrm{Inv}}
\newcommand{\Area}{\mathrm{Area}}

\usepackage{todonotes}

\title[CayleyPy-4: AI-Holography]{CayleyPy-4: AI-Holography. 
Towards analogs of holographic string dualities for AI tasks.\\
(Preliminary version) }

\author[A.~Chervov]{A. Chervov}
\address{Institut Imagine, Paris, France}
\email[A.~Chervov]{al.chervov@gmail.com}

\author[F.~Levkovich-Maslyuk]{F. Levkovich-Maslyuk}
\address{Centre for Mathematical Science, City St George’s, University of London}
\email[F.~Levkovich-Maslyuk]{fedor.levkovich-maslyuk@citystgeorges.ac.uk}

\author[A.~Smolensky]{A. Smolensky}
\address{Neapolis University Pafos, Cyprus}
\email[A.~Smolensky]{andrei.smolensky@gmail.com}

\author[F.~Khafizov]{F. Khafizov}
\address{University of Texas at Dallas}
\email[Farid~Khafizov]{farid.khafizov@utdallas.edu}

\author[I.~Kiselev]{I. Kiselev}
\address{Accenture}
\email[I.~Kiselev]{igor.kiselev@gmail.com}

\author[D.~Melnikov]{D. Melnikov}
\address{International Institute of Physics}
\email[Dmitry~Melnikov]{dmitry.melnikov@iip.ufrn.br}

\author[I.~Koltsov]{I. Koltsov}
\address{Independent Researcher}
\email[I.~Koltsov]{ivankolt@gmail.com}

\author[S.~Kudashev]{S. Kudashev}
\address{Independent Researcher}
\email[S.~Kudashev]{sergey0474@gmail.com}

\author[D.~Shiltsov]{D. Shiltsov}
\address{Independent Researcher}
\email[D.~Shiltsov]{da.shiltsov@gmail.com}

\author[M.~Obozov]{M. Obozov}
\address{Research Center of the Artificial Intelligence Institute, Innopolis University}
\email[M.~Obozov]{obozovmark9@gmail.com}

\author[S.~Krymskii]{S. Krymskii}
\address{Stanford University}
\email[Stanislav~Krymskii]{skrymskii@stanford.edu}

\author[V.~Kirova]{V. Kirova}
\address{NRNU MEPhI (National Research Nuclear University)}
\email[Valeriia~Kirova]{valeriia.kirova@mephi.ru}

\author[E.~V.~Konstantinova]{E.V. Konstantinova}
\address{Three Gorges Mathematical Research Center, China Three Gorges University, Sobolev Institute of Mathematics, Novosibirsk State University}
\email[E.~V.~Konstantinova]{ e\_konsta@ctgu.edu.cn, e\_konsta@math.nsc.ru}

\author[A.~Soibelman]{A. Soibelman}
\address{IHES}
\email[A.~Soibelman]{asoibelman@gmail.com}

\author[S.~Galkin]{S. Galkin}
\address{PUC-Rio, Departamento de Matem\'atica, Rua Marqu\^es de S\~ao Vicente 225, G\'avea, Rio de Janeiro, Brazil}
\email[S.~Galkin]{sergey@puc-rio.br}

\author[L.~Grunwald]{L. Grunwald}
\address{Sobolev Institute of Mathematics, The Mathematical Center in Akademgorodok}
\email[L.~Grunwald]{mathmanlily@gmail.com}

\author[A.~Kotov]{A. Kotov}
\address{University of Hradec Králové}
\email[Alexei~Kotov]{alexei.kotov@uhk.cz}

\author[A.~Alexandrov]{A. Alexandrov}
\address{IBS Center for Geometry and Physics}
\email[Alexander~Alexandrov]{alexander.alexandrov@ibs.re.kr}

\author[S.~Lytkin]{S. Lytkin}
\address{Kazakh-British Technical University}
\email[S.~Lytkin]{smlytkin@gmail.com}

\author[D.~Fedoriaka]{D. Fedoriaka}
\address{University of Washington}
\email[D.~Fedoriaka]{fedimser@cs.washington.edu}

\author[A.~Chevychelov]{A. Chevychelov}
\address{Independent Researcher}
\email[A.~Chevychelov]{heavy4evy@gmail.com}

\author[Z.~Kogan]{Z. Kogan}
\address{Independent Researcher}
\email[Z.~Kogan]{zahar1991@gmail.com}

\author[A.~Natyrova]{A. Natyrova}
\address{Independent Researcher}
\email[A.~Natyrova]{natyrovaaltana@gmail.com}

\author[L.~Cheldieva]{L. Cheldieva}
\address{Independent Researcher}
\email[L.~Cheldieva]{liuda.tarusina@gmail.com}

\author[O.~Nikitina]{O. Nikitina}
\address{Independent Researcher}
\email[O.~Nikitina]{ol.ya.nik.dev@gmail.com}

\author[S.~Fironov]{S. Fironov}
\address{Independent Researcher}
\email[Sergei~Fironov]{sergei.fironov@iai.spb.ru}

\author[A.~Vakhrushev]{A. Vakhrushev}
\address{Independent Researcher}
\email[Anton~Vakhrushev]{anton.vakhrushev.math@gmail.com}

\author[A.~Lukyanenko]{A. Lukyanenko}
\address{Independent Researcher}
\email[Andrey~Lukyanenko]{andrey.lukyanenko.math@gmail.com}

\author[V.~Ilin]{V. Ilin}
\address{University of Washington}
\email[Vasily~Ilin]{vasilyi@uw.edu}

\author[D.~Gorodkov]{D. Gorodkov}
\address{Independent Researcher}
\email[Denis~Gorodkov]{denis.gorodkov.math@gmail.com}

\author[N.~Bogachev]{N. Bogachev}
\address{University of Toronto}
\email[Nikolay~Bogachev]{n.bogachev@utoronto.ca}

\author[I.~Gaiur]{I. Gaiur}
\address{IHES (L'Institut des Hautes Études Scientifiques)}
\email[Ilia~Gaiur]{ilia.gaiur@ihes.fr}

\author[M.~Zaitsev]{M. Zaitsev}
\address{Higher School of Economics}
\email[Mikhail~Zaitsev]{mrzaytsev@edu.hse.ru}

\author[F.~Petrov]{F. Petrov}
\address{St. Petersburg State University}
\email[Fedor~Petrov]{fedyapetrov@gmail.com}

\author[L.~Petrov]{L. Petrov}
\address{University of Virginia, Charlottesville}
\email[Leonid~Petrov]{lenia.petrov@gmail.com}

\author[T.~Gaintseva]{T. Gaintseva}
\address{Queen Mary University of London}
\email[Tatiana~Gaintseva]{t.gaintseva@qmul.ac.uk}

\author[A.~Gavrilova]{A. Gavrilova}
\address{Independent Researcher}
\email[Alina~Gavrilova]{alinagavrilova2024@gmail.com}

\author[M.~N.~Smirnov]{M. N. Smirnov}
\address{Independent Researcher}
\email[Maxim~N.~Smirnov]{maxim.n.smirnov@gmail.com}

\author[N.~Kalinin]{N. Kalinin}
\address{Guangdong Technion-Israel Institute of Technology}
\email[Nikita~Kalinin]{nikita.kalinin@gtiit.edu.cn}

\author[A.~Khan]{A. Khan}
\address{Independent Researcher} 
\email[Anastasiia~Khan]{boykova.irk@yandex.ru} 

\author[K.~Jung]{K. Jung}
\address{Independent Researcher} 
\email[Kyuseok~Jung]{wjdrbtjr495@gmail.com}

\author[H.~Mousset]{H. Mousset}
\address{Centrale Lyon}
\email[Hugo~Mousset]{hugo.mousset@etu.ec-lyon.fr}

\author[H.~Isambert]{H. Isambert}
\address{Institut Curie, CNRS UMR168, Paris, France}
\email[H.~Isambert]{Herve.Isambert@curie.fr}

\author[O.~Debeaupuis]{O. Debeaupuis}
\address{Institut Curie, CNRS UMR168\\
Imagine Institute, INSERM UMR 1163, Paris, France}
\email[O.~Debeaupuis]{orianne.debeaupuis@curie.fr, orianne.debeaupuis@institutimagine.org, orianne.debeaupuis@gmail.com}

\begin{document} 

\begin{abstract}

This work is the fourth paper in the CayleyPy project, which aims to apply AI-based methods to large graphs. Here, we propose connections to a novel discretized analogue of holographic string dualities originating in theoretical physics. We argue that this perspective can lead to more efficient approaches to a wide range of AI tasks.

Many modern AI problems—such as those addressed by GPT-style language models or reinforcement learning systems—can be viewed as direct analogues of predicting particle trajectories on graphs. We hypothesize that these tasks admit a holographically dual description in terms of discrete strings, and that working in this dual representation can provide a more tractable formulation of the original problems. In particular, strings - holographic images of states are proposed as natural candidates for embeddings, motivated by the “complexity = volume/action” principle in AdS/CFT. 

In a simple illustrative example, the ROC curves serve as holographically dual strings to the nodes of some graphs. 
From a mathematical standpoint, we expect that all properties of graphs can be expressed entirely within this duality framework,  yielding nontrivial  identities. Furthermore, for Cayley graphs of the symmetric group $S_n$, we conjecture that the corresponding dual objects are planar polygons and present various examples. Graph diameters equal  the number of integer lattice points in the $n$-scaled polygon (Ehrhart quasi-polynomials). 
Vertices of graphs can be mapped (``holography") to lattice paths inside the polygon in such a way that word metrics (“gate complexities”) are equal to the areas under the corresponding paths
in accordance to “complexity = volume/action” principle. 
This thus provides an explanation for the quasi-polynomiality conjecture regarding diameters and word metrics from our previous paper.
Stanley-type formulas for counting shortest paths can be reinterpreted as identities relating particle extremals on Cayley graphs to string extremals on the associated polygons. In some cases, the graph Laplacian coincides with integrable spin-chain Hamiltonians and yields conformal field theories in the large-size limit.
We also study the corresponding 
H-polynomials and their properties, including positivity, unimodality, duality, and analogues of the Riemann conjecture.

\bigskip
\begin{center}\large
Project page:
\quad 
\url{https://github.com/CayleyPy/CayleyPy}
\quad
\includegraphics[width=.04\textwidth]{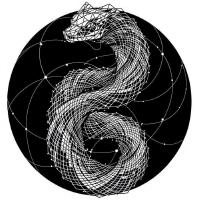}
\end{center}
\end{abstract}

\keywords{Machine learning, reinforcement learning, Cayley graphs}

\maketitle 

\setcounter{tocdepth}{2}
\small
\tableofcontents
\normalsize

\section{Introduction}

The deep learning revolution is one of the most exciting scientific breakthroughs of our time.
The \href{https://en.wikipedia.org/wiki/History_of_string_theory#1994%E2%80%932003:_Second_superstring_revolution}{second superstring revolution} \cite{Witten1995, SchwarzSecondSuperstringRevolution} was  one of these in the past.
The present paper takes a step toward merging these two.
The theme of duality is one of the central ideas in theoretical physics: the belief that for many complex systems there exists a “dual” description in which questions that are difficult in the original (“strong-coupling”) formulation become simpler in the dual (“weak-coupling”) one (e.g. \cite{PolyakovGaugeStrings}).
In particular, the  \href{https://en.wikipedia.org/wiki/AdS/CFT_correspondence}{AdS/CFT} 
holographic duality~\cite{Maldacena1998} marked the culmination of the second superstring revolution with the discovery of the long-sought duality between gauge theory and string theory, and has since become the most cited paper in high-energy theoretical physics.
Here, we present evidence that a similar principle may apply to artificial intelligence tasks, arguably leading to more efficient approaches than those currently available. 

We draw a parallel between dualities in physics and the concept of embedding (“latent representation”) in cognitive theory, both for natural and artificial neural networks. 
The role of embeddings in AI is analogous to the notion of duality in physics: they transform an input representation, in which semantic relations are difficult to analyze, into a representation where these relations become more tractable. 
A well-known example in ML is the relation 
"king - man + woman = queen" 
(\href{https://en.wikipedia.org/wiki/Word2vec}{Word2vec} \cite{MikolovSutskever2013}), where a nontrivial semantic relationship is mapped to a simple vector arithmetic operation. 
Finding appropriate embeddings is  a central problem in modern AI. 
We argue that the perspective of the AdS/CFT correspondence offers a new conceptual framework for approaching this problem.


\textbf{Brief outline of the main ideas.} \textbf{(1)} Setup. The starting point is that AI tasks such as language or RL modeling can be viewed as particle trajectory prediction tasks on edge-labeled graphs. \textbf{(2)} We hypothesize that in many cases such particle systems admit a holographically dual string description (in the spirit of AdS/CFT), converting difficult problems into more tractable ones. We outline how this can be used to build more effective AI systems: embeddings ("latent representations") are strings, training  is based on the "complexity = volume" principle in AdS/CFT. From the mathematical viewpoint, we expect that all properties of graphs can be seen from the dual side, as is typical in string dualities,  leading to new mathematical insights. \textbf{(3)} For the case of $S_n$ Cayley graphs, we present significant evidence that the dual objects are planar rational polygons, and that holography maps nodes of the graphs to lattice paths (discrete strings) inside the polygons. We also present connections to various deep questions and conjectures in mathematics. \textbf{(4)} As simplest examples, ROC curves (Dyck paths) can be used to illustrate the ideas in a clear and straightforward manner.


Surprisingly, the  simplest example -- which is just the Cayley (Schreier) graph generated by neighboring transpositions -- is connected to several advanced areas of research: quantum integrable systems and the Bethe ansatz (since the graph Laplacian coincides with the Hamiltonian of the Heisenberg spin chain), Mahler measures  (related to supersymmetric Landau–Ginzburg models and mirror symmetry for Fano varieties ) for counting spanning trees, 
Burgers and Kardar–Parisi–Zhang (\href{https://en.wikipedia.org/wiki/Kardar%E2%80%93Parisi%E2%80%93Zhang_equation}{KPZ}) equations for description of geodesic flow in a dual formulation for large size, limit shapes of Young diagrams,  
Beilinson's conjectures on special values of $L$-functions, and more.
Moreover, we propose that conformal field theory arising in large size limit for such graphs is dual (in  AdS/CFT manner) to a simple free scalar with radius of compactification related to normalized area of the dual polygon (in this case it plays the role of the dual "string") and propose conjectures
on spectrum of conformal dimensions.

\bigskip

Let us present a more detailed outline.

\begin{enumerate}
    \item \textbf{AI as particle on edge-labeled graphs}.
    AI tasks involving  languages or reinforcement learning can be naturally interpreted 
    as problems of predicting particle dynamics on edge-labeled graphs 
    (for instance, Cayley graphs). 
    The data --- texts or sequences of actions --- correspond to particle trajectories, 
    that is, sequences of traversed edges. Such edge ("token") sequences naturally define 
    the corresponding “texts.”
    Typical tasks then amount to determining the trajectory  given initial (the "prompt" for generative models)
    or boundary conditions (the "masked language modeling objective" for e.g. BERT or "win the game/reach the goal" in RL).

    \item \textbf{Particle-string holographic duality.}
    We hypothesize that many such tasks admit a holographically dual equivalent description 
    in terms of discrete string theory, transforming a difficult problem on the 
    original graph-side ("CFT-side") into a more tractable one on the dual string-side ("AdS-side"), 
    in the spirit of strong--weak coupling duality, thus giving a key to build more efficient AI-systems.
    This perspective is inspired by the landmark AdS/CFT correspondence.
    We further argue that strings (viewed as holographic duals of states) 
    provide natural candidates for embeddings (latent representations), 
    and that unconventional (“tropical”) string actions may be required 
    to capture phenomena intrinsic to discrete settings.
    We remark that “embeddings,” whether in natural or artificial neural systems, 
    play a role analogous to “duality” in physics: 
    they transform input data into representations that are more tractable 
    than the original formulation. The celebrated AdS/CFT principle "complexity = volume/action" can be used as a training objective for string embeddings, and also plays many other key roles.
    From a mathematical point of view, we expect that all properties 
    of the graph admit a complete description in terms of the dual theory, as it is typical in string dualities.
    This dual perspective may provide new results and structural insights, 
    some of which are presented here. We suggest various connections with integrable systems, matrix models, 
    conformal field theory, cluster algebras, the thermodynamic Bethe ansatz, and related structures.

    \item \textbf{$S_n$ Cayley graphs / planar polygon duality.}
    In the case of Cayley graphs of the permutation group $S_n$, 
    we propose that the holographically dual objects are planar rational polygons.
    The holographic correspondence maps graph nodes to lattice paths 
    (discrete strings).
    Diameters and word metrics can then be described in terms of 
    Ehrhart quasi-polynomials of the associated polygons 
    or of their subregions (e.g.\ areas under lattice paths), 
    thus explaining the quasi-polynomiality conjecture from our previous paper.
    This is consistent with the “complexity = volume/action” principle 
    from AdS/CFT: 
    complexities coincide with word metrics in Cayley graphs, 
    while counting lattice points corresponds to discretized volumes.
    Using the CayleyPy AI-based library and methodology, we obtained various results 
    and conjectures concerning the corresponding quasi-polynomials 
    and related $H$-polynomials. 
    That is, we compute “complexities” and demonstrate their agreement with the corresponding “volumes.”

    \item \textbf{Simplest examples. ROC-curves (Dyck like paths) as strings.}
    As a starting point, we present simple and explicit examples 
    in which classical ROC curves can be interpreted as discrete strings 
    holographically dual to nodes of an appropriate graph.
    Several combinatorial facts then admit a natural interpretation 
    in terms of discrete string duality.
     Surprisingly, even this simple example is connected to deep recent 
    mathematical results, and in fact touches upon several open problems.

\end{enumerate}

\subsection{ Main hypothesis: particle-string holographic duality for AI-tasks}
The \href{https://en.wikipedia.org/wiki/AdS/CFT_correspondence}{AdS/CFT} 
holographic duality~\cite{Maldacena1998} 
predicts that difficult questions on the CFT side ("strong coupling") can be computed via
more tractable geometric methods on the AdS side (where string theory at "weak coupling" reduces to gravity). 
We hypothesize the existence,
outline expectations and elaborate several  examples 
of a similar holographic duality in the context of graphs, languages, and reinforcement-learning “environments” — systems of interest in artificial intelligence, mathematics, and physics.
First, we emphasize that many such settings can be viewed as particle trajectories on edge-labeled graphs (e.g. 
\href{https://en.wikipedia.org/wiki/Cayley_graph}
{Cayley graphs}), where the trajectories define a “language,” i.e., sequences of admissible tokens/moves.
Second, we hypothesize that a “particle” moving on such graphs may admit a holographically dual description as a (discrete) “string” living on an appropriate dual object. If such a duality exists, the string description may be more tractable (analogous to a weak-coupling regime) and could provide a key to constructing more powerful AI models.
In particular we expect that strings understood as holographic images of states may provide "good embeddings".
Our examples suggest that one should consider unusual (“tropical”) analogues of discrete string actions that involve functions of the form $\max(\cdot,0)$ (also known as  \href{https://en.wikipedia.org/wiki/Rectified_linear_unit}{ReLU}
). Such terms ensure that, even after fixing initial or boundary conditions, the equations of motion retain substantial local freedom in their solutions — a feature that appears necessary for describing discrete setups.
From a mathematical perspective, we expect that the properties of a graph (or the associated language, etc.) may be computable from the dual object, potentially leading to non-trivial identities and new structural insights. In particular, various combinatorial
results might be reinterpreted  as manifestations of string dualities.


If indeed true as stated, our proposal would imply that string-theoretic holographic dual descriptions, in the sense outlined above, may exist for a wide range of AI systems. Examples of this could potentially include even such settings as English and other natural languages; programming languages; robotic manipulator systems; mathematics viewed as a formal proof system; games such as Go, chess, etc.; the languages of life (admissible protein and DNA sequences); chemical molecules encoded in databases; and so on. One may even argue that one reason a duality or simplification of this kind might exist for e.g. English language is that we indeed can learn it! Both humans and machines master it, which would be impossible without
ability to convert information to some latent representation where complex semantic relations can be seen via simple operations
(which is the basically the duality: complex to simple). Our current proposal is more subtle: we expect that duality may be holographic in the spirit of AdS/CFT and that these efficient latent representations ("embeddings") could be interpreted as strings -- that is, as holographic images of states (sentences in this case). Downstream computations performed on these embeddings may naturally involve Riemannian metrics --reminiscent of the AdS metric -- while more subtle phenomena could correspond to finite-size corrections. More broadly, this perspective may even lead to a deeper geometric understanding of key concepts in AI.


\subsection{ Guiding principle: “complexity = volume/action” }
The celebrated AdS/CFT principle “complexity = volume/action” (L. Susskind et al: \cite{StanfordSusskind2014, BrownSusskind2016}), together with subsequent works  (including \cite{Lin2019} and work by one of the present authors, D. Melnikov,  \cite{CamiloMelnikovNovaesPrudenziati2019}) that bridge this idea with \href{https://en.wikipedia.org/wiki/Cayley_graph}{Cayley
 graphs}, serves as one of the key insights and guiding frameworks for what follows.
 In the graph-theoretic setting, complexity admits a very simple interpretation. The complexity of one node with respect to another is defined as the length of the shortest path between them in the graph.
 Then the "complexity=volume/action" (abbreviated to "C=V/A" below) principle becomes a particular manifestation of particle-string duality. Indeed, the extremal action of a particle -- namely, the length of the shortest path -- can be directly viewed as "complexity". According to the duality principle, it should be equivalent to the extremal action of the corresponding string worldsheet, which is naturally related to a volume-type quantity.

Let us list a few other reasons this principle is important for us.  
One key point is that establishing the duality itself is not expected to be a trivial task, even for relatively simple graphs. In contrast, results supporting the "complexity = volume/action" principle
appear to be more accessible, and in some cases even numerical simulations can provide valuable insight.
In particular, the primary goal of the CayleyPy project is to develop AI-based tools to estimate
``complexities'' (i.e. lengths of shortest paths) using modern machine learning methods.
Previous papers in the project have already achieved state-of-the-art results in this direction.
Moreover, one of the key conjectures formulated in our earlier work is that the ``complexities''
(diameters and word metrics) for $S_n$-Cayley graphs are quasi-polynomials in $n$.
In the present paper, we propose an extension of this conjecture: we observe that these
quasi-polynomials appear to be closely related to
\href{https://en.wikipedia.org/wiki/Ehrhart_polynomial}{Ehrhart quasi-polynomials}
of certain planar polygons.
This conjecture can be viewed as a manifestation and refinement of the ``C=V/A'' principle
for $S_n$-Cayley graphs, since it identifies complexities with the number of lattice points
in corresponding polygons. These lattice point counts may be interpreted as discrete
``volumes,'' or ``volumes''  with  finite-size corrections.

Another reason is that the principle “C = V/A” can itself be viewed as a manifestation of a strong–weak coupling duality. Computing complexity for large systems (such as graphs) is typically extremely difficult—often NP-hard—whereas the computation of geometric quantities like volumes is comparatively more tractable. In this way, a complicated problem is translated into a simpler one through a dual description. 
This perspective also underlies our proposal to interpret strings (i.e., holographic images of states) as good embeddings (latent representations). The primary goal of embeddings—whether in natural neural networks or artificial systems—is to transform an original representation of information into a more tractable format, so that questions which are difficult in the original representation become easier in the embedding space.
From this viewpoint, the principle “C = V/A” can be seen as predicting precisely such a mechanism: by representing states through strings, one effectively moves to a dual geometric description where complexity-like quantities become accessible through simpler geometric computations. We provide a more detailed discussion of this perspective in the main text.

 In the present work, we use the term ``area'' rather than ``volume'', since our examples are planar. 
We also typically use other terms instead of “complexity,” which are more standard in graph theory. 
 Often, a distinguished reference node is chosen—such as the identity element in the case of a Cayley graph—and the complexity of all other nodes is measured relative to this reference. In group-theoretic language, this notion is known as the \href{https://en.wikipedia.org/wiki/Word_metric}{word metric}. In coding theory and bioinformatics, it is related to metric codes and evolution metrics and mutations, respectively (see~\cite{konstantinova2008some} for more details). In computer science, it is closely related to \href{https://en.wikipedia.org/wiki/Circuit_complexity}{circuit  complexity}, while in quantum computing it appears as quantum gate complexity, quantifying the minimal number of elementary gates required to generate a given element. Complexity optimization is one of the central challenges in quantum computing \cite{nam2018automated,ruiz2025quantum}, while the computation of word metrics is a fundamental problem in computational group theory. Previous papers from the CayleyPy project \cite{CayleyPy1Cube,CayleyPy2RL,Cayley3Growth} have provided state-of-the-art solutions to this problem.

\subsection{\texorpdfstring{$S_n$-Cayley graphs to planar polygon duality: lattice paths as discrete strings holographically dual to graph nodes}{Sn-Cayley graphs to planar polygon duality: lattice paths as discrete strings holographically dual to graph nodes}
}
The present paper elaborate the  conjectures 
for $S_n$-Cayley and Schreier graphs. We propose that the corresponding dual objects are rational planar polygons (or line segments in degenerate cases).  
The holographic map associates each node of the graph with a lattice path on the dual polygon.  
Our central conjecture is that the diameter of the graph equals the number of integer lattice points in the polygon, while more generally, the word metric of any node corresponds to the area under its associated lattice path.  
Consequently, both diameters and word metrics can be described as 
\href{https://en.wikipedia.org/wiki/Ehrhart_polynomial}{Ehrhart quasi-polynomials} of the $n$-rescaled polygon, providing a natural explanation for the quasi-polynomiality conjecture in our previous work \cite{Cayley3Growth}.  
These conjectures can be viewed as refinements of the ``complexity = area'' principle, since word metrics—interpreted as computational complexities—are predicted to correspond directly to areas under the dual lattice paths.

The planarity of those polygons is conditioned to the celebrated 50+ years old  open problem \cite{Rubtsov1975}
(reviews \cite{glukhovzubov1999lengths}, \cite{Helfgott2013GrowthIdeas})
that diameters of $S_n$ Cayley graphs are bounded by $n^2$. 
The conjecture can be a seen a special case of the L.Babai conjecture on diameter of any finite
simple group, for which certain progress has been achieved by T.Tao et.al. 
Still both conjectures are widely open,  even establishing a polynomial bound is not achieved.
For further discussion and refinements, we refer to \cite{Cayley3Growth} and discussions below.

Each quasi-polynomial has a naturally associated $H$-polynomial, that is simply the numerator of the generating function, 
which coincides with the Poincar\'e polynomial of the corresponding
toric variety under appropriate smoothness assumptions.
Important questions in combinatorics are whether these $H$-polynomials
satisfy the same properties as those arising from  "good" toric varieties,
namely: positivity, duality (Poincar\'e symmetry), unimodality,
and an analogue of the ``Riemann conjecture'' (i.e., that all roots
lie on the unit circle).
In the present paper, we compute various $H$-polynomials associated with
diameters and word metrics, and investigate their properties, formulating
a number of conjectures.
In some cases, all of the expected properties hold; however, in other
situations we observe that each of them may be violated.

The cases of the Cayley graphs with neighbor transposition (Coxeter) generators of $S_n$ 
are the most classical ones. We present quite an detailed picture of the dualities here.
The holography map can be described 
by e.g. \href{https://en.wikipedia.org/wiki/Lehmer_code}{Lehmer codes}
or by related constructions e.g. ROC-curves. 
"Complexity = Area" reduces to known statements that areas under the ROC-curves correspond to Mann-Whitney
statistics, and similar statements for the Lehmer codes.
Bijective description of  shortest paths on these graphs \cite{Stanley1984,EdelmanGreene1987} can be interpreted as bijections between the extremals of a particle and the corresponding strings, in accordance with the conjectured duality. In this framework, extremal string worldsheets are naturally associated with Young tableaux. 
String action is "tropical" (or "ReLU") analogue of the conventional string action. 
Graph Laplacians can be identified with the Hamiltonians of Heisenberg XXX spin chains. Etc.
The other cases studied in the present paper are generalizations of
neighboring transpositions to $k$-neighbor versions, such as
$k$-consecutive cycles of the form $(i,i+1,i+2,\dots,i+k-1)$ and their variations.
We present evidence that, in this setting, the results can be described
by a $(k-1)$-shrinkage principle: namely, the quantities (in particular dual polyygons) in the standard
case should be rescaled by $(k-1)$ to obtain the corresponding results
in the general case, at least at the level of leading terms.
We  present various conjectures on diameters, word metrics, as well as the
corresponding quasi-polynomials and $H$-polynomials in these generalized
settings, obtained with our CayleyPy library.

In general, determining the dualities and the corresponding dual polygons
is a highly nontrivial task. Already the computation of diameters and
word metrics is well known to be difficult in its own right.
Our approach is based on the CayleyPy library, in particular on its
AI component. The heuristic strategy for identifying dual polygons
is as follows. First, we attempt to determine quasi-polynomials for
the diameters. Second, we try to identify polygons whose Ehrhart
quasi-polynomials match the observed data.
To determine diameters, we proceed in several steps. Using efficient
implementations within CayleyPy, we perform brute-force computations
of diameters for the first several values of $n$, and we  identify
the corresponding longest elements (states). Next, we attempt to
detect patterns in these longest states and formulate conjectural
descriptions valid for all $n$. If this step is successful, we then
use the AI component of CayleyPy to compute word metrics of these
candidate longest states, which allows us to reach significantly
larger values of $n$ than are accessible by brute force alone.
With sufficiently extensive data at hand, we fit quasi-polynomials
for the diameters and compare them with Ehrhart quasi-polynomials
of suitable polygons. We demonstrate that this approach is successful
in a number of cases.


\subsection{ List of  contributions }
Let us outline contributions of the present paper in the itemized form, which mainly follow the order of the exposition:

\begin{itemize}
    \item {\bf Simplest examples. ROC curve (Dyck like paths).}    
    Surprisingly, the basic quality metrics in machine learning such as the \href{https://en.wikipedia.org/wiki/Receiver_operating_characteristic}{ROC curve} and \href{https://scikit-learn.org/stable/modules/generated/sklearn.metrics.roc_auc_score.html}{ROC-AUC score}
    can be used to illustrate our holographic duality for graphs in a very simple and explicit way. In this framework, ROC curves (which are quite similar to \href{https://en.wikipedia.org/wiki/Catalan_number#Applications_in_combinatorics}{Dyck paths}) are strings “holographically dual” to nodes of a certain class of graphs. ($S_n/(S_k \times S_{n-k})$ or "Grassmanians" over "field with one element.")
    Within this setting, the analogue of the AdS/CFT relation “complexity = area” reduces to the familiar equality between the Mann–Whitney statistic and the area under the ROC curve.

    Bijective description of  shortest paths on these graphs \cite{Stanley1984,EdelmanGreene1987} can be interpreted as bijections between the extremals of a particle and the corresponding strings, in accordance with the conjectured duality. In this framework, extremal string worldsheets are naturally associated with Young tableaux. 
    String action is "tropical" (or "ReLU") analogue of the conventional string action. 

    Graph Laplacians can be identified with the Hamiltonians of Heisenberg XXX spin chains. The number of spanning trees is related to the Mahler measure, in accordance with general expectations. We also present an analysis of the limit shapes of ROC curves and outline several open questions.
        
        

    \item {\bf Charting general holographic duality for AI.} 
    We hypothesize that for many graphs and corresponding AI-systems (languages, and RL-environments) there are  dual objects whose features include:
    \begin{itemize}
        \item {\bf \href{https://en.wikipedia.org/wiki/Holographic_principle}{Holography}.} \cite{tHooft1993, Susskind1995Hologram}  Meaning that $d$-dimensional objects on graph side are mapped to $(d+1)$-dimensional ones on the dual side. I.e.  graph nodes to paths ("strings") on the dual side, paths on graph to 2-dimensional objects - like Young tableaux ("string worldsheets"). We expect that the strings which are holographic images of states and "good embeddings".


        \item  {\bf Particle on the graph side (``CFT"-side) = Discrete String on dual side (``AdS"-side).} 
        As is typical in string dualities, we expect that a theory defined on one side is equivalent to a theory defined on the other side, and our proposal consists of duality between particle theory and string theory.



        
        \item  {\bf Corollary: Complexity = Area/Action.} 
        This is a consequence of the previous principle: values of action for particles on extremals are 
        lengths of the shortest path, while for the string these are related to certain areas.

        \item  {\bf Strong coupling to weak coupling.} 
            As in conventional string theory, we expect that difficult problems on the original side are converted into more tractable problems on the dual side.
            In particular we expect that strings dual to nodes of the original graph provide "good" embeddings (from AI point of view).

    \end{itemize}
    
    \item {\bf Case studies and mathematical conjectures.} 
        We outline several concrete examples of duality for 
        Cayley graphs. 
        We explain its relation to quasi-polynomiality hypothesis
        and the $n^2$ conjectural bound for diameters of $S_n$ Cayley graphs,
        which is a celebrated open problem in mathematics for more than 50 years.
        We explain how CayleyPy AI methodology  assists in determining the duality.
        We discuss other examples related to $\mathrm{SL}(2,\mathbb{Z})$ and Farey graphs, etc.

    \begin{itemize}
        \item \textbf{$S_n$-Cayley to planar polygon duality.} 
        For $S_n$-\href{https://en.wikipedia.org/wiki/Cayley_graph}{Cayley} and 
        \href{https://en.wikipedia.org/wiki/Schreier_coset_graph}{Schreier} graphs, we conjecture that the dual objects are rational polygons in the plane (or, in degenerate cases, line segments) such that the graph diameter equals the number of integer lattice points in the $n$-scaled polygon.  
        This provides an illustration of a refined version of the “complexity = area” principle in this context.  
        In other words, diameters are given by the \href{https://en.wikipedia.org/wiki/Ehrhart_polynomial}{Ehrhart quasi-polynomials} of the polygons, explaining the conjecture from our previous work. 
        Such a relation is far from trivial: in particular, it implies the celebrated open problem predicting that the diameters of these graphs are bounded by $n^2$, a conjecture that has resisted the efforts of leading mathematicians for decades \cite{Rubtsov1975} (reviews \cite{glukhovzubov1999lengths}, \cite{Helfgott2013GrowthIdeas}).
        
        \item \textbf{Holography: nodes to lattice paths; word-metrics quasi-polynomiality.}  
        In this framework, "holography" predicts that vertices of the graph can be mapped to lattice paths inside the polygon in such a way that word metrics (or “gate complexities”) coincide with the areas under the corresponding paths.  
        This again mirrors the “complexity = area” principle familiar from AdS/CFT, and implies that word metrics are also described by Ehrhart quasi-polynomials, explaining the conjecture from \cite{Cayley3Growth} that word metrics are quasi-polynomials in $n$ for $S_n$.

        \item \textbf{Holography = Lehmer code (for Coxeter generators graphs).} 
        For Cayley graphs generated by neighboring transpositions, such a mapping is closely related 
        to the \href{https://en.wikipedia.org/wiki/Lehmer_code}{Lehmer code}.     

        \item \textbf{CayleyPy AI-methodology to find diameters and wordmetrics.}  
         We present a methodology how we can find diameters using our AI-based CayleyPy library,
         and present multiple successful examples of computations discussed before are based on it.

        \item \textbf{Behavior under taking $G/H$.} Empirically, we observe the following pattern: the polygon associated with the Schreier coset graph of $G/H$ appears as a subpolygon of the polygon corresponding to Cayley graph of the full group $G$.

        \item \textbf{Consecutive $k$-cycles and variations — $(k-1)$-shrinkage principle.}  
        We present several computations of diameters and word metrics for generators given by consecutive $k$-cycles, explicitly obtaining the corresponding quasi-polynomials.  
        Empirically, we observe that these formulas approximately correspond to shrinking the graph $(k-1)$ times, a behavior that is particularly reflected in the associated dual polygons.

         \item \textbf{ $\mathrm{SL}(2,\mathbb{Z})$ and Farey graphs.}
         We put into the framework of the present paper results from the previous paper by one use:
         (D.Melnikov et.al. \cite{CamiloMelnikovNovaesPrudenziati2019}).

        \item \textbf{Properties of the corresponding $H$-polynomials.}  
        To each quasi-polynomial, one naturally associates an $H$-polynomial, and we study their properties.  
        In many cases, we observe positivity, unimodality, Poincaré duality, and analogues of the Riemann conjecture.  
        While some examples satisfy all of these properties, there are cases in which one or more of them are violated.

        \item \textbf{Limit shapes.}
        We study the limit shape of the corresponding "strings" in large size limit,
        on the example of the Coxeter group, we proposed exact formulas for limit shapes
        with fixed areas under the curves, which are consistent with previous results by A.M.Vershik et.al.

        \item \textbf{Evolution in large size limit, Burgers equation and KPZ.}
        We study numerically the geodesic flow on the graph in the dual picture
        and conjecture its relation with Burgers and KPZ  equations.

        \item \textbf{Spanning trees, Mahler measures, string-like model with polygonal worldsheets, 
        spin chains in thermodynamic limits, spectra of conformal dimensions.}
        We study Laplacian and spectral properties in large size limits.
        We propose various conjectures on numbers of spanning trees.
        We also propose identification of limitings theories 
        with a free bosonic model (playing here the role of a string-like dual) which surprisingly may have polygonal worldsheets.
        The conformal dimensions of primary fields are conjectured to be
        eigenvalues of the Laplacian on the dual planar polygon.

    \end{itemize}

 \end{itemize}

\subsection{ Organization of the paper}
The first five sections present the main examples and core ideas,
while the remaining sections provide further details and technical
computations. Although the overall length of the paper is substantial,
we hope that the first five sections are sufficient for the reader
to grasp the central concepts and motivation. We put reminders on background material to the last section.

\textbf{ AI, mathematics and physics }

The {second superstring revolution} and related developments revealed highly nontrivial
dualities among string-related theories and uncovered far-reaching
consequences \cite{Candelas1991, SeibergWitten1994a, Witten1995,  Polchinski1995,
SchwarzSecondSuperstringRevolution, StromingerVafa1996, vafa1996evidence, StromingerYauZaslow1996, BFSS1997, Connes1998, Maldacena1998, gubser1998gauge, witten1998ads,
kontsevich2003deformation, cattaneo2000path,
GrossNekrasov2000_DynamicsNC, 
de2000holographic,
dijkgraaf2002matrix, 
Nekrasov2003, MinahanZarembo2003,
OkounkovReshetikhinVafa2003_arxiv,
OoguriStromingerVafa2004_OSV,
Gukov2005, dabholkar2005precision, 
RyuTakayanagi2006,  KapustinWitten2007, pestun2007localization,
MironovMorozov2009exactBS,
AldayGaiottoTachikawa2010, verlinde2011origin,
GaiottoMooreNeitzke2013_WallCrossingHitchinWKB,StanfordSusskind2014, 
GrassiHatsudaMarino2014, GaiottoMooreWitten2015_AlgInfrared,  BrownSusskind2016, hijano2016witten}.
These developments have profoundly influenced the evolution of modern
mathematics in a variety of directions.
We hope that the dualities proposed here may extend the scope of
string-theoretic ideas to new domains, including artificial
intelligence, as well as to such areas of mathematics such as
graph theory, group theory and combinatorics.
One of the "pre-AdS/CFT" examples of dualities
was proposed for two-dimensional Yang-Mills
(its string description, "Gross-Taylor formula")
\cite{Witten1992_2DgaugeRevisited, GrossTaylor1993_TwoDimQCD_String, GrossTaylor1993_TwistsWilson, CordesMooreRamgoolam1995_2DYM},
recently it has been connected to 
Cayley graphs via "Yang-Mill/Hurwitz correspondence" \cite{novak20242d},
moreover quasi-polynomial expressions which play key role in the present paper,
also appear in Hurwitz theory \cite{Norbury2010, AndersenChekhovNorburyPenner2018, KramerLewanskiShadrin2018},
it would be tempting to understand relations
to the present paper.

The present time is characterized by growing interest in and number of applications of deep learning methods to mathematics and physics: machine learning has been emerging as ``a tool in theoretical science''~\cite{douglas2022machine}.
In recent years, this has led to several noteworthy applications to mathematical  and physical problems:~\cite{lample2019deep,davies2021advancing, bao2021polytopes, 
romera2024mathematical,
coates2024machine,alfarano2024global, charton2024patternboost,shehper2024makes,swirszcz2025advancing,hashemi2025transformers,he2024ai, lal2024r,lal2025deep,douglas2025diffusion,  
GeorgievGomezSerranoTaoWagner2025, berczi2026flow, ju2026ai, GuevaraLupsascaSkinnerStromingerWeil2026, EllenbergLibedinskyPlazaSimentalWilliamson2026,ChenCumminsOnoEtAl2026, Knuth2026, morozov2026learning}. 
Seewoo Lee created a repository that collects papers in AI for mathematics,  \href{https://seewoo5.github.io/awesome-ai-for-math/}{
Awesome AI for Math}. The present paper can be seen as an attempt at a dual-sided application of AI to mathematics and physics, and vice versa.

\clearpage
\section{Simplest Examples. ROC curve = string dual to $Gr(k,n)$ graph node }

\begin{figure}[H]
	\centering
	\includegraphics[width=0.45\textwidth]{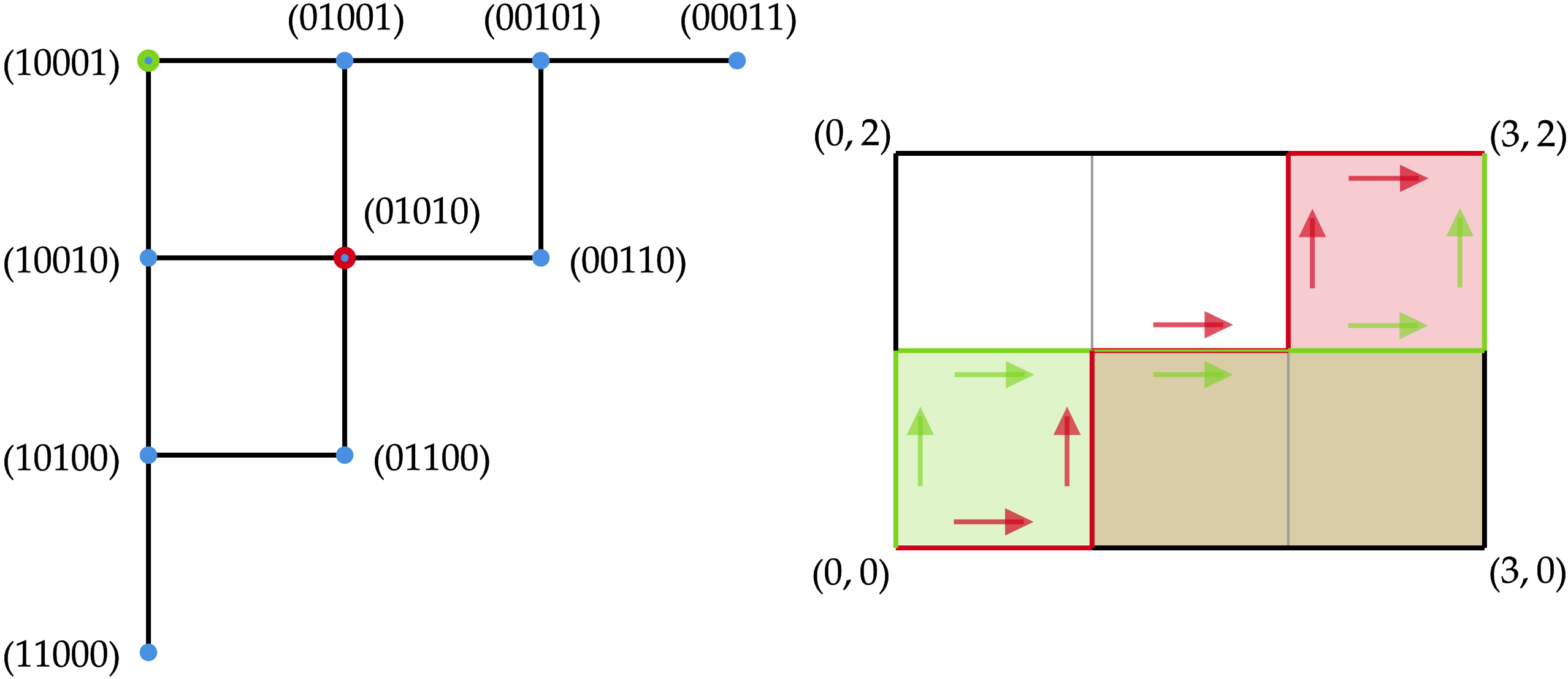}
	\caption{Left to right: the graph $Gr_{2,5}$ with marked vertices (red and green), and a polygon illustrating the corresponding paths (red and green).}
	\label{fig:p_triangles_3120}
\end{figure}

\begin{figure}[H]
	\centering
	\includegraphics[width=0.75\textwidth]{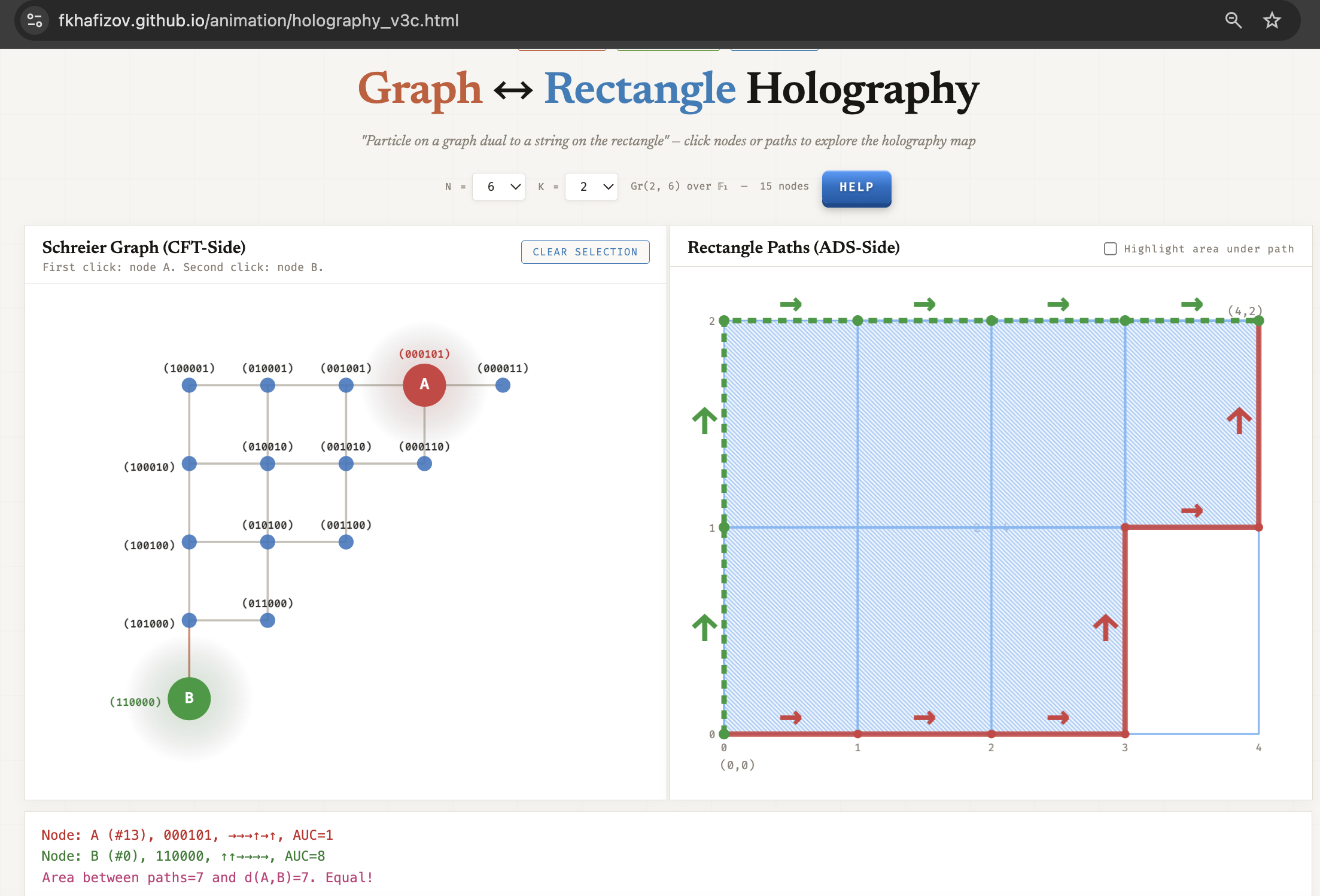}
	\caption{Left to right: the graph $Gr_{2,6}$ with marked vertices (red and green), and a polygon illustrating the corresponding paths (red and green). Interactive widget (F.Khafizov) available at \href{https://fkhafizov.github.io/animation/holography_v3c.html}{link}. The distance between nodes on graph is equal to area between the paths - "complexity = area" principle. Which in that case is equality of the Mann–Whitney statistic and the area under the ROC curve.  }
	\label{fig:graph_rectangle6_2}
\end{figure}

\textbf{Section Outline.} Here we present the basic example for the main ideas of the present paper. We consider a simple Cayley graph and describe the duality for it.

\href{https://en.wikipedia.org/wiki/Receiver_operating_characteristic}{ROC curves} (Receiver Operating Characteristic curves) and 
\href{https://scikit-learn.org/stable/modules/generated/sklearn.metrics.roc_auc_score.html}{area under the ROC curves} are  widely used 
quality metrics in ML. (See e.g. the \href{https://alexanderdyakonov.wordpress.com/2017/07/28/auc-roc-%d0%bf%d0%bb%d0%be%d1%89%d0%b0%d0%b4%d1%8c-%d0%bf%d0%be%d0%b4-%d0%ba%d1%80%d0%b8%d0%b2%d0%be%d0%b9-%d0%be%d1%88%d0%b8%d0%b1%d0%be%d0%ba/ }{exposition} by ~A.~G.~Dyakonov, first Kaggle top-1 Grandmaster.) Surprisingly, these examples can be used to illustrate the idea of holographic string duality, as well as to highlight various non-trivial results and open questions. The “complexity = area” principle reduces here to the familiar \href{https://en.wikipedia.org/wiki/Receiver_operating_characteristic#Area_under_the_curve}{equality}  between the Mann–Whitney statistic and the area under the ROC curve. Figures \ref{fig:p_triangles_3120}, \ref{fig:graph_rectangle6_2} illustrate the discussion below. So we describe below two sides of the correspondence: the graph (``CFT-side") and the rectangle polygon (``AdS-side");
the map from graph nodes to paths on rectangle (``holography map"). We demonstrate its simple, but non-trivial properties
which can be summarized a ``particle on a graph dual to a string on the rectangle". 
The discrete string has an simple action which is a "tropical" ("ReLU") analogue 
of the conventional action. 
The explanations how it is applied in ML tasks as a quality metric are also provided.

{\bf Graph. ``CFT-side".} 
Graph nodes correspond to vectors of length $n$ with entries $k$ zeros and $n-k$ ones.
Two nodes are connected by an edge if there is a transposition of some neighbor elements $(i,i+1)$ which
sends one vector to the other one.  This graph is the \href{https://en.wikipedia.org/wiki/Schreier_coset_graph}{Schreier coset graph} for $S_n$ with neighbor-transposition
generators; it is a quotient of the \href{https://en.wikipedia.org/wiki/Permutohedron}{permutohedron} graph by $S_k \times S_{n-k}$ and
should be thought of as the  \href{https://en.wikipedia.org/wiki/Grassmannian}{Grassmanian} $Gr(k,n) = GL(n)/(GL(k)\times GL(n-k))$ over the \href{https://en.wikipedia.org/wiki/Field_with_one_element}{field with one element}, by the usual analogy $S_n = GL_n(F_1)$.

{\bf Polygon = rectangle. ``AdS-side".} 
Consider the rectangle of the size $k\times (n-k)$ on the plane with integer coordinates. 
Set of our paths (``strings") are paths making steps right and up going from $(0,0)$ to $(k,n-k)$.
These are similar to \href{https://en.wikipedia.org/wiki/Catalan_number}{Dyck paths}, 
but there is no restriction for them to be under the diagonal.

{\bf ``Holography map" from graph nodes to paths (``strings"). }
Take a vector of 0's and 1's and associate to it a path by the rule: each ``1" is a step up, each ``0" is a step right.
Clearly it is a bijection from graph nodes to paths described above. 
As we will discuss below these are precisely the ROC curves for certain machine learning models. 

{\bf  ``Complexity = area". Mann-Whitney = area under the ROC curve. }
Take two nodes on the graph and consider the distance between them, i.e., length of the shortest path, or, equivalently, the minimal number of neighbor transpositions which are needed to transform one vector to the other one (``gate complexity"). 
One can check:

{\bf Proposition 1.} For any two nodes the distance between them on the graph (i.e. the complexity of one with respect to the other) is equal to the area between corresponding paths.

Typically as one of the nodes  we take the sorted vector $0...01...1$, which corresponds to the path going along the bottom-right border. So in that case we get that the area \emph{under} the curve equals the complexity of the other node.

{\bf Use in ML. Notation abuse of ``under'' vs.\ ``above'' the curve.}
Consider a binary classification task with \( n \) observations, of which \( k \) have ground-truth label \(0\) and \( n-k \) have label \(1\).
A machine-learning model assigns a probability score to each observation.
Sorting the observations by these scores produces an ordered binary vector consisting of zeros and ones (ground truth labels).
For a perfect model, this vector is $0^k 1^{n-k}$ (i.e. it is sorted).
An imperfect model produces a non-trivial interleaving of zeros and ones.
Any metric that  quantifies deviation  from the perfectly sorted vector therefore can serve as a natural measure of model quality.
One such metric is obtained by encoding the binary vector as a lattice path, as described above.
The area under this path provides a quantitative measure of deviation from the perfectly sorted vector $0^k 1^{n-k}$.
Indeed, the lattice path corresponding to that vector is  the bottom-right boundary of the corresponding rectangle and hence has zero enclosed area.
Proposition~1 shows that this measure has a natural combinatorial interpretation:
it equals the minimal number of neighboring transpositions required to transform the given vector into the sorted one.
This is precisely the statistic introduced by Mann, Whitney, Wilcoxon, and Kendall in classical rank-based hypothesis testing,
and it admits modern group-theoretic interpretations as developed in foundational work by P.~Diaconis
(see, e.g., \cite{ChatterjeeDiaconis2016} and references therein).
We note a minor abuse of terminology:
in our convention, the area \emph{under} the path measures \textit{deviation} from the perfectly sorted vector,
whereas in the standard ROC-curve exposition, the area under the curve measures \textit{similarity} to the ideal classifier.
Our choice of convention aligns naturally with the guiding principle ``complexity = area''.

\textbf{Stanley--Edelman--Greene correspondence as a bijection of extremals
for a particle on a graph and strings on a polygon.}

\cite{Stanley1984} computed the number of shortest paths between the two most distant vertices of the permutohedron graph; such paths are known as \emph{sorting networks}.
As discussed in the influential paper \emph{Random Sorting Networks} \cite{AngelHolroydRomikVirag2006}:
``another breakthrough was achieved by
Edelman and Greene'' \cite{EdelmanGreene1987},
who constructed a bijection between sorting networks and staircase-shaped standard
\href{https://en.wikipedia.org/wiki/Young_tableau}{Young tableaux} of size \( n \).
We now formulate an analogue of this result for our graph
and provide its string theory  interpretation.

\textbf{Proposition~2.}
Let \( A \) and \( B \) be two vertices of the graph above.
Then the shortest paths between \( A \) and \( B \) are in bijection with Young-type tableaux
associated with the region bounded by the corresponding holographically dual lattice paths.

\textbf{Discrete string action whose extremals are Young tableaux: standard $+$ ReLU.}

The simplest continuum string action is
$\int \int \left( X_a^2 + X_b^2 \right)\, da\, db $.
In the discrete setting, derivatives are replaced by finite differences,
$X_a = X(a,b) - X(a+1,b),  X_b = X(a,b) - X(a,b+1)$.
Consider the  discrete action:
$\sum_{a,b} \bigl( \mathrm{ReLU}(X_a) + \mathrm{ReLU}(X_b) \bigr)$,
where \(\mathrm{ReLU} = \max(\,\cdot\,,0)\) is \href{https://en.wikipedia.org/wiki/Rectified_linear_unit}{ReLU} function.
It follows that the minima of this action  are precisely Young tableaux, i.e. they provide solutions for equations of motion; 
at the same time, the action closely resembles that of a conventional string theory.

\textbf{Corollary.} (Stringy interpretation of analogue of Stanley--Edelman--Greene correspondence).
Extremals of a particle moving on a graph (i.e.\ shortest paths) between vertices \(A\) and \(B\)
are in bijection with extremals of the discrete string action (i.e. Young tableaux) with boundary conditions
defined by the holographic images of \(A\) and \(B\).

\begin{figure}[H]
	\centering
	\includegraphics[width=0.45\textwidth]{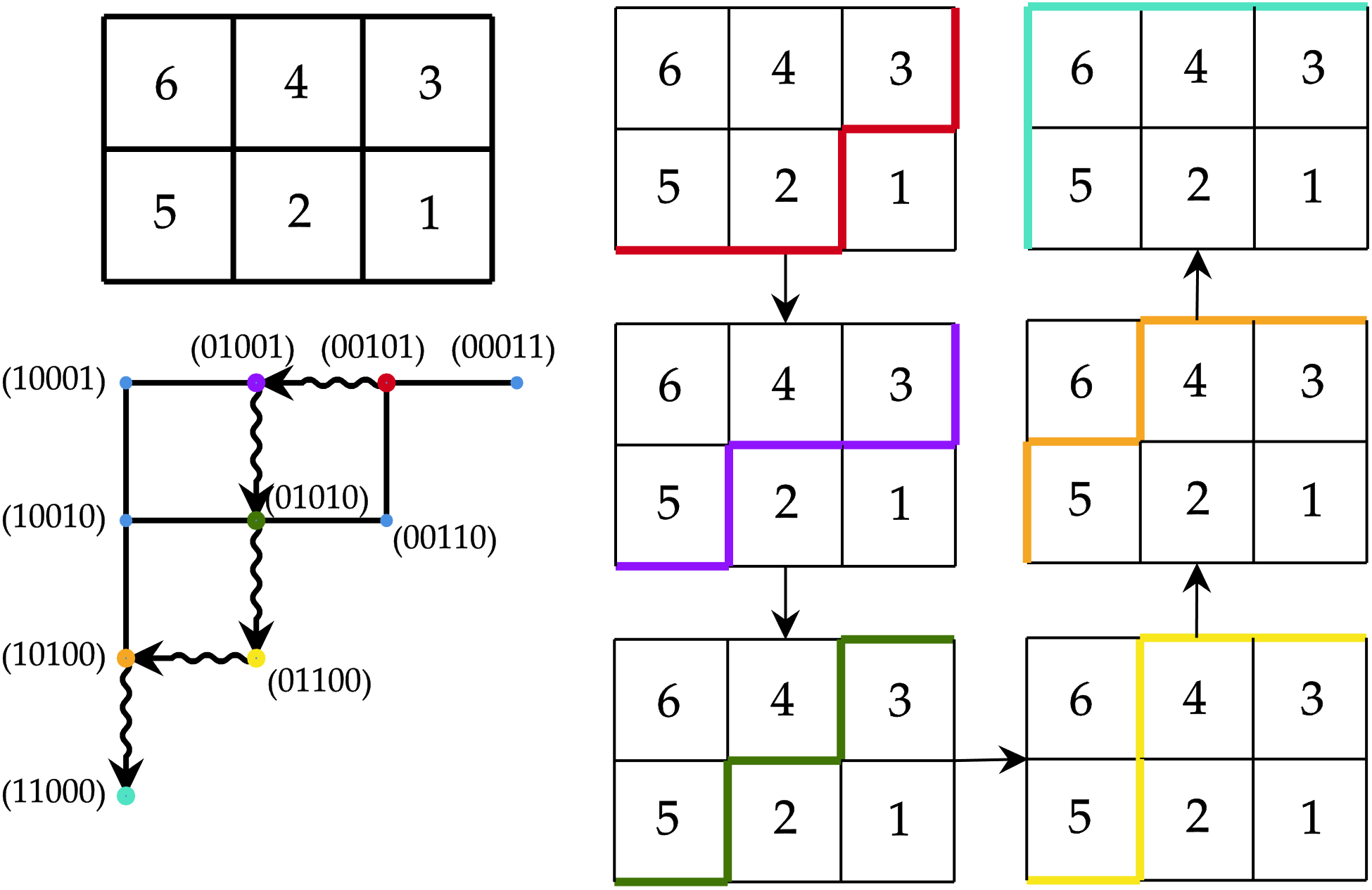}
	\caption{The Young tableau (upper left corner) provides a solution to the string equations of motion.
    It encodes the motion of the string as depicted on the right.
    In the right panel, each position of the string at times $t=1,\dots,6$ is shown in a different color,
    and the number in each box indicates the step at which the string passes through it.
    Each obtained string is a non-decreasing path, thus it belongs to the image of the holography map.
    In the holographically dual picture, the tableau encodes one of the shortest paths on the graph -- indicated by wavy edges.}
	\label{fig:string_dynamics}
\end{figure}

It would be natural to consider other activation functions,
and also other distances on graphs, e.g. diffusion distances (see e.g. the second paper of the project) 
related to supersymmetric LG models.

\paragraph{Further results.}
We discuss further results on this example in the next section "Neighbor transpositions Cayley and Schreier graphs".








\textbf{Open questions.}
We hope that the answers can be naturally formulated using duality.
 
The spectral density of the eigenvalues (as opposed to Bethe roots) is, to the best of our knowledge, unknown.
Similarly, the properties of the resolvent—such as the equations it satisfies—are not known.

The \href{https://en.wikipedia.org/wiki/Abelian_sandpile_model#Sandpile_group}{Jacobian} (also called the sandpile, critical, or Picard) group of these graphs is also unknown.
While its order, given by the number of spanning trees, is known,
the group structure itself has not been determined.

Graph invariants like \href{https://en.wikipedia.org/wiki/Tutte_polynomial}{Tutte polynomial}
are not known. 

In large $n$ limit \href{https://en.wikipedia.org/wiki/Conformal_field_theory}{CFT} 
description is not well-understood, duality may  might be related to a version of  AdS/CFT.
In particular, we may expect the spectrum of conformal dimensions can be related to the eigenvalues 
of the ordinary Laplacian on the plane in the rectangle, by analogy with AdS/CFT principles:   Casimir of conformal algebra maps to Laplacian in the bulk.

\section{AI tasks as predictions of particle trajectories. General AI-holography expectations }~
\textbf{Section Outline.}
Here we first discuss an analogy in which common tasks in AI are viewed as predicting particle trajectories. Then outline the idea of “AI holography”: rather than working directly with a particle  (difficult, "strongly coupled" regime), it is advantageous to seek a dual string description, which provides a more tractable ("weakly coupled") formulation.
We also emphasize that, in a certain sense, modern AI systems—as well as natural neural networks—already operate according to a somewhat similar principle through the use of embeddings. However, we believe that fully exploiting the power of string dualities can lead to a deeper and more systematic understanding of these mechanisms,  enabling the design of more effective AI systems. But to work in discrete
setting it is necessary to consider unusual "tropical" string
actions which involve e.g. "ReLU" ( $\max(\,\cdot\,,0)$) functions,
and lead to some unexpected, but desirable properties of equations of motion.

\subsection{Particle trajectories as texts or action sequences} 
The goal of this subsection is to draw the attention of the physics community to the fact that many core aspects of AI — including input data, prediction objectives, and even methodological approaches — are closely analogous to the study of particle dynamics on graphs with labeled edges. \href{https://en.wikipedia.org/wiki/Cayley_graph}{Cayley graphs} serve as particularly natural and representative examples. Such questions are not uncommon in mathematical physics, and their conceptual and technical tools may prove  beneficial for the development of AI.

\textbf{Texts / robot manipulations / game play / theorem proofs as discrete particle trajectories on edge-labeled graphs.}
Let us bridge the settings and tasks commonly used in AI—such as text corpora or action sequences in reinforcement learning—with more traditional problems in physics, namely particle trajectories on graphs and the study of their dynamics. We also provide a brief review of several AI concepts from this perspective, which may be of interest to physicists.

Modern large language models operate on texts, while other powerful systems—such as those based on reinforcement learning—work with sequences of actions: robot manipulations, moves in games, and so on. In all these cases, the primary data consist of sequences of discrete tokens, for example
$
\texttt{ABBCADB}\ldots
$ 
These tokens may represent letters of the English alphabet, commands for a robotic manipulator, moves in a game such as chess, or the names of theorems and lemmas in a mathematical proof.

To connect this viewpoint with frameworks commonly used in physics, it is natural to interpret such sequences as trajectories of a particle, where each token specifies an elementary increment of the particle’s motion. Pushing the analogy further, one may regard all possible states (for instance, sentences or configurations) as nodes of a graph, with tokens labeling the edges. Appending a new token then corresponds to moving from the current node to a neighboring node along the edge labeled by that token.
Thus toke sequences can be view as trajectories of particle
on graph where edges are marked by the token labels. 

An archetypal {\bf example} of this consideration is provided by \href{https://en.wikipedia.org/wiki/Cayley_graph}{Cayley graphs} in group theory. There, nodes correspond to the elements of a group, and edges correspond to a chosen generating set: two nodes are connected if
$
a = g b
$
for some generator \( g \). Thus, edges labeled by group generators and thus paths (particle trajectories) on the Cayley graph correspond to words formed from these generators. 
One may further restrict attention to those words that correspond to shortest paths in the graph, thereby defining the \emph{geodesic language} of the group. Thus, a quite standard task in physics and mathematics -- to study a free particle on a graph, appears to be  essentially equivalent to studying the graph’s geodesic language.
More generally, one can consider a variety of related languages, such as words whose associated paths deviate from geodesics by at most a prescribed amount, or that satisfy other geometric constraints. These choices lead to different classes of admissible trajectories, interpolating between strictly geodesic motion and more flexible, near-geodesic dynamics.
The considerations above do not require working with groups. Similar arguments apply to arbitrary state-transition graphs whose edges are labeled by tokens, and where edge weights can be naturally incorporated as probabilities of selecting a given transition. In this perspective, recorded sequences of moves in games such as Go or chess play a role analogous to words in group-theoretic settings: each game record corresponds to a path in the underlying state-transition graph, while individual moves act as tokens labeling the edges. Collections of such records therefore form a “language” of admissible trajectories, shaped by the rules of the game.

Let us {\it rephrase the same idea as above} with different emphasizes and provide more examples. 
What is a “language”? One fixes a set of tokens (i.e., an alphabet) and considers a collection of admissible sequences built from these tokens — this collection is, essentially, the language. In other words, a language is a rule (explicit or implicit) that selects, among all possible sequences, those that are allowed.
For example, the set of all texts written in English defines the English language. There are many other such languages: admissable sequences of moves in games such as Go or chess define the corresponding game languages; all possible protein sequences over the 20–amino-acid alphabet define the language of proteins; all possible DNA sequences define the DNA language;  all valid \href{https://en.wikipedia.org/wiki/Simplified_Molecular_Input_Line_Entry_System}{SMILES} encodings define a language for chemical molecules, etc. As discussed above, one can interpret these examples as collections of particle trajectories on a graph whose edges are labeled by elements of the alphabet.
In this sense, the current AI paradigm is closely analogous to experimental physics: one observes a collection of trajectories of a particle — here represented by a language, that is, a set of admissible sequences — and attempts to uncover the hidden laws governing its dynamics. 
It is remarkable that essentially the same techniques originally developed for natural language processing, namely transformer-based models, can be transferred  to other ``languages,'' such as protein language models (e.g. ESM2 \cite{lin2023evolutionary}) or chemical languages based on SMILES representations, (e.g. ChemBERTa \cite{ahmad2022chemberta2}), etc.
This transfer has led to the creation of tools that have become indispensable in modern bioinformatics and cheminformatics, achieving top results on a wide range of benchmarks, for example in the \href{https://en.wikipedia.org/wiki/Critical_Assessment_of_Function_Annotation}{CAFA} protein properties prediction challenge \cite{chervov2024protboost}.

Let us reemphasize that nothing more than a set of sequences—a ``language''—is required to begin AI modeling.
I.e., having a collection of texts—where “texts” may be completely arbitrary sequences over any alphabet—is already sufficient to apply AI techniques.
Modern AI techniques are remarkably successful across a variety of such ``languages,'' many of which are quite different from natural ones. To some extent, this reflects a simple principle: what the human brain can do, artificial neural networks may also learn to do. Since humans can master natural languages, programming languages, and games such as Go or chess, it is perhaps not entirely surprising that similar AI techniques are capable of mastering these domains as well.


Thus, texts in natural languages or sequences of actions in reinforcement learning can be viewed as collections of particle trajectories on a graph whose edges are labeled by tokens, with admissible words corresponding to trajectories that satisfy constraints imposed by both the graph structure and the particle dynamics. Studying dynamics on graphs is a well-established problem in physics and mathematics.


\textbf{AI tasks as prediction of particle trajectories with prescribed initial or boundary conditions.}
Let us discuss the close analogy between typical AI tasks and computing particle trajectories with given initial or boundary conditions. 

The basic training paradigms of modern LLM systems can be broadly divided into two modes. The first is the GPT-style setting of \href{https://en.wikipedia.org/wiki/Generative_model}{generative modeling}, where one is given the beginning of a text and the task is to generate—or predict—its continuation. The second is the \emph{masked} mode (as in classical \href{https://en.wikipedia.org/wiki/Word2vec}{Word2vec} and related models), where both the beginning and the end of a sequence are provided, and the goal is to infer the missing middle portion.

Closely related reinforcement-learning systems—such as those used for robot manipulation, gameplay, or automated theorem proving—are typically generative as well: at each step, the objective is to predict the next action in a sequence.
For example, the task of finding a path on a Cayley graph—which is mathematically equivalent to decomposing a group element into a product of generators—can be viewed as finding a trajectory with prescribed boundary conditions: given element and identity of the group. In the case of the Rubik’s cube group, this corresponds to solving the cube. For games such as Go or chess, the objective is to find a path from an initial position to a position labeled “victory” in an environment where a second player is simultaneously attempting to achieve the same goal. Despite this added complexity, the problem can still be framed as finding a particle trajectory with boundary conditions on a state-transition graph, with the additional complication that the state evolves in response to the opponent’s moves (a kind of randomized environment).

Similarly, proving a mathematical theorem can be seen as finding a particle trajectory on the state-transition graph of all admissible proofs.
Overall, these examples are representative of a broad class of problems in AI: generating a sequence of tokens such that the resulting sequence satisfies specified constraints or desired properties.

From the physical perspective suggested above, these learning tasks admit a natural interpretation in terms of particle dynamics. The generative setting corresponds to predicting a trajectory given fixed initial conditions, whereas the masked or infilling setting corresponds to predicting a trajectory subject to fixed boundary conditions.


\textbf{Counterintuitive - multiple local classical trajectories 
with fixed initial/boundary conditions.}
To what extent are discrete systems—such as particles moving on graphs, or more generally symbolic systems like languages—similar to classical physical systems? In particular, can one meaningfully apply the standard physical language of actions and equations of motion to such settings?

At first sight, there appears to be a serious obstacle to such a description. In classical mechanics, actions are smooth functionals of coordinates and their derivatives, leading to equations of motion in which fixing the initial position and momentum (and possibly higher derivatives) uniquely determines the trajectory of a particle. (In a discrete setup, this is analogous to fixing several positions in the history, which should completely determine the continuation.)
By contrast, in typical discrete systems like on graphs, fixing initial conditions generally allows for a large number of possible continuations. For example, a particle moving on an infinite tree (such as the Cayley graph of a free group) may move in essentially any direction except immediately backtracking (i.e. previous history almost have no effect on next moves), and each such choice produces a geodesic. In this sense, initial conditions impose only weak constraints on the subsequent motion.

A similar phenomenon occurs in languages: the beginning of a sentence often does not uniquely determine its continuation. This apparent non-uniqueness may give the impression that the classical framework of actions and equations of motion is inapplicable to discrete systems.


\textbf{Unusual "tropical/ReLU" discrete string actions - for the rescue.}
Our observation is that modifying classical string actions so as to include
tropically inspired expressions such as $\max(\,\cdot\,,0)$ (also known as the
\href{https://en.wikipedia.org/wiki/Rectified_linear_unit}{ReLU} function) leads to string equations of motion that are well suited to
discrete settings. In particular, the resulting equations admit solutions in
the form of Young tableaux, which are highly non-unique even fixing initial conditions,  yet still subject to
nontrivial global constraints. The example has been decsribed in the previous section. 

In a broader sense, this behavior might be consistent with the perspective of
tropical geometry. Graphs can be viewed as tropical limits of Riemann surfaces,
and it is therefore natural to expect that the corresponding action functionals
should also involve operations characteristic of tropical geometry, such as the $\max$ operation. Although such analogy may not be fully correct.

\textbf{Embeddings: LLMs and natural brains are already performing a form of “duality.” }
In a nutshell, the idea of duality in physics is to replace a given description of a system by a dual one in which previously difficult questions become unexpectedly tractable. This closely parallels a central paradigm of modern AI: before solving a task, one first seeks a more convenient representation of the data called —an "embedding" or "latent space representation"—and then operates primarily within that representation.
Even more striking is the parallel with the AdS/CFT principle that “complexity = area”: in modern AI, a key feature of embeddings is that notions of similarity (difficult to compute in original setting) are reduced to simple geometric quantities, such as dot products or distances. While these ideas are not identical, they share the same essential feature: transforming a hard-to-compute notion of complexity or similarity into an easily computable geometric measure—whether an area, a distance, or a scalar product.

Let us reiterate the points made above, adding some details and providing perspective from both natural brains and artificial neural networks. 
From the perspective of natural brains, all incoming information—whether visual, auditory, or tactile—ultimately affects the neurons, leading to their activation or deactivation. The degree of activation of a neuron can be approximated by a number, so the state of the brain at any moment can be represented as a vector of real numbers indexed by neurons. In other words, the brain essentially converts all incoming information into a vector representation, or an embedding.
The goal of the brain is to operate effectively on these neural states. One can reasonably expect that, through evolution, brains have been optimized to find the most suitable representations of any incoming information. Effective cognitive operations can then be thought of as relatively simple operations on these vectors. In this sense, the brain naturally performs a process analogous to the concept of duality in physics: transforming a complex problem into a representation where simple operations suffice to solve it.

Modern artificial intelligence follows the same principle. Regardless of the task—whether language translation, summarization, sentiment analysis, prediction of protein properties or of chemical molecules, or other applications — the first step is typically not the task itself, but the construction of high-quality vector representations (embeddings).
 Once these embeddings are obtained, all subsequent operations are performed on them, without directly interacting with the original data. Consequently, designing “good” embeddings becomes a central problem in AI.
Analogous to the AdS/CFT principle in physics, where “complexity = area,” embeddings in AI often transform computationally difficult similarity measures of raw information into simple, tractable operations—such as dot products—allowing complex problems to be solved efficiently.
The famous examples on \href{https://en.wikipedia.org/wiki/Word2vec}{Word2vec}  \cite{MikolovSutskever2013} word embeddings which are sometimes  shortened (not quite accurately) to    "King - Man + Woman = Queen" (i.e., the corresponding vector embeddings 
\href{https://en.wikipedia.org/wiki/Word2vec#Preservation_of_semantic_and_syntactic_relationships}{approximately satisfy} this equality) 
- represents the same phenomena: 
semantic and not trivial relations in the language are transformed to simple operation on good embeddings
(just the addition of vectors in that example). 
Which is intriguing, as the constraint was not explicitly imposed during training, yet it emerged naturally.

Despite the success of modern AI, most embedding constructions have been achieved in an ad hoc manner. There is still no clear understanding of how to internally characterize “good” embeddings, nor how to systematically improve them beyond trial-and-error methods. The currently dominant approach in AI is “scaling”: training ever-larger neural networks on ever-larger volumes of data. This strategy indeed works. However, it is somewhat akin to trying to approximate a highly non-trivial function using a trivial one. Increasing the number of parameters and data can improve the fit, but a better approach may be to understand the underlying nature of the function from first principles and to choose an approximation strategy more wisely.
The perspective offered by string dualities may provide valuable insight into these questions, potentially guiding the design of more principled and effective representations.
In our idea, {\bf good embeddings are  holographic dual strings corresponding to the original particle states.} 
To compute downstream quantities from embeddings 
one should take into account possibly not flat Riemannian metric (similar to AdS metric),
which is expected to be a part of the dual description, more subtle effects should take into account
finite-size corrections.

Let us recall the difference between older Word2Vec-style embeddings and newer context-dependent, transformer-based embeddings.
Word2Vec produces embeddings for words themselves, but not for words in context. In direct analogy with our picture, words may be viewed as nodes of a graph (states), and the goal is to construct embeddings for these nodes.
(Moreover, Word2Vec admits a direct generalization to graph embeddings,
as demonstrated in the well-known works DeepWalk and Node2vec
\cite{Perozzi2014DeepWalk,GroverLeskovec2016node2vec}).
In contrast, modern transformer-based architectures generate embeddings sequentially, so that the representation of a word depends on the preceding text—that is, it is context-dependent. In this sense, embeddings are effectively constructed for sequences rather than isolated tokens.
This perspective is fully consistent with our framework: paths on a graph can be identified with their terminal states, and therefore embeddings of these states naturally correspond to context-dependent embeddings of words.
Our setup encompasses both scenarios: one in which multiple paths can lead to the same state, 
and another in which this is not allowed (as in the case of a tree graph).

\subsection{AI-holography} Here we outline our expectations for analogues of holographic string dualities in AI tasks. In brief, we expect that for broad classes of systems there exists a dual formulation in which states of the original system are mapped (‘holographically’) to paths (strings) in a dual space. In this picture, the particle system on one side (the ‘CFT side’) is equivalent to a string theory on the dual side (the ‘AdS side’). This dual description may offer a more tractable framework; in particular, these paths (strings) can serve as ‘good’ embeddings. Thus, providing new approaches for improving AI methods.

As described in the previous subsection, we may view AI systems of interest as defined by a set of admissible token sequences (‘language’), which can be interpreted as paths (‘particle trajectories’) on an edge-labeled graph and basic AI questions
can be view as predicting particle trajectories with initial or boundary conditions. The nodes of this graph represent the states of the original system. (In AdS/CFT terminology that is ‘CFT side’).
We expect the existence of a dual object equipped with a holographic map that sends nodes of the original graph to paths (strings) in the dual space. Such that ‘particle theory’ on the original graph would be equivalent to ‘string theory’ on the dual object: any question about particles could be reformulated and computed via their holographic images, with the expectation that the dual description is more tractable.
Moreover, based on examples and general considerations, we expect that unconventional, ‘tropical’ actions for discrete strings are necessary to describe AI-related systems. In suitable large-size limits, however, these discrete models may converge to more familiar geometric (gravitational) descriptions. Thus, questions that are especially difficult when original system is large
 could translate into geometric (gravitational) computations that are more tractable.
This perspective parallels the original AdS/CFT proposal, where quantum observables such as Wilson loops on CFT-side admit dual gravitational descriptions on AdS-side, as well as subsequent developments, in particular the ‘complexity = volume/action’ principles.


Let us summarize expectations in the itemized form: 
    \begin{itemize}
        \item {\bf \href{https://en.wikipedia.org/wiki/Holographic_principle}{Holography}.} \cite{tHooft1993, Susskind1995Hologram}  Meaning that $d$-dimensional objects of the original system are mapped to $(d+1)$-dimensional, 
        i.e. nodes of graph are mapped to paths (strings) on the dual object. 
        Similarly, paths on graphs 
        (i.e. 1-dimensional objects) 
        are mapped to 2-dimensional surfaces 
        (string \href{https://en.wikipedia.org/wiki/Worldsheet}{worldsheets}). 
        We expect that the strings which are holographic images of states and "good embeddings" for the states in original system.

        \item  {\bf Particle on the graph side (``CFT"-side) = Discrete String on dual side (``AdS"-side).} 
        As is typical in string dualities, we expect that a theory defined on one side is equivalent to a theory defined on the other side. On one side, the dynamics describe a particle moving on a graph, while on the other side they are captured by a discrete string theory defined on the dual object. Holography provides a map between the two descriptions, under which the  quantities computed in one theory are equal to corresponding quantities computed in the other one.
        We illustrate what we mean by discrete string theory through concrete examples, e.g.  worldsheets are naturally associated with Young diagrams, while their images in target space correspond to Young tableaux.   
        Moreover, based on examples and general considerations, we expect that unconventional, ‘tropical’ actions for discrete strings are necessary to describe AI-related systems. E.g. such as 
        $\sum_{a,b} \bigl( \mathrm{ReLU}(X_a) + \mathrm{ReLU}(X_b) \bigr)$,
        where \(\mathrm{ReLU} = \max(\,\cdot\,,0)\) is \href{https://en.wikipedia.org/wiki/Rectified_linear_unit}{ReLU} function
        and $X_a = X(a,b) - X(a+1,b),  X_b = X(a,b) - X(a,b+1)$ - common analogs of discrete derivatives.


        
        \item  {\bf Corollary: Complexity = Area/Action.} (Similarity = Geometric Measure).
        Similar to the principle discovered in AdS/CFT correspondence (L. Susskind et. al.
        \cite{StanfordSusskind2014, BrownSusskind2016}), we expect that lengths of paths on the graph side—serving as measures of complexity—
        are mapped to areas under the corresponding curves on the dual side.
        That is a consequence of the previous principle: values of action for particles on extremals are 
        lengths of the shortest path, while for the string these are related to certain areas.
        However results in that direction are typically far more accessible than establishing the duality.

        \item  {\bf Strong coupling to weak coupling.} 
            As in conventional string theory, we expect that difficult problems on the original side are converted into more tractable problems on the dual side. For example, the computation of complexity is typically NP-hard in general and remains difficult even in specific cases, with the difficulty growing rapidly as the size of the system increases. The key idea of the duality is that it maps this hard computational problem to the evaluation of a geometric quantity—namely, an area—which is often much easier to compute.

    \end{itemize}

\subsection{Further remarks.} 

Let us comment on further analogies with, as well as differences from, the AdS/CFT correspondence. In string theory, graphs (e.g., Feynman diagrams) can be viewed as degenerate string worldsheets; in more mathematical terms, graphs arise as tropical limits of Riemann surfaces. From this perspective, particle theories on graphs may be regarded as degenerations of particle theories on Riemann surfaces. Which are closely related to 
conformal field theories (for metrics of constant curvature on Riemann surfaces). 
One may therefore speculate that the conventional AdS/CFT correspondence for CFT related to Riemann surfaces, in an appropriate tropical limit, could be connected to the considerations proposed in the present paper.

Secondly, in the original AdS/CFT correspondence, the CFT side lives on the boundary of the AdS space, whereas in our proposal, based on the examples considered, there is no direct relation between the graph and its dual object. In particular, there is no bulk/boundary correspondence. To our mind, this is an advantage, indicating that holographically dual theories may arise in more general setups than the conventional bulk/boundary scenario.  
We also expect that holographic duality is not restricted to conformal field theories, which is natural from a general duality perspective.
Moreover, for the $S_n$-Cayley graph, the dual polygon emerges from an abstract mathematical existence conjecture, rather than from a geometric picture. In this sense, the approach is reminiscent of S.~Wolfram's ideas \cite{Wolfram2002}, suggesting that cellular automata might provide insight into physics on the Planck scale, and conventional theories may emerge from such microscopic descriptions.


Modern AI systems are data-driven, as reflected in the well-known phrase ‘fire the linguist — the language system starts working better.’ Meaning that language models can learn to solve tasks on their own, without explicitly encoding structural linguistic knowledge. Indeed, modern LLMs learn entirely  from data and have achieved remarkable success. At the same time, it is widely argued that the human brain learns far more efficiently (even —orders of magnitude more). From our perspective, these may be two sides of the same coin. 
Languages are, of course, constrained by grammatical rules,
ignoring that  may partly explain why current models require more training effort than one might ideally expect.
However, the question of how to incorporate these grammatical rules in the most efficient way and combine with successful AI approaches might not be   trivial.
From our point of view structural linguistic knowledge may provide insight into understanding the holographically dual description of  languages, thereby offering a path toward building more efficient and powerful AI systems than those currently available. In this sense, the goal is not to ‘fire the linguist,’ but to bring linguists together with string theorists in pursuit of further progress.

\section{Case studies}
\subsection{\texorpdfstring{$S_n$ Cayley graphs and polygon duality}{Sn Cayley graphs and polygon duality}}
\subsubsection{General idea: graph-polygon duality and the quasi-polynomiality hypothesis}
Here we argue that the holographic duals of graphs associated with $S_n$
 are planar polygons. This perspective provides a natural—and essentially unique—explanation for the quasi-polynomiality conjectured in our previous paper. The construction of polygons and holography maps from graphs is not expected to be easy in general, 
since it is related to the computation of diameters and word metrics, which are known to be NP-hard.
This task is difficult even in specific cases; for example, it took over 30 years of continuous
effort to determine the diameter of the Rubik's cube \cite{Rokicki2014}.
In subsequent sections we work out some explicit examples.

{\bf Hypothesis: $S_n$-Cayley to polygon duality.} For $S_n$-\href{https://en.wikipedia.org/wiki/Cayley_graph}{Cayley} and
\href{https://en.wikipedia.org/wiki/Schreier_coset_graph}{Schreier} graphs,
we conjecture that the corresponding dual objects are rational polygons in the plane
(or, in degenerate cases, intervals), such that the diameter of the graph equals
the number of integer lattice points in the $n$-scaled polygon.
Moreover, we expect the existence of a \emph{holography map} such that vertices
of the graph are associated with lattice paths inside the polygon in a way that
word metrics (or ``gate complexities'') coincide with the areas under the
corresponding paths. So both diameters and word-metrics are \href{https://en.wikipedia.org/wiki/Ehrhart_polynomial}{Ehrhart quasi-polynomials}
associated with certain rational polygons.
Both expectations can be viewed as refinements and concrete realizations of the
AdS/CFT principle ``complexity = area''.

\textbf{Quasi-polynomiality hypothesis.}
The motivation for these conjectures is as follows.
In earlier work within the CayleyPy project, we computed diameters and word metrics
for a large class of Cayley and Schreier graphs.
Empirically, these quantities were observed to be eventually
\href{https://en.wikipedia.org/wiki/Quasi-polynomial}{quasi-polynomial} functions
of $n$, of degree at most two.
We further hypothesized that this behavior is generic under  conditions,
such as when the generating sets are
\href{https://en.wikipedia.org/wiki/Presburger_arithmetic}{Presburger-definable}.

\textbf{Ehrhart quasi-polynomials.}
Quasi-polynomials arise most naturally as
\href{https://en.wikipedia.org/wiki/Ehrhart_polynomial}{Ehrhart quasi-polynomials}
associated with rational polygons, which count the number of integer lattice points
inside $n$-scaled polygons.
This observation motivates the search for polygons whose Ehrhart quasi-polynomials
coincide with those obtained from the corresponding graphs.
In the present paper, we consider multiple examples and observe that such a
correspondence can indeed be established.
Moreover, this analysis reveals a clear connection with AdS/CFT ideas and
holographic string dualities, providing new examples and insights.

\textbf{Planarity from the $n^2$ conjecture.}
The appearance of plane polygons, rather than higher-dimensional analogues,
is a consequence of the fact that the observed quasi-polynomials are always
of degree at most two.
This behavior is conditional on a celebrated open problem predicting that
the diameters of these graphs are bounded by $n^2$.
This conjecture has resisted the efforts of leading mathematicians for several
decades.
This statement can be viewed as a special case of the Babai conjecture \cite{BabaiSeress1988CayleyDiameter}, for which
partial progress has been achieved (notably by B.~Green, T.~Tao,  and collaborators \cite{Tao11, Tao12}).
Nevertheless, the specific $n^2$ bound for these graph diameters has remained
open for at least fifty years \cite{Rubtsov1975} (reviews \cite{glukhovzubov1999lengths}, \cite{Helfgott2013GrowthIdeas} ).
For refinements and further discussion of this conjecture, we refer to our
previous work \cite{Cayley3Growth}.

\textbf{Behavior under taking $G/H$.}
Empirically, we observe the following pattern: the polygon associated with the
Schreier coset graph of $G/H$ appears as a subpolygon of the polygon corresponding
to Cayley graph of the full group $G$.

\textbf{H-polynomial properties.}
Given any quasi-polynomial, it is natural to associate to it what is known as the
\href{https://en.wikipedia.org/wiki/Ehrhart_polynomial#Ehrhart_series_for_rational_polytopes}{H-polynomial}.
For Ehrhart polynomials of integral polytopes, the corresponding H-polynomial
coincides with the
\href{https://en.wikipedia.org/wiki/Betti_number#Poincar%C3%A9_polynomial}{Poincar\'e polynomial}
of the associated
\href{https://en.wikipedia.org/wiki/Toric_variety}{toric variety} if this toric variety is smooth and projective (equivalently, the polytop admits a unimodular triangulation, which is always the case for polygons).
As a result, it enjoys several strong properties, such as non-negativity of
coefficients, Poincar\'e duality, and unimodality.

A deep question in combinatorics is whether analogous properties continue to hold
for more general quasi-polynomials.
We observe that in many examples the H-polynomials arising from graphs exhibit
these desirable properties.
Nevertheless, there also exist examples in which all of them are violated.

Of particular interest is a combinatorial analog of the Riemann hypothesis,
which predicts that all roots of the H-polynomial have modulus equal to one.
We find that this phenomenon occurs frequently in our examples, although
counterexamples do exist.

\textbf{Bulk-boundary is unnecessary.}
In contrast to conventional AdS/CFT, the graph and its dual object are not related by the requirement that the boundary of the dual object coincide with the graph. Instead, the polygon arises from a purely mathematical existence hypothesis and has no immediate geometric relation to the original graph.

\subsubsection{Permutohedron and Lehmer code}
Here we discuss the well-known example of the 
\href{https://en.wikipedia.org/wiki/Permutohedron}{permutohedron} graph, argue that its holographic dual
is an isosceles right triangle in the plane, identify the holography map with the
\href{https://en.wikipedia.org/wiki/Lehmer_code}{Lehmer code}, and demonstrate that the desired
properties indeed hold. 

\ref{fig:lehmer} serves as an illustration.

\textit{\textbf{Permutohedron graph.}}
Here we discuss another example: the Cayley graph of $S_n$ with respect to the neighbor
transposition generators $(i,i+1)$, which are the Coxeter generators of $S_n$.
The nodes of the graph are all the $n!$ permutation vectors, and an edge exists
between two nodes if they differ by a transposition of neighbors.
This graph is known to be the $1$-skeleton (i.e., the set of vertices and edges) of the
\href{https://en.wikipedia.org/wiki/Permutohedron}{permutohedron} polytope.
It is also called the \href{https://en.wikipedia.org/wiki/Bubble_sort}{bubble-sort} graph,
since the bubble-sort algorithm operates exactly by neighbor transpositions $(i,i+1)$
and, in fact, provides an optimal path-finding method for traversing the graph.

\textit{\textbf{Holography dual polygon is a triangle.}}
We will argue that the dual polygon is just the isosceles right triangle with nodes $(0,0)$, $(n,0)$, $(0,n)$. 
The number of possible lattice paths (without backtracking) that can move right, up, or down starting at  $(0,0)$ with ending in $(n,0)$ and are restricted to the triangle is exactly $n!$ which matches the graph size. 

\textit{\textbf{Holography map via Lehmer code.}}
The \href{https://en.wikipedia.org/wiki/Lehmer_code}{Lehmer code} maps a permutation to a tuple of $n$ integers  such that they satisfy $L(k)<n-k$, for $k=0, \ldots, n-1$.
That is, the image belongs to the triangle above. The map is known to be a bijection. 
One can think of the image as a lattice path tracing the upper boundary of the resulting region.
Such lattice paths lie inside the triangle and consist only of right, up, and down steps,
as illustrated in Figure~\ref{fig:lehmer}.

\begin{figure}[H]
	\centering
	\includegraphics[width=0.45\textwidth]{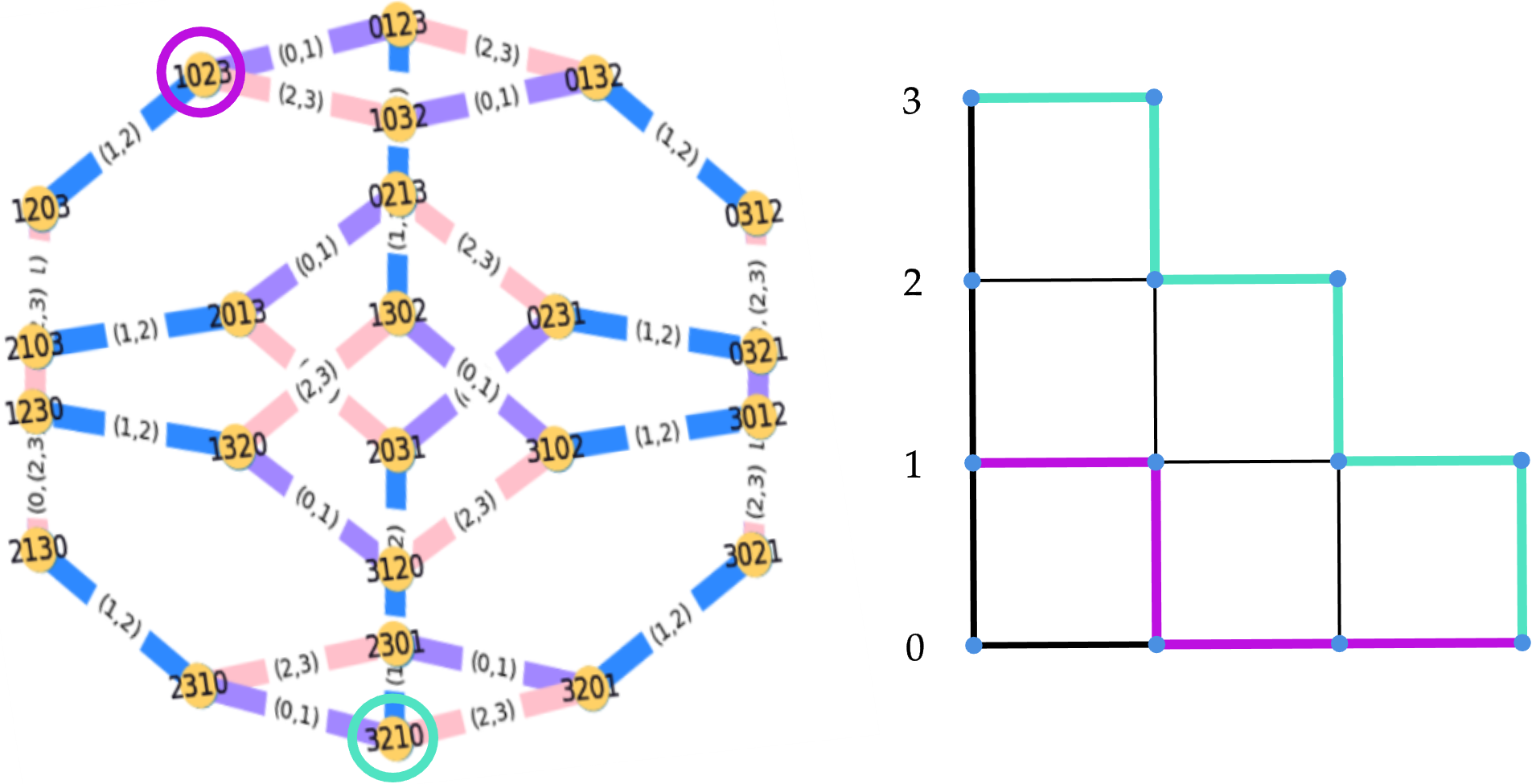}
	\caption{Lehmer-code representation of the permutations $(3,2,1,0)$ (cyan) and $(1,0,2,3)$ (magenta). Their Lehmer codes are $L(3,2,1,0)=(3,2,1,0)$ and $L(1,0,2,3)=(1,0,0,0)$, respectively. On the left, the corresponding vertices of the permutohedron are marked in matching colors. On the right, each permutation is represented by a lattice path given by the upper boundary of its Lehmer diagram inside the staircase Young diagram. The complexities of these permutations are 1 and 6, respectively, which coincide with the areas under the corresponding paths.}
	\label{fig:lehmer}
\end{figure}

\textit{\textbf{``Complexity = area".}}
The analogue of that AdS/CFT principle is  known to experts in this context.
For any permutation, one can compute its complexity, defined as the number of neighbor
transposition generators $(i,i+1)$ in its shortest decomposition
— in other words, the number of steps required by bubble sort to transform it to the sorted form, or equivalently,
the number of inversions in the permutation. (Example: \ref{fig:lehmer}).
It turns out that the number of boxes under the Lehmer code path, or equivalently the sum
$\sum_i L(i)$, exactly equals this quantity.
This means,  that the complexity equals the area under the
holographic image of the node - in accordance with general expectations.

\textit{\textbf{Bijection between particle extremals and string extremals on a polygon.}}
Similar to the previous section on ROC curves, the Stanley--Edelman--Greene
correspondence can be interpreted as a bijection   between shortest paths on a graph
and Young tableaux in the triangle, which represent string extremals.

\textit{\textbf{Open questions.}}
This case appears to be more difficult than the previous ROC curve case, and
even results that are known for ROC curve scenarios remain open here.
In addition to the open questions in the ROC case.
For example, the Laplacian of the graph also serves as the Hamiltonian of the
XXX-Heisenberg spin chain. However, it is not in the spin-$1/2$ representation,
making it less tractable from the Bethe ansatz viewpoint.
Moreover, the number of spanning trees is not known, even asymptotically.

\textit{\textbf{Lehmer code and adjacent transpositions.}}
Fix a permutation $\sigma \in S_{n}$ and its Lehmer code
$$
L(\sigma) = (k_{0}, k_{1}, \dots, k_{i}, k_{i+1}, \dots, k_{n-1}),
$$
where
$$
k_{i} = \#\{\, j>i : \sigma(i) > \sigma(j) \,\}, \qquad 0 \le k_{i} \le n-1-i.
$$

Consider the action of the adjacent transposition $(i,i+1)$ on $\sigma,$
for $i = 0, \dots, n-2.$

Each component $k_{i}$ is the number of inversions of the element $\sigma(i).$ When two elements of the permutation are swapped, we obtain a new permutation $\sigma'.$ The new Lehmer code $L(\sigma')$ can be obtained from the old one $L(\sigma)$ as follows.

The inversion numbers move together with the elements. After the transposition, the inversion numbers are written again according to the positions, as in the definition of the Lehmer code. In particular, the coordinates of the Lehmer vector stay in the same places; only the data attached to the elements move. 

Let
$$
a = \sigma(i), \qquad b = \sigma(i+1).
$$
The next step is to check the following cases.

\begin{itemize}
\item If $\sigma(i) > \sigma(i+1)$, then the pair $(a,b)$ is an inversion. After applying $(i,i+1)$, the element $a$ loses exactly one inversion. Hence the Lehmer code changes as
      $$
      L \longmapsto (k_{0}, k_{1}, \dots, k_{i+1}, k_{i}-1, \dots, k_{n-1}).
      $$

\item If $\sigma(i) < \sigma(i+1)$, then the pair $(a,b)$ is not an inversion. After applying $(i,i+1)$, the element $a$ gains exactly one inversion. Hence the Lehmer code changes as
      $$
      L \longmapsto (k_{0}, k_{1}, \dots, k_{i+1}+1, k_{i}, \dots, k_{n-1}).
      $$
\end{itemize}

In both cases, all other components of the Lehmer code with index $\ne i,i+1$ stay unchanged.

\begin{figure}[H]
\centering
\includegraphics[width=0.45\textwidth]{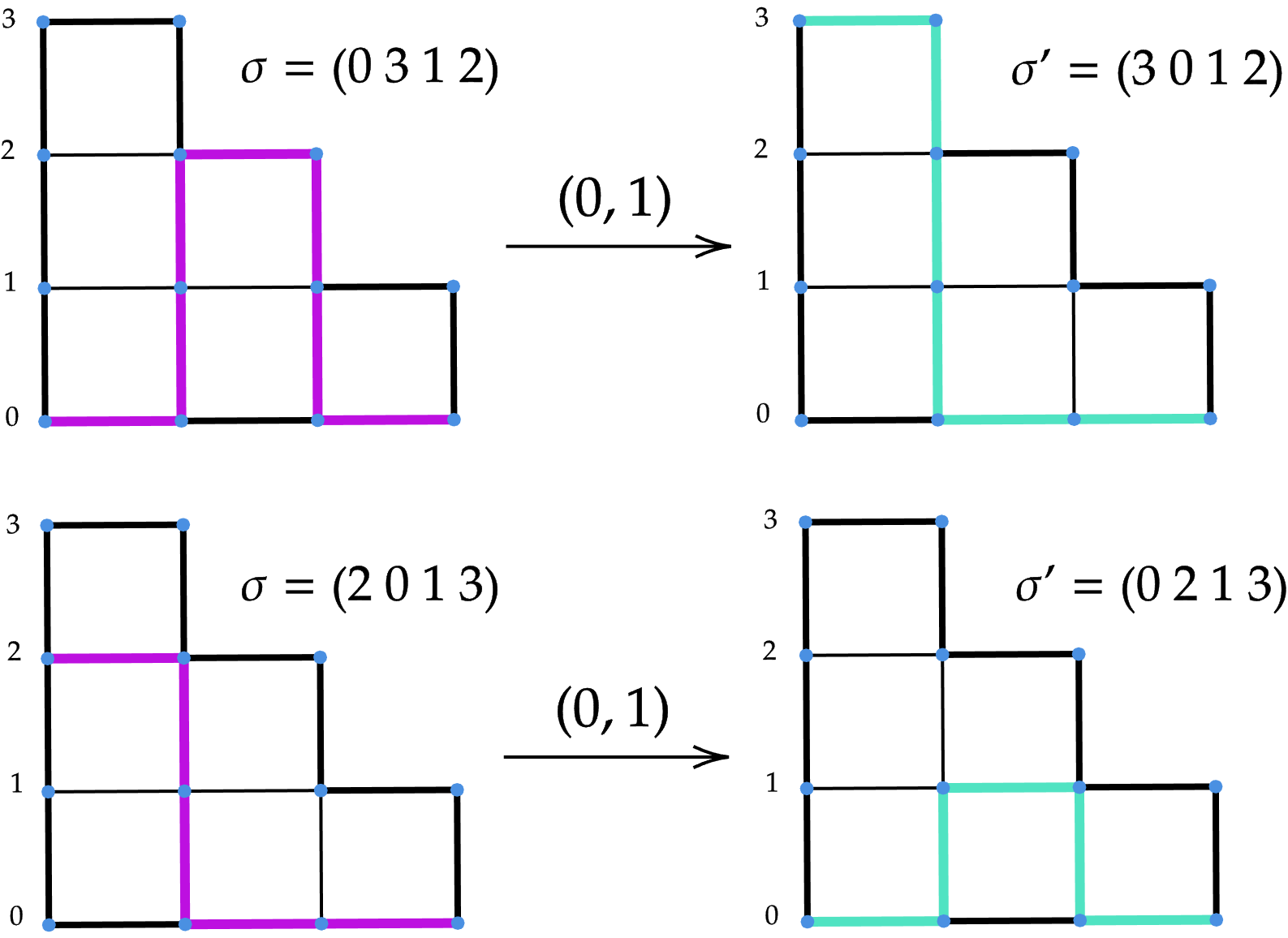}
\caption{Example of the adjacent transposition $(0,1)$. Top: $\sigma=(0,3,1,2)$, $\sigma'=(3,0,1,2)$; one inversion is added. Bottom: $\sigma=(2,0,1,3)$, $\sigma'=(0,2,1,3)$; one inversion is removed. Only the components $k_{0}$ and $k_{1}$ of the Lehmer code change. All other components remain unchanged.}
\label{fig:lehmer}
\end{figure}

\subsection{CayleyPy AI methodology for assisting in determining the duality}

To determine the dual polygon, we currently follow the procedure below.
First, we attempt to determine the diameters, then guess quasi-polynomials for them, and finally
fit polygons whose Ehrhart quasi-polynomials match. Determining the diameters is the most difficult step.
Our currently successful cases use the following workflow, based on the AI-assisted \texttt{CayleyPy} library.

\begin{enumerate}
    \item \textbf{BFS (Brute force).} Compute the entire Cayley graph for small $n$ using brute-force BFS
    (\texttt{CayleyPy} currently allows up to $n \leq 15$).
    
    \item \textbf{Guess the longest elements.} Examine the elements in the last layer
    (i.e., those whose word metric equals the diameter) and identify patterns that may generalize to arbitrary $n$.
    
    \item \textbf{AI pathfinding.} For larger $n$, take these candidate elements and apply AI-based algorithms
    to estimate their word metrics. The current version of \texttt{CayleyPy} allows exact computations (optimal paths)
    up to around $n = 30$ (graph size $\sim 10^{30}$). Extending this to $n = 40$ (graph size $\sim 10^{50}$)
    is a work in progress.
    
    \item \textbf{Fit quasi-polynomials.} Fit quasi-polynomial expressions to the resulting data.
    In particular, one needs to determine both the starting value of $n$ from which  the formula becomes valid
    and the period of the quasi-polynomial.
\end{enumerate}


The most difficult part of the pipeline is Step 2, where one must identify a pattern for the longest elements.
Whether this can be done in general remains unclear, but there are examples where it is possible,
such as for consecutive cycles and related generators -- discussed below.


\subsubsection{\texorpdfstring{$k$-Consecutive cycles $(i,i+1,i+2,\dots,i+k-1)$}{k-Consecutive cycles (i, i+1, i+2, ..., i+k-1)}}

Fix an integer $k$. For $n > k$, we consider elements of $S_n$ given by the cyclic permutations 
$(i, i+1, \dots, i+k-1)$ for $i = 0, \dots, n-k$. For $k = 2$, these are the neighbor transpositions 
(Coxeter or bubble sort generators) considered previously. For odd $k$, they generate $A_n$ 
inside $S_n$, for even $k$ they give $S_n$. There are two natural options: whether or not to include inverses in the generating set.
We consider both cases.
Here we present quasi-polynomial formulas for some of the diameters associated with these generators 
and briefly outline the shape of the dual polygon, leaving the determination of the holography map 
for future investigation. To the best of our knowledge, formulas for the diameters are new for $k > 3$.
From the physical point of view, the Laplacian of such graphs corresponds to the  situation
when $k$ neighboring spins interact, spin chains of that sort appear e.g. in AdS/CFT. 

\textbf{Informally. $k$ leads to $(k-1)$ shrinkage.}
Various effects for general $k$ compared to basic $k=2$ can be informally summarized  
as shrinkage $(k-1)$-times. For example, the diameter of the Cayley graph for $k=2$ is $n(n-1)/2$,
while the leading term for the diameters  for general $k$ is $n(n-1)/(2(k-1))$,
the same concerns various Schreier coset graphs and not only diameters, but also word metrics.

\textbf{Dual polygon as a $(k-1)$-shrinkage.}
For $k = 2$, the Cayley graph has as its dual polygon the triangle with vertices
$(0,0)$, $(n,0)$, and $(0,n)$. For general $k$, we expect the dual polygon to be close
to the triangle with vertices $(0,0)$, $(n/(k-1),0)$, and $(0,n)$, although exact results
have yet to be obtained.
Similarly, for various Schreier coset graphs associated with these generators, we expect
a $(k-1)$-fold shrinkage comparing to $k=2$ case,  along the $x$-axis.

\textbf{Conjecture on diameters of the coset graph.}
Consider the generating set given by consecutive $k$-cycles (without inverses), acting on
$\{0,1\}$-vectors with $\lfloor n/2 \rfloor$ zeros and $n - \lfloor n/2 \rfloor$ ones.
We conjecture that the diameters are as follows for large enough $n$:
\[
\begin{aligned}
n &\equiv 0 \!\pmod{(2k-2)} \;:& \!\!\!
D_k(n) &= \frac{n^2 + 4(k-1)}{4(k-1)}, \\[6pt]
n &\equiv -1 \!\pmod{(2k-2)} \;:& \!\!\!
D_k(n) &=  \frac{n^2 + 4(k-1)-1}{4(k-1)},
\end{aligned}
\]
and else, denoting $p = n \pmod{(2k-2)} $,
\[
\begin{aligned}
 D_k(n)= &  \frac{n(n+2k-p-2)}{4(k-1)}, & p  {\rm \ is \ even \ }
\\[6pt]
 D_k(n)= &   \frac{(n-1)(n+2k-p-2)}{4(k-1)}, & p  {\rm \ is \ odd \ }.
\end{aligned}
\]

\textbf{``Riemann conjecture'' for $H$-polynomials.}
We computed quasi-polynomials and $H$-polynomials for the coset graphs discussed above,
as well as for their modifications with inverse-closed generating sets.
We observe that for small $k$ (up to $k=6$) all zeros of the $H$-polynomials in the inverse-closed case
lie on the unit circle, whereas this is not the case for the non–inverse-closed generators.
In this sense, the ``Riemann conjecture'' \cite{RodriguezVillegas2002},\cite{BumpChoiKurlbergVaaler2000} holds in the inverse-closed setting.
It is unclear whether this pattern persists for larger $k$.
For example, for the full Cayley graph with inverse closed generators the ``Riemann conjecture'' holds up to $k=5$,
but already fails for $k=6$.

\begin{figure}[H]
	\centering
	\includegraphics[width=0.45\textwidth]{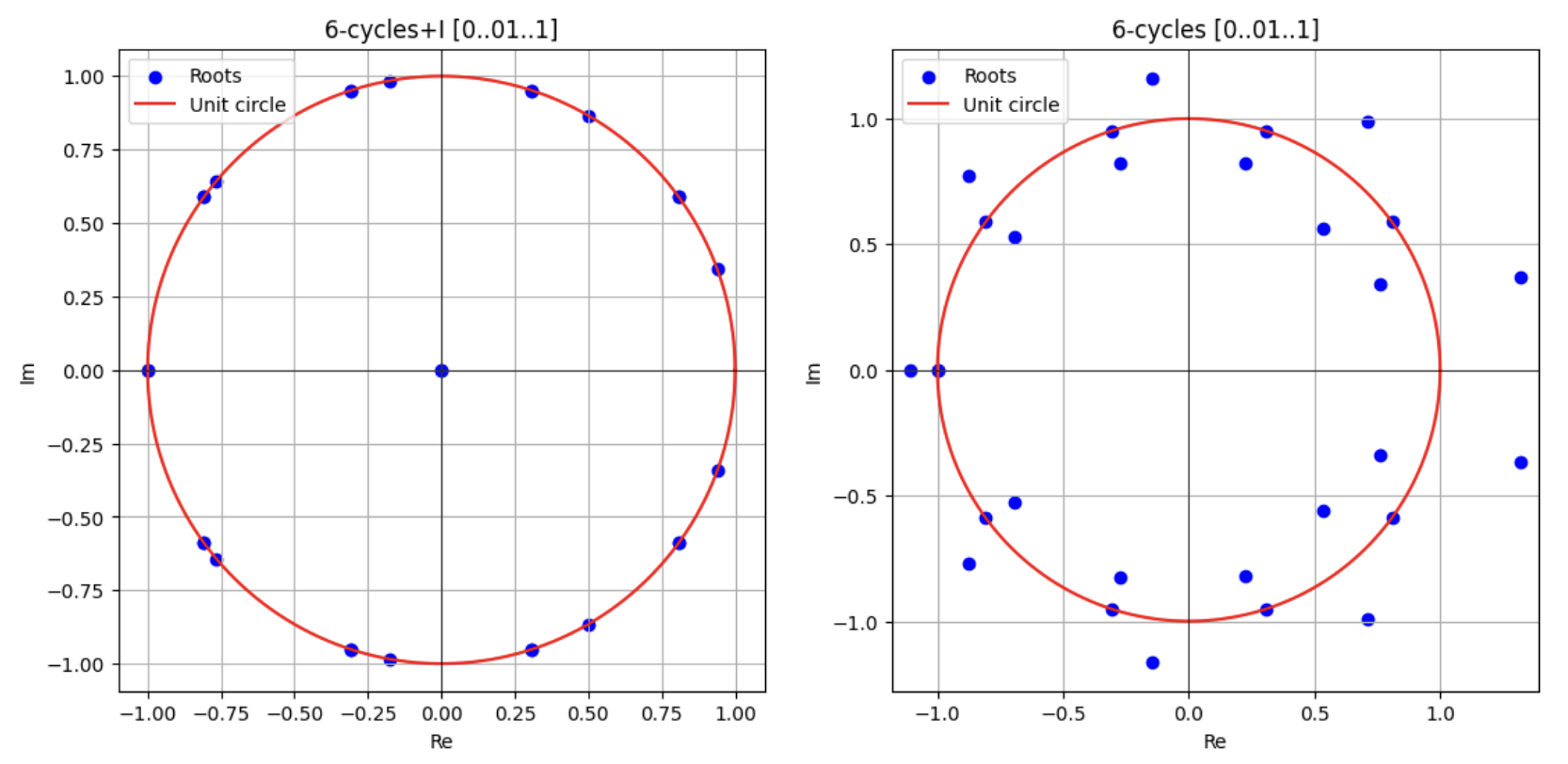}
	\caption{Roots of the $H$-polynomials in the complex plane.
In the left panel, all non-trivial roots have modulus equal to one, so the ``Riemann conjecture'' holds,
whereas in the right panel it is violated.
Both panels correspond to $k$-consecutive cycle generators with $k=6$:
the left uses an inverse-closed generating set, while the right uses generators without inverses.
These correspond to Schreier coset graphs for the subgroup $S_{\lfloor n/2 \rfloor} \times S_{n-\lfloor n/2 \rfloor}$.
Whether the property that all roots have modulus one persists for larger $k$ remains unclear. }
	\label{fig:6cycles_coset0then1}
\end{figure}

\textbf{Conjectures on diameters of the Cayley graph.}
Consider consecutive $k$-cycle generators including inverses and consider the Cayley graph.
(Not the Schreier coset graph as above).
We conjecture: the diameters are given by
$
D_k(n) = \left\lfloor \frac{n(n-1)}{2(k-1)} \right\rfloor + Q_0(n),
$
where $Q_0(n)$ is a periodic function of $n$, that is, a quasi-polynomial of degree zero.
It should be understood as a small correction to the leading term
$\left\lfloor \frac{n(n-1)}{2(k-1)} \right\rfloor$.
The period is $(k-1)$ for even $k$ and $2(k-1)$ for odd $k$.
The quasi-polynomial formulas are valid for $n \ge 2k$.
Explicit quasi-polynomials for $k \le 7$ are presented in the Supplementary Material.

It is worth looking at the corresponding H-polynomial,
e.g. for $k=6$ it is:  $x^{14} + 2x^{13} + 2x^{12} + 3x^{11} + 4x^{10} + 2x^9 + x^8 + 2x^7 + x^6
+ 2x^4 + 2x^3 + x^2 + x + 1$, one can observe it is not unimodal.

\subsection{\texorpdfstring{$SL(2,\mathbb{Z})$ and the Farey graph}{SL(2,Z) and the Farey graph}}

Here we reinterpret the results of by one the present authors (D.Melnikov et.al.~\cite{CamiloMelnikovNovaesPrudenziati2019}) in a way closer to the
present exposition and to the principles of holography and ``complexity $=$ area''.
It was argued there that to each element of $SL(2,\mathbb{Z})$ one may associate a curve in the
hyperbolic plane whose enclosed area approximately corresponds to the complexity of the element
with respect to the standard generators $S$ and $T$.
In this language, the map from a group element (i.e.\ a node of the Cayley graph) to such a curve
may be viewed as a holography map, in agreement with our framework.
The cited results therefore imply that a variant of the ``complexity $=$ area'' principle holds,
at least approximately.
An illustration is shown in Figure~\ref{fig:fareytesselation}.



\begin{figure}[H]
	\centering
	\includegraphics[width=0.45\textwidth]{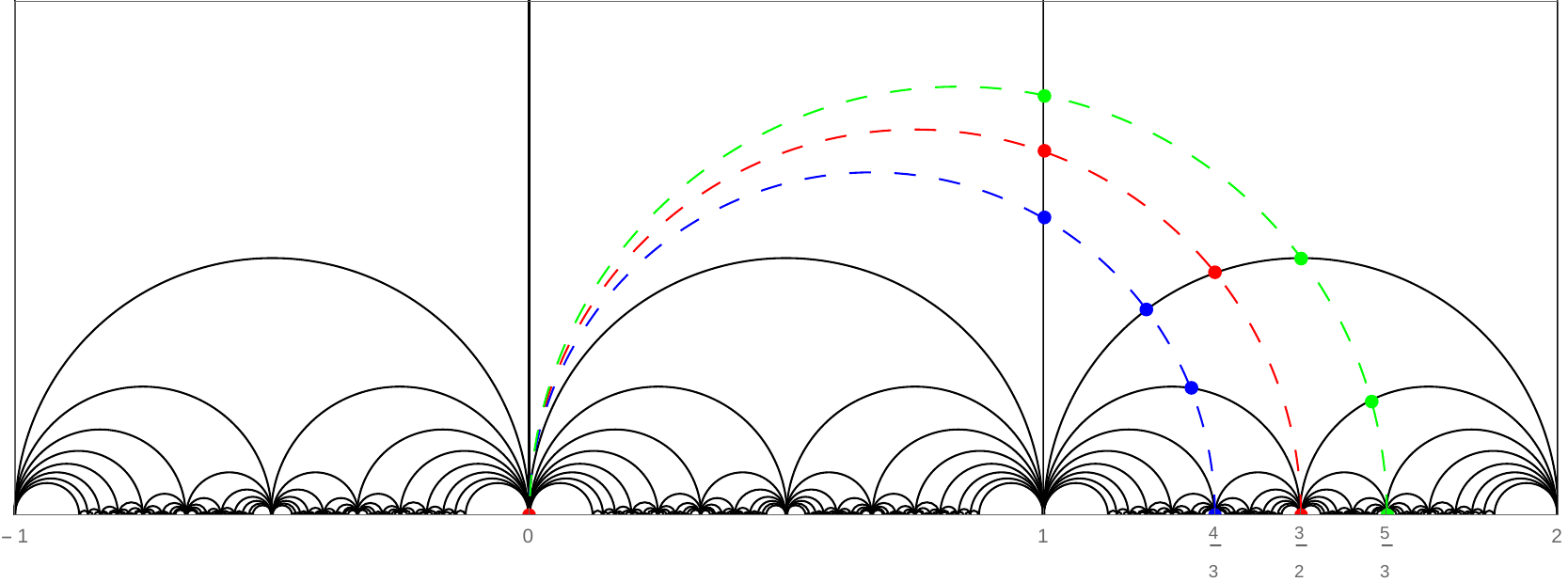}
	\caption{Farey graph on the upper halfplane. From~\cite{CamiloMelnikovNovaesPrudenziati2019}.
    The dashed curves are ``holography images"  of some elements of  $SL(2,\mathbb{Z})$ 
    areas under them correspond to complexities of these elements. ``Complexity = area" holds approximately. }
	\label{fig:fareytesselation}
\end{figure}

\clearpage

\section{Neighbor transpositions Cayley and Schreier graphs}

\subsection{\texorpdfstring{Multiset $\{0^{n_0}\,1^{n_1}\,\dots\,m^{n_m}\}$}{Multiset {0^{n_0} 1^{n_1} ... m^{n_m}}}}

In this section we show how to build polygons for vertices of graphs generated by mulitsets.
\subsubsection*{\texorpdfstring{An example for $\{0^4\,1^3\,2^5\,3^2\}$.}{An example for {0^4 1^3 2^5 3^2}.}}

\begin{figure}[htbp]
	\centering
	\includegraphics[width=0.3\textwidth]{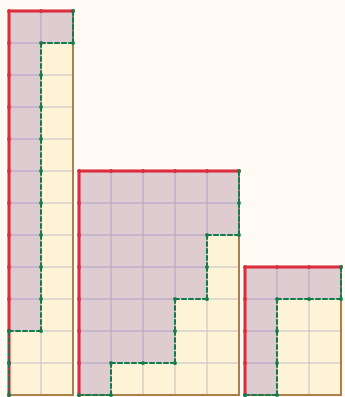}
	\caption{Multiset $\{0^4\,1^3\,2^5\,3^2\}$, i.e.\ $(n_0,n_1,n_2,n_3)=(4,3,5,2)$ ; $N=4+3+5+2=14$.
		The rectangles from left to right are $R_0$ (size $2\times 12$), $R_1$ ($5\times 7$), and $R_2$ ($3\times 4$).
		$E=[0,0,0,0,1,1,1,2,2,2,2,2,3,3]$, $P=[2,1,3,2,2,0,0,2,0,1,2,1,3,0]$,
		and $d(P,E)=A_0+A_1+A_2=11+25+6=42$.}
	\label{fig:three_rectangles_v2b.png}
\end{figure}

We illustrate the construction on a graph $G$ whose vertices are permutations of the multiset
$\{0^4\,1^3\,2^5\,3^2\}$, with edges given by adjacent transpositions $(i,i+1)$.
Let
$E=[0,0,0,0,1,1,1,2,2,2,2,2,3,3]$.
The construction produces a polygon $f(E)$ such that, for any vertex $P$ in $G$, the polygon $f(P)$ is a subset of $f(E)$.
Moreover, the bubble-sort distance from $P$ to $E$ equals the area difference
$\operatorname{area}(f(E))-\operatorname{area}(f(P))$.

Let $(n_0,n_1,n_2,n_3)=(4,3,5,2)$. Consider the rectangles
\begin{itemize}
	\item $R_0$ of size $n_3\times(n_0+n_1+n_2)=2\times(4+3+5)=2\times 12$ (area $24$),
	\item $R_1$ of size $n_2\times(n_0+n_1)=5\times(4+3)=5\times 7$ (area $35$),
	\item $R_2$ of size $n_1\times n_0=3\times 4$ (area $12$).
\end{itemize}
Define $f(E)$ as the union of these rectangles, so that
$\operatorname{area}(f(E))=24+35+12=71$; see Figure~\ref{fig:three_rectangles_v2b.png}.
Given
$P=[2,1,3,2,2,0,0,2,0,1,2,1,3,0]$,
define
\begin{itemize}
	\item $P_0=P=[2,1,3,2,2,0,0,2,0,1,2,1,3,0]$,
	\item $P_1=P_0-\{3\}=[2,1,2,2,0,0,2,0,1,2,1,0]$,
	\item $P_2=P_1-\{2\}=[1,0,0,0,1,1,0]$.
\end{itemize}

Next, in each rectangle $R_i$ we draw a green lattice path from the lower-left corner to the upper-right corner,
moving one step either up or right at each move.
The choice of step is determined by the symbols of $P_i$, read from left to right, as follows:
\begin{itemize}
	\item In $R_0$, read $P_0$: if the symbol is $3$, step right; otherwise step up.
	\item In $R_1$, read $P_1$: if the symbol is $2$, step right; otherwise step up.
	\item In $R_2$, read $P_2$: if the symbol is $1$, step right; otherwise step up.
\end{itemize}

Let $f(P)$ be the union of the regions below these green paths. Then $f(P)\subseteq f(E)$.
Furthermore, if $A_i$ denotes the area in $R_i$ between the corresponding red and green curves, then
the bubble-sort distance satisfies
\[
d(P,E)=\operatorname{area}(f(E))-\operatorname{area}(f(P))=A_0+A_1+A_2=11+25+6=42,
\]
as shown in Figure~\ref{fig:three_rectangles_v2b.png}.

\subsubsection*{Generalization for an arbitrary multiset}
Given a multiset $\{0^{n_0}\,1^{n_1}\,\dots\,m^{n_m}\}$. Let
$N = n_0+n_1+\cdots+n_m$, and let $P=\pi(E)$ for some permutation $\pi\in S_N$, where
$$
E=\big[
\underbrace{0,\dots,0}_{n_0},\ 
\underbrace{1,\dots,1}_{n_1},\ 
\dots,\
\underbrace{m,\dots,m}_{n_m}
\big].
$$

Set $P_0=P=\pi(E)$ and, for $k=1,\dots,m-1$, define $P_k$ to be the sequence obtained from $P_{k-1}$ by deleting all occurrences of the letter $m-k+1$.
Note that $P_k$ contains only letters from $\{0,1,2,\dots,m-k\}$.




\color{black}

For each $k=0,\dots,m-1$, consider  rectangle $R_k$ of size
$n_{m-k}\times\left(\sum_{i=0}^{m-k-1}n_i\right)$.
Inside $R_k$ we draw two lattice paths (red and green); both start at the lower-left corner.
The red path goes straight up to the top edge and then straight right to the upper-right corner.

The green path is constructed by reading $P_k$ from left to right:
if the current letter is $(m-k)$, move one step to the right; otherwise, move one step up.
After $|P_k|$ steps, the path reaches the upper-right corner.
Let $A_k$ denote the area between the red and green paths in $R_k$.
Then   the total bubble-sort swap steps converting $P$ to $E$ is
$$
d(P,E)=\sum_{k=0}^{m-1} A_k.
$$

An example of this construction for the multiset $\{0^4\,1^3\,2^5\,3^2\}$ is shown in Figure~\ref{fig:three_rectangles_v2b.png}.

\subsubsection*{\texorpdfstring{More details for the multi-set $\{0^{n_0}\, 1^{n_1}\, \dots \, m^{n_m}\}$}{More details for the multiset {0,1,...,m} with multiplicities n0,n1,...,nm}}


$$
\begin{array}{c|ccccc}
	\hline\text{count}  & n_0& n_1 & \dots  & n_m \\
	\hline\text{letter} & 0 & 1& \dots & m\\
	\hline
\end{array}
$$

Let's fix notation for vectors
\begin{itemize}
	\item $P_0=P=\pi(E)$ contains letters $\{0,1,2,\dots, m-2,m-1, m\}$
	\item $P_1=P_0 - \{m\}$ contains letters $\{0,1,2,\dots, m-2,m-1\}$
	\item $P_2=P_1 - \{m-1\}$ contains letters $\{0,1,2,\dots, m-2\}$
	\item $\dots$
	\item $P_{m-2}=P_{m-1} - \{3\}$ contains letters $\{0,1,2\}$
	\item $P_{m-1}=P_{m-2} - \{2\}$ contains letters $\{0,1\}$
\end{itemize}

$$
\begin{array}{l|lllll} 
	\hline \text{Rectangle} & R_0 & R_1 & \dots & R_{m-1} \\
	\hline\text{Rectangle Size}  & n_m \times \sum_{i=0}^{m-1} n_i & 
	n_{m-1} \times \sum_{i=0}^{m-2} n_i & \dots
	& n_1 \times n_0 \\
	\hline\text{Vector} & P_0 & P_1 & \dots &P_{m-1} \\	     
	\hline\text{Curve}  & m \,\text{vs}\, \{0, \dots, (m-1)\} & 
	(m-1)  \,\text{vs}\,  \{0,\dots,(m-2)\} & \dots &
	1  \,\text{vs}\,  0 \\	                                 
	\hline
\end{array}
$$

\begin{itemize}
	\item
	For rectangle $R_0$, read the word $P_0=P$ from left to right. If you see $m$, take a step to the right; otherwise move up.
	\item
	For rectangle $R_1$, read the word $P_1 $ from left to right.
	If you see $m-1$, take a step to the right; otherwise move up.
	\item $\dots$
	\item
	For rectangle $R_{m-1}$, read the word $P_{m-1}$ from left to right.
	If you see $1$, take a step to the right; otherwise move up.
\end{itemize}

\subsection{Cosets with 0,1,2 components - alternative representation}
\textbf{Coset space as an orbit of words.}

	Throughout this section, cosets refers to the $S_n$–orbits of words with fixed symbol multiplicities, equivalently to the Schreier coset space $S_n / (S_{\lambda_0}\times S_{\lambda_1}\times S_{\lambda_2})$ under the action by adjacent transpositions.
	Let $E=0^{\lambda_0}1^{\lambda_1}2^{\lambda_2}$.
	The Schreier coset space
	can be identified with the $S_n$--orbit of $E$, i.e.\ the set of all words of length $n$ 
	with exactly $\lambda_0$ $0$'s, $\lambda_1$ $1$'s, and $\lambda_2$ $2$'s:
	\[
	\begin{aligned}
		\mathcal O&=\{\,W\in\{0,1,2\}^n:\#0=\lambda_0,\#1=\lambda_1,\#2=\lambda_2\,\},\\
		|\mathcal O|&=\frac{n!}{\lambda_0!\,\lambda_1!\,\lambda_2!}.
	\end{aligned}
	\]
	%
	Let $s_i=(i\ i{+}1)\in S_n$ be the adjacent transposition. It acts on a word
	$W=w_1\dots w_n$ by swapping adjacent letters:
	$
	s_i\cdot W = s_i\cdot (w_1 \dots w_i w_{i+1}\dots w_n)=(w_1 \dots  w_{i+1} w_i \dots w_n).
	$
	Thus, in the Schreier graph picture, vertices are words in $\mathcal O$ and edges connect
	$W$ to $s_i\cdot W$ for each adjacent transposition $s_i$ (ignoring loops when the swap does not change the word).

	Permuting positions among equal symbols does not change the word, so the stabilizer  in $S_n$ is
	$	H \;=\; S_{\lambda_0}\times S_{\lambda_1}\times S_{\lambda_2}$.
	To a word $W\in 	\mathcal O$ we associate a lattice path $P(W)$ starting at the point $(0,0)$,
	using the following encoding of steps:
	\[
	0:\ (x,y)\mapsto(x+1,y),\qquad
	1:\ (x,y)\mapsto(x,y+1),\qquad
	2:\ (x,y)\mapsto(x-1,y+1).
	\]
	The reference word $E$ is encoded analogously, yielding a path $P(E)$.
	The two paths share the same start and terminal points.

\begin{Claim} 
		Let $S(W,E)$ denote the oriented area of the polygon enclosed by the paths
	$P(W)$ and $P(E)$.
	Then	the area $S(W,E)$ is equal to the number of bubble sort operations
		required to transform the word $W$ into the reference word $E$,
		that is, the minimal number of adjacent transpositions needed to sort
		the symbols of $W$. 
\end{Claim}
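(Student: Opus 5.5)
The plan is to compare both sides of the Claim to a common third quantity, the number of inversions
\[
\mathrm{inv}(W)=\#\{(i,j): i<j,\ w_i>w_j\},
\]
and to compute the oriented area by a shoelace-type formula that splits into a sum over pairs of letters.

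\textbf{Step 1: bubble-sort distance equals $\mathrm{inv}(W)$.} This is standard and I would only sketch it. An adjacent swap of two equal letters fixes the word, so it is a loop and can be ignored. An adjacent swap of two distinct letters changes $\mathrm{inv}$ by exactly $\pm 1$, because only the relative order of that one pair changes. The word $E=0^{\lambda_0}1^{\lambda_1}2^{\lambda_2}$ is the unique word in $\mathcal O$ with $\mathrm{inv}=0$. If $\mathrm{inv}(W)>0$, then $W$ has an adjacent descent $w_i>w_{i+1}$, and swapping it lowers $\mathrm{inv}$ by one. Hence the graph distance $d(W,E)$ equals $\mathrm{inv}(W)$.

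\textbf{Step 2: a pairwise formula for oriented area.} Write $v_0=(1,0)$, $v_1=(0,1)$, $v_2=(-1,1)$ for the step vectors. For a word $W$, consider the closed polygon formed by the path $P(W)$ followed by the straight segment back from its endpoint to the origin. By the shoelace formula its oriented area is
\[
\mathcal A(W)=\tfrac12\sum_{i<j}\det\bigl(v_{w_i},v_{w_j}\bigr).
\]
To see this, note that the $k$-th vertex of the path is the partial sum $\sum_{i\le k} v_{w_i}$. Hence $\tfrac12\sum_k \det(x_{k-1},x_k)$ expands bilinearly into this double sum, and the closing segment contributes $\det(x_n,0)=0$.

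$P(W)$ and $P(E)$ share both endpoints, so the closing segment is the same for both. The oriented area enclosed between the two paths is therefore $S(W,E)=\pm\bigl(\mathcal A(E)-\mathcal A(W)\bigr)$, with the sign fixed by the orientation convention. I take the oriented area in the sense of the winding-number integral $\int \mathrm{wind}\,dA$, which stays meaningful even if the two paths cross.

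\textbf{Step 3: the key computation.} A direct check gives
\[
\det(v_0,v_1)=\det(v_1,v_2)=\det(v_0,v_2)=1.
\]
So every pair of distinct letters in increasing order $a<b$ contributes $+\tfrac12$, and by antisymmetry every inverted pair contributes $-\tfrac12$. Pairs of equal letters contribute $0$.

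The number of pairs of distinct letters is the same for $W$ and $E$. In $E$ all of these pairs are increasing, while in $W$ exactly $\mathrm{inv}(W)$ of them are inverted. Therefore
\[
\mathcal A(E)-\mathcal A(W)=\tfrac12\bigl(2\,\mathrm{inv}(W)\bigr)=\mathrm{inv}(W).
\]
Combining this with Step 1 gives the Claim. It also explains the choice $v_2=(-1,1)$: it is exactly what makes all three determinants equal to $1$. The same argument works for any set of step vectors with $\det(v_a,v_b)=1$ for all $a<b$.

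\textbf{Main obstacle.} The only delicate point is the definition and sign of the ``oriented area of the polygon enclosed by the paths'' when $P(W)$ and $P(E)$ cross or overlap. I would fix it as $\tfrac12\oint x\,dy-y\,dx$ over the closed loop $P(W)$ followed by $P(E)$ reversed, and orient it so that the result is nonnegative. With this convention the identity above is literally Green's theorem, and no geometric case analysis is needed.
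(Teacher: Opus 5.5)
Your proof is correct, but it takes a somewhat different route from the paper's sketch.

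The paper also reduces to $\Inv(W)$. It then splits $\Inv(W)=\Inv_{10}(W)+\Inv_{20}(W)+\Inv_{21}(W)$ by letter pairs, asserts that $S(W,E)$ splits the same way into the areas of three binary paths obtained by ``projecting'' $P(W)$, and applies the binary (Mann--Whitney/ROC) case to each piece. The additivity of area over letter pairs is exactly the step the paper leaves unjustified. Your bilinear shoelace identity $\mathcal A(W)=\tfrac12\sum_{i<j}\det(v_{w_i},v_{w_j})$ proves it, and even refines it to a sum over pairs of positions. So you never need the binary case, only the observation that all three determinants equal $1$. The paper's route gives a geometric picture tied to the ROC example. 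Yours is self-contained and explains why $v_2=(-1,1)$ is the right choice.

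Two side remarks.

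First, the orientation worry in your last paragraph does not actually arise. Since $\det(v_a,v_b)>0$ for $a<b$, the path $P(E)$ is the counterclockwise boundary arc, from the origin to the common endpoint, of the convex hexagon $\sum_i[0,v_{w_i}]$. That hexagon contains every $P(W)$, because all vertices are subset sums. All paths are also $y$-monotone. Hence $P(W)$ can touch but never cross $P(E)$, and the oriented area is just the ordinary area of the enclosed region.

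Second, your closing generalization is vacuous beyond three letters. If $\det(v_0,v_b)=1$ for all $b>0$, then $v_b=u+t_bv_0$ with $\det(v_0,u)=1$, so $\det(v_a,v_b)=t_a-t_b$. The conditions $t_1-t_2=t_1-t_3=t_2-t_3=1$ are then inconsistent. So no encoding by four planar step vectors can produce area $=\Inv$ in this way, which is consistent with the paper's use of a union of rectangles for general multisets.
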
	

	\begin{proof}[Sketch]
		Let $W$ be a word over the alphabet $\{0,1,2\}$ with fixed multiplicities
		$\lambda_0,\lambda_1,\lambda_2$, and let
		\[
		E = 0^{\lambda_0}1^{\lambda_1}2^{\lambda_2}
		\]
		be the sorted reference word.  
		It is classical that the minimal number of adjacent transpositions required to
		transform $W$ into $E$ equals the total number of inversions with respect to
		the order $0<1<2$,
		\[
		\Inv(W)=\#\{(i,j): i<j,\ W_i>W_j\}.
		\]
		
		The inversions decompose into three disjoint types,
		\[
		\Inv(W)=\Inv_{10}(W)+\Inv_{20}(W)+\Inv_{21}(W),
		\]
		where $\Inv_{ab}(W)$ counts occurrences of the pattern $a$ preceding $b$ with
		$a>b$.
		
		The lattice path encoding
		\[
		0:(1,0),\qquad 1:(0,1),\qquad 2:(-1,1)
		\]
		admits a corresponding decomposition of the oriented area
		$S(W,E)$ into three independent contributions.  
		Indeed, projecting the path $P(W)$ onto the directions associated with the pairs
		$(1,0)$, $(2,0)$, and $(2,1)$ yields three binary lattice paths whose enclosed
		areas coincide with $\Inv_{10}(W)$, $\Inv_{20}(W)$, and $\Inv_{21}(W)$,
		respectively.  
		By the binary case, each such area equals the corresponding inversion count.
		
		Summing over the three pairs, we obtain
		\[
		S(W,E)=\Inv_{10}(W)+\Inv_{20}(W)+\Inv_{21}(W)=\Inv(W).
		\]
		Since bubble sort (or insertion sort) performs exactly one adjacent transposition
		per inversion, the minimal number of adjacent swaps required to transform $W$
		into $E$ equals $S(W,E)$, as claimed.
	\end{proof}

	\color{black}

	Hence, the word metric induced by adjacent swaps admits a geometric
	interpretation as the area between the corresponding lattice paths.

	\begin{example}
		Consider
		\[
		E=[0,1,2],\qquad P=[2,1,0].
		\]
		
		In Fig.~\ref{fig:auc_words_012_210}, the words 
		$012$ and 
		$210$ are encoded by lattice paths, shown in blue and red, respectively. The area enclosed by the polygon formed by these two paths equals $S=3$.
		
	\end{example}
	
	\begin{figure}[ht]
		\centering
		\begin{tikzpicture}[scale=1.2,>=stealth]
			
			
			\coordinate (A0) at (0,0);
			\coordinate (A1) at (1,0);   
			\coordinate (A2) at (1,1);   
			\coordinate (A3) at (0,2);   
			
			\coordinate (B0) at (0,0);
			\coordinate (B1) at (-1,1);  
			\coordinate (B2) at (-1,2);  
			\coordinate (B3) at (0,2);   
			
			\fill[black!15]
			(A0) -- (A1) -- (A2) -- (A3) --
			(B2) -- (B1) -- cycle;
			
			\draw[very thick,blue]
			(A0) -- (A1) -- (A2) -- (A3);
			
			\draw[very thick,red]
			(B0) -- (B1) -- (B2) -- (B3);
			
			\fill (0,0) circle (2pt) node[below right] {$(0,0)$};
			\fill (0,2) circle (2pt) node[above] {$(0,2)$};

			\node at (0.1,1) {\Large $S=3$};
			
		\end{tikzpicture}
		\caption{Encoding the words $012$ and $210$ by the lattice paths.
			The distance between the words coincides with the oriented area $S$ between the associated paths.}
		
		\label{fig:auc_words_012_210}
	\end{figure}
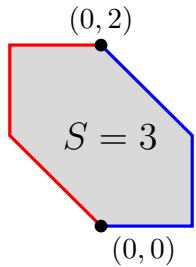
	
	In Fig.~\ref{fig:auc_words_012}, the same encoding is shown for several other pairs of words:
	(a) for the words \(012\) and \(102\), the enclosed area equals \(S=1\);
	(b) for the words \(012\) and \(021\), the enclosed area equals \(S=1\);
	(c) for the words \(012\) and \(120\), the enclosed area equals \(S=2\);
	(d) for the words \(012\) and \(201\), the enclosed area equals \(S=2\).

	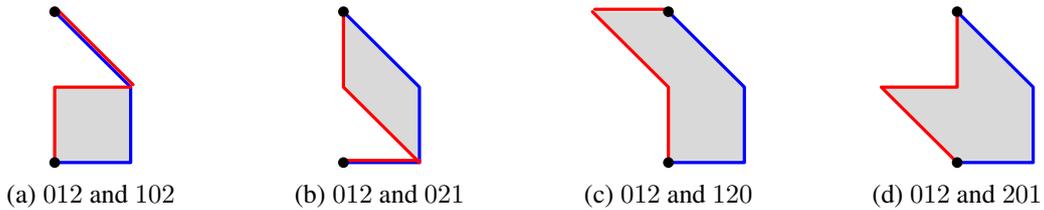
\begin{figure}[ht]
		\centering
		
		\begin{minipage}[t]{0.24\textwidth}
			\centering
			\begin{tikzpicture}[scale=1.0, line cap=round, line join=round]
				
				\tikzset{
					bluepath/.style={very thick, blue},
					redpath/.style={very thick, red},
					area/.style={black!15}
				}
				
				\coordinate (O)  at (0,0);
				\coordinate (A1) at (1,0);
				\coordinate (A2) at (1,1);
				\coordinate (E)  at (0,2);
				
				\coordinate (B1) at (0,1);
				\coordinate (B2) at (1,1);
				
				\fill[area] (O)--(A1)--(A2)--(E)--(B2)--(B1)--cycle;
				
				\draw[bluepath] (O)--(A1)--(A2)--(E);
				\draw[redpath]  (O)--(B1)--(B2);
				
				\begin{scope}[transform canvas={shift={(0.9pt,0.9pt)}}]
					\draw[redpath] (B2)--(E);
				\end{scope}
				
				\fill (0,0) circle (2pt) ;
				\fill (0,2) circle (2pt) ;
				
			\end{tikzpicture}
			
			\centerline{\small (a) $012$ and $102$}
		\end{minipage}
		\hfill
		\begin{minipage}[t]{0.24\textwidth}
			\centering
			\begin{tikzpicture}[scale=1.0, line cap=round, line join=round]
				
				\tikzset{
					bluepath/.style={very thick, blue},
					redpath/.style={very thick, red},
					area/.style={black!15}
				}
				
				\coordinate (O)  at (0,0);
				\coordinate (A1) at (1,0);
				\coordinate (A2) at (1,1);
				\coordinate (E)  at (0,2);
				\coordinate (B2) at (0,1);
				
				\fill[area] (A1)--(A2)--(E)--(B2)--cycle;
				
				\draw[bluepath] (O)--(A1)--(A2)--(E);
				
				\begin{scope}[transform canvas={shift={(0,0.8pt)}}]
					\draw[redpath] (O)--(A1);
				\end{scope}
				\draw[redpath] (A1)--(B2)--(E);
				
				\fill (0,0) circle (2pt) ;
				\fill (0,2) circle (2pt) ;
			\end{tikzpicture}
			
			\centerline{\small (b) $012$ and $021$}
		\end{minipage}
		\hfill
		\begin{minipage}[t]{0.24\textwidth}
			\centering
			\begin{tikzpicture}[scale=1.0, line cap=round, line join=round]
				
				\tikzset{
					bluepath/.style={very thick, blue},
					redpath/.style={very thick, red},
					area/.style={black!15}
				}
				
				\coordinate (O)  at (0,0);
				\coordinate (A1) at (1,0);
				\coordinate (A2) at (1,1);
				\coordinate (E)  at (0,2);
				
				\coordinate (B1) at (0,1);
				\coordinate (B2) at (-1,2);
				
				\fill[area] (O)--(A1)--(A2)--(E)--(B2)--(B1)--cycle;
				
				\draw[bluepath] (O)--(A1)--(A2)--(E);
				\draw[redpath]  (O)--(B1)--(B2);
				
				\begin{scope}[transform canvas={shift={(0.9pt,0.9pt)}}]
					\draw[redpath] (B2)--(E);
				\end{scope}
				
				\fill (0,0) circle (2pt) ;
				\fill (0,2) circle (2pt) ;
				
			\end{tikzpicture}
			
			\centerline{\small (c) $012 $ and $ 120$}
		\end{minipage}
		\hfill
		\begin{minipage}[t]{0.24\textwidth}
			\centering
			\begin{tikzpicture}[scale=1.0, line cap=round, line join=round]
				
				\tikzset{
					bluepath/.style={very thick, blue},
					redpath/.style={very thick, red},
					area/.style={black!15}
				}
				
				\coordinate (O)  at (0,0);
				\coordinate (A1) at (1,0);
				\coordinate (A2) at (1,1);
				\coordinate (E)  at (0,2);
				
				\coordinate (B1) at (-1,1);
				\coordinate (B2) at (0,1);
				
				\fill[area] (O)--(A1)--(A2)--(E)--(B2)--(B1)--cycle;
				
				\draw[bluepath] (O)--(A1)--(A2)--(E);
				\draw[redpath]  (O)--(B1)--(B2)--(E);
				\fill (0,0) circle (2pt) ;
				\fill (0,2) circle (2pt) ;
				
			\end{tikzpicture}
			
			\centerline{\small (d) $012$ and $201$}
		\end{minipage}
		
		\caption{Encoding the words by the lattice paths.}
		
		\label{fig:auc_words_012}
	\end{figure}

	\begin{figure}[ht]
		\centering
		\begin{tikzpicture}[scale=1.1, line cap=round, line join=round]
			
			\tikzset{
				bluepath/.style={very thick, blue},
				redpath/.style={very thick, red},
				area/.style={black!15},
				vblue/.style={circle, fill=blue, inner sep=1.3pt},
				vred/.style={circle, fill=red, inner sep=1.3pt}
			}
			
			\coordinate (O)  at (0,0);
			\coordinate (A1) at (1,0);   
			\coordinate (A2) at (2,0);   
			\coordinate (A3) at (2,1);   
			\coordinate (A4) at (2,2);   
			\coordinate (E)  at (1,3);   
			
			\coordinate (B1) at (-1,1);  
			\coordinate (B2) at (-1,2);  
			\coordinate (B3) at (-1,3);  
			\coordinate (B4) at (0,3);   
			
			\fill[area]
			(O)--(A1)--(A2)--(A3)--(A4)--(E)--
			(B4)--(B3)--(B2)--(B1)--cycle;
			
			\draw[bluepath] (O)--(A1)--(A2)--(A3)--(A4)--(E);
			\draw[redpath]  (O)--(B1)--(B2)--(B3)--(B4)--(E);
			
			\node[vblue] at (O)  {};
			\node[vblue] at (A1) {};
			\node[vblue] at (A2) {};
			\node[vblue] at (A3) {};
			\node[vblue] at (A4) {};
			\node[vblue] at (E)  {};
			
			\node[vred]  at (O)  {};
			\node[vred]  at (B1) {};
			\node[vred]  at (B2) {};
			\node[vred]  at (B3) {};
			\node[vred]  at (B4) {};
			\node[vred]  at (E)  {};
			
			\node at (0.4,1.7) {\Large $S=8$};
			
			\fill (0,0) circle (2pt) node[below right] {$(0,0)$};
			\fill (1,3) circle (2pt) node[above] {$(1,3)$};
			
		\end{tikzpicture}
		\caption{Encoding of the words $00112$ (blue) and $21100$ (red) by lattice paths.}
		\label{fig:auc_words_00112_21100}
	\end{figure}

	\begin{figure}[ht]
		\centering
		\begin{tikzpicture}[scale=1.05, line cap=round, line join=round]
			
			\tikzset{
				bluepath/.style={very thick, blue},
				redpath/.style={very thick, red},
				area/.style={black!15},
				vblue/.style={circle, fill=blue, inner sep=1.3pt},
				vred/.style={circle, fill=red, inner sep=1.3pt}
			}
			
			\coordinate (O)  at (0,0);
			\coordinate (A1) at (1,0);
			\coordinate (A2) at (2,0);
			\coordinate (A3) at (2,1);
			\coordinate (A4) at (2,2);
			\coordinate (A5) at (1,3);
			\coordinate (E)  at (0,4);
			
			\coordinate (B1) at (-1,1);
			\coordinate (B2) at (-2,2);
			\coordinate (B3) at (-2,3);
			\coordinate (B4) at (-2,4);
			\coordinate (B5) at (-1,4);
			
			\fill[area]
			(O)--(A1)--(A2)--(A3)--(A4)--(A5)--(E)--
			(B5)--(B4)--(B3)--(B2)--(B1)--cycle;
			
			\draw[bluepath] (O)--(A1)--(A2)--(A3)--(A4)--(A5)--(E);
			\draw[redpath]  (O)--(B1)--(B2)--(B3)--(B4)--(B5)--(E);
			
			\node[vblue] at (O)  {};
			\node[vblue] at (A1) {};
			\node[vblue] at (A2) {};
			\node[vblue] at (A3) {};
			\node[vblue] at (A4) {};
			\node[vblue] at (A5) {};
			\node[vblue] at (E)  {};
			
			\node[vred]  at (O)  {}; 
			\node[vred]  at (B1) {};
			\node[vred]  at (B2) {};
			\node[vred]  at (B3) {};
			\node[vred]  at (B4) {};
			\node[vred]  at (B5) {};
			\node[vred]  at (E)  {}; 
			
			\node at (0.2,2.2) {\Large $S=12$};
			
			\fill (0,0) circle (2pt) node[below right] {$(0,0)$};
			\fill (0,4) circle (2pt) node[above] {$(0,4)$};
			
		\end{tikzpicture}
		\caption{Encoding of the words $001122$ (blue) and $221100$ (red) by lattice paths.}
		\label{fig:auc_words_001122_221100}
	\end{figure}
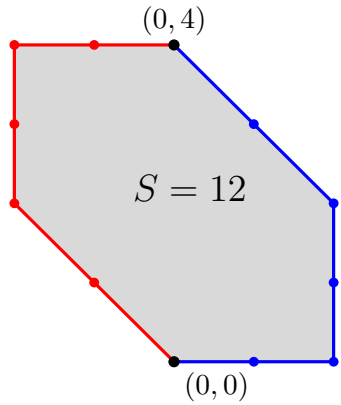

\subsection{Box-Ball System evolution as a deterministic walk on a Schreier/Cayley graph and the AUC--AAC duality}
	
	The \emph{box--ball system} (BBS) is a simple discrete dynamical system on a row of boxes, each either empty ($0$) or containing a ball ($1$). The system evolves over discrete time steps according to a deterministic rule.
	\begin{figure}[htbp]
		\centering
			\includegraphics[width=0.9\textwidth]{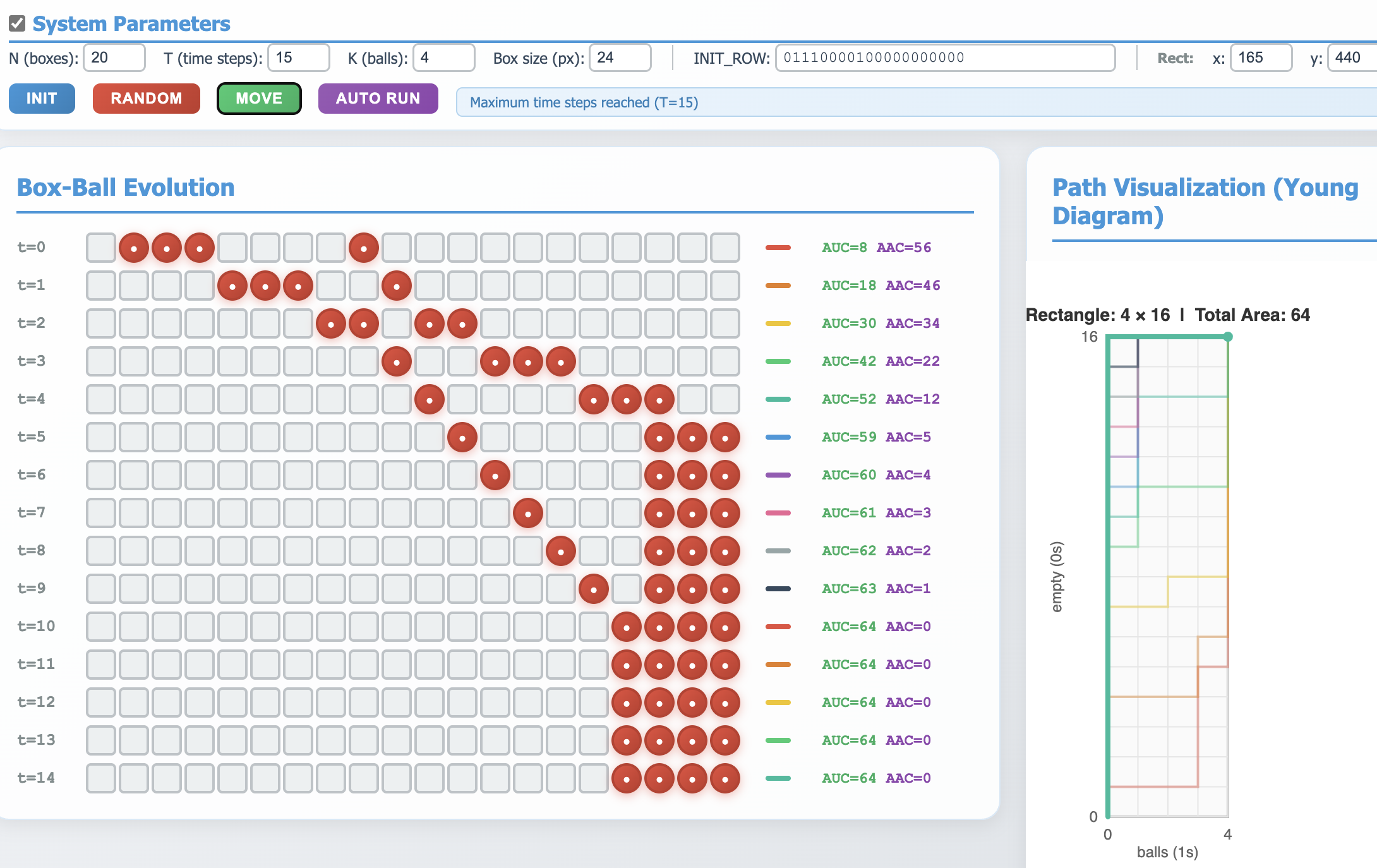}
		\caption{Box-Ball System Illustration. 
			Example for  $K=4$ balls and $N=20$ boxes.
		}
		\label{fig:bbs.png}
	\end{figure}
	At each time step, balls are processed left to right. Each ball moves to the nearest empty box to its right that hasn't already been claimed by another ball in this step. If no such empty box exists, the ball stays put. This rule produces soliton-like behavior -- clusters of balls propagate to the right and interact like solitons (they pass through each other and preserve their sizes).
	
	\href{https://fkhafizov.github.io/animation/bbs_opus45_v3.html}{Box-ball demo application} illustrates this 
	concept. 
	Figure~\ref{fig:bbs.png} shows an orbit of the box--ball  update on the configuration space
	$\Omega_{N,K}=\{x\in\{0,1\}^N:\sum_i x_i=K\}$ (here $N=20$, $K=4$), displayed as rows $t=0,1,\dots,T$. Observe that $N, K, T$ are the system parameters configurable in the application.
	Viewing $\Omega_{N,K}$ as the vertex set of the Schreier graph obtained from the Cayley graph of $S_N$ with generators
	$s_i=(i\ i{+}1)$, each time step is a deterministic walk: the global ``MOVE'' map sends the current vertex $x$ to a new vertex
	$T(x)$, and this map can be realized (conceptually) as a word in adjacent transpositions that performs the needed 
	swaps.
	
	\emph{Young-diagram / lattice-path encoding}. 
	On the right, each row is encoded as a monotone lattice path in the $K\times(N-K)$ rectangle by interpreting entries as follows:	0 (empty box) → move UP; 	1 (ball) → move RIGHT.
	%
%
%
After starting  at $(0,0)$, reading the whole word, the endpoint is $(K,N-K)$, so the path indeed fits the rectangle.
The displayed statistics satisfy the duality
	\[
	\mathrm{AUC}(x)+\mathrm{AAC}(x)=K(N-K)=64,
	\]
	so $\mathrm{AUC}$ and $\mathrm{AAC}$ are complementary potential functions.  In the Cayley/Schreier picture, a single adjacent swap
	$01\leftrightarrow 10$ flips a local corner of the path and changes $\mathrm{AUC}$ by $\pm1$, making area a height function on the graph.
	
	In this particular run, $\mathrm{AUC}$ increases monotonically from $8$ to $64$ while $\mathrm{AAC}$ decreases to $0$, and the final
	state is the extremal sorted word $0^{N-K}1^K$ (all balls packed to the right), whose path maximizes area under the curve.
	Thus the dynamics is visibly flowing toward a distinguished vertex in the Schreier/Cayley geometry, while the AUC/AAC complementarity
	provides the dual (above/below) viewpoint on the same walk.
	
\subsection{Schreier graphs for orbits of tuples}

In this section, we determine the diameters of a certain class of Schreier graphs that generalize the Cayley graph of $S_n$ for Coxeter generators (i.e. the permutohedron) and the $0,1,2$ coset graphs seen earlier. Unlike Cayley graphs, Schreier graphs need no longer be vertex-transitive, so it is not immediately clear the diameter is always the distance to a fixed vertex. For our chosen class of graphs, we explicitly describe the pair of vertices realizing the diameter, as well as compute it based on the graph isomorphism type.

 Let $X_n : = \{0,1, \dots, n-1\}^n$ be the set of $n$-tuples with entries in a finite set of order $n$. The symmetric group $S_n$ acts on $X_n$ by permutation of the components. The stabilizer of each $x \in X_n$ under this action has the form $G_x \cong  G_{\lambda} := S_{\lambda_0} \times \cdots \times S_{\lambda_m}$, where $\lambda = \{\lambda_0, \dots, \lambda_m\}$ and $\lambda_0 + \cdots + \lambda_m = n$ (i.e. $\lambda$ is a partition of $n$). Let $X_{n, \lambda} \subset X_n$ consist of all $n$-tuples with stabilizer isomorphic to $G_{\lambda}$.

Let $S \subset S_n$ be the set of Coxeter generators of $S_n$. Multiplication by elements of $G_x$ (on the right) induces an action on $\Cay(S_n,S)$ by graph automorphisms. Note that the Schreier graph $\Sch(S_n,G_x, S)$ of (left) cosets is obtained as the quotient graph of $\Cay(S_n,S)$ with respect to this action. 

Moreover, the vertices of $\Sch(S_n,G_x, S)$ may be identified with elements of the $S_n$-orbit of $x$, with two vertices connected by an edge if and only if there is a transposition $(i, i+1)$ sending one $n$-tuple to the other. It is not hard to see that all graphs $\Sch(S_n,G_x, S)$ for $x \in X_{n, \lambda}$ are isomorphic. Therefore, from now on, we will assume $x = [0,\dots,0,1, \dots,1,\dots, m, \dots, m]$ is an $n$-tuple consisting of $\lambda_0$ $0$s, followed by $\lambda_1$ $1$s, and so on, such that $\lambda_0 \le \lambda_1 \le \cdots \le \lambda_m$.

\begin{example}
\noindent
\begin{enumerate}[(a)]
\item If $\lambda_i = 1$ for all $0 \le i \le n-1$, then $x = [0,1, \cdots, n-1]$. In this case, the stabilizer of $x$ is trivial, so $\Sch(S_n,G_x, S) \cong \Cay(S_n, S)$.
\item If $\lambda = (n-k, k)$, then $x \in X_{n, \lambda} = X_{n,k}$ is a binary word and $\Sch(S_n,G_x, S) = G_{n,k}$.
\end{enumerate}
\end{example}

 Let $y \in X_n$, and let $L_i(y) := |\{i < j| y_i > y_j \}|$ for each $0 \le i \le n-1$ (we index our $n$-tuple starting from $0$ for the sake of consistency). This is the number of inversions of $y$ for fixed $i$. We call $L(y) := (L_0, \dots, L_{n-1})$  the \textbf{Lehmer code of $y$}. For any $y, y' \in S_n \cdot x$, set 
\[
\Inv(y,y') :=  \sum_{i=0}^{n-1} L_i(y') - L_i(y).
\] 
We have the following:
\begin{prp}
\label{inv_prop}
For any $y = [y_1, \dots, y_n] \in S_n \cdot x$, we have 
\[
\Inv(y,(i, i+1)y)  = \begin{cases}
0 & \text{if } y_i = y_{i+1},\\
-1 & \text{if } y_{i+1} < y_i,\\
1 & \text{if } y_i < y_{i+1}
\end{cases}
\]
\end{prp}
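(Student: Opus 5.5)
The plan is to reduce the statement to a count of inversions. First observe that, by definition of the Lehmer code of a tuple, $\sum_{j} L_j(y)$ is exactly the total number of inversions
\[
N(y):=\#\{(j,k): j<k,\ y_j>y_k\}.
\]
This holds because $L_j(y)$ counts the pairs whose left entry sits at position $j$, and summing over $j$ counts every inverted pair once. Hence $\Inv(y,(i,i+1)y)=N((i,i+1)y)-N(y)$, and the proposition becomes a statement about how the global inversion count changes under one adjacent swap.

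Next, write $y'=(i,i+1)y$, so $y'$ agrees with $y$ except that the entries in positions $i$ and $i+1$ are exchanged. (I will fix the indexing convention so that positions $i,i+1$ are the two swapped ones; the argument does not depend on whether one counts from $0$ or $1$.) I then partition the set of position pairs $(j,k)$ with $j<k$ into three kinds.
\begin{enumerate}
\item Pairs avoiding both $i$ and $i+1$: these have identical entries in $y$ and $y'$.
\item Pairs with exactly one of $i,i+1$ and a third position $k\notin\{i,i+1\}$: because $i$ and $i+1$ are adjacent, $k$ lies on the same side of both. The map sending the pair $(i,k)$ in $y$ to the pair $(i+1,k)$ in $y'$, and $(i+1,k)$ in $y$ to $(i,k)$ in $y'$, is a bijection that preserves both the values and the left/right order. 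So it preserves inversion status.
\item The single pair $(i,i+1)$.
\end{enumerate}
It follows that the difference $N(y')-N(y)$ comes only from the pair $(i,i+1)$.

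The last step is the case check on that pair. If $y_i=y_{i+1}$, then $y'=y$ and the difference is $0$. If $y_{i+1}<y_i$, the pair is an inversion in $y$ but not in $y'$, giving $-1$. If $y_i<y_{i+1}$, the reverse happens, giving $+1$. This is exactly the stated formula.

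The only delicate point, and the main obstacle, is the bijection in the second kind of pair: one must phrase it in terms of the elements moving rather than the positions. This is the same observation as in the earlier discussion of how the Lehmer code changes under adjacent transpositions: ``inversion numbers move together with the elements''. The one subtlety to watch is that ties $y_j=y_k$ are never inversions, and this also stays invariant under the bijection.
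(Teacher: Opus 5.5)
Your proof is correct, but it is organized differently from the paper's.

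The paper works inside the Lehmer code. It asserts that only $L_i$ and $L_{i+1}$ can change. For $y_{i+1}<y_i$ it sorts the positions $j>i+1$ into three cases: $y_j\ge y_i$, $y_j<y_{i+1}$, and $y_{i+1}\le y_j<y_i$. It then checks that each $j$ of the third kind moves one unit from $L_i$ to $L_{i+1}$, while the pair $(i,i+1)$ itself removes one unit from $L_i$. The case $y_i<y_{i+1}$ is left as ``similar''.

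You instead sum the code into the total inversion count and use the position-pair bijection $(i,k)\leftrightarrow(i+1,k)$, which preserves inversion status. This buys you three things:
\begin{itemize}
\item there is no case split on the values $y_j$;
\item both orderings of $y_i,y_{i+1}$ are handled uniformly;
\item the pairs $(j,i)$ and $(j,i+1)$ with $j<i$ are covered too, so you justify the paper's unproved assertion that $L_j$ is unchanged for $j<i$.
\end{itemize}

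What the paper's componentwise bookkeeping buys is the finer statement in the Remark that follows: one subtracts $1$ from $L_i$ and exchanges it with $L_{i+1}$, or adds $1$ to $L_{i+1}$ and exchanges it with $L_i$. Your argument as written only sees the sum. Tracking left endpoints under your bijection would recover that refinement as well.
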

\begin{proof}
 Note the transposition only affects $L_i(y)$ and $L_{i+1}(y)$. The case when $y_i = y_{i+1}$ is clear. Assuming $y_{i+1} < y_i$, transposing these two entries decreases $L_i(y)$ by $1$ and does not change $L_{i+1}(y)$. For any $j > i+1$, we have the following three cases: $y_{i+1} < y_i \le y_j$, $y_j < y_{i+1} < y_i $, and $ y_{i+1} \le y_j < y_i $. In the first two cases, the transposition does not change $ L_i(y), L_{i+1}(y)$. In the third case, it decreases $L_i(y)$ by the number of such $j$ and simultaneously increases $L_{i+1}$ by the same number. This means $\Inv(y, (i, i+1)y) = -1$. A similar argument for $y_{i+1} > y_i$ shows that $\Inv(y,(i, i+1)y) = 1$.
\end{proof}

\begin{Remark}
Actually, the proof of Proposition \ref{inv_prop} implies slightly more, as it completely describes $L((i, i+1)y)$ in terms of $L(y)$ and $y$. Indeed, if $y_i=y_{i+1}$, then $L((i, i+1)y) = L(y)$. If $y_{i+1} < y_i$, then $L((i, i+1)y)$ is obtained from $L(y)$ by subtracting $1$ from $L_{i}(y)$ and exchanging it with $L_{i+1}(y)$. If $y_i < y_{i+1}$, then $L((i, i+1)y)$ is obtained from $L(y)$ by adding $1$ to $L_{i+1}(y)$ and exchanging it with $L_i(y)$.
\end{Remark}

\begin{cor}
\label{ineq_cor}
For any two vertices $y',y \in \Sch(S_n,G_x, S)$, we have $d(y,y') \ge |\Inv(y,y')|$.
\end{cor}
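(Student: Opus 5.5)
The plan is to view $\Inv$ as a potential function on the vertices of $\Sch(S_n,G_x,S)$ and use Proposition~\ref{inv_prop} as a Lipschitz bound along edges.

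Set $\Phi(y) := \sum_{i=0}^{n-1} L_i(y)$, the total number of inversions of $y$. By definition,
\[
\Inv(y,y') = \Phi(y') - \Phi(y).
\]
This is well defined on vertices because a vertex is a tuple in the orbit $S_n\cdot x$, and $L(y)$ depends only on the tuple, not on a chosen coset representative. It follows that $\Inv$ is additive along a chain of vertices:
\[
\Inv(y,y'') = \Inv(y,y') + \Inv(y',y'').
\]

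Take a shortest path $y = z_0, z_1, \dots, z_d = y'$ in the Schreier graph, so $d = d(y,y')$. Each $z_{t+1}$ equals $(i,i+1)z_t$ for some Coxeter generator. If loops are present they do no harm, since they only lengthen walks, and a shortest path contains none. Proposition~\ref{inv_prop} gives $|\Inv(z_t,z_{t+1})| \le 1$ for every step. Telescoping and the triangle inequality then give
\[
|\Inv(y,y')| = \Bigl|\sum_{t=0}^{d-1} \Inv(z_t,z_{t+1})\Bigr| \le \sum_{t=0}^{d-1} |\Inv(z_t,z_{t+1})| \le d = d(y,y'),
\]
which is the claim.

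The only point needing care is that Proposition~\ref{inv_prop} applies to every edge of the Schreier graph. This requires that every edge is realized by an adjacent transposition acting on tuples in the orbit, which holds by the identification of vertices with $S_n\cdot x$ stated just before the proposition. There is no real obstacle beyond this. The argument does not need the graph to be connected, although it is, because the Coxeter generators generate $S_n$.
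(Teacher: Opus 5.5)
Your proof is correct and follows the paper's own argument. You take a shortest path from $y$ to $y'$ and use Proposition~\ref{inv_prop} to see that each adjacent-transposition step changes the total inversion count by at most $1$. Telescoping with the triangle inequality then gives the bound; your remarks on well-definedness on vertices and on loops add care but do not change the argument.
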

\begin{proof}
By definition, $d(y,y') = l$, where $l$ is the minimal nonnegative integer such that $y'  =  \sigma_l \cdots \sigma_1 y$ and  $\sigma_i \in S$. Therefore, $d(y,y') = \sum_{i=1}^l d(v_{i-1}, v_i)$, where $v_i = \sigma_i v_{i-1}$ and $v_0 = y$. The statement follows by Proposition \ref{inv_prop} and the triangle inequality.
\end{proof}

\begin{cor}
\label{inv_cor}
For any $y \in \Sch(S_n,G_x, S)$, we have $d(x,y) = \Inv(x, y) = \Inv(y)$. The furthest vertex from $x$ in the graph is $x' = [m, \dots, m, \dots, 1, \dots, 1,  0, \dots, 0]$.
\end{cor}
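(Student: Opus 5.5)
Since $x$ is sorted, every Lehmer coordinate vanishes, $L(x)=0$, and therefore $\Inv(x,y)=\sum_i L_i(y)=\Inv(y)$. This is the total number of inversions of $y$, i.e.\ the number of pairs $i<j$ with $y_i>y_j$. So the first equality in the statement is just the definition, and what has to be shown is $d(x,y)=\Inv(y)$. The plan is to prove the two inequalities separately.

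\emph{Lower bound.} By Corollary \ref{ineq_cor}, $d(x,y)\ge |\Inv(x,y)| = \Inv(y)$.

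\emph{Upper bound.} Induct on $\Inv(y)$. If $\Inv(y)=0$, then $y$ has no inversions, so it is weakly increasing; since it has the same multiset of entries as $x$, it equals $x$. If $\Inv(y)>0$, then $y$ is not weakly increasing, so some adjacent pair has $y_i>y_{i+1}$ (a descent). Apply $(i,i+1)$ to get $y'=(i,i+1)y$. By Proposition \ref{inv_prop}, $\Inv(y,y')=-1$, so $\Inv(y')=\Inv(y)-1$. The induction hypothesis gives $d(x,y')\le \Inv(y)-1$, and since $y$ and $y'$ are adjacent, $d(x,y)\le \Inv(y)$. This is bubble sort, and combining the two bounds yields $d(x,y)=\Inv(y)$. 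The only point needing care is that a descent exists whenever $\Inv(y)>0$: if there were no adjacent descent, transitivity would make $y$ weakly increasing, contradicting the presence of an inversion.

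\emph{Furthest vertex.} Now maximize $\Inv(y)$ over the orbit. Each pair of positions $i<j$ contributes at most $1$, and contributes $0$ whenever $y_i=y_j$. Hence
\[
\Inv(y)\le \binom{n}{2}-\sum_k \binom{\lambda_k}{2}.
\]
Equality holds exactly when every pair of positions carrying unequal entries is an inversion, i.e.\ when $y$ is weakly decreasing. The unique weakly decreasing word with these multiplicities is $x'=[m,\dots,m,\dots,0,\dots,0]$, so $x'$ is the unique vertex at maximal distance from $x$.

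None of these steps is a real obstacle; the argument is the standard bubble-sort argument. The one subtlety is interpretive. The claim establishes only that $x'$ is furthest from $x$, not that $d(x,x')$ is the diameter of the Schreier graph, which is not vertex-transitive. I would state that explicitly and not claim the diameter statement here.
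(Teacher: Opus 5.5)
Your proof is correct and follows the paper's argument. The lower bound comes from Corollary~\ref{ineq_cor}, the upper bound from sorting $y$ by adjacent swaps that each lower $\Inv$ by exactly one (Proposition~\ref{inv_prop}), and the furthest vertex from maximizing $\Inv$ at the reversal $x'$; you swap an arbitrary adjacent descent where the paper moves the smallest entries leftward, but the mechanism is the same. Your bound $\Inv(y)\le\binom{n}{2}-\sum_k\binom{\lambda_k}{2}$ makes explicit the maximality that the paper only asserts, and adds uniqueness. You are also right that the diameter is a separate statement, which the paper proves in Proposition~\ref{diam_cor}.
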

\begin{proof}
By Corollary \ref{ineq_cor} we have $d(x,y) \ge \Inv(x, y) = \Inv(y)$, since the Lehmer code of $x$ is $(0, \dots, 0)$. Starting with the smallest component of $y$, we can apply adjacent transpositions $(i, i+1)$ to move each component of $y$ leftward to its place in $x$. By Proposition \ref{inv_prop} each such action decreases $\Inv(y)$ by exactly $1$. This process inductively constructs a path from $y$ to $x$, so $\Inv(y) \ge d(x,y)$. Thus $d(x,y) = \Inv(y)$. The maximal possible number of inversions is achieved by reversing $x$, which yields $x' = [m, \dots, m, \dots, 1, \dots, 1,  0, \dots, 0]$. 
\end{proof}

\begin{prp}
\label{diam_cor}
The diameter of $\Sch(S_n,G_x, S)$ is equal to $d(x, x') = \sum_{j > i} \lambda_i \lambda_j$.
\end{prp}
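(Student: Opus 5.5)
The plan is to prove the two inequalities $\operatorname{diam} \ge d(x,x')$ and $\operatorname{diam} \le \sum_{j>i}\lambda_i\lambda_j$ separately. The upper bound is the substantive part, because the Schreier graph need not be vertex-transitive.

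\emph{Step 1: computing $d(x,x')$.} By Corollary~\ref{inv_cor}, $d(x,x')=\Inv(x')$. In $x'=[m,\dots,m,\dots,0,\dots,0]$, every occurrence of a letter $j$ precedes every occurrence of each smaller letter $i<j$. Equal letters contribute no inversions. Hence $\Inv(x')=\sum_{i<j}\lambda_i\lambda_j$, which gives $\operatorname{diam}\ge \sum_{j>i}\lambda_i\lambda_j$.

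\emph{Step 2: bounding $d(y,z)$ for arbitrary $y,z$ in the orbit.} Here I would pass to distinct labels.
\begin{enumerate}
\item Decorate each occurrence of a letter $a$ in $y$ by its rank among the occurrences of $a$, giving labels $(a,1),\dots,(a,\lambda_a)$; do the same in $z$. This turns $y$ and $z$ into words with all entries distinct, and matches the $k$-th $a$ of $y$ with the $k$-th $a$ of $z$.
\item Define a total order on the labels by their positions in $z$. Bubble-sorting the decorated $y$ with respect to this order is a sequence of adjacent transpositions ending at the decorated $z$. Forgetting the decorations, the same sequence is a walk in $\Sch(S_n,G_x,S)$ from $y$ to $z$.
\item By the argument of Corollary~\ref{inv_cor} (equivalently Proposition~\ref{inv_prop}, applied to distinct labels), the length of this walk equals the number of inverted pairs of labels. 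These are the pairs of occurrences whose relative order in $y$ differs from their relative order in $z$.
\item Two occurrences of the same letter $a$ appear in the order $(a,k)$ before $(a,l)$ for $k<l$ in both $y$ and $z$, by construction of the ranks, so they are never inverted.
\item Therefore the number of inverted pairs is at most the number of pairs of occurrences carrying distinct letters, namely $\sum_{i<j}\lambda_i\lambda_j$. This gives $d(y,z)\le \sum_{j>i}\lambda_i\lambda_j$ for all $y,z$.
\end{enumerate}

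Combining the two steps gives the diameter formula. The only delicate point is the rank matching in Step 2(4). An arbitrary matching of equal letters could create spurious inversions between equal letters and break the bound. The order-preserving matching guarantees such pairs never count.
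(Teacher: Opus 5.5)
Your proof is correct, but it reaches the upper bound by a different route from the paper. The paper never builds a path between two arbitrary vertices. It sets $\Inv'(y)=|\{i<j \mid y_i<y_j\}|$ and uses the reversal (the longest element of $S_n$), a graph automorphism exchanging $x$ and $x'$, to deduce $d(y,x')=\Inv'(y)$ from Corollary~\ref{inv_cor}. It then notes that $d(x,y)+d(y,x')=\Inv(y)+\Inv'(y)=\Inv(x')$ for every $y$, i.e.\ every vertex lies on an $x$--$x'$ geodesic. Adding the triangle inequalities through $x$ and through $x'$ gives $2d(y,y')\le 2\Inv(x')$.

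You instead reduce to distinct labels via the order-preserving matching of equal letters and then bubble-sort. This produces an explicit walk and the sharper pointwise bound $d(y,z)\le N(y,z)$, where $N(y,z)$ counts pairs of occurrences of distinct letters whose relative order differs in $y$ and $z$. One can check this is in fact the exact distance, since a single adjacent swap changes $N$ by at most one.

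The paper's averaging trick is shorter and needs no decoration, but it relies on the reversal symmetry and bounds $d(y,y')$ only through the two poles $x,x'$. Your argument needs no automorphism and is constructive. It also isolates exactly why equal letters cost nothing, which is the point you correctly flag in Step~2(4).
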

\begin{proof}
To prove the diameter is achieved for $x, x' \in \Sch(S_n,G_x, S)$, we show that for all $y, y' \in \Sch(S_n,G_x, S)$ we have $d(y, y') \le d(x,x') = \Inv(x')$ (see Corollary \ref{inv_cor}). Let $\Inv'(y) = |\{i < j| y_i < y_j\}|$. Note that by definition $\Inv(x') = \Inv(y) + \Inv'(y)$.

Let $\sigma \in S_n$ be the longest element. It acts by involution on the graph $\Sch(S_n,G_x, S)$, reversing each $y \in S_n \cdot x$, so that $d(y,y') = d(\sigma y, \sigma y')$. It is also easy to see $\Inv(y) = \Inv'(\sigma y)$. By Corollary \ref{inv_cor}, it follows that $d(y,x') = \Inv'(y)$. By the triangle inequality and Corollary \ref{inv_cor}, we have

\begin{align*}
&2 d(y, y') \le d(x,y) + d(x,y') + d(y,x') +d(y',x')\\ 
&= \Inv(y) + \Inv'(y) + \Inv(y') + \Inv'(y') = 2 \Inv(x').
\end{align*}

Thus, $d(x,x')$ is the diameter, which we can compute as:
\[
 d(x,x') = \Inv(x') = \sum_{i=1}^m \lambda_{i} \sum_{j=0}^{i-1} \lambda_j =  \sum_{j > i} \lambda_i \lambda_j.
\]
\end{proof}

One can interpret the Lehmer code of $y \in S_n \cdot x$ as a path $P_y$ along the lattice $\mathbb{Z}^2 \subset \mathbb{R}^2$, starting at $(0, \lambda_m)$ and ending at $(n,0)$, with only vertical moves and horizontal moves to the right allowed.

 Indeed, for each component of $L_i(y)$ of $L(y)$, define $P_y$ by connecting the points $(i, L_i(y))$ and $(i+1, L_i(y))$ with a horizontal line segment. Then, connect the endpoints of adjacent line segments vertically, as well as the start of the first line segment with the starting point and the end of the last line segment with the ending point. The area bounded by $P_y$ and the two axes is equal to $\Inv(y)$. 
 {TODO: Figure to illustrate?}

\subsection{Large size limits, towards "macroscopic" descriptions}

Let us discuss large-size limits ofr the graphs we considered above.  
The main point is that, in the dual description, the limit shapes 
(i.e., the ``typical'' or ``random'' configurations) of ``strings'' 
can be described rather explicitly by tractable formulas.  
This can be seen as a manifestation of the ``strong-to-weak'' principle of duality,
since the dual description becomes simple when original description is not. 

Moreover, the limiting dynamics can be described by (partial) differential equations, 
which corresponds to the standard microscopic-to-macroscopic change of description.  
In common terms, one may compare this to liquids: on one hand, they consist of atoms, which provide a microscopic description, 
while on the other hand, we typically describe them macroscopically using partial differential equations, 
such as the Navier--Stokes equations.  
A similar picture arises in our setting: in the large-size limit, we can expect the dynamics to be described in terms of partial differential equations.  

In a sense, this picture is analogous to the AdS/CFT correspondence.  
On the CFT side, there is a parameter $N$ — the size of the matrix group — which is exactly analogous to $n$ in our setting, the size of permutation matrices.  
One considers the limit $N \to \infty$, and in this limit the dual theory simplifies: the string-theoretic description reduces to its limiting gravitational description.  
Thus, the observables in the CFT correspond to areas or volumes of certain surfaces in AdS space.  
In our case, instead of AdS and gravity, the $S_n$ duality picture is simpler: planar polygons instead of AdS and
in the large-$n$ limit dynamics is expected to  be described by hydrodynamic equations, such as the inviscid Burgers equation, as discussed below.

We recall classical results and present several conjectures in this direction in the subsequent subsections.  

Modern studies of limit shapes for Young diagrams originate from the seminal works of 
Vershik--Kerov \cite{vershik1977asymptotics} and Logan--Shepp \cite{logan1977variational}, 
and have been greatly extended with numerous applications; 
see, e.g., \cite{borodin2000asymptotics, Olshanski2001CentralLimit, OkounkovReshetikhinVafa2006, 
KenyonOkounkovSheffield2006, AngelHolroydRomikVirag2006, okounkov2006random, 
petrov2013sl2, corwin2012kpz}.

It is honor to mention the landmark paper \cite{tHooft1974LargeN}
which influence on modern mathematical physics
is difficult to overestimate. 
It was proposed that in large $N$ limit gauge theories
admit description similar to string theory. 
It is tempting to think that similar ideas can be applied
in our setting, we hope to elaborate that in future.

\subsection{Vershik's (1996) limit shapes for rectangular Young diagrams, general $q$}

Here we recall the classical analysis of limit shapes for rectangular diagrams, 
which are essentially ROC curves (or Dyck paths). 
This fits into the general line of questions outlined in the previous subsection: 
one expects that, in the large-size limit, dual descriptions of discrete systems 
converge to continuous ones. 
Moreover, these limiting continuous systems often admit explicit descriptions, 
again confirming the ``strong-to-weak'' transformation paradigm, 
since continuous models are typically more tractable.

By a limit shape we mean, roughly speaking, the shape of a ``random'' or 
``typical'' ROC curve. 
This can be made precise by considering an average over all possible ROC curves 
with a given weight; the average is taken pointwise over the family of curves. 
Before proceeding, let us emphasize that the results recalled here are, in a sense, 
$q$-deformations: they depend on a parameter $q$, which has a natural 
``quantum'' interpretation. 
Our primary interest is the case $q=1$, which will be discussed in the next subsection. 
However, this case requires a certain modification of the classical setup, 
so we first review the classical results.

The shape of these limit curves was determined by A.~M.~Vershik~\cite{vershik1996statistical}, 
building on the celebrated earlier joint work with S.~Kerov~\cite{vershik1977asymptotics} 
(see also~\cite{logan1977variational}). 
An interactive simulation widget has been developed by one of the authors (L.~Petrov): 
\href{https://lpetrov.cc/simulations/2025-12-28-q-partition-cftp/}{link}. 
(Press ``About this simulation'' for a detailed description.)

\textbf{Limit Shape:} As $N \to \infty$ with $q = e^{-\gamma/N}$ for fixed $\gamma > 0$, 
the rescaled partition boundary converges to a deterministic curve given by the implicit equation:
\[
A e^{-\gamma y} + B e^{-\gamma x} = 1
\]
where
\[
A = \frac{1 - e^{-\gamma}}{1 - e^{-\gamma(1+a)}}
\qquad \text{and} \qquad
B = \frac{1 - e^{-\gamma a}}{1 - e^{-\gamma(1+a)}}.
\]

\begin{figure}[H]
	\centering
	\includegraphics[width=0.45\textwidth]{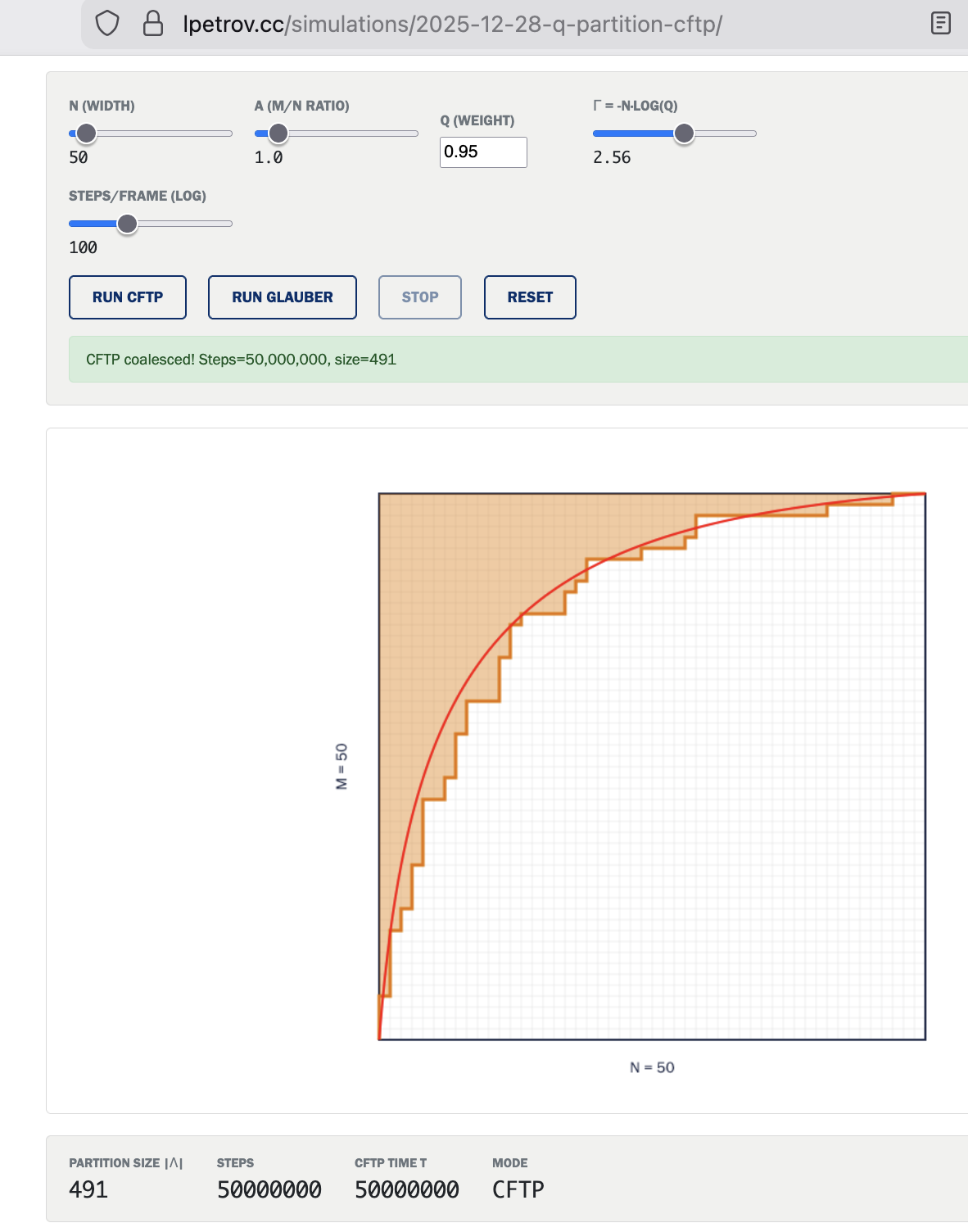}
	\caption{Limit shape for ROC curves (rectangular Young diagrams, or equivalently Dyck paths), 
    defined as the average shape of the curve when all configurations are weighted by $q^{\mathrm{area}}$. 
    (Here $q$ is a parameter; in the plot it is chosen to be $0.95$.) 
    The solid red line represents the theoretical limit curve obtained by A.~M.~Vershik~\cite{vershik1996statistical}. 
    An interactive simulation widget developed by L.~Petrov is available at 
    \href{https://lpetrov.cc/simulations/2025-12-28-q-partition-cftp/}{link}.
    }
	\label{fig:lpetrov_vershik_curve}
\end{figure}

\subsection{Vershik's (1985) limit shapes for rectangular Young diagrams $q=1$}


Here we remind related results on limit shapes for $q=1$.
A Young diagram chosen uniformly at random from the set of all Young diagrams with $n$ boxes exhibits a limit shape phenomenon. After rescaling the diagram by a factor of $1/\sqrt{n}$ (so that the total area is normalized to $1$), the boundary of the diagram concentrates, as $n \to \infty$, around the Vershik curve $\gamma$ given by
\[
e^{-\sqrt{\zeta(2)}\,x} + e^{-\sqrt{\zeta(2)}\,y} = 1,
\]
see~\cite{vershik1985asymptotics} (last formula in the paper, attributed to unpublished work by A.M.Vershik).
Here $\zeta(2) = \pi^2/6$. 

The proofs and actually certain generalizations appeared in works of several mathematicians. 
Here we will rely on ~\cite{petrov2009two},
where the following generalization has been obtain:

\begin{proposition}

    The limit shape for young diagrams in rectangle (ROC-curves) with fixed number of boxes (fixed area under the curve is given by: 
\[
e^{-c(x - x_0)} + e^{-c(y - y_0)} = 1,
\]
for suitable constants $c$, $x_0$, and $y_0$.  Moreover they can be seen as segments of the full Vershik's curve above.  
\end{proposition}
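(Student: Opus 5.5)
We prove the statement by a variational (large-deviation) argument for lattice paths in the rectangle with the area constraint imposed through a Lagrange multiplier. This is the standard route of Vershik and of \cite{petrov2009two}.

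\textbf{Step 1: entropy functional.} Rescale the $a\times b$ rectangle (with $a,b\sim\sqrt{N}$) and the paths by $1/\sqrt{N}$, so that fixing the number of boxes $N$ means the area under the limiting curve is $1$. Write a monotone path as the graph of a nonincreasing function, or as a curve with slope density $p=-y'(x)\in[0,\infty]$. The number of up/right lattice paths whose macroscopic slope is locally $p$ on an $x$-interval of length $dx$ is $\exp\bigl(\sqrt N\,h(p)\,dx\,(1+o(1))\bigr)$. Here
\[
h(p)=(1+p)\log(1+p)-p\log p
\]
is the binomial entropy per unit horizontal length. This follows from Stirling's formula applied to the binomial coefficient counting local arrangements. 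Hence the number of paths near a curve $y$ is governed by $S[y]=\int h(-y'(x))\,dx$. The typical shape maximizes $S$ subject to three constraints: the endpoints fixed by the rectangle, monotonicity, and $\int y\,dx=1$.

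\textbf{Step 2: Euler--Lagrange equation.} Maximize $S[y]-c\int y\,dx$ with a Lagrange multiplier $c$. The Euler--Lagrange equation gives
\[
\frac{d}{dx}\,h'(-y')=c, \qquad h'(p)=\log\frac{1+p}{p},
\]
so $\log\frac{1+p}{p}=c(x-x_0)$ is linear in $x$. Solving, $p=1/(e^{c(x-x_0)}-1)$. Integrating $y'=-p$ gives $e^{-c(y-y_0)}=1-e^{-c(x-x_0)}$, which is exactly
\[
e^{-c(x-x_0)}+e^{-c(y-y_0)}=1.
\]
The three constants are fixed by three conditions: the path passes through the two prescribed corners of the rectangle, and the area equals $1$. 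In the unconstrained case (infinite rectangle, curve asymptotic to the axes) we get $x_0=y_0=0$. The area condition $\int_0^\infty -\frac1c\log(1-e^{-cx})\,dx=\zeta(2)/c^2=1$ then gives $c=\sqrt{\zeta(2)}$, recovering Vershik's curve. For a finite rectangle the solution is the same one-parameter family of curves, shifted and rescaled. This is the statement that the limit shapes are segments of the full Vershik curve, after normalizing $c$.

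\textbf{Step 3: concentration.} We must show that the uniform measure on paths with exactly $N$ boxes concentrates near the maximizer. The plan is to tilt to the grand-canonical measure $q^{\mathrm{area}}$ with $q=e^{-c/\sqrt N}$. Under this measure the path's steps are independent geometric-type variables, exactly as in Vershik's $q$-analysis, and the mean profile is the curve above; this is the computation behind the Vershik (1996) formula recalled previously. Two further facts are needed:
\begin{enumerate}
\item fluctuations of the profile under the tilted measure are $o(\sqrt N)$, which follows from a Chebyshev or exponential-moment bound for sums of independent variables;
\item conditioning on area exactly $N$ costs only polynomially, via a local limit theorem for the area.
\end{enumerate}
Together these transfer concentration to the fixed-area ensemble.

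\textbf{Main obstacle.} The hardest part is the rigorous large-deviation upper bound in Step 1. We need strict concavity of $S$, uniformly over curves that may touch the rectangle's sides, together with the local limit theorem near the boundary of the admissible area range. There, $c$ can be of either sign or tend to $0$; the degenerate case $c\to 0$ gives the straight-line (diagonal) shape. If these estimates are delicate, we would instead invoke \cite{petrov2009two} directly for them and only verify the explicit form of the maximizer as in Step 2.
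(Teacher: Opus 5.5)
Your outline is sound, but it takes a different route from the paper.

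\textbf{How the paper argues.} The paper never solves a variational problem; it reduces the boxed ensemble to Vershik's unrestricted limit shape $\gamma$. For $ab>2$ it picks a point $P$ below $\gamma$, with a horizontal segment $PA$ and a vertical segment $PB$ ending on $\gamma$, such that $PB:PA:\sqrt S=b:a:1$. Here $S$ is the area of the region bounded by $PA$, $PB$ and the arc $AB$. It then asserts that the rescaled arc $AB$ is the limit shape. The case $1<ab<2$ is reduced to this one by passing to the complement in the rectangle (a rotation by $180^\circ$), and $ab=2$ gives the diagonal. The mechanism, left implicit in the sketch, is the Gibbs (resampling) property: conditioned on the rest of a uniform random partition, its boundary between two of its points is a uniform lattice path with prescribed area, i.e.\ exactly a boxed diagram. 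So in the paper the ``segment of Vershik's curve'' clause holds by construction and no Euler--Lagrange computation is needed. On the other hand, existence and uniqueness of $P$, and the transfer of concentration, are not addressed. That transfer needs large-deviation bounds for the unrestricted ensemble, because the conditioning event is polynomially rare.

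\textbf{What your route buys.} You derive the family $e^{-c(x-x_0)}+e^{-c(y-y_0)}=1$ from first principles. The segment property then appears as the observation that every Euler--Lagrange solution is a dilation plus translation of $e^{-x}+e^{-y}=1$. For $c<0$ the dilation factor $1/c$ is negative, which is the paper's complement step; $c\to 0$ is the diagonal. In return you get explicit equations for $c,x_0,y_0$, uniqueness from strict concavity ($h''(p)=-1/(p(1+p))<0$), and one treatment for all areas.

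\textbf{Two corrections to Step 3.}

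First, under the single tilt $q^{\mathrm{area}}$ on a \emph{finite} rectangle the steps are not independent, since the Gaussian-binomial ensemble carries a global constraint. Independence holds for the row multiplicities $m_k$, $0\le k\le a\sqrt N$ (including empty rows), once you also relax the number of rows with a second fugacity. Under the weight $z^{\#\mathrm{rows}}q^{\mathrm{area}}$ the $m_k$ are independent geometric variables with parameters $zq^k$. The fixed-area ensemble is then obtained by conditioning on both the row count and the area. The mean profile is $(1-ze^{-ca})e^{-cy}+ze^{-cx}=1$, which is exactly the form $Ae^{-\gamma y}+Be^{-\gamma x}=1$ of the Vershik (1996) formula recalled in the paper.

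Second, Chebyshev is not enough. It bounds a profile deviation of size $\epsilon\sqrt N$ only by $O(N^{-1/2})$. The conditioning event has probability of order $N^{-1}$, because the row count and the area have standard deviations of order $N^{1/4}$ and $N^{3/4}$. So only your exponential-moment (Chernoff) option works, giving $e^{-\Omega(\sqrt N)}$.

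With these fixes, and a check that $(c,z)$ can be chosen to match the two means, Step 3 proves the statement on its own. The delicate large-deviation upper bound you list as the main obstacle can then be bypassed, and Step 2 only explains why this family of curves appears.
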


Sketch of proof: consider fixed 
positive $a$ and $b$ with $ab>1$ the Young diagrams with $n$ boxes that fit in the $a\sqrt{n}\times b\sqrt{n}$ rectangle
(in other words, the length is at most $a\sqrt{n}$ and the height
is at most $b\sqrt{n}$). Scaling by $1/\sqrt{n}$ we get a 
random
set $Y_n$ of area 1 inside the rectangle $a\times b$. 

If $ab=2$, the boundary of $Y_n$ approaches (by probability) the diagonal of our $a\times b$. 

If $1<ab<2$, passing to a complement of $Y_n$ to the $a\times b$ rectangle (and making a symmetry with respect to its center),
we get a set of area $(ab-1)<ab/2$, and another scaling reduces 
the question to the $ab>2$ case.

If $ab>2$, one can find the unique point $P=(x_0,y_0)$ below 
the Vershik curve $\gamma$, points $A$ and $B$ on $\gamma$
so that the segment $PB$ is vertical, $PA$ is horizontal, and 
$PB:PA:\sqrt{S}=b:a:1$, where $S$ is the area in the triangle
$BAP$ below $\gamma$. Then this piece of $\gamma$
is (after the scaling by a factor $b:PB=a:PA$) is the limit shape
of $Y_n$. So, the equation of the limit curve is $e^{-c(x-x_0)}+
e^{-c(y-y_0)}=1$ for appropriate constants $c$, $x_0$, $y_0$.

This phenomenon was rediscovered independently by several mathematicians in various forms. 
An analogous result also holds for Young diagrams confined to a strip, that is, under restrictions 
either on the length or on the height.

\subsection{Limit shape for ROC-curves (Dyck paths) }
The first natural question from our perspective regarding the large-$n$ limits of discrete systems is: 
what is the shape of the ``random'' or ``typical'' string with a fixed area under it? 
As discussed above, one can expect tractable formulas for such limit shapes, 
which may be interpreted as a manifestation of a ``strong-to-weak'' holographic duality.
Moreover from Vershik's results above one can expect the answer to be given by the simple equation on the exponential of coordinates.
Based on computational experiments we propose such answer below.

Here, we present the results of simulations and a conjectural answer to this question in the case of ROC curves, 
which, according to our picture, are the holographic dual strings corresponding to the nodes of 
$S_n/(S_k \times S_{n-k})$.  
The same can be described as averaging over Dyck paths  the same result, 
since the dominant contribution comes from the curve's neighborhood (where all monotonic paths are Dyck paths); 
thus, averaging over Dyck paths or over all monotonic paths from $(0,0)$ to $(1,1)$ yields essentially the same outcome.
\begin{Conj}[Limit shape]
Fix $C \in (0,1)$ and let $k \to \infty$ with $L = L(k)$ satisfying $L/k^2 \to C$. 
Then the normalized average path in layer $L$ converges uniformly to the curve $y = y_C(x)$, $x \in [0,1]$, given by
\begin{equation}
y_C(x) = 1 - \frac{1}{\lambda} \ln\bigl(1 + e^{\lambda} - e^{\lambda x}\bigr), 
\quad \lambda \neq 0,
\end{equation}
and 
\begin{equation}
y_{1/2}(x) = x \quad \text{when } C = \frac{1}{2} \; (\text{i.e., } \lambda = 0).
\end{equation}
Here $\lambda = \lambda(C) \in \mathbb{R}$ is the unique solution of
\begin{equation}
C = \frac{\ln^2(1+e^\lambda) + 2\, \mathrm{Li}_2\Bigl(\frac{1}{1+e^\lambda}\Bigr) - \frac{\pi^2}{6}}{\lambda^2}.
\end{equation}

Equivalently, the limit curve satisfies the implicit equation
\begin{equation}
e^{-\lambda(1-x)} + e^{-\lambda y} = 1 + e^{-\lambda}.
\end{equation}

\end{Conj}

Equation (4) is a rescaled and translated segment of the \emph{universal exponential curve}
\[
e^{-\alpha X} + e^{-\alpha Y} = 1,
\]
which arises in the arctic-circle phenomenon for boxed plane partitions \cite{CLP98} and in Okounkov's theory of limit shapes for random surfaces \cite{Ok03}. 
Our constraint to the box $[0,1]^2$ selects the appropriate segment; 
the parameter $\lambda$ is determined by the area (equivalently, the layer number).

\begin{figure}[H]
	\centering
	\includegraphics[width=0.45\textwidth]{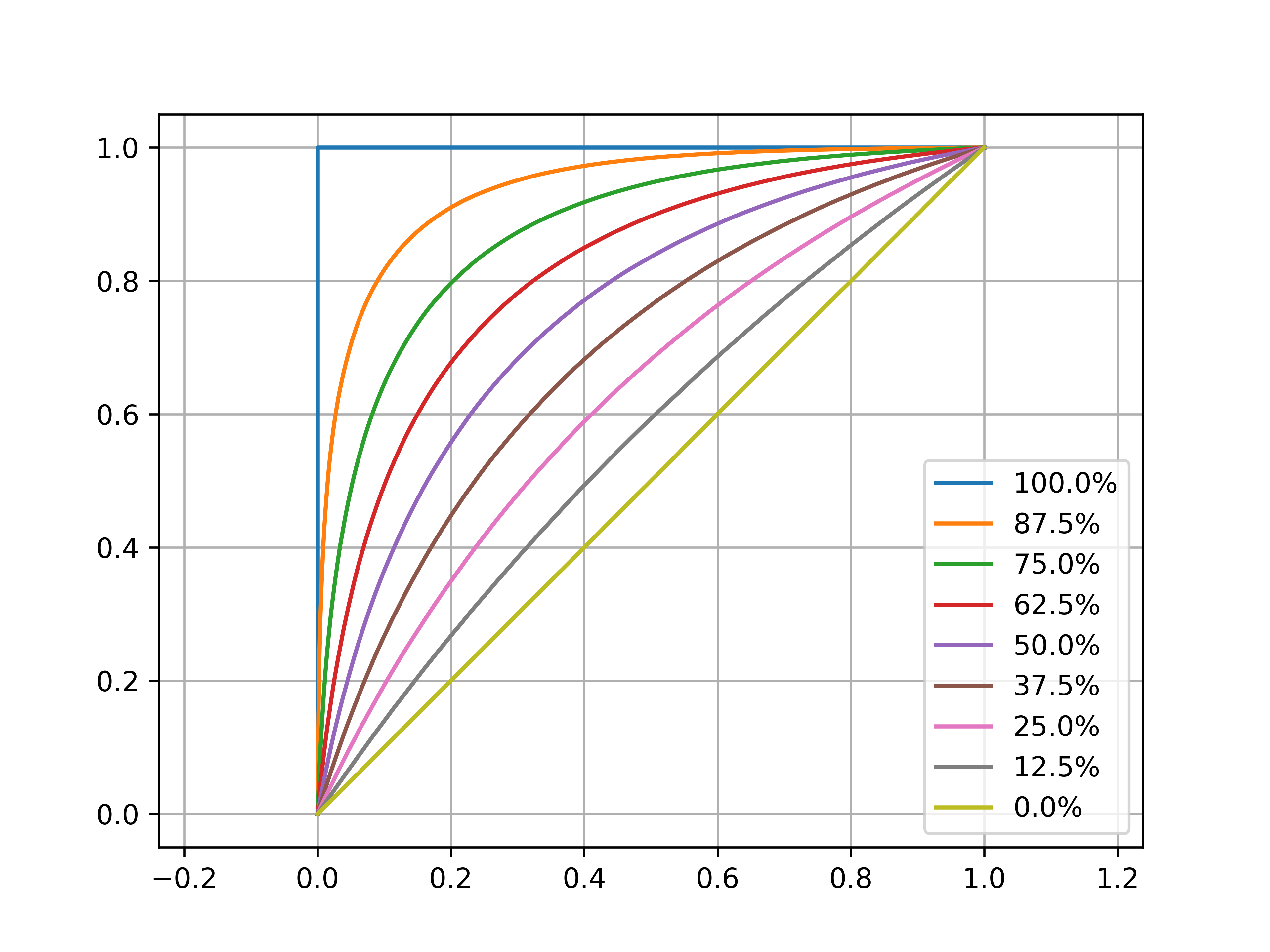}
	\caption{Limit shape for the ROC curves (Dyck paths) with fixed area under curve.
    That is - average of all ROC curves with fixed area under them.
    }
	\label{fig:avg_Dyck_paths_N_256}
\end{figure}

Related results by A.M.Vershik and  \cite{petrov2009two} were discussed in the previous subsection. 

\subsection{Dynamics in large size limits. TASEP-Burgers correspondence. KPZ.   }

As we discussed before Young tableaux can be considered as solutions of discrete string equations of motion
and define the evolution of the paths (discrete strings).
In large size limit it is natural to think that "typical" Young tableaux would produce 
such kind of evolution which would be possible to describe by partial differential equations,
such that their solutions would be monotonic curves going from left-bottom to right-top.
In particular the family of "typical" curves above should be their solution,
but more generally evolution might evolve any such monotonic curve. 

Mathematically speaking we can describe evolution as follows
take a discrete curve take all possible Young bounded by the curve,
make evolution by each of these curves and consider their averages.
That defines an evolution on curves. 

For the case of the ROC-curves, educated guess is that such dynamics can be described by the 
inviscid \href{https://en.wikipedia.org/wiki/Burgers%27_equation}{Burgers equation}
and with fine granularity by \href{https://en.wikipedia.org/wiki/Kardar%E2%80%93Parisi%E2%80%93Zhang_equation}{KPZ-equations} (\cite{KPZ1986}).
Relying on the known results on TASEP-Burgers correspondence \cite{Rost1981,Ferrari2018,AIHPB2009_45_4_1048_0, QuastelTsai2021,QuastelRahman2020, } 
and relation to KPZ \cite{BertiniGiacomin1997,corwin2012kpz}.

Instead of the microscopic dynamics  implemented by random adjacent updates,
i.e.\ TASEP-type evolution on \(\{0,1\}^p\) it is better to consider  macroscopic dynamics.
%
In the large-size (hydrodynamic) regime, with \(p\to\infty\), \(k_p/p\to \alpha\in(0,1)\), and with the empirical initial profile converging to a limiting density \(u_0(x)\), the coarse-grained field is described by a deterministic conservation law.  
Denoting the macroscopic density by \(u(x,s)\), the expected limit is the entropy solution of the inviscid Burgers equation
\[
\partial_s u + \partial_x\!\bigl(u(1-u)\bigr)=0,
\qquad
u(x,0)=u_0(x).
\]


Analogy: think of a one-lane road with cars, where each site can be either occupied (1 = car) or empty (0 = space).
The microscopic motion of the cars — each moving forward if the spot ahead is free — corresponds to a TASEP process or the evolution of nodes on a discrete graph.
Now imagine observing this road from high above in the cosmos: individual cars are invisible, and you only see the average car density along the road.
At this macroscopic level, the density evolves smoothly in time, and its dynamics are governed by the Burgers equation.
If you zoom in and improve the resolution, the random fluctuations of cars around the average density become visible, and their collective dynamics are described by the KPZ equation, capturing the wiggly, stochastic behavior of the system.

Fluctuations around this limit 
under the KPZ scaling $(x \sim \varepsilon^{-1}, t \sim \varepsilon^{-3/2})$, $h^\varepsilon$ converges to the solution of the KPZ equation:
\[
\partial_t h = \nu \partial_{xx} h + \lambda (\partial_x h)^2 + \xi,
\]
where $\xi$ is space-time white noise.  
Thus, TASEP connects to KPZ through the limit of properly rescaled fluctuations around the hydrodynamic density.

Figure~\ref{fig:Burgers} shows the Monte Carlo tab after a completed run (`2000/2000`) for the initial path \(A=(0^k1^{p-k})\) with smoothing \(r=13\): on the left, the heatmap \(p(\text{position},\text{time})\) displays mean occupancy \(u(x,s)\) (blue \(\approx 0\), red \(\approx 1\)) together with an active green `DRAWBOUNDARY` overlay given by the fitted curve \(s(x)\approx 0.259+2.846x-2.574x^2\), which tracks the visible separation between the mixed interior and near-pure corner regions; on the right, the multi-slice profile panel compares exact Burgers curves and MC curves at several times (including highlighted \(s=0.2\), \(s=0.4\), and breaking-time slices), with per-slice \(L^2\) errors indicating stronger agreement at earlier times and larger discrepancies near later, shock-influenced regimes, so the two panels together provide both a spacetime picture of the evolution and a quantitative slice-by-slice validation against the PDE prediction.

\begin{figure}[H]
	\centering
	\includegraphics[width=0.65\textwidth]{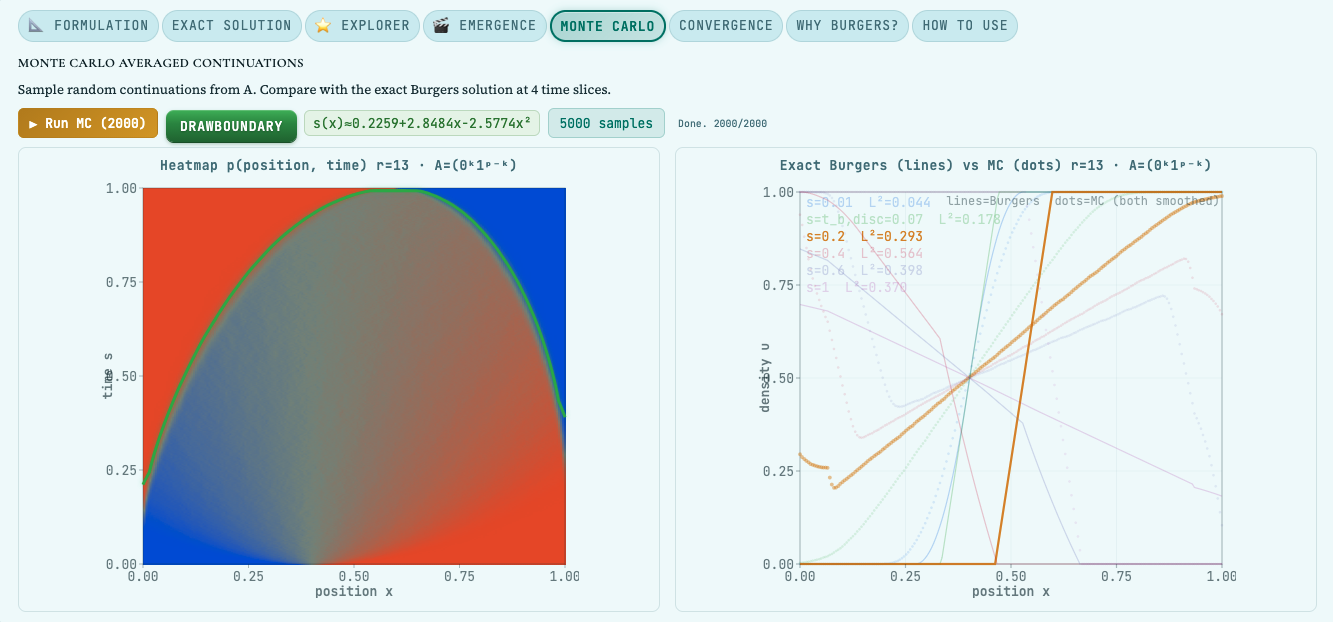} 
	\caption{ Monte Carlo simulation results for starting sequence \(0^k1^{p-k})\): left, the averaged occupancy heatmap \(u(x,s)\) with fitted green boundary \(s(x)\approx0.259+2.846x-2.574x^2\); right, multi-time comparison of exact Burgers profiles and MC data with per-slice \(L^2\) errors.
    An interactive simulation 
    is available at 
    \href{https://fkhafizov.github.io/animation/young_tableaux_claude_v9.html}{link}.
    }
	\label{fig:Burgers}
\end{figure}








\clearpage
\subsection{Spanning trees for finite spin-up sector, 1/n expansion, Benjamini-Schramm limit to lattices,  Mahler measures   }
In this subsection we 
analyze the limit $S_n/ (S_{d_1} \times ... \times S_{d_j} \times S_{n - \sum d_j} )$
for $d_j$ fixed and $n \to infty$.

\textbf{Overview: math part.}
The key message here is that in certain sense
(Benjamini-Schramm) these 
graphs tend just to the standard lattice
graph $Z^{\sum d_j}$. 
Hence all the key quantities for them in the limit
tend to those for the standard lattice.
In particular the number of spanning trees normalized by graph size  converges to logarithmic
\href{https://en.wikipedia.org/wiki/Mahler_measure#Higher-dimensional_Mahler_measure}{multi-variable Mahler measures}. These measures 
count spanning trees for lattices and are interesting objects related to values of $L$-functions, deep Beilinson's conjectures,
supersymmetric Landau-Ginzburg models used in mirror symmetry for Fano varieties, etc. Additionally, we observe that many quantities here admit $1/n$ expansions which are efficient enough
to reproduce numerical data with good accuracy. 

\textbf{Overview: duality and string part. }
In physical language the claim is that the model that describes the graph Laplacian
(the XXX spin chain)
in the large $n$ limit can be described in terms of $d=\sum d_j$ non-interacting particles. 
The appearance of $d$ particles 
has a clear and simple explanation via our duality.
Indeed, let us consider for simplicity the 
case $S_n/(S_d \times S_{n-d})$, with $d$ fixed and 
$n\to \infty$. Our discrete duality describes
that system as a discrete string whose worldsheet is a $d \times (n-d)$ rectangle. In the limit it becomes of size $d \times \infty$, so we can think that one variable on the worldsheet becomes continuous,
while the other one is discrete. The continuous variable leads to particle-like behaviour, and the discrete variable means that we have not just a single particle but $d$ of them. As we described above the action of the discrete system is a discretization 
of the standard string action, so we expect that in the limit it becomes an action for standard particles. In the next subsection we will describe another limit and argue that actual bosonic string-like scalar models appear in the same way.

Our claims are of course closely related to the well-known fact that magnons in XXX models are almost non-interacting at large length.
At the same time we were unable to find in the literature detailed discussions of the specific quantites we consider below, which are rather unusual from the spin chain point of view. 
In particular, surprisingly, it is not clear  how to derive many of the proposals below directly from the Bethe ansatz,
which in principle describes the full spectrum. 

It is interesting to note
 that a  \href{https://en.wikipedia.org/wiki/Van_Hove_singularity}{Van Hove singularity} 
can be clearly seen in the spectrum of $S_n/(S_2\times S_{n-2})$ for large $n$ -- see figure \ref{figs3:VanHove200_2}
(and  \href{https://www.kaggle.com/code/alexandervc/cayleypy-coxeter-eigenvals/}{notebook}).
That confirms that these systems can be described as two non-interacting particles 
since the Van Hove spectrum appears as for the sum of two cosines. (A single cosine is spectrum of line graph
which is $S_n/ S_{n-1}$).

Let us now present our main statements and observations.

\begin{Conj}[Spanning trees, XXX-determinants and Mahler measure]\label{conj:mahler}
Consider Coxeter (neigbour transpositions) generators of $S_n$ and consider  Schreier coset graphs   $S_n/ (S_{d_1} \times ... \times S_{d_j} \times S_{n - \sum d_j} )$ for $d_j$ fixed and $n \to \infty$.
Denote the number of spanning trees by $\tau(\Gamma_n)$ and the graph size 
by $V(\Gamma_n)$, and denote $d=\sum d_j$.
(Remark: $\tau(\Gamma_n)$ is related to the determinant of the Laplace operator
which is the XXX-Heisenberg spin chain Hamiltonian in a specific subsector, up to constants).
With notation as above,
\begin{equation}\label{eq:mahler}
  \frac{\log\tau(\Gamma_n)}{|V(\Gamma_n)|}
  \;\xrightarrow{n\to\infty}\;
  m\!\left(2d - \sum_{i=1}^{d}(z_i+z_i^{-1})\right),
\end{equation}
where the \emph{logarithmic Mahler measure} of a Laurent polynomial
$P\in\mathbb{Z}[z_1^{\pm1},\ldots,z_d^{\pm1}]$ is
\[
  m(P)
  \;=\;
  \int_{\mathbb{T}^d}
  \log\bigl|P(e^{2\pi i\theta_1},\ldots,e^{2\pi i\theta_d})\bigr|\,
  \frac{d\theta_1\cdots d\theta_d}{(2\pi)^d}.
\]
For $d=2$ (i.e.\ $r=2$), the right-hand side equals $4G/\pi$, where
$G$ is Catalan's constant. Numerically this Mahler measure is around  $1.166243$, and for $d=3$ it is around $1.673389$.  It is  related to special values of $L$-functions,
explicitly known for $d=2,3$ and non-explicitly related to Beilinson's conjectures for general $d$ (see discussion below on C.Denininger's results).
For general $d$, the measure equals the same expression for the lattice, i.e. 
logarithm  of number of spanning trees on the  finite lattice again normalized by size,
and it is sometimes called "spanning-tree entropy" of the
$d$-dimensional integer lattice $\mathbb{Z}^d$.

Similarly the traces, characteristic polynomial and entire spectral measure converges to appropriate expressions for the lattice:

\begin{equation}\label{eq:mahler}
  \frac{\log \det(Laplacian_{G_n} - E)}{|V(\Gamma_n)|}
  \;\xrightarrow{n\to\infty}\;
  m\!\left(2d - \sum_{i=1}^{d}(z_i+z_i^{-1}) - E \right),
\end{equation}

\begin{equation}\label{eq:mahler}
  \frac{{\rm Tr} (Laplacian_{G_n} - E)^s}{|V(\Gamma_n)|}
  \;\xrightarrow{n\to\infty}\;
  \int_{\mathbb{T}^d}
  \!\left(2d - \sum_{i=1}^{d}( e^{2\pi  i\theta_d} +e^{-2\pi  i\theta_d} ) - E \right)^s
  \frac{d\theta_1\cdots d\theta_d}{(2\pi)^d}.
\end{equation}

Instead of ${\rm Tr}$ or $\det$ one can put any other function and a similar equality is expected. In other words, moments and entire spectral measure with appropriate normalization
tend to the one of the lattice.
\end{Conj}


Remark. The expressions above are well-known and easy to understand for lattices (grid-graphs), see
e.g. \cite{Lyons2005,SilverWilliams2016}.
The number of spanning trees is of course different for lattices and finite graphs above, 
but with  normalization by graph sizes (which are also different) they coincide in the limit. 

\begin{Conj}
    [Benjamini--Schramm limit]\label{conj:BS}
The sequence $(\Gamma_n)_{n\ge 1}$ converges in the
Benjamini--Schramm (local weak) sense \cite{BenjaminiSchramm2001, Lyons2005}
to the $d$-dimensional integer lattice $\mathbb{Z}^d$,
with its standard nearest-neighbour structure.
Consequently, the spectral measure of $\Gamma_n$ converges
weakly to the spectral measure of the Laplacian on $\mathbb{Z}^d$.
\end{Conj}

\begin{Conj}[$1/n$ expansion and numerical check]
The expressions $ \frac{\log\tau(\Gamma_n)}{|V(\Gamma_n)|} $ and similar expressions for traces
and characteristic polynomial
admit an asymptotic expansion as series in $1/n$.
The constant term is the Mahler measure (given by expressions in  the conjecture above).
Convergence is fast enough such that determining only
few values in $n$ (which is easy numerically) one can 
determine the leading and several subleading coefficients, thus obtaining numerical results for infinite $n$ from several quite small $n$ values.
\end{Conj}


\bigskip 
\textbf{ Informal arguments.} It is clear that the key idea is to understand that the graph is close to the lattice in a certain sense.
The Coxeter generators are very similar to commutative -- all generators
commute except neighbor pairs. 
Moreover, for the case 
$S_n/(S_2 \times S_{n-2})$  is has been already pictured on several figures above
that the graph is quarter of the lattice
("quarter Aztec diamond"). 
The same can be achieved for any $d$ for  $S_n/(S_d \times S_{n-d})$ . Indeed,
nodes of the graph are sequences of 0's and 1's with exactly $d$ zeros.
The embedding of the graph is defined by the following rule: 
associate to a vector $d$ integer numbers which are just the positions
where these zeros appear.
One can see that this gives an embedding of our graph into the
lattice graph $Z^d$. 
Figure \ref{fig:graph3_8like_lattice} illustrates that embedding for $d=3$.
See also figures \ref{fig4:2dlatticeLike},
\ref{figs4:3dlatticeLike1} for $S_{n}/S_{n-2}$
and $S_{n}/S_{n-3}$.

\begin{figure}[H]
    \centering
    \includegraphics[width=0.8\linewidth]{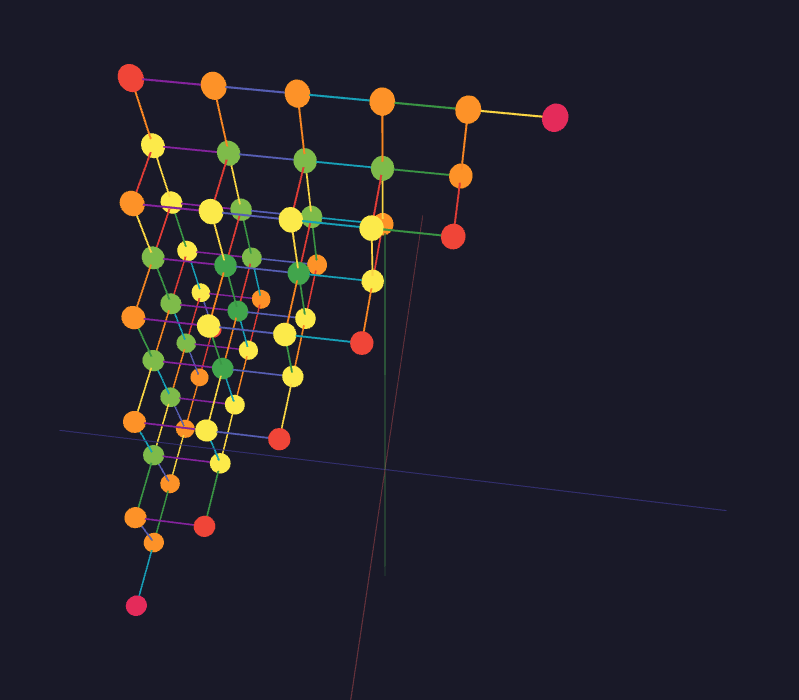}
    \caption{The Schreier graph $S_8/(S_3 \times S_5) $ with Coxeter generators, similarity to lattice can be clearly seen.  Widget is available \href{https://fkhafizov.github.io/animation/schreier_graph_v4_o45.html}{link}. Other graph examples available by the link. }
    \label{fig:graph3_8like_lattice}
\end{figure}

\bigskip 

\textbf{ Numerical checks. } To check the conjecture on Mahler measures numerically 
we rely on the $1/n$ expansion.
We compute the number of spanning trees for several values 
of $n$, and then make a numerical fit for $\log\tau(\Gamma_n)/|V(\Gamma_n)|$ in the form
$c_0 + c_1/n + c_2/n^2 + ...$.
The leading term $c_0$ should match the desired Mahler measure.
We observe excellent correspondence between the two computations:
$c_0$ from finite data and numerical computation of the Mahler
measure. The Mahler measure itself is just an integral, so it can be computed e.g.
in Mathematica with high precision.

Technically, to compute the number of spanning trees by Kirchoff's theorem we take the principal minor
of the Laplace matrix (to avoid problem with zero eigenvalue) and compute the logarithm of the determinant
directly,  e.g. via the numpy function np.linalg.slogdet. In this way we get values up to $n$ around 30.
When we fit the obtained data we exclude the first several values (for example, we start from $n\sim 10$).
We can also fit by polynomials of various degrees, and we observe that degrees from 5 to 10 give nearly 
the same result for the leading term. Figure \ref{fig:fit_demo} provides a screenshot 
from the \href{https://www.kaggle.com/code/alexandervc/cayleypy-fit-spanning-tree-data}{notebook}
with fit showing numerical coincidence up to 5 digits and near-independence on the choice of the degree of the fitting polynomial. Computations of the numerical determinants themselves are available in notebooks:
\href{https://www.kaggle.com/code/lilypilly/determinants-for-large-graph-laplacians}{notebook}, \href{https://www.kaggle.com/code/ivankolt/coxeter-cosets-determinant}{notebook}. 



\begin{figure}[H]
    \centering
    \includegraphics[width=0.8\linewidth]{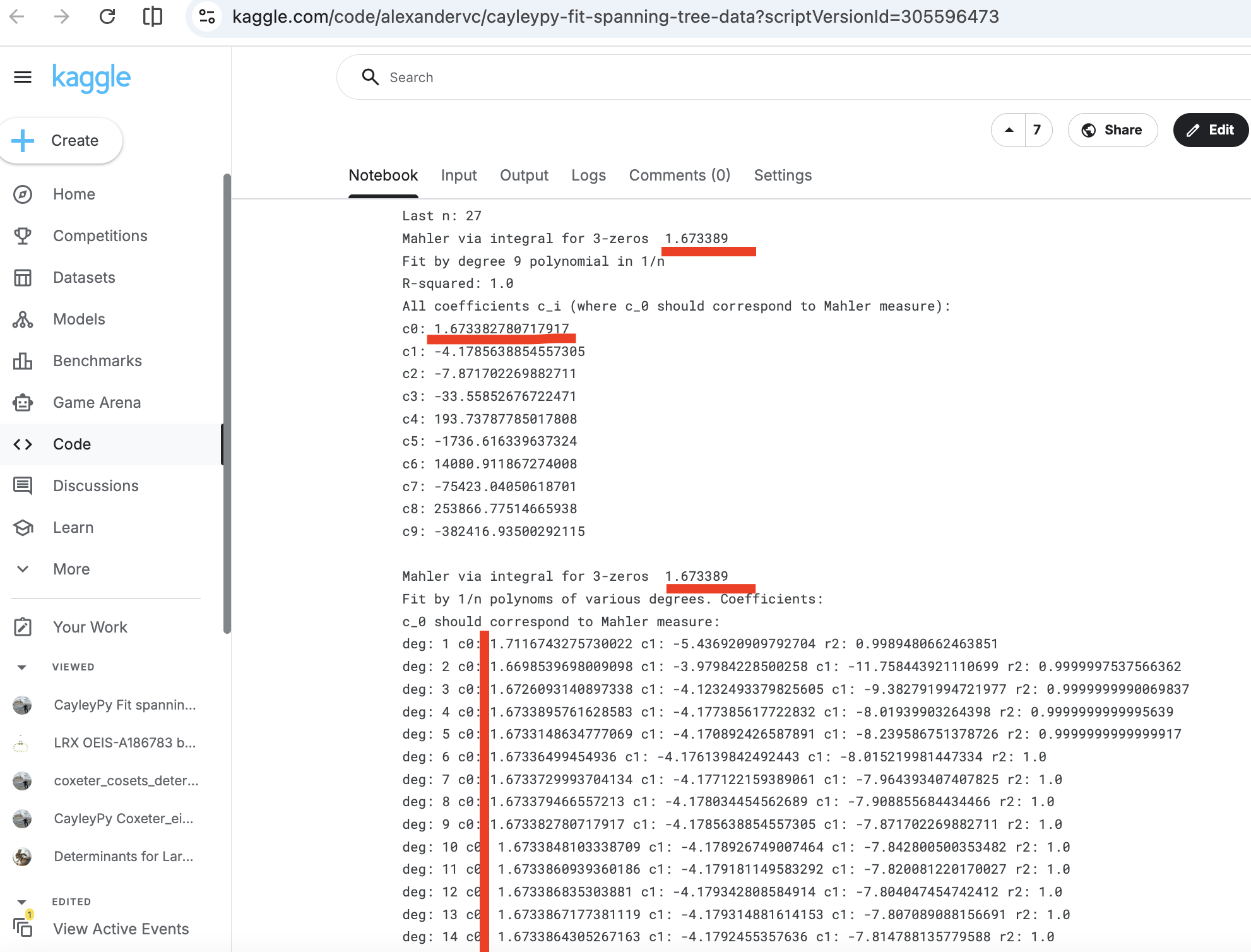 }
    \caption{ Fitted numerical data shows quite good coincidence of $c_0$ with Mahler measure which is around 1.67338.
    Leading coefficient $c_0$ does not depend much on the degree of the polynomial approximation.
    \href{https://www.kaggle.com/code/alexandervc/cayleypy-fit-spanning-tree-data}{Notebook} }
    \label{fig:fit_demo}
\end{figure}

\textbf{Potential route to the  proof.} The two conjectures are related.
The first one follows from the second by a result of \cite{Lyons2005}.
By the visualization and arguments above it is clear that our graphs resemble lattices,
the difficult part is to control the difference and show that is disappears in large size limit -- while clear in principle, it may be nontrivial to do it fully rigorously without also relying e.g. on unproven properties of the Bethe ansatz.

\noindent\textbf{Step 1 (Local structure).}
A vertex of $\Gamma_n$ is an ordered partition
$[n]=B_1\sqcup\cdots\sqcup B_r$.
The Coxeter generator $s_i$ acts non-trivially at $v$ if and only if
$i$ and $i{+}1$ belong to \emph{different} blocks.
For a uniformly random vertex and a fixed radius $R$, the $R$-ball
in $\Gamma_n$ is determined by $O(R)$ consecutive positions of $[n]$,
whose block-membership pattern converges (as $n\to\infty$) to that
of an i.i.d.\ colouring of $\mathbb{Z}$ with $r$ colours and
frequencies $(k_j/n)$.
The resulting limit graph is the Cayley graph of $\mathbb{Z}^{d}$
with the $2d$ standard generators -- exactly $\mathbb{Z}^d$.

\smallskip
\noindent\textbf{Step 2 (Følner / BS convergence).}
The fraction of ``boundary'' vertices -- those whose $R$-neighbourhood
is not isomorphic to a ball in $\mathbb{Z}^d$---is at most
$O(R/n)\to 0$.
This is precisely the Følner condition, which implies
Benjamini--Schramm convergence (Conjecture~\ref{conj:BS}).

\smallskip
\noindent\textbf{Step 3 (Lyons' theorem).}
By \cite[Theorem~1.2]{Lyons2005}, if a sequence of finite connected
graphs with uniformly bounded degree converges in the
Benjamini--Schramm sense to an infinite unimodular random graph $G_\infty$,
then
\[
  \frac{\log\tau(G_n)}{|V(G_n)|}
  \;\longrightarrow\;
  h(G_\infty),
\]
where $h(G_\infty)$ denotes the \emph{tree entropy} of $G_\infty$
and which is logarithmic Mahler measure in case when $G_\infty$ is  $\mathbb{Z}^d$.

\smallskip
\noindent\textbf{Step 4 (Tree entropy $=$ Mahler measure).}
For $G_\infty=\mathbb{Z}^d$, the tree entropy equals
\[
  h(\mathbb{Z}^d)
  \;=\;
  \int_{\mathbb{T}^d}
  \log\!\left(2d-2\sum_{i=1}^d\cos(2\pi\theta_i)\right)
  d\theta_1\cdots d\theta_d
  \;=\;
  m\!\left(2d-\sum_{i=1}^d(z_i+z_i^{-1})\right),
\]
establishing Conjecture~\ref{conj:mahler} conditionally on
Conjecture~\ref{conj:BS}.
The equality of the last two expressions is the tautological unfolding
of the Mahler measure as a torus integral \cite{SilverWilliams2016}.

\textbf{Related works.}
\href{https://en.wikipedia.org/wiki/Mahler_measure#Higher-dimensional_Mahler_measure}{Multi-variable Mahler measures}
are very similar to supersymmetric Landau-Ginzburg models used for mirror symmetry of Fano varieties \cite{Golyshev2007, CoatesCortiGalkinGolyshevKasprzyk2013, GalkinGolyshevIritani2015} and
 earlier works \cite{Witten1993Phases,HoriVafa2000,EguchiKawaiYamada1991, VafaWarner1989,Candelas1991,Kontsevich1995,Givental1998,Batyrev1994}.

From physical point of view counting spanning trees on these graphs is  equivalent to computation
of the determinant of Hamiltonian of XXX spin chain (by Kirchhoff's theorem and identification of the Hamiltonian with graph Laplacian). The conjectures above are closely related to the Bethe ansatz, but establish them fully rigorously and in full generality is still a nontrivial task which seems to not have been addressed yet.


For the most simple case, the  subsector with just two spins-up (i.e. graph $S_n/(S_2\times S_{n-2}$
which is quatter of the Aztec diamond),
there is a non-trivial result by R.Stanley, D.Knuth, T.Chow,
et.al. (\href{https://oeis.org/A007726}{oeis-A007726},
observed and conjectured  in \cite{Stanley1994AztecTrees}, resolved in
\cite{Knuth1997, Chow1997, Ciucu1997}, developed  \cite{KenyonProppWilson2000, Ciucu2008} etc.).
which provides the number of spanning trees exactly for each finite $n$,
(not only asymptotics that we discussed above). Our conjecture can be shown to be true in this case. At the same time, it would be very interesting to derive that exact finite-$n$ result from the Bethe ansatz -- which should be possible to do, but seems rather nontrivial.
In general, the interest in this set of questions during that time seems to have come 
from seminal works on related questions of domino  tilings (dimer model) for Aztec diamond
\cite{elkies1991alternatingsignmatricesdomino}, and more generally remarkable
results on spanning trees \cite{burton1993local}, in particular with relation to conformal field theories
\cite{duplantier1989statistical}.

For Abelian groups,  the relation of spanning trees and Mahler measures
is quite natural and well documented in the literature. Indeed,
by \href{https://en.wikipedia.org/wiki/Kirchhoff%27s_theorem}{Kirchhoff's theorem}, the number of spanning trees of a finite graph equals the determinant of its Laplacian (with one row and column removed). 
In the Abelian case, the group algebra is essentially a quotient of the lattice algebra $\mathbb{Z}^d$. 
Thus group elements can be identified with Laurent polynomials, and convolution (or the action of the Laplacian) corresponds to multiplication by a Laurent polynomial. 
Consequently, the determinant of the Laplacian can be expressed in terms of this Laurent polynomial,
and using the formula $log(\det(M)) = {]rm Tr}(log(M))$ one arrives at the logarithmic Mahler measure. 
On the one hand, it counts spanning trees; on the other hand, in the infinite or periodic limit it naturally leads to the Mahler measure of the corresponding polynomial, modulo some technical details \cite{SilverWilliams2005, grunwald2019number, GrunwaldMednykh2021Cone,}. The non-trivial situation is that
our graphs come from non-abelian groups, but this 
non-abelianity is not that large, and in the appropriate limit
one obtains similar results to the  abelian  case.

Mahler measure and their generalizations play a key role in the number theory:
\cite{Deninger1997,LindWard1990,LindSchmidtWard1990,Deninger1994, arzhakova2021decimation}. Wonderful \href{https://en.wikipedia.org/wiki/Mahler_measure#Some_results_by_Lawton_and_Boyd}{ relations}
were found between Mahler measures and values of L-functions
in special points \cite{Smyth2008,Boyd1981, RodriguezVillegas1999,BoydRodriguezVillegasDunfield2003, } 
  and partly derived by C.Deninger \cite{Deninger1997Deligne} conditioned to Beilinson
conjectures, we refer to \cite{BrunaultZudilin2020, Trieu2023MahlerExact} for further information.
The general expectations according to C.Deninger and previous works by Smith, Boyd et.al. are the following. 
The Mahler measure $m(P_d)$ is commensurable with the first non-vanishing derivative of the $L$-function associated to the  cohomology $H^{d-1}(V_{P_d})$ (smooth compactification of the zero locus of Laurent polynomial) evaluated at $s=0$ (or at the central point of the critical strip, depending on the normalization). 
It might be that $m(P_d) \sim_{\mathbb{Q}^\times} c_d \cdot L'(M_d, 0)$ where $M_d$ is the motive attached to the projective closure of $V_{P_d}$, $c_d$ is a rational normalization factor (often involving powers of $\pi$), and $\sim_{\mathbb{Q}^\times}$ denotes equality up to a rational factor.
Cases $d=2,3$ have well-known explicit description.
In some cases Mahler measure of A-polynomial of the knot gives
hyperbolic volume of the knot complement manifold e.g. \cite{BoydRodriguezVillegasDunfield2003}.
Quantization of A-polynomial is conjecturally related to 
colored Jones polynomials ("AJ-conjecture") \cite{Le2006_ColoredJonesApoly, Gukov2005, FujiGukovSulkowski2012, GrassiHatsudaMarino2014, },
fascinating topic deeply connected to various questions in topological string theory. 

Related important conjecture is \textbf{L\"uck's Determinant Conjecture.} 
Let $G$ be a finitely generated group and $\Delta$ the Laplacian on its Cayley graph. 
Then the Fuglede--Kadison determinant of $\Delta$ satisfies $\det_G(\Delta) \ge 1$. 
More relations to deep conjectures and questions by Connnes, Gromov et.al briefly sketched on page 5/1458 \cite{aldous2007processes}.
For abelian groups this is equivalent to a lower bound on the growth rate of spanning trees in the graph \cite{Luck2002}.
The papers above have more analytic flavor,
related purely algebraic considerations for free group can be found in
\cite{KontsevichNoncommutative, bellissard2007algebraic, Haiman1993}
although terminology is different but constructions are essentially related - the key results show that resolvent of the Laplacian 
is algebraic function, moreover M.Kontsevich extends it to "det(1-tM)" which 
is non-commutative analogue of Mahler measure. Surprisingly, results rely
on \cite{ChomskySchutzenberger1968} theory of context free languages 
by Chomsky and Schutzenberger from 1968, and Kontsevich's generalization
additionally on Grothendieck's conjecture on algebraic solutions of holonomic systems (see also \href{https://www.youtube.com/watch?v=vD6eIQPpNFs}{lecture by M.Kontsevich} and his newer lectures on the subject).

In the case $S_n/(S_d\times S_{n-d}$
the graphs themselves apparently  are special cases of "token" graphs
(applied to just a linear  graph)
\cite{DalfoDuqueFabilaMonroyFiolHuemerTrujilloZaragoza2020}.
Use of Benjamini-Schramm limits
is quite wide-spread technique
nowadays e.g. \cite{bille2023random} section 2.2.

\begin{figure}[H]
    \centering
    \begin{minipage}{0.48\linewidth}
        \centering
        \includegraphics[width=\linewidth]{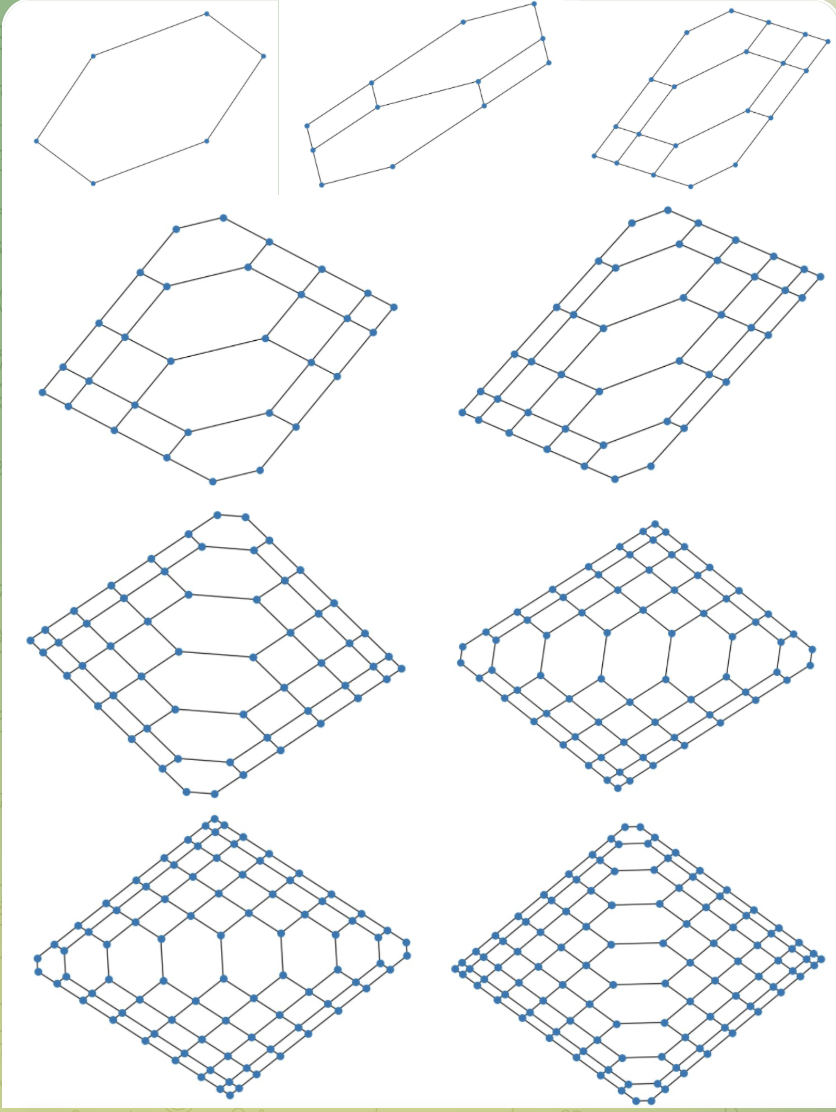}
        \caption{Schreier coset graphs -  Coxeter generators  $S_n/S_{n-2}$. Similarity with lattice is evident.\href{https://www.kaggle.com/code/lilypilly/determinants-for-large-graph-laplacians?scriptVersionId=303973750}{Notebook}. }
        \label{fig4:2dlatticeLike}
    \end{minipage}\hfill
    \begin{minipage}{0.48\linewidth}
        \centering
        \includegraphics[width=\linewidth]{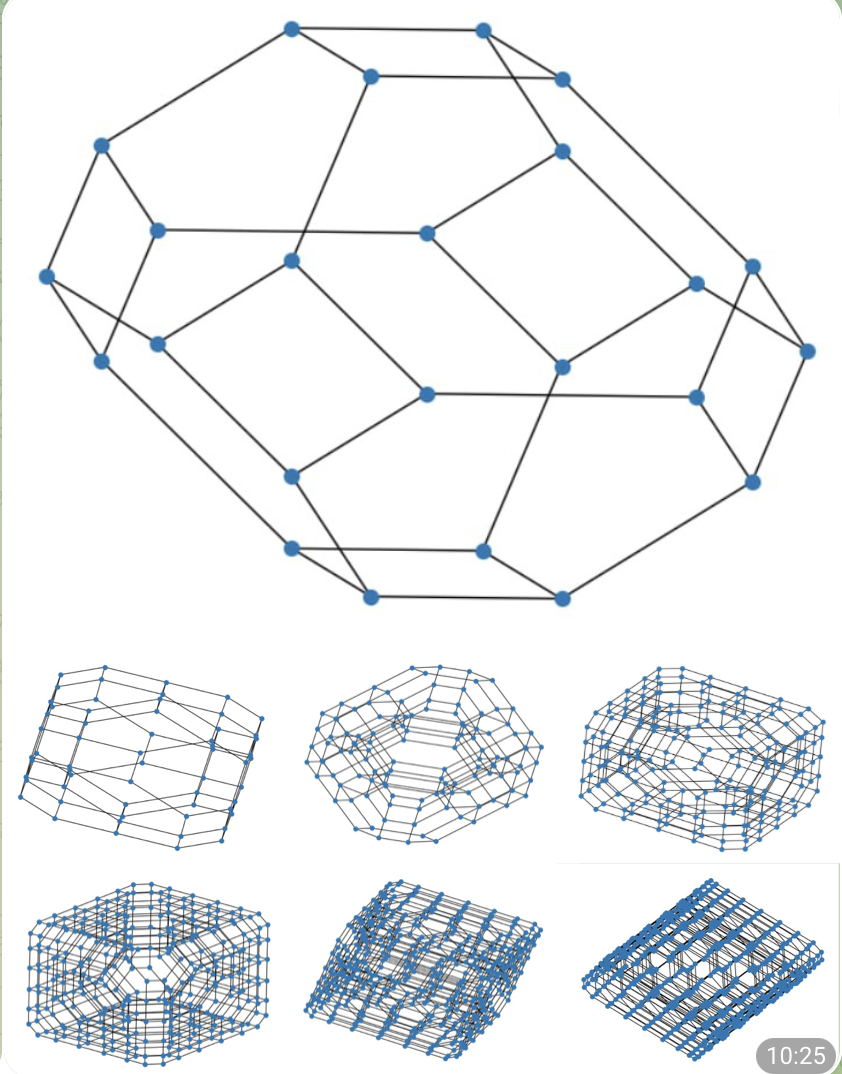}
        \caption{Schreier coset graphs -  Coxeter generators  $S_n/S_{n-3}$.
        Similarity with lattice is evident. \href{https://www.kaggle.com/code/lilypilly/determinants-for-large-graph-laplacians?scriptVersionId=305455260}{Notebook}
         }
        \label{figs4:3dlatticeLike1}
    \end{minipage}
\end{figure}

\begin{figure}[H]
    \centering
    \begin{minipage}{0.48\linewidth}
        \centering
        \includegraphics[width=\linewidth]{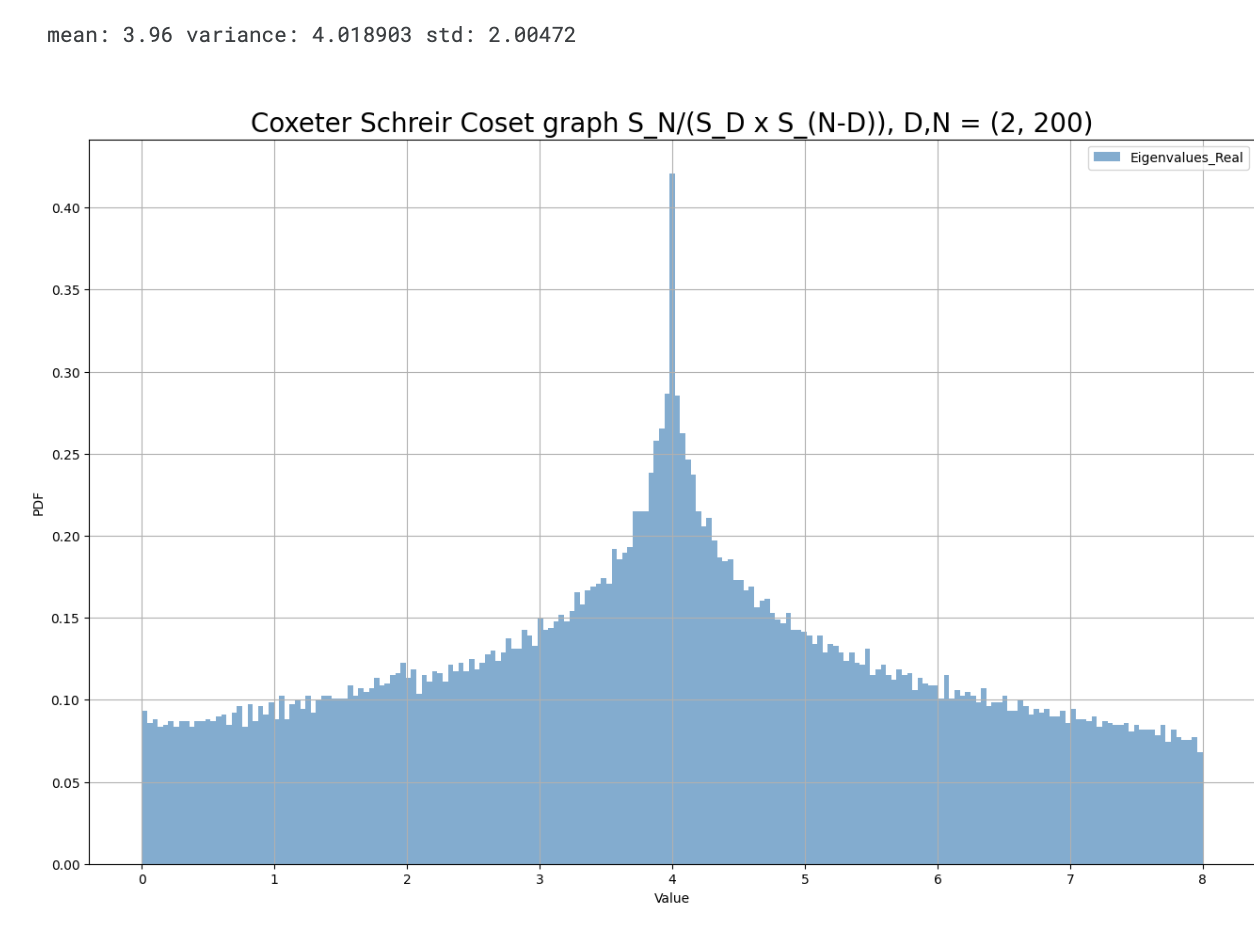}
        \caption{Eigenvalues histogram Coxeter Schreier graph $S_{200}/(S_2\times S_{198})$. 
        Distributions tends to Van Hove distribution (sum of two cosines).
        \href{https://www.kaggle.com/code/alexandervc/cayleypy-coxeter-eigenvals/}{Notebook}.}
        \label{figs3:VanHove200_2}
    \end{minipage}\hfill
    \begin{minipage}{0.48\linewidth}
        \centering
        \includegraphics[width=\linewidth]{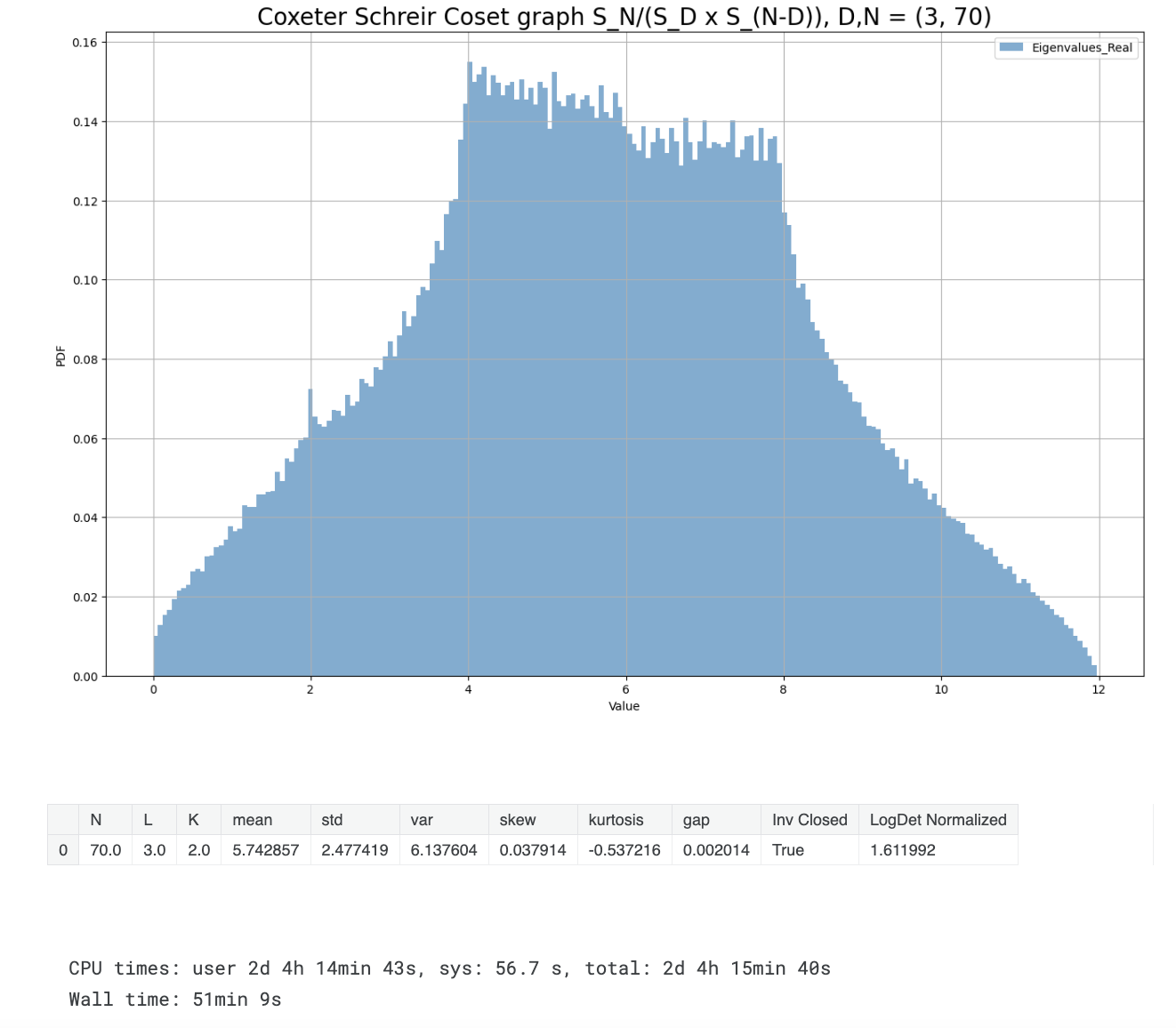}
        \caption{Eigenvalues histogram Coxeter Schreier graph $S_{70}/(S_3\times S_{67})$
        Distributions tends to sum of three cosines.
        \href{https://www.kaggle.com/code/alexandervc/cayleypy-coxeter-eigenvals/}{Notebook}.}
        \label{figs3:Graph3Spectr}
    \end{minipage}
\end{figure}

\clearpage
\subsection{Limit to field theory string-like model. Spanning trees for many spins  up and down sector }
In the present subsection
analyze limit $S_n/ (S_{k_1} \times ... \times S_{k_j} )$
for $k_j/n = p_j$ fixed, and $n\to \infty$.

\textbf{Overview - math part.} We propose that eigenvalue distribution
tends to Gaussian, propose formulas
for mean, variance
and hence conjecture  the number of  spanning trees (Laplacian determinant),
and also spectral gap. Numeric simulations are provided to support.
It is worth mentioning why Benjamini–Schramm technique is not applicable in the setup of the limit considered here.
It is because the degrees of the graph nodes tends to infinity, thus it is obvious that there is no possible limit to usual graph. In some sense what is going on is defining the "renormalization" of the UV-divergent limit. 

\textbf{Overview - string part.}
We also conjecture that limiting theories
are conformal field theories,
which are almost the standard 1-dimensional scalar theories which here play the role of the string-like dual model. What is non-standard
that in general the worldsheet has a non-standard shape in general - the shape of the polygon which appears in our discrete duality.  
For example for $S_n$ itself (without quotients) it is triangle. 

Appearance of the 1d boson is not surprising
-- the limit of XXX is described by such QFT
goes back at least to \cite{luttinger1963exactly} and standard in modern literature.
More surprising is appearance of non-standard worldsheets,  relation to discrete duality  and predictions of spectrum of conformal dimensions.

In that sense we give an answer what CFT may be viewed in some sense as dual to the standard 1-dimensional scalar compactified on a radius $R$ circle -- in our setup it is the CFT 
arising from the graph $S_n/(S_d\times S_{n-d}) $, where the normalized area 
of the dual polygon plays role of $R^2$, i.e. 
$R^2 = d(n-d)/n^2 $, where $n \to \infty$, in the other words this is XXX Heisenberg spin chain on subsector with $k$ spins up, out of $n$  total spins.
The fact again goes back to 
\cite{luttinger1963exactly}, but what
is new is to put into the framework of discrete analogue of AdS/CFT duality.

It it worth to compare the limit in the present section with the previous from string theory perspective -  the picture is surprisingly simple.
Consider $S_n/(S_d \times S_{n-d})$
the dual polygon in our duality 
is rectangle $d \times (n-d)$,
the limit in the previous section
is: $d$ fixed $n\to \infty$,
think of the rectangle as a worldsheet,
imagine that one of the coordinates disappear - so strings reduces to particle, now imagine it disappeared not completely but has $d$ discrete possible values - so we get $d$ independent particles - that is exactly what happened in the previous subsection.
In the present section both $d,n$ goes to infinity and we get usual worldsheet,
so appearance of usual string is natural.

Our duality conjecture implies the answer
on the spectrum of conformal dimensions
for CFT arising as limits of XXX spins 
chains with various spin configurations -  it corresponds  to the spectrum of the Laplace operators
on the planar polygons which
shape is defined by $d_i$ in particular
for the full $S_n$ it is triangle,
and $S_n/(S_d\times S_{n-d}) $
to rectangle, general $d_i$
correspond to cutting down 
small triangles of with edges $d_i$
from the diagonal of the big triangle.

\begin{Conj}
    [Guassian spectrum]\label{conj:Gauss2}
The spectrum of the Laplacian
for 
the Schreier graph
$S_n/ (S_{k_1} \times ... \times S_{k_j} )$
(XXX Hamiltonian)
tends to the Gaussian distribution
in the limit
 $k_i/n = p_i$ fixed, and $n\to \infty$.
\end{Conj}
Informal explanation of the Gaussianity
is quite simple and the following: graph Laplacian coincides with $H_{XXX}$ which up to constants is $  \sum_{i}  \sigma_{i,i+1}$. Where
$\sigma_i$ are commuting $|i-j|>1$,
moreover they acts on independent
sets of spins. Thus eigenvalues 
are sums of individual eigenvalues,
thus leading to Gaussian in large $n$ limit. The effect of the non-commutativity for $j=i+1$ disappears at large $n$. That is closely parallel to classical central limit theorems,
where independence basically appears
from variables acting on different coordinates of the probability space.
And weak dependence of variables do not spoil the Gaussian at the limit.
The question has been studied a lot in related contexts  ...todo-references...

\begin{proposition}
    Mean value equals to $\frac{2}{n}(\sum_{i<j} k_i k_j) $, that means
in grows linearly with $n$. 
In particular for the case of $S_n$
(without quotients) it is $(n-1)$,
for $S_n/(S_d \times S_{n-d})$ it is
$\frac{2}{n} d(n-d) $. 
\end{proposition}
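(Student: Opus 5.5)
The mean of the spectral distribution is the normalized trace of the Laplacian, $\frac{1}{|V|}\mathrm{Tr}\,\Delta$. Since the trace of a graph Laplacian is the sum of its vertex degrees, the plan is to compute the average degree of the Schreier graph $S_n/(S_{k_1}\times\cdots\times S_{k_j})$ with Coxeter generators. We work with loops discarded, as in the convention of the earlier subsections, so the Laplacian is $D-A$ with $A$ having zero diagonal, and $\mathrm{Tr}(D-A)=\sum_v \deg v$.

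Concretely, a vertex is a word $w\in\{1,\dots,j\}^n$ with $k_i$ letters equal to $i$. The generator $s_m=(m,m{+}1)$ gives a genuine edge at $w$ exactly when $w_m\neq w_{m+1}$. Two such generators never produce the same neighbour, since they move different positions. Hence
\[
\deg w=\#\{m\in\{1,\dots,n-1\}: w_m\neq w_{m+1}\}.
\]

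The next step is to average this over the uniform measure on words, which is the orbit of the sorted word. By linearity it suffices to compute, for each fixed $m$, the probability $\Pr(w_m\neq w_{m+1})$. Under a uniformly random arrangement of the multiset, the pair $(w_m,w_{m+1})$ is a uniformly random ordered pair of distinct positions of the multiset. Therefore
\[
\Pr(w_m= w_{m+1})=\sum_i \frac{k_i(k_i-1)}{n(n-1)},\qquad
\Pr(w_m\ne w_{m+1})=\frac{n^2-\sum_i k_i^2}{n(n-1)}=\frac{2\sum_{i<l}k_ik_l}{n(n-1)}.
\]
This uses $n^2=\left(\sum_i k_i\right)^2$. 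Multiplying by the $n-1$ positions gives the mean eigenvalue $\frac{2}{n}\sum_{i<l}k_ik_l$, which is exactly the claimed formula.

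Finally we check the special cases. For $S_n$ itself all $k_i=1$, so $\sum_{i<l}k_ik_l=\binom n2$ and the mean is $n-1$; this matches the fact that the Cayley graph is $(n-1)$-regular. For two blocks the mean is $\frac{2}{n}d(n-d)$. In the regime $k_i=p_in$ this equals $2n\sum_{i<l}p_ip_l$, which grows linearly in $n$.

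There is no real obstacle here. The only point needing care is the loop convention: loops, where $w_m=w_{m+1}$, must not contribute to the degree. If the alternative normalization with all $n-1$ generators counted were used, the diagonal of the Laplacian would change. With the no-loop convention stated in the text, the computation above is exact for every finite $n$, not only in the limit.
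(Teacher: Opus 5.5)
Your argument is correct and is the standard computation the paper has in mind; the paper itself only says ``Proofs are standard''. You write the mean eigenvalue as $\frac{1}{|V|}\mathrm{Tr}\,L$, i.e.\ the average number of positions $m$ with $w_m\neq w_{m+1}$, and evaluate it by linearity using $\Pr(w_m\neq w_{m+1})=2\sum_{i<l}k_ik_l/(n(n-1))$. Your caveat about the loop convention is unnecessary: with the paper's normalization $L=(n-1)I-A$ (self-loops included in $A$, as in the Bethe ansatz section), the diagonal entries are again the loop-free degrees, so the result is the same either way.
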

\textbf{Small miracle.}
Factor $(\sum_{i<j} k_i k_j)$
coincides with diameter of the corresponding Schreier graph which is the area of the dual polygon according to our duality. It seems that fact is specific to Coxeter group,
it resembles the classical miracle:
  number of reflections
coincide with length of the longest element for Coxeter groups. 

\begin{proposition}
Variance for the case of $S_n$
(without quotients) is again $(n-1)$ (same as mean),
for $S_n/(S_d \times S_{n-d})$ it is
$\frac{(2d(n-d))^2 + 2d(n-d)(n^2-2n)}{n^2(n-1)} $, which is linear in $n$ in leading order. 
In particular for $n$ even and $d = n/2$,
variance is $3/4n - 1/4 - \frac{1}{4(n-1)}$.
\end{proposition}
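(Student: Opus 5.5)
The plan is to compute the first two moments of the spectral measure as traces and reduce them to statistics of vertex degrees. Write the Laplacian of the Schreier graph on words $y$ (with $d$ zeros and $n-d$ ones, or in general the multiset word) as $L=\sum_{i=0}^{n-2}(1-s_i)$, where $s_i$ acts by the permutation matrix of the adjacent swap. Equivalently, $L=D-A$, where $D$ is the diagonal matrix of $\deg(y)=\#\{i: y_i\neq y_{i+1}\}$ (loops cancel in $1-s_i$).

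The variance of the eigenvalue distribution is
\[
\frac{1}{V}\operatorname{Tr}L^2-\Bigl(\frac{1}{V}\operatorname{Tr}L\Bigr)^2 .
\]
The graph has no multiple edges, since distinct non-trivial $s_i$ send $y$ to distinct words. Hence $(A^2)_{yy}=\deg(y)$, and so
\[
\operatorname{Tr}L^2=\sum_y\bigl(\deg(y)^2+\deg(y)\bigr).
\]
Consequently the variance equals $\operatorname{Var}(\deg)+\mathbb{E}[\deg]$ for $y$ uniform on the vertex set. This identity is the key step; the rest is a computation with indicators $X_i=\mathbf 1[y_i\neq y_{i+1}]$ under the uniform measure on words, which is exchangeable.

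For $S_n$ itself, all letters are distinct, so $\deg\equiv n-1$. Then $\operatorname{Var}(\deg)=0$ and the variance is $n-1$, equal to the mean. This also recovers the mean statement of the preceding proposition.

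For $S_n/(S_d\times S_{n-d})$, I would proceed as follows.
\begin{enumerate}
\item Compute $P(X_i=1)=\dfrac{2d(n-d)}{n(n-1)}$, so $\mathbb{E}[\deg]=\dfrac{2d(n-d)}{n}$.
\item Expand
\[
\mathbb{E}[\deg^2]=\sum_i P(X_i)+2\sum_{i<j}P(X_iX_j),
\]
splitting the pairs into two types.
\item For the $n-2$ overlapping pairs $(i,i+1)$, the event requires the pattern $010$ or $101$. Its probability is
\[
\frac{d(n-d)(n-d-1)+d(d-1)(n-d)}{n(n-1)(n-2)}=\frac{d(n-d)}{n(n-1)}.
\]
\item For the $\binom{n-2}{2}$ disjoint pairs, exchangeability gives
\[
P=\frac{4d(d-1)(n-d)(n-d-1)}{n(n-1)(n-2)(n-3)}.
\]
\item Assemble $\mathbb{E}[\deg^2]+\mathbb{E}[\deg]-\mathbb{E}[\deg]^2$ and simplify. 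The expected simplification is that the factor $(n-2)(n-3)$ cancels against $\binom{n-2}{2}$. This should yield $\dfrac{(2d(n-d))^2+2d(n-d)(n^2-2n)}{n^2(n-1)}$.
\end{enumerate}

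Finally, substitute $d=n/2$ to get $\dfrac{3}{4}n-\dfrac{1}{4}-\dfrac{1}{4(n-1)}$. This check also serves to catch arithmetic slips.

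The main obstacle is only the algebraic simplification in the last assembly step, in particular collecting the $\mathbb{E}[\deg]^2$ term against the disjoint-pair sum so that only a $1/(n-1)$ denominator survives. I would verify the result numerically on small cases, for example $n=4$, $d=2$, where the graph is explicit, before trusting the closed form.
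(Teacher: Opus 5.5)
Your proposal is correct and is essentially the standard trace computation the paper has in mind; the paper itself only says ``Proofs are standard''. The reduction $\operatorname{Var}=\operatorname{Var}(\deg)+\mathbb{E}[\deg]$ via $\operatorname{Tr}L^2=\sum_y(\deg(y)^2+\deg(y))$ is right, your pair counts and probabilities are right, and carrying out step 5 with $a=d(n-d)$ the $a^2n$ terms cancel, leaving numerator $4a^2+2a(n^2-2n)$ over $n^2(n-1)$ — exactly the stated formula, which gives $8/3$ for $n=4$, $d=2$, matching a direct count.
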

In general we expect that variance always grows linearly with respect to $n$ in leading order in $n$. 

Proofs are standard.

Remark. For the case of $S_n$ (without quotients) distribution is symmetric that means skewness and higher analogs exactly vanishes for finite $n$,  not just asymptotically. 

\begin{Conj}
    [Spectral gap]\label{conj:gap}
The spectral gap (i.e. value of the first non-zero eigenvalue of the Laplacian) tends to zero  as $c/n^2$,
for some constant $c$.
\end{Conj}
For example for $S_n$ (without quotients)
the smallest eigenvalue is 
$\lambda_1 = 2 \left( 1 - \cos \left( \frac{\pi}{n} \right) \right)$  
because it comes from the standard representation of $S_n$, it corresponds to line graph,  where spectrum is easy to see: $\lambda_k = 2 \left( 1 - \cos \left( \frac{\pi k}{n} \right) \right), \quad k = 1, \dots, n-1$.

\begin{Conj}
    [Determinant. Spanning trees]\label{conj:det2}
    Logarithm of the number of spanning trees (which is related to determinant of XXX Hamiltonian by Kirchhoff's theorem) normalized by the graph size -- 
    behaves as $log(n)$ at leading order.
    This corresponds to central charge 1 from the CFT perspective, confirming the free boson as the dual model. 
    
    The subleading term (independent of $n$  -- "universal") corresponds to twice the area of the dual polygon $log(2\sum_{i<j}(k_i/n)(k_j/n)) $. Which should be thought as logarithm of "renormalized" number of spanning trees. 
    
    For $S_n/(S_d \times S_{n-d})$, it is expected: 
    
$\frac{1}{\binom{n}{d}-1} \log \tau_n = \log(n) + \log(2a(1-a)) - \frac{1 + 2a(1-a)}{4a(1-a)} \frac{1}{n} + O(n^{-2})$, where $a = d/n$

    For $S_n$ (no quotients), it is expected: 

    $\frac{1}{n!-1} \log \tau_n = \log(n-1) - \frac{1}{2(n-1)}
    - \frac{3}{4(n-1)^2} - \frac{15}{6(n-1)^3} -  ... \frac{(2m-1)!!}{2m (n-1)^{m} } - ...
    +   O(exp{(-n}))$.
    The constant terms disappears
    since $0=log(1) = log(2\times 1/2)$,
    here $1/2$ is triangle area,
    taken twice gives 1, logarithm gives zero.
\end{Conj}
The arguments are quite simple.
Assuming Gaussian distribution 
$log det$ is approximately $log(Mean)$,
the next orders of the approximation
are given by the standard formula
$\frac{1}{N} \log \det' H_n \approx \log \mu - \frac{\sigma^2}{2\mu^2} - \frac{3\sigma^4}{4\mu^4} - ...
 \frac{(2m-1)!! \sigma^{2m}}{2m \mu^{2m} } \dots$,
which simply comes just from the $log(m + delta)=
log(m(1+delta/m))$ = $log(m) + delta/m-delta^2/2m^2...$, then taking mean (over $delta$ which is $N(0,1)$) - odd degrees
disappear, even degrees give $(2m-1)!!$. 

To the best of our knowledge number of spanning trees for such graphs was not known before.
For example for permutohedron graph (i.e. $S_n$
without quotients) it is explicitly mentioned
\cite{ehrenborg2021number}: 
"The associated graph is known as
the 1-skeleton of the permutahedron. An open problem is to determine the number of spanning trees for this graph."
For small values of $n$ 
for permutohedron graph numbers are given at 
\href{https://oeis.org/A337083}{A337083}.

\begin{Conj}
    [Limiting 1d bosonic model ("string side")]\label{conj:boson}
    In the limit above the quantum theory on the graph (XXX chain with corresponding spins set) tends to
    CFT described by 1d bosonic free model
    compactified on $R$ ,
    where 
    $R^2 = \sum_{i<j} k_i k_j $ -- 
    the area of the dual polygon.
    However the worldsheet
    has an unusual shape - exactly the dual polygon.

    For the case $S_{n}/(S_d \times S_{n-d})$ the worldsheet is the standard rectangle, but for the example of 
    $S_n$ itself (no quotients) it is a triangle. 
\end{Conj}

\begin{Conj}
    [Conformal dimensions of primary fields]\label{conj:cftdims}
    In the setup above
    the spectrum of conformal dimensions
    is equal to the spectrum of the Laplacian on the dual polygon.
\end{Conj}
Motivation 1. 
It is exactly the relation in conventional AdS/CFT where 
spectrum of conformal dimensions on CFT side can be read of the Laplacian on AdS side, since SL(2) subgroup of CFT it mapped into isometries of AdS and thus conformal algebra Laplacian is mapped to AdS geometric Laplacian.

Motivation 2. Saleur's formula \cite{itzykson1998conformal}.
The Saleur (or Coulomb Gas) formula expresses the conformal dimensions as $\Delta_{n,m} = \frac{1}{4}(g n^2 + \frac{m^2}{g})$, which represents the energy spectrum of a free boson on a circle of radius $R \propto \sqrt{g}$. 
This formula matches  the formula
for the spectrum of the Laplacian 
on rectangle, indeed
the eigenvalues of the Laplacian on a rectangle with side lengths $L_x$ and $L_y$ are given by $\lambda_{n_x, n_y} = \pi^2 \left( \frac{n_x^2}{L_x^2} + \frac{n_y^2}{L_y^2} \right)$ for $n_x, n_y \in \mathbb{Z}^+$.

\textbf{Numeric simulations.}
Numeric simulations can be found in the \href{https://www.kaggle.com/code/alexandervc/cayleypy-coxeter-eigenvals}
{notebook}. Let us present few figures, more can be found there.

\begin{figure}[H]
    \centering
    \begin{minipage}{0.48\linewidth}
        \centering
        \includegraphics[width=\linewidth]{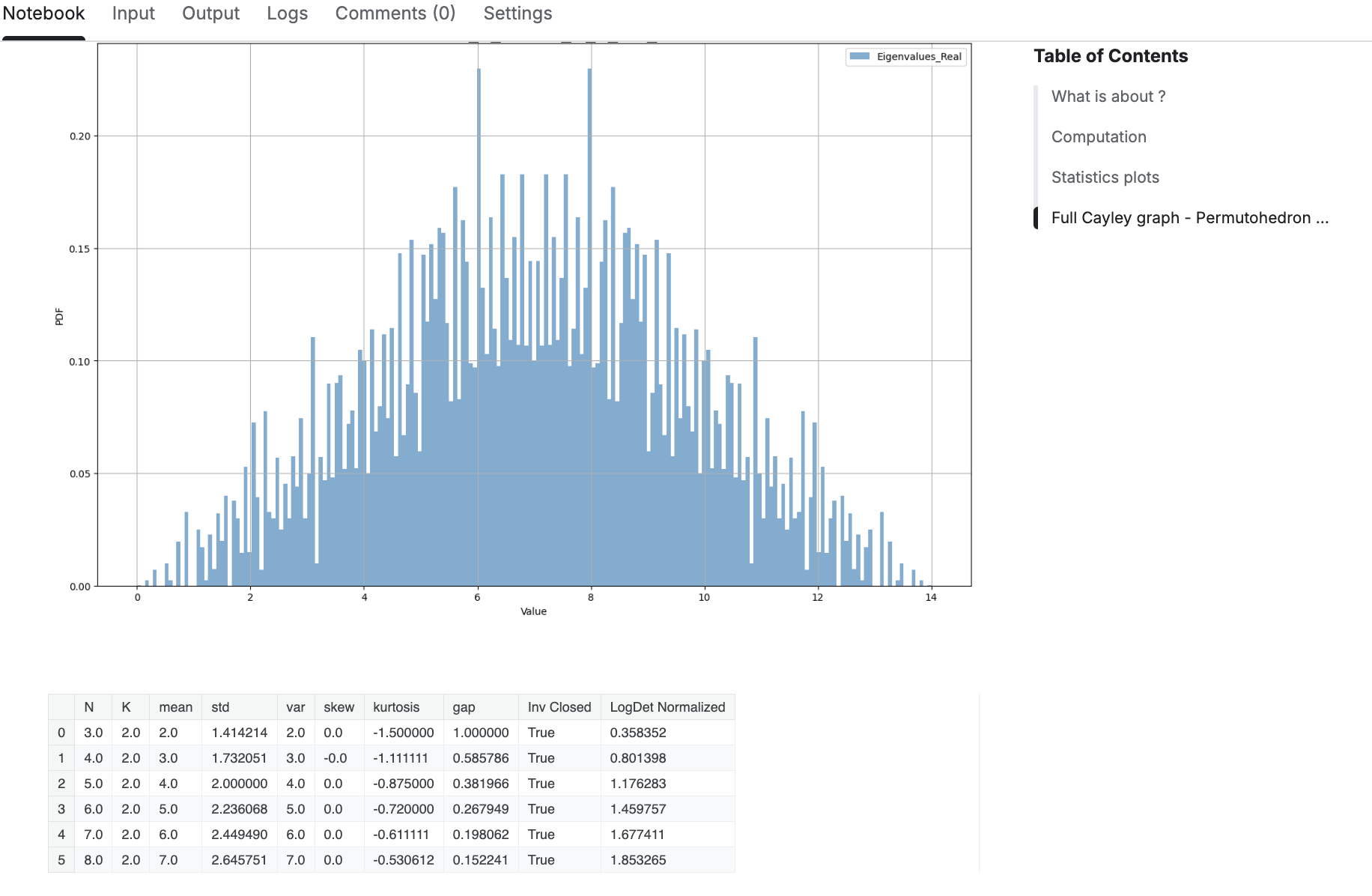}
        \caption{Eigenvalues histogram  Coxeter Cayley graph $S_8$ (Permutohedron graph). 
        Distributions tends to Gaussian.}
        \label{figs3:HistPermutohedron}
    \end{minipage}\hfill
    \begin{minipage}{0.48\linewidth}
        \centering
        \includegraphics[width=\linewidth]{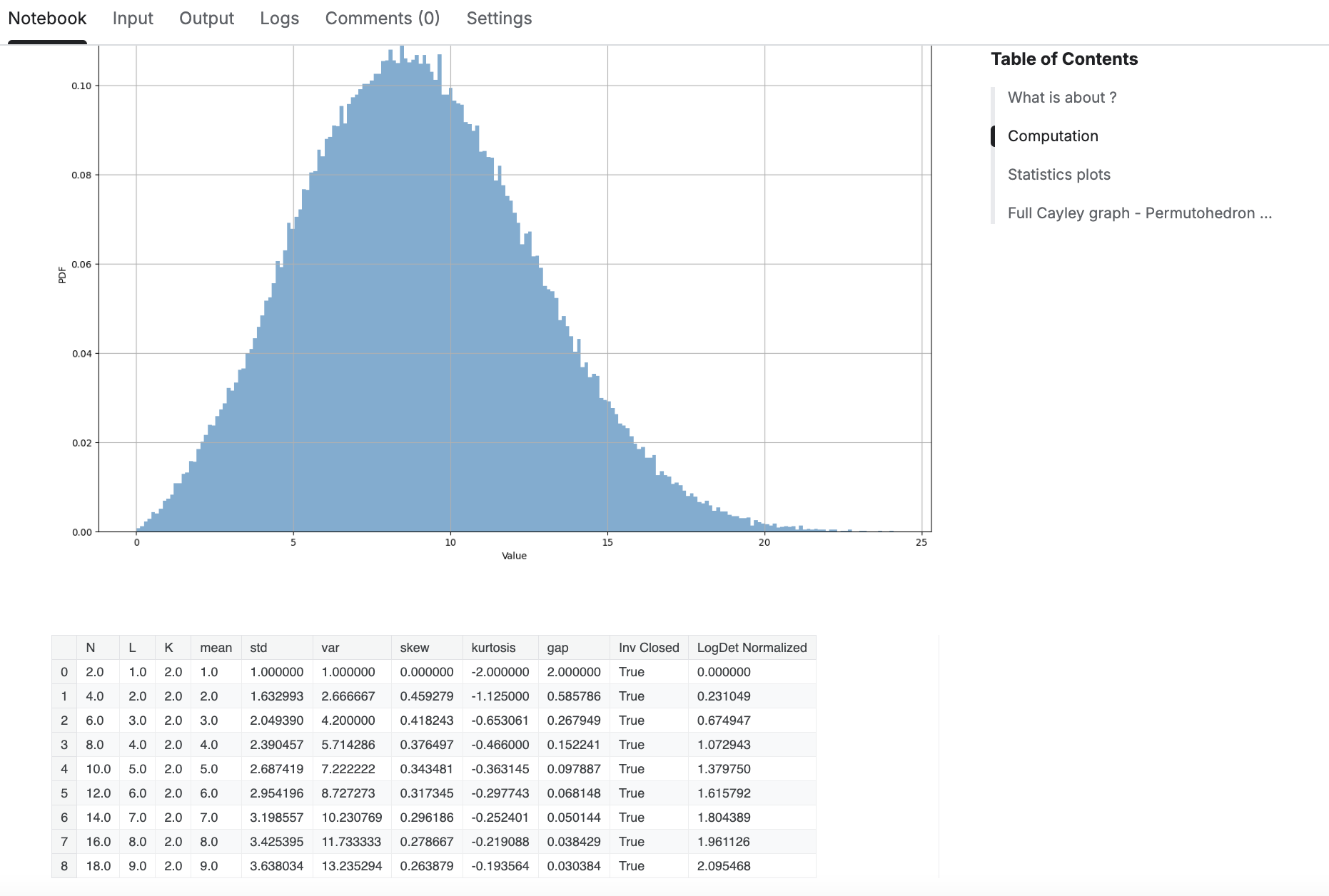}
        \caption{Eigenvalues histogram for Laplacian of Schreier graph $S_{18}/(S_9 \times S_9)$.
        Distributions tends to Gaussian.}
        \label{figs3:histEigsCoset}
    \end{minipage}
\end{figure}

\begin{figure}[H]
    \centering
    \begin{minipage}{0.48\linewidth}
        \centering
        \includegraphics[width=\linewidth]{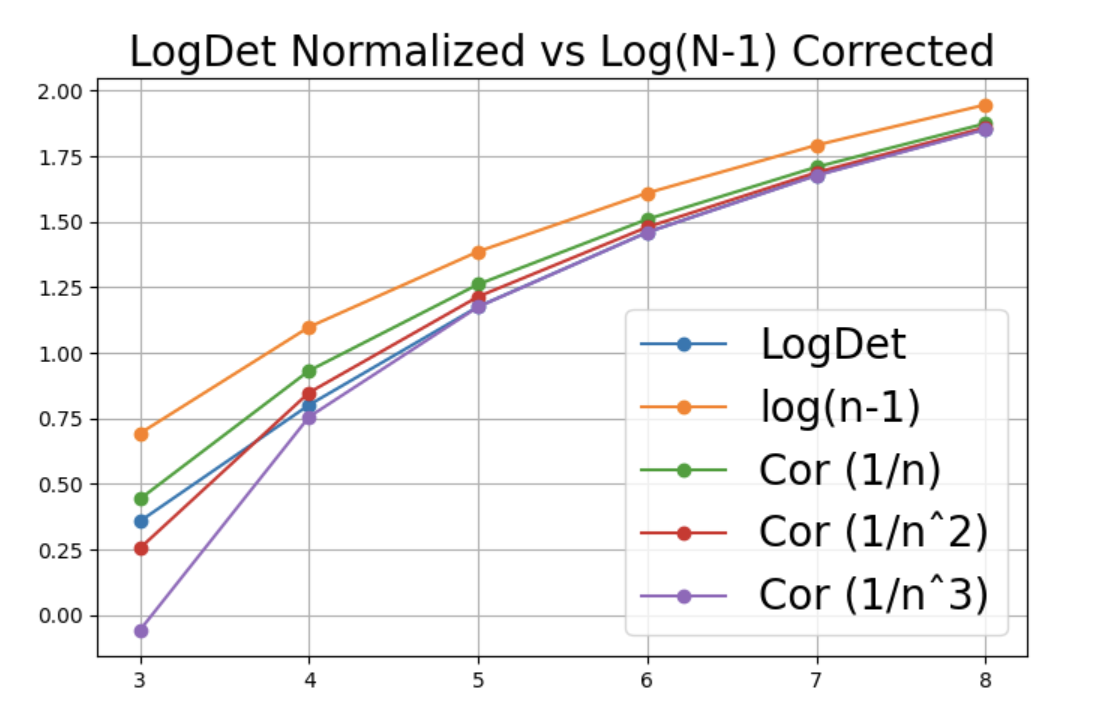}
        \caption{Logarithm of spanning trees number divided by graph size. 
        Coxeter Cayley graphs $S_n$ (Permutohedron graph). X-axis is $n$. }
        \label{fig3s:LogDetPermutohedron}
    \end{minipage}\hfill
    \begin{minipage}{0.48\linewidth}
        \centering
        \includegraphics[width=\linewidth]{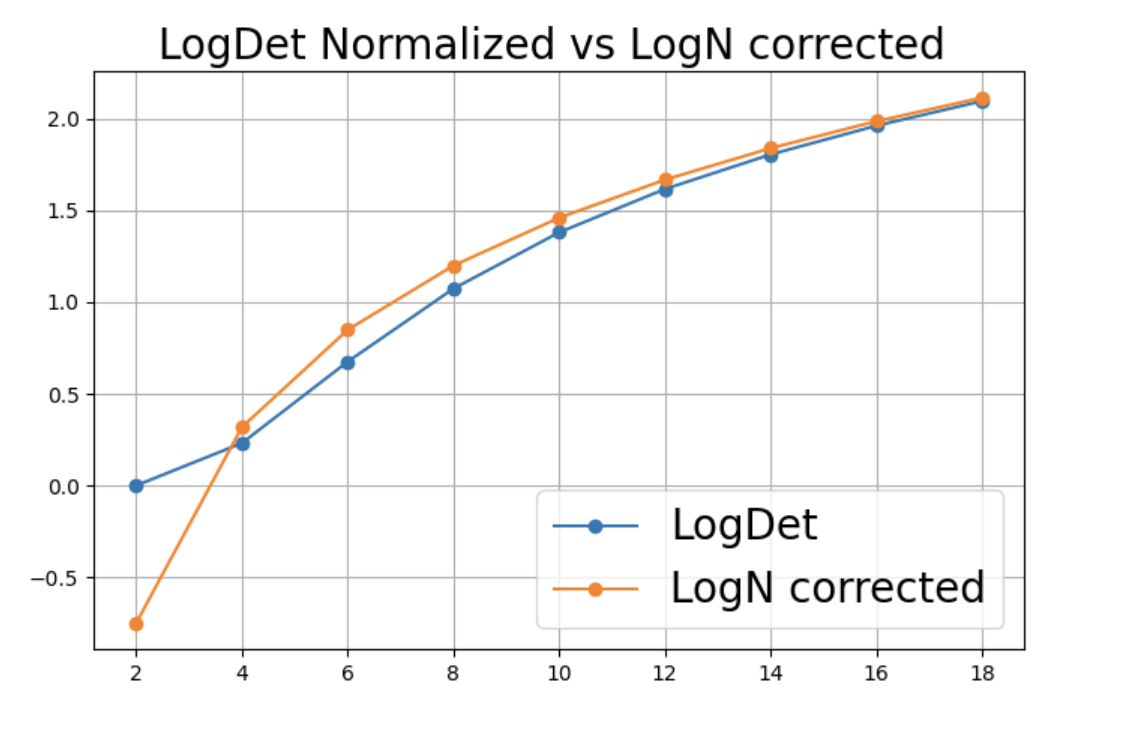}
        \caption{Logarithm of spanning trees number divided by graph size. 
        Schreier graph $S_{n}/(S_{n/2} \times S_{n/2})$.
         X-axis is $n$.}
        \label{figs3:LogDetCoset}
    \end{minipage}
\end{figure}

\begin{figure}[H]
    \centering
    \begin{minipage}{0.48\linewidth}
        \centering
        \includegraphics[width=\linewidth]{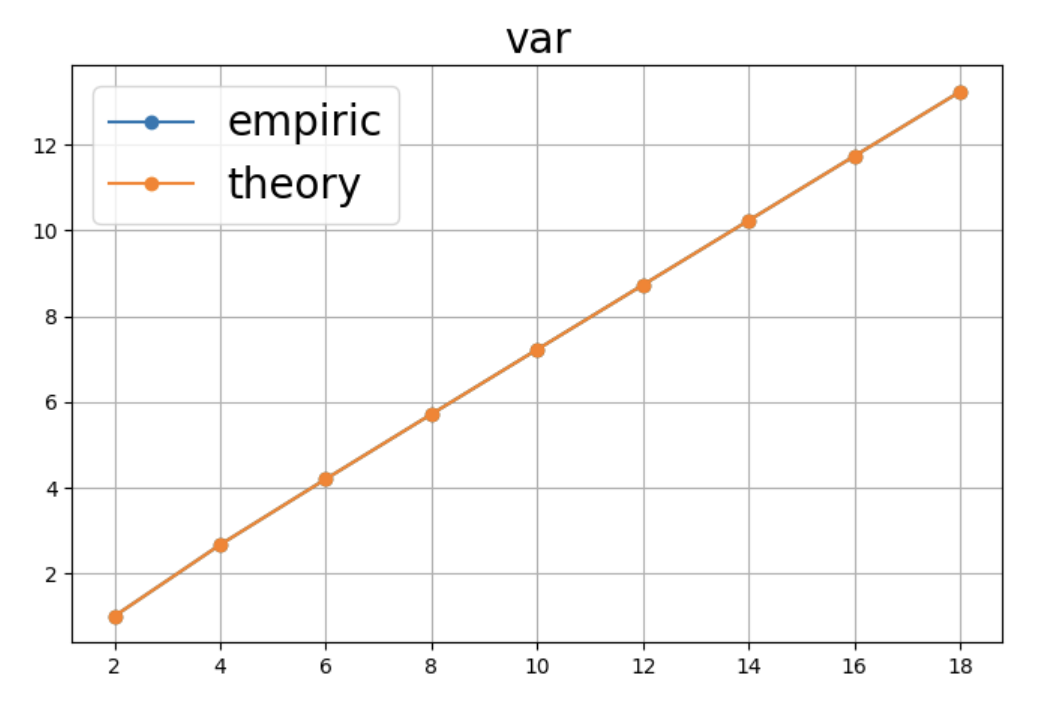}
        \caption{Schreier coset graphs -  Coxeter generators  $S_n/(S_{n/2}\times S_{n/2})$.
        Variance - exact match with proposed theoretical formula.}
        \label{fig3:VarianceCoset}
    \end{minipage}\hfill
    \begin{minipage}{0.48\linewidth}
        \centering
        \includegraphics[width=\linewidth]{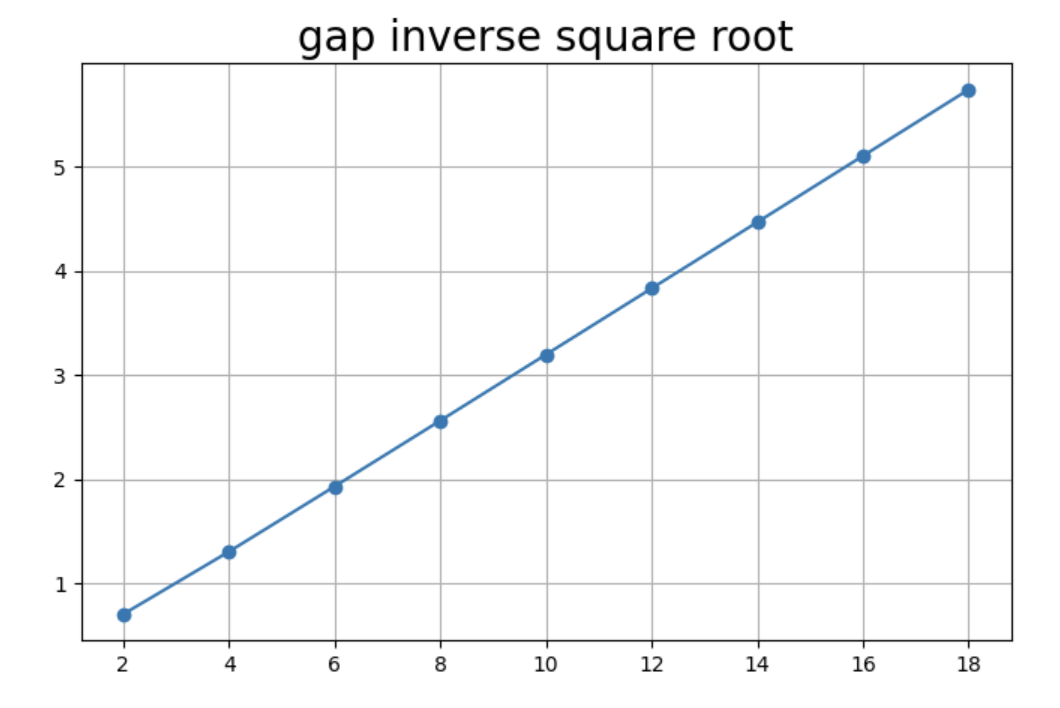}
        \caption{Schreier coset graphs -  Coxeter generators  $S_n/(S_{n/2}\times S_{n/2})$.
        Spectral gap inverse square root - behaves like a line.
        So gap is $c/\sqrt(n)$
         }
        \label{figs3:GapCoset}
    \end{minipage}
\end{figure}


\clearpage
\section{Neighbor transpositions extended by
  \texorpdfstring{$(0,n-1)$}{(0,n-1)}
  (\texorpdfstring{``wrapped''}{wrapped} or \texorpdfstring{``affine''}{affine} case)}

\textbf{Background and related work.}
A small modification of the nearest-neighbor transposition generators—namely, adding the transposition $(0,n-1)$—leads to a more nontrivial Cayley graph, which we refer to as the \emph{cyclic}, \emph{wrapped}, or \emph{affine} case.
In \texttt{CayleyPy}, these generators are denoted
\href{https://cayleypy.github.io/cayleypy-docs/generated/cayleypy.PermutationGroups.html#cayleypy.PermutationGroups.cyclic_coxeter}{\texttt{Cyclic Coxeter}}
and form a special case of
\href{https://cayleypy.github.io/cayleypy-docs/generated/cayleypy.PermutationGroups.html#cayleypy.PermutationGroups.wrapped_k_cycles}{\texttt{wrapped\_k\_cycles}} with $k=2$.
The simplest quotient $S_n/S_{n-1}$ yields a cycle graph, corresponding to the affine Dynkin diagram of type $\widetilde{A}_{n-1}$, and may thus be viewed as an affine analogue of the nearest-neighbor transposition graph.
The Laplacian of this graph corresponds to a \emph{periodic} or \emph{closed} spin chain, which is well known to be related to affine Lie algebras.

The diameter of the graph was finally established to be $\lfloor n^{2}/4 \rfloor$ in \cite{vanZuylen2016cyclic}, building on earlier works cited therein; see also an alternative argument by Ville Salo~\href{https://mathoverflow.net/a/359699/10446}{(MO359699)}. A related recent result~\cite{alon2025circular} establishes the circular version, yielding $\lfloor (n-1)^{2}/4 \rfloor$. The graph was first studied in depth in the foundational work~\cite{jerrum1985complexity}.

Let us note that the corresponding \href{https://en.wikipedia.org/wiki/Ehrhart_polynomial#Ehrhart_series}{$H$-polynomial} is given by
\[
\frac{2x^{4} + 4x^{3} + 2x^{2}}{(1 - x^{2})^{3}},
\]
which is symmetric, has positive coefficients, and whose roots all have modulus~$1$ ("Riemann conjecture holds").

\subsection{Duality for the Wrapped Case}
\label{sec:wrapped_duality} 

In this section we consider a modification of the binary Cayley/Schreier graph
studied above, obtained by adding an extra generator that swaps the first and the
last positions. This produces a natural \emph{wrapped} (or cyclic) version of the
model and leads to a new geometric interpretation of shortest-path distances in
terms of areas between lattice curves minimized over cyclic shifts.

\subsubsection{The wrapped binary graph}

As before, the set of nodes is defined by vectors with $0$ and $1$ components, such that there are $l$-zeros:
\[
Gr^{ \mathrm{aff} }_{l,n}=\{x\in\{0,1\}^n:\ \sum_{i=0}^{n-1} x_i=n-l\},
\]

We will distinguish the sorted vector denoting it by $e$:
\[
e:=0^{\,l}1^{\,n-l}.
\]

Two vertices $x,y\in Gr^{ \mathrm{aff} }_{l,n}$ are connected by an edge if either
\begin{itemize}
\item $y$ is obtained from $x$ by an adjacent swap $01\leftrightarrow 10$, or
\item $y$ is obtained from $x$ by applying the transposition
\[
t=(0,n-1),
\]
which exchanges the first and the last entries:
\[
(t\cdot x)_0=x_{n-1},\quad (t\cdot x)_{n-1}=x_{0} \quad (t\cdot x)_i=x_i
\ \ (1\le i\le n-2).
\]
\end{itemize}

We denote by $d_{\mathrm{wrap}}(x,y)$ the shortest-path distance in
$Gr^{ \mathrm{aff} }_{l,n}$.

The additional generator $(0,n-1)$  
allows symbols to move across the boundary, leading to a metric behavior that is
more complicated than  the non-wrapped case.

\subsubsection{Lattice-path encoding and cyclic shifts}

Each word $x\in Gr^{ \mathrm{aff} }_{l,n}$ canonically determines a monotone lattice path
\[
P(x):(0,0)\longrightarrow (l,n-l),
\]
obtained by reading $x$ from left to right and applying the rule
\[
0\mapsto (1,0),\qquad 1\mapsto (0,1).
\]

Let $\Area(x)$ denote the number of unit lattice squares below $P(x)$ inside the
rectangle $[0,l]\times[0,n-l]$. As shown earlier,
\[
\Area(x)=\Inv(x),
\]
where $\Inv(x)$ is the inversion number of the binary word $x$.

In the wrapped setting, cyclic shifts arise naturally. For
$r\in\{0,1,\dots,n-1\}$ define the cyclic shift
\[
(\mathrm{sh}_r(x))_i := x_{\,i+r \!\!\!\pmod n}.
\]
Each shift determines a path $P(\mathrm{sh}_r(x))$ and an associated area
\[
A_r(x):=\Area(\mathrm{sh}_r(x)).
\]

\subsubsection{Wrapped duality hypothesis}

The main conjectural statement for the wrapped case is the following.

\begin{hyp}[Wrapped duality]
For every $x\in Gr^{ \mathrm{aff} }_{l,n}$, the distance from the sorted vertex $e$ in the wrapped
graph satisfies
\[
d_{\mathrm{wrap}}(e,x)\;=\;\min_{r\in\{0,\dots,n-1\}} A_r(x).
\]
\end{hyp}

Equivalently, if we define the prefix-sum function
\[
f_x(i):=\sum_{j=0}^{i} x_j,\qquad i=0,1,\dots,n-1,
\]
and let
\[
f_{\min}(i):=\min_{r} f_{\mathrm{sh}_r(x)}(i),
\]
then the conjecture asserts that
\[
d_{\mathrm{wrap}}(e,x)
\;=\;
\sum_{i=0}^{n-1}\bigl(f_x(i)-f_{\min}(i)\bigr),
\]
i.e.\ the distance equals the area between the step function associated with $x$
and the lower envelope of all its cyclic shifts.

We illustrate the wrapped construction and highlight a subtlety of the conjecture.

\begin{example}

Let $n=6$, $k=3$, and
\[
e=000111.
\]
Consider the word
\[
x=101010\in X_{6,3}.
\]

\textbf{Non-wrapped area.}
The inversion number of $x$ is
\[
\Inv(x)=6,
\]
hence $\Area(x)=6$.

\textbf{Cyclic shifts.}
The cyclic shift $\mathrm{sh}_1(x)=010101$ has inversion number
\[
\Inv(010101)=3.
\]
Since further shifts repeat these patterns, we obtain
\[
\min_{r} A_r(x)=3.
\]

\textbf{Wrapped distance.}
Using the additional generator $(0,5)$, we have
\[
000111 \xrightarrow{\, (0,5)\,} 100110
\xrightarrow{\, (2,3)\,} 101010=x.
\]
Thus
\[
d_{\mathrm{wrap}}(e,x)\le 2.
\]

So, we have
\[
d_{\mathrm{wrap}}(e,x)=2
\qquad\text{while}\qquad
\min_{r} A_r(x)=3.
\]
\end{example}

This shows that the naive wrapped-duality formula does not hold verbatim with the
linear notion of area. Instead, the wrapped generator introduces a boundary effect
that allows shortcuts not captured by minimizing the usual area over cyclic
shifts.

From the duality perspective, adding the generator $(0,n-1)$ transforms the linear
correspondence
\[
\text{vertex}\;\longleftrightarrow\;\text{lattice path}
\;\longleftrightarrow\;\text{area}
\]
into a wrapped version in which a vertex corresponds to a \emph{family} of cyclic
paths and the graph distance reflects a modified area functional that incorporates
boundary moves. Determining the correct geometric invariant for the wrapped case
remains an interesting open problem.

\subsection{\texorpdfstring{Diameters for the Schreier coset graphs ``few-coincide'': $S_n/S_{d}$}{Diameters for the Schreier coset graphs "few-coincide": S_n/S_d}}

The central state and the initial state are given as \[(0, 1, 2, \dots, n-d-1, n-d, n-d, \dots, n-d)\] and ?? respectively, where $(n-d)$ appears $d$ times.

Here we present some partial results on diameters of the Schreier coset graphs of the form $S_n/S_D$,
which can be equivalently described as graphs with nodes corresponding to vectors where $D$ elements coincide.
In CayleyPy we define them by setting "central\_state" to be 
 $(0,1,2,...n-D-1,n-D, n-D, ... , n-D)$ ($D$ coincide at the end). 
The generators are the  same - cyclic Coxeter (equivalently "wrapped 2 cycles").

\begin{Conj}
For $D \ge 2$, the conjecture was found to be:
\[
D(n,d)=\frac{d-1}{2}\,n-\delta(n,d),
\]
where the offset term $\delta(n,d)$ is given by
\[
\delta(n,d)=
\begin{cases}
\dfrac{d(d-2)}{4}, & \text{if $d$ is even and $n$ is even},\\[8pt]
\dfrac{d(d-2)}{4}+\dfrac12, & \text{if $d$ is even and $n$ is odd},\\[8pt]
\left(\dfrac{d-1}{2}\right)^2, & \text{if $d$ is odd}.
\end{cases}
\]
\end{Conj}

\begin{longtable}{lcccccccc}
\caption{Experimental data ($d$ different, for $2$-cycles)}\\

\toprule
\textbf{n} & \textbf{d=2} & \textbf{d=3} & \textbf{d=4} & \textbf{d=5} & \textbf{d=6} & \textbf{d=7} & \textbf{d=8} & \textbf{d=9}\\
\midrule
\endfirsthead

\toprule
\textbf{n} & \textbf{d=2} & \textbf{d=3} & \textbf{d=4} & \textbf{d=5} & \textbf{d=6} & \textbf{d=7} & \textbf{d=8} & \textbf{d=9}\\
\midrule
\endhead

\midrule
\multicolumn{9}{r}{\emph{Continued on next page}} \\
\endfoot

\bottomrule
\endlastfoot

2  &      &      &      &      &      &      &      &      \\
3  & 1    & 2    &      &      &      &      &      &      \\
4  & 2    & 3    & 4    &      &      &      &      &      \\
5  & 2    & 4    & 5    & 6    &      &      &      &      \\
6  & 3    & 5    & 7    & 8    & 9    &      &      &      \\
7  & 3    & 6    & 8    & 10   & 11   & 12   &      &      \\
8  & 4    & 7    & 10   & 12   & 14   & 15   & 16   &      \\
9  & 4    & 8    & 11   & 14   & 16   & 18   & 19   & 20   \\
10 & 5    & 9    & 13   & 16   & 19   & 21   & 23   & 24   \\
11 & 5    & 10   & 14   & 18   & 21   & 24   & 26   & 28   \\
12 & 6    & 11   & 16   & 20   & 24   & 27   & 30   & 32   \\
13 & 6    & 12   & 17   & 22   & 26   & 30   & 33   & 36   \\
14 & 7    & 13   & 19   & 24   & 29   & 33   & 37   & 40   \\
15 & 7    & 14   & 20   & 26   & 31   & 36   & 40   & 44   \\
16 & 8    & 15   & 22   & 28   & 34   & 39   & 44   & 48   \\
17 & 8    & 16   & 23   & 30   & 36   & 42   & 47   & 52   \\
18 & 9    & 17   & 25   & 32   & 39   & 45   & 51   &      \\
19 & 9    & 18   & 26   & 34   & 41   & 48   & 54   &      \\
20 & 10   & 19   & 28   & 36   & 44   & 51   & 58   &      \\
21 & 10   & 20   & 29   & 38   & 46   & 54   & 61   &      \\
22 & 11   & 21   & 31   & 40   & 49   & 57   & 65   &      \\
23 & 11   & 22   & 32   & 42   & 51   & 60   & 68   &      \\
24 & 12   & 23   & 34   & 44   & 54   & 63   & 72   &      \\
25 & 12   & 24   & 35   & 46   & 56   & 66   &      &      \\
26 & 13   & 25   & 37   & 48   & 59   & 69   &      &      \\
27 & 13   & 26   & 38   & 50   & 61   & 72   &      &      \\
28 & 14   & 27   & 40   & 52   & 64   & 75   &      &      \\
29 & 14   & 28   & 41   & 54   & 66   & 78   &      &      \\
30 & 15   & 29   & 43   & 56   & 69   & 81   &      &      \\
\end{longtable}

\section{Bethe ansatz and graph spectrum}

Consider the Schreier graph for $S_n$ with generators adjacent (wrapped) transpositions, i.e. $(12),(23),\dots,(n-1,n),(n1)$, and central state $1\dots10\dots0$ with $k$ ones and $n-k$ zeros. Its vertices are labelled by all possible permutations of this central state. We will assume its adjacency matrix $A$ is defined such that the sum of all entries in each row is the same\footnote{this means we add an appropriate number of self-loops at each vertex depending on how many generators leave invariant the corresponding permutation} and equal to the number of generators, i.e. $n$. Now let us view each string of 0's and 1's as labelling a basis of a subspace with $k$ flipped spins (corresponding to 1's) in a $2^n$-dimensional space. Then we find that the matrix $A$ exactly matches the XXX spin chain Hamiltonian restricted to this subspace, in the form
\beq
    H=\sum_{i=1}^nP_{i,i+1}
\eeq
where the permutation operators switch the entries of our strings. This means that the spectrum of $A$ can be found via Bethe ansatz.

In general the Bethe equations read
\beq
    \left(\frac{u_a+i/2}{u_a-i/2}\right)^L   =\prod_{b\neq a}^M\frac{u_a-u_b+i}{u_a-u_b-i} \ , \ \ \  \ a=1,\dots,M
\eeq
where we identify the length $L$ with our $n$, while $M$ is the number of magnons. We need eigenvalues corresponding to all states with $k$ flipped spins, so we need to consider $M=k$ but also all lower values $M=0,1,\dots,k-1$ as they correspond to states whose \textit{descendants} will have $k$ flipped spins. Adapting the standard Bethe ansatz formula to the normalisation of our Hamiltonian, the eigenvalues are found from
\beq
    E=n-\sum_{j=1}^M\frac{1}{u_j^2+1/4}
\eeq
For instance, a descendant of the vacuum state will have no Bethe roots and give the eigenvalue equal to $n$ which indeed we always observe for the graph.

As an example, we checked this explicitly for the case $n=5,k=2$, i.e. central state 11000. We find that the Bethe ansatz reproduces all the distinct eigenvalues of $A$ (with more care one should be able to match the eigenvalue multiplicities as well). Concretely, here we need to consider: $M=0$ (one eigenvalue), $M=1$ (four solutions, two different eigenvalues) and $M=2$ (many solutions, three different eigenvalues once we restrict as usual to no repeated roots etc).

\subsection{Spectral gap}

With Bethe ansatz we can compute the gap between the largest (equal to $n$) and next-to-largest eigenvalues of the adjacency matrix. We observe (at least for the central state with two 1's and $n-2$ zeroes) it corresponds to a state with 1 magnon whose Bethe root solves
\beq
    \frac{u+i/2}{u-i/2}=e^{2\pi i/n} \ .
\eeq
Computing the energy for it we find the gap (for large enough $n$, unclear starting from which $n$)
\beq
    2\cos(2\pi/n)-2
\eeq
which at large $n$ gives $\simeq -4\pi^2/n^2$. 

 This gives (for large enough $n$) the 2nd eigenvalue of the Laplacian as $\lambda_2=2-2\cos(2\pi/n)$ which at large $n$ gives an upper bound for the diameter $\sim n^4$ -- a huge overestimate. The bound comes from 
 \beq
     d\leq \frac{N-1}{\lambda_2}
    \eeq
where $N$ is the number of graph's vertices, so for us $N=n(n-1)/2$. 

\subsection{Relation to Laplacian spectrum}

Let us relate the above discussion to the spectrum of the graph Laplacian. The Laplacian is defined as the matrix $L=D-A_0$, where $D$ is the diagonal matrix with degrees of the vertices on the diagonal, and $A_0$ is the adjacency matrix with the only off-diagonal nonzero elements that are equal to $1$ if the corresponding vertices are connected. Note this adjacency matrix is defined differently from $A$ in the above Bethe ansatz discussion, as $A$  there also has diagonal elements equal to the number of self-loops at each vertex. However, nicely, the Laplacian $L$ differs from $A$ only by flipping the overall sign and adding a scalar matrix $nI$, namely $L=nI-A$. This is easy to see since for all $i$ we have $A_{ii}=n-D_{ii}$, because all edges at each vertex are either self-loops or contribute to $D_{ii}$. Here we assume that in $D$ we have the degrees of each vertex not counting any self-loops.  We conclude that the spectrum of the Laplacian is also captured by the Bethe ansatz and are given by 
\beq
    E_{Laplacian}=\sum_{j=1}^M\frac{1}{u_j^2+1/4}
\eeq

\subsection{Schreier graph with non-wrapped 2-cycles}

We can also consider the Schreier graph for $S_n$ with central state of $k$ 1's and $n-k$ zeros but taking as generators only the transpositions $(12),(23),\dots,(n-1,n)$, without the last one $(n1)$. In the spin chain language this corresponds to the open spin chain, with Hamiltonian
\beq
    H=\sum_{i=1}^{n-1}P_{i,i+1}
\eeq
The Bethe equations read
    \beq
    \left(\frac{u_a+i/2}{u_a-i/2}\right)^{2L}   =\prod_{b\neq a}^M\frac{u_a-u_b+i}{u_a-u_b-i}\frac{u_a+u_b+i}{u_a+u_b-i} \ , \ \ \  \ a=1,\dots,M
\eeq
where the length $L=n$. The energy (eigenvalue of the Hamiltonian) is given by 
\beq
    E=n-1-\sum_{j=1}^M\frac{1}{u_j^2+1/4}
\eeq
Experimentally (at least for length $n=4,5$) we find some selection rules for Bethe roots, namely we should not allow roots that are:
\begin{itemize}
    \item zero
    \item equal
    \item equal up to sign
\end{itemize}

For the case of only two 1's, this Schreier graph is in fact the same as the quarter Aztec diamond (i.e. roughly a collection of squares filling the region $1\leq x < y\leq n$) if we add to the latter two edges (and two vertices) at the corners. This is shown in figure \ref{fig:qazd}. 

\begin{figure}[H]
    \centering
    \includegraphics[width=0.6\linewidth]{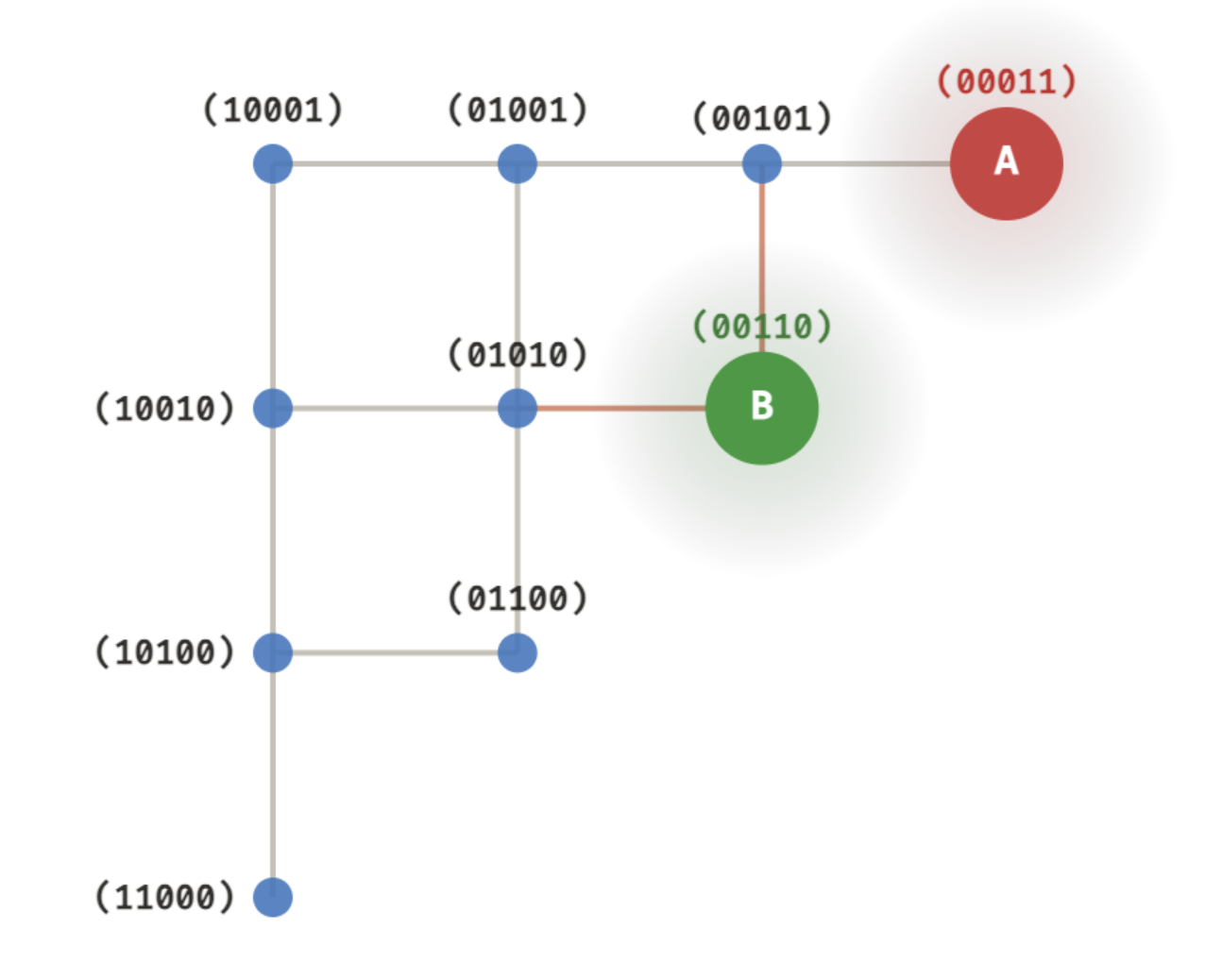}
    \caption{The Schreier graph as a quarter Aztec diamond for $n=5$. Note that $(x,y)$ coordinates are indices of 1s. E.g., for $A=(00011)$  $(x_A,y_A)=(4,5)$ or $B=(00110)$ $(x_B,y_B)=(3,4)$, etc.}
    \label{fig:qazd}
\end{figure}

Notice that adding the extra two edges does not affect the number of spanning trees. We have checked explicitly that taking the product of nonzero eigenvalues for the Schreier graph Laplacian\footnote{defined as for the closed case by appropriately subtracting a diagonal matrix to deal with self-loops at vertices} reproduces the number of spanning trees on the quarter Aztec diamond, sequence \url{https://oeis.org/A007726} in OEIS (what is called the 'order' of the Aztec diamond there is our $n-1$).

\subsection{Relation between Schreier and Johnson graphs}

In the Johnson graph $J(n,k)$ the vertices correspond to subsets of $(1,2,\dots,n)$ consisting of $k$ distinct elements. Two vertices are connected by an edge iff they differ by exactly 1 element.

The Johnson graph $J(n,k)$ is in fact precisely the Schreier graph for $S_n$ with central state being a string of $k$ 1's and $n-k$ 0's, and generators being \textit{all} transpositions. To see this, we can label each string of 0's and 1's by the positions of 1's in it -- this will give precisely a subset of $k$ elements out of $n$ numbers like in $J(n,k)$. It's clear that the rule for placing edges is also the same for both graphs\footnote{since if one string maps to another by a transposition this means we have moved exactly one of the 1's to a new location}.

Let us also note that the Schreier graph for the case of only \textit{adjacent} transpositions will then be a subgraph of $J(n,k)$ (i.e. it will have the same vertices but fewer edges).

\clearpage
\section{Consecutive-(k)-cycles. Quasi-polynomiality, etc.}
\subsection{Section outline}
\textbf{Quasi-polynomiality.}
In this section, we present several quasi-polynomial formulas for the diameters and word metrics associated with the consecutive $k$-cycle generators defined below. We also compute the corresponding $H$-polynomials and study their properties.
Both Cayley graphs and Schreier coset graphs are considered. We also provide some other results like theoretical
lower and upper bounds on diameters which are in agreement with our experimental studies.

As discussed above, the conjectural quasi-polynomiality of diameters and word metrics is one of the main pieces of evidence supporting the \emph{holography} hypothesis. Indeed, if holography holds, then diameters and word metrics arise as
\href{https://en.wikipedia.org/wiki/Ehrhart_polynomial#Ehrhart_quasi-polynomials}{Ehrhart quasi-polynomials},
which naturally explains their quasi-polynomial behavior.
Consequently, working out explicit quasi-polynomial formulas is essential for a deeper understanding of the holography principle.

{\bf Generators definition. Related works.}
Fix an integer $k$. For $n > k$, we consider elements of $S_n$ given by the cyclic permutations 
$(i, i+1, \dots, i+k-1)$ for $i = 0, \dots, n-k$. For $k = 2$, these are the neighbor transpositions 
(Coxeter or bubble sort generators) considered previously. 
In CayleyPy these generators are denoted:
\href{https://cayleypy.github.io/cayleypy-docs/generated/cayleypy.PermutationGroups.html#cayleypy.PermutationGroups.consecutive_k_cycles}{consecutive\_k\_cycles}.
For odd $k$, they generate $A_n$ inside $S_n$, for even $k$ they give $S_n$. There are two natural options: whether or not to include inverses in the generating set. We consider both cases. To the best of our knowledge, these generators were not studied systematically in the literature, e.g. formulas for the diameters are new for $k > 3$ ($k=4$ briefly discussed in our previous work: \cite{Cayley3Growth}). From the physical point of view, the Laplacian of such graphs corresponds to the  situation
when $k$ neighboring spins interact, spin chains of that sort (but not exactly) appear e.g. \cite{KazakovMarshakovMinahanZarembo2004} in AdS/CFT. Let us note that cyclic permutations are used to characterize triples of prefix--reversals generating the whole group $S_n$, see~\cite{blanco2025}, and such triples are also discussed in the previous work. 

\subsection{\texorpdfstring{$(k-1)$-Shrinkage heuristics: Results and Difficulties}{(k-1)-Shrinkage heuristics: Results and Difficulties}}

Before going into details let us first give some informal heuristic principle 
summarizing the results: to obtain results on diameters,
word-metrics and other characteristics one should take results for $k=2$ case 
(which is standard neighbor transposition or Coxeter generators) and just divide them by $k-1$. 
In all considered cases it gives correct leading terms and moreover in some rare cases simple correction
like adding ceil-rounding would suffice to get the correct results. 
However exact results typically are more complicated. 
The only case where we were able to conjecture the result for {\it all } $k$
is the case of the coset Schreier graph with $n//2$ zeros and $n-n//2$ ones
with not inverse closed generators. For the other cases we present conjectures for $k$ up to $5$ to $8$
depending on cases, and only leading terms for all $k$.

Informal motivation of that heuristics is rather simple:  the $k$-cycle 
$(i, i+1, \dots, i+k-1)$ is equal to product of $(k-1)$ neighbor transpositions $(i,i+1)(i+1,i+2)...$,
thus the nodes which are on distance 1 in $k$-cycle case are on distance $k-1$ in standard neighbor
transposition graph - so is some sense $k$-cycle graph shrinks standard graph $k-1$ times. 



\subsection{Theoretical lower and upper bounds on the diameters. Cayley graphs. }
Here we prove lower and upper bounds for the diameters of the Cayley graph (not Schreier coset)
of the consecutive $k$ cycles inverse closed generators. 
They quite correspond to informal $(k-1)$-shrinkage principle above: for the standard neighbor transposition graph
the diameter is equal to $n(n-1)/2$ so we expect that for consecutive $k$-cycles we get 
$n(n-1)/(2(k-1))$. Indeed we can prove bounds with such leading term.

\begin{thm}
    The diameter satisfies the inequality $$D_k(n)\geq \Big\lceil\frac{n(n-1)-2}{2(k-1)}\Big\rceil.$$
\end{thm}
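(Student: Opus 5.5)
The plan is an inversion-counting (potential function) argument. For each generator we bound how much it can change the number of inversions, then apply this to a group element with nearly the maximal number $n(n-1)/2$ of inversions. This is the same mechanism as Proposition~\ref{inv_prop} and Corollary~\ref{ineq_cor} for the Coxeter case, rescaled by the factor $(k-1)$ as in the shrinkage heuristic.

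\emph{Step 1: a generator changes $\Inv$ by at most $k-1$.} Consider the action of a consecutive $k$-cycle $c_i=(i,i+1,\dots,i+k-1)$, or its inverse, on a permutation written as a word. It cyclically rotates the block of entries in positions $i,\dots,i+k-1$ by one step. Equivalently, one entry jumps from one end of the block to the other, passing over the remaining $k-1$ entries of the block.
\begin{itemize}
\item The relative order of every other pair of entries is preserved: pairs inside the block not involving the moving entry, pairs with one entry outside the block, and pairs entirely outside the block.
\item Exactly the $k-1$ pairs formed by the moving entry with the other block entries change their relative order.
\end{itemize}
Hence $|\Inv(c_i^{\pm1}\sigma)-\Inv(\sigma)|\le k-1$, and as in Proposition~\ref{inv_prop} this holds for every generator in the inverse-closed set.

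\emph{Step 2: the lower bound for a single element.} By Step 1 and the triangle inequality, exactly as in Corollary~\ref{ineq_cor}, every $\sigma$ in the generated group satisfies
\[
d(e,\sigma)\ge \frac{\Inv(\sigma)}{k-1}.
\]
Since distances are integers, this gives $d(e,\sigma)\ge\lceil \Inv(\sigma)/(k-1)\rceil$.

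\emph{Step 3: choose $\sigma$ inside the group.} This is the step needing care. For even $k$ the generators produce all of $S_n$, so we may take the longest element $w_0$ (the reversal) with $\Inv(w_0)=n(n-1)/2$. For odd $k$ the group is only $A_n$, and $w_0$ may be odd. In that case we take $\sigma=w_0 s_1$, composing with one adjacent transposition. Then $\sigma$ is even, so $\sigma\in A_n$, and it has exactly $n(n-1)/2-1$ inversions. In all cases the group contains an element with at least $\frac{n(n-1)-2}{2}$ inversions. Plugging this into Step 2 gives
\[
D_k(n)\ge\Big\lceil\frac{n(n-1)-2}{2(k-1)}\Big\rceil,
\]
which is the claimed bound. (For even $k$ the argument in fact gives the slightly stronger bound with $n(n-1)$ in place of $n(n-1)-2$.)

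The main obstacle is only this membership issue for odd $k$, which the parity adjustment in Step 3 resolves. Step 1 is a direct local check.
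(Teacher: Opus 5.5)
Your proof is correct and follows essentially the same route as the paper: the paper uses subadditivity of the Coxeter length $\ell$ together with $\ell(\tau)=k-1$ for each consecutive $k$-cycle, which is equivalent to your local check that a generator changes the inversion count by at most $k-1$, and then applies this to the reversal $w_0$, or to $w_0 s_i$ when there is a parity obstruction. Your parity bookkeeping in Step 3 is the correct one: the obstruction arises for odd $k$, where the group is $A_n$ and $w_0$ is odd exactly when $n\equiv 2,3 \pmod 4$. The paper's written case split instead attributes it to even $k$, but since the bound with $n(n-1)-2$ covers both cases, the final inequality is unaffected.
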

\begin{proof}
    Let $\ell(\cdot)$ denote the Coxeter length on $S_n$ with respect to adjacent transpositions of the form $(i\ i+1)$, i.e.
    $$
    \ell(\sigma) = \#\{(i,j) : 1\le i<j\le n,\ \sigma(i) > \sigma(j)\}.
    $$
    For each consecutive $k$-cycle $\tau \in S_n$, we have $\ell(\tau)=k-1$. On the other hand, $\ell$ is subadditive, i.e. $\ell(\sigma_1 \sigma_2) \le \ell(\sigma_1) + \ell(\sigma_2)$. Hence, if $\sigma$ is the product of $m$ consecutive $k$-cycles, we have $\ell(\sigma) \le m(k-1)$ which implies that $$d(\sigma)=m \ge \frac{\ell(\sigma)}{k-1}.$$ 
    
    We consider the inverse permutation (reversal permutation?) defined by $\sigma_0(i)=n+1-i$, which has $\ell(\sigma_0) = \binom{n}{2} = \frac{n(n-1)}{2}.$
    If $k$ is odd or both $k$ and $\sigma_0$ are even, then $\sigma_0$ lies in the subgroup generated by $S$ and we may take $\sigma=\sigma_0$.
    
    Otherwise, the subgroup generated by $S$ is contained in $A_n$, while $\sigma_0$ is odd (which happens when $k$ is even and $n \equiv 2,3 \pmod 4$). Then we cannot reach $\sigma_0$ itself, so we can consider $\sigma = \sigma_0 s_i$ for some adjacent transposition $s_i=(i \ i+1)$. In that case $\sigma$ is even and hence lies in the subgroup, and
    \[
    \ell(\sigma) = \ell(w_0 s_i) = \ell(w_0)-1 = \frac{n(n-1)}{2}-1 = \frac{n(n-1)-2}{2}.
    \]

    Since the diameter is at least the distance $d(\sigma)$ from the identity $e$ to $\sigma$, we have that $D_k(n) \ge d(\sigma) \ge \frac{\ell(\sigma)}{k-1} \ge \frac{n(n-1)-2}{2(k-1)}$.
\end{proof}

\begin{thm}
    For any $n>3k$, given $k$-cyclic generators, the distance between any permutation $\sigma$ of $n$ elements and the identity is at most $\frac{n(n-1)}{2(k-1)}+O(n)$.
\end{thm}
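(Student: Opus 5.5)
The plan is to sort $\sigma$ by a bubble-sort-type procedure in which each consecutive $k$-cycle (or its inverse) does the work of $k-1$ adjacent transpositions, up to $O(n)$ wasted moves overall. This is the $(k-1)$-shrinkage heuristic made quantitative.

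The basic observation is that the cycle $(i,i+1,\dots,i+k-1)$, or its inverse, moves the entry at position $i+k-1$ to position $i$ and shifts the $k-1$ entries in between one step right (or the reverse). Its effect on the inversion count is what one gets from $k-1$ adjacent transpositions carrying one element past $k-1$ neighbours. Consequently, moving a single element a distance $m$ to the left costs $\lceil m/(k-1)\rceil$ generators, provided there is room, i.e.\ at least $k-1$ positions to its left or right inside the block being manipulated.

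The procedure is insertion sort from one end. I place the values $1,2,\dots,n$ into their final positions one at a time, fixing them and never touching them again.

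\begin{itemize}
\item At step $j$, the value $j$ sits somewhere in the unsorted suffix, which occupies positions $j,\dots,n$.
\item I move it left by repeated jumps of length $k-1$ using inverse cycles $(i,\dots,i+k-1)^{-1}$, whose windows lie entirely in the unsorted suffix. This costs $\lfloor m_j/(k-1)\rfloor$ moves, where $m_j$ is its distance to position $j$.
\item The leftover displacement $r<k-1$ needs a correction. I use a short fixed word in the generators, supported in a window of size about $2k$ inside the unsorted region, that moves one element left by $r$ steps. Such a word exists of length $C_k$ depending only on $k$, because consecutive $k$-cycles on $2k$ points generate $S_{2k}$ or $A_{2k}$.
\item If $k$ is even and parity is an issue, I instead move an extra neighbour along, which only affects the unsorted part. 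This is where the hypothesis $n>3k$ enters: while at least about $3k$ positions remain unsorted there is room for the window.
\item The final $O(k)$ positions are handled by brute force, with constant cost $c_k$.
\end{itemize}

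The total cost is $\sum_j m_j/(k-1) + nC_k + c_k$. The difficulty is that insertion sort's $\sum_j m_j$ is not $\ell(\sigma)$: each jump also shifts the other unsorted entries. But those are exactly the entries that $j$ passes over, each with a larger value, so every unit of displacement of $j$ removes exactly one inversion, namely the one between $j$ and the larger entry it jumps over. Hence $\sum_j m_j=\ell(\sigma)\le n(n-1)/2$, where $\ell$ is the Coxeter length from the proof of the previous theorem. The bound is then $\frac{n(n-1)}{2(k-1)}+O(n)$, with the constant depending on $k$.

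The main obstacle is the correction step. One must check that the residual move of length $r<k-1$, together with the parity constraint for odd $k$ (where the group is $A_n$), can be done in $O(1)$ moves without disturbing the already sorted prefix. I would first verify it for the window $\{j,\dots,j+2k\}$ by a generation argument on consecutive $k$-cycles inside $S_{2k+1}$. If needed, I would let the correction displace later unsorted entries arbitrarily, since they are sorted afterwards anyway, and this does not change the $O(n)$ accounting.
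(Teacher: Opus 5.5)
Your argument is correct, but it is organized differently from the paper's.

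\textbf{The paper's route.} It works in batches of $k-1$ values. It ferries $1,\dots,k-1$ by $(k-1)$-jumps into the window $[k,2k-2]$, bounding this crudely by $n+O(k)$ moves. It then drops them into positions $1,\dots,k-1$ by rotating $[k,2k-1]$ and applying the explicit $3$-cycles $(i,i+k-1,i+2k-2)^{-1}$, obtained as commutators of two consecutive $k$-cycles sharing one point. This costs $O(k)$ more per batch, and recursing on $n-(k-1)$ gives $\frac{n^2}{2(k-1)}+O(n)$.

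\textbf{Your route.} You insert one value at a time and charge every jump to exactly $k-1$ inversions. You obtain the residual shift $r<k-1$ non-constructively, from the fact that the cycles inside a window generate $A_{2k+1}$ or $S_{2k+1}$.

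\textbf{What each buys.} Your inversion accounting gives a stronger, element-wise statement, $d(\sigma)\le \ell(\sigma)/(k-1)+O(n)$. Combined with $d(\sigma)\ge \ell(\sigma)/(k-1)$ from the proof of the preceding theorem, this yields $d(\sigma)=\ell(\sigma)/(k-1)+O(n)$ for every $\sigma$, not only a bound on the diameter. The paper's batching gives explicit words and, as sketched, an overhead of $O(k)$ per batch of $k-1$ values, hence $O(n)$ with a constant independent of $k$. In your scheme, by contrast, $C_k$ is an unspecified small-group diameter, and it cannot be $O(1)$ uniformly in $k$. Each generator displaces a given entry by $0$, $\pm1$ or $\pm(k-1)$, so a residual $r$ needs at least $\min(r,k-r)$ moves touching that value. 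Your error term is therefore really $O(nk)$, which is harmless for fixed $k$.

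Two details need fixing in a write-up.
\begin{itemize}
\item The parity obstruction arises for odd $k$, where the generators are even permutations, not for even $k$ as your fourth bullet says.
\item The identity $\sum_j m_j=\ell(\sigma)$ is exact only if every correction preserves the relative order of the unsorted values. That is impossible when $k$ and $r$ are both odd, since the needed $(r+1)$-cycle is then odd. What rescues the count is locality, not the fact that later entries ``get sorted anyway''. The correction permutes only $2k+1$ consecutive positions, and rearranging a contiguous block changes only inversions between pairs inside the block, at most $\binom{2k+1}{2}$ per step. So the corrections add at most $O(nk^2)$ inversions in total, i.e.\ $O(nk)$ extra jumps. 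A correction displacing later entries outside such a window would break the accounting.
\end{itemize}
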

\begin{proof}
    First of all, we deliver the elements $1,\dots, k-1$ into $[k;2k-2]$ by repeatedly moving them $k-1$ steps to the left or to the right. Then we have done $n+O(k)$ operations or less.

    Next we need to order the elements correctly. For this purpose we rotate $[k;2k-1]$. Whenever an element $i$ is $k-1$ to the right of the correct position, we place the element in the correct position by using the permutation $(i,i+k-1,i+2k-2)^{-1}$. Since $(k,\dots, 2k-1)^{-1}(1,2,\dots, k)^{-1}(k,\dots, 2k-1)(1,2,\dots, k) = (1,k,\dots,2k-1),$ we find that we can place the elements from 1 to $k-1$ onto their rightful places in $O(k)$ moves. Then we can forget about the first $k-1$ elements and reorder the others.

    Repeating these operations with smaller and smaller $n,$ we obtain the required result.
\end{proof}

\subsection{\texorpdfstring{Theoretical diameters estimate. Schreier coset graphs $S_n / \bigl(S_{\lfloor n/2 \rfloor} \times S_{n-\lfloor n/2 \rfloor}\bigr)$}{Theoretical diameters estimate. Schreier coset graphs S_n/(S_{floor(n/2)} x S_{n-floor(n/2)})}}

Here we provide theoretical  diameters estimate for  consecutive $k$ cycles inverse closed generators
for the Schreier coset graph $S_n/(S_{{\lfloor n/2 \rfloor}} \times S_{n-{\lfloor n/2 \rfloor}})$,
which can be alternatively described as graph with nodes corresponding to vectors with components $0$ and $1$
with ${\lfloor n/2 \rfloor} $ zeros, and ${n - \lfloor n/2 \rfloor}$ ones.
Estimate is again consistent with $(k-1)$-shrinkage principle.

\begin{thm}
    The diameter of the coset with $[n/2]$ zeros and $n-[n/2]$ ones is equal to $\frac{n^2}{4(k-1)}+O(n).$
\end{thm}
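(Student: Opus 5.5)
We prove the two inequalities separately; both are organized around the inversion count $\Inv$ of a binary word, which by Proposition~1 equals the area under the ROC path.

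\textbf{Lower bound.} A consecutive $k$-cycle $(i,i+1,\dots,i+k-1)$ or its inverse acts on a binary word by cyclically shifting the window $x_i\dots x_{i+k-1}$ by one position. Only the relative order of pairs inside that window can change, and in fact only the pairs involving the one moved letter. So one generator changes $\Inv$ by at most $k-1$. This mirrors the subadditivity argument of the Cayley-graph lower bound theorem: the cycle is a product of $k-1$ adjacent transpositions, and each changes $\Inv$ by at most one (Proposition~\ref{inv_prop}).

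Now take $e=0^{\lfloor n/2\rfloor}1^{n-\lfloor n/2\rfloor}$ and its reversal $x'$. We have $\Inv(x')=\lfloor n/2\rfloor(n-\lfloor n/2\rfloor)=n^2/4+O(1)$. Therefore $d(e,x')\ge \frac{n^2}{4(k-1)}-O(1)$.

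\textbf{Upper bound.} We show every word $y$ satisfies $d(y,e)\le \Inv(y)/(k-1)+O(n)$. By the triangle inequality through $e$, this alone would give only $n^2/(2(k-1))$. Instead we use the reversal trick of Proposition~\ref{diam_cor}. Reversal $\sigma$ maps generators to generators (up to inverses), since the set is inverse closed, so it is a graph automorphism. This gives $d(y,x')\le \Inv'(y)/(k-1)+O(n)$. We then need a central vertex with small distance to both ends, or directly the bound $d(y,y')\le(\Inv(y)+\Inv'(y)+\Inv(y')+\Inv'(y'))/(2(k-1))+O(n)$. The averaging works exactly as in Proposition~\ref{diam_cor}, and yields $\Inv(x')/(k-1)+O(n)=\frac{n^2}{4(k-1)}+O(n)$.

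\textbf{Sorting step (key lemma).} It remains to sort $y$ toward $e$ using about $\Inv(y)/(k-1)$ cycle moves. We process the zeros from left to right. Consider a zero that must travel left past $m$ ones. If the $k-1$ letters immediately to its left are all ones, a single inverse cycle on that window jumps the zero $k-1$ places and removes exactly $k-1$ inversions, which is the efficient move. Moves are wasted only when a window contains other zeros. To avoid this, we first \emph{consolidate}: we bring zeros into contiguous blocks, or alternatively move zeros in packets so that the lead zero always has a run of ones ahead of it. The remainder $m \bmod (k-1)$ for each zero is handled with $O(1)$ extra moves, using short commutator tricks like those in the proof of the Cayley-graph upper bound theorem. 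That theorem's identity $(1,k,\dots,2k-1)$ shows that a short step of any length costs $O(k)$ moves. Summed over all zeros, these corrections cost $O(nk)=O(n)$ for fixed $k$.

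\textbf{Main obstacle.} The hardest part is justifying that the total waste is $O(n)$ rather than proportional to $\Inv$. When zeros are interleaved with ones, a jump of $k-1$ may cross other zeros and so gain fewer than $k-1$ inversions. My first approach is a potential argument: amortize each wasted move against a zero whose final position becomes fixed. If that fails, the fallback is to sort the zeros in reverse order, rightmost zero first. Then everything to the left of the moving zero is still unsorted, but everything it passes is already... the cleaner option is leftmost first: everything to the left of the current zero is already the final prefix $0^j$, and the letters it must cross are ones, except for later zeros. Those later zeros can be treated by moving the whole block of collected zeros together. Jumping a block of $b$ zeros past $k-1$ ones needs about $b$ moves and removes $b(k-1)$ inversions, so it is still efficient. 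I expect making this block-jumping rigorous, with an $O(n)$ total overhead, to be the delicate step.
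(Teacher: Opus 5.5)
Your plan is correct. The step you single out as the main obstacle is not actually an obstacle, and the consolidation and block-jumping devices should be dropped; as written, the cost of ``first consolidating'' the zeros is never accounted for.

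With the leftmost-first order you already announce, the following invariant holds before you handle the $i$-th zero: the word reads $0^{i-1}1^{c_i}0\cdots$, where $c_i$ is the number of ones preceding the $i$-th zero in $y$. Everything to the right of the moving zero is untouched. So the $i$-th zero crosses exactly $c_i$ ones and never a zero, since the later zeros lie to its right. Each full jump is a rotation of a window $1^{k-1}0$ and removes exactly $k-1$ inversions. Once $0<c<k-1$ ones remain, a single rotation of the window $0^{k-1-c}1^{c}0$ produces $0^{k-c}1^{c}$ and finishes that zero; this borrows already placed prefix zeros. Hence
$d(y,e)\le\sum_i\lceil c_i/(k-1)\rceil+O_k(1)\le\Inv(y)/(k-1)+\lfloor n/2\rfloor+O_k(1)$.
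The $O_k(1)$ covers at most the first $k-2$ zeros, where the prefix is too short; any bounded rearrangement inside the first $2k$ positions will do there. No potential argument, packets, or commutator tricks are needed.

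With that lemma your argument is complete, and it differs from the paper's. The paper works outward from the sorted word. It moves packets of $k-1$ consecutive zeros one position per generator, by rotating $0^{k-1}1$; this is your move with the roles of the two letters exchanged. It bounds each packet's travel uniformly by $\lceil n/2\rceil$, and then asserts an $O(n)$ fix-up spreading each packet's zeros to their targets. That gives only an upper bound on distances from the sorted vertex, i.e.\ its eccentricity, and no lower bound is written out.

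Your route gives more. You get the matching lower bound, because one generator changes $\Inv$ by at most $k-1$. Through the reversal automorphism and the four-point averaging of Proposition~\ref{diam_cor}, you also bound $d(y,y')$ for all pairs. That is what ``diameter'' literally means for a Schreier graph that is not vertex-transitive. The averaging needs the per-vertex bound in terms of $\Inv(y)$. The paper's uniform estimate combined with the triangle inequality through $e$ would give only $n^2/(2(k-1))$. The paper's packet construction, in turn, is the one that adapts to the wrapped case later on. There, packets can be delivered from either side, and inversion count stops being the right potential.
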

\begin{proof}
    Consider the sequence of zeros and ones where the zeros are located at places $a_1,a_2,\dots,a_l.$ Create a modified sequence where the zeros are located at $a_1,\dots,a_1+k-2, a_{k},\dots,a_{k}+k-2,\dots, a_{m(k-1)+1},\dots,a_{m(k-1)+k-1},\dots$ This modified sequence has the zeros come in groups of $k-1$. Since a group of size $k-1$ or less can be moved to a distance of 1 by 1 turn of the cycle, the modified position is reachable in $$\sum_{m=0}^{[n/2/(k-1)]}a_{m(k-1)+1} - m(k-1)$$ moves. Which is at most $\lceil n/2\rceil ([n/2/(k-1)]+1)$ moves.
    
    Next we reconstruct the original sequence by guiding every zero to its rightful place in $O(n)$ turns. 
\end{proof}

\subsection{Quasi-polynomials for diameters. Cayley graphs.}
Here we present conjectural quasi-polynomial expressions  for the diameters of the Cayley graph (not Schreier coset)
of the consecutive $k$ cycles inverse closed generators for $k\le 6$, and partial information for larger $k$.
The leading terms of the formulas is $n(n-1)/(2(k-1))$ consistent with the theoretical estimates above and
naive $(k-1)$-shrinkage principle.

For $n\le 14$ the conjectures checked by effective realization of the BFS algorithm provided by CayleyPy.
For larger $n$ the data obtained by AI-component of CayleyPy - according to methodology described above. 

\begin{thm}
\textbf{Case $k=3$ from $n=6$:}
\begin{align*}
    & n \equiv 0,1 \pmod{4}: && D_3(n) = \frac{n(n-1)}{4}, \\
    & n \equiv 2,3 \pmod{4}: && D_3(n) = \frac{n(n-1)}{4}-\frac{1}{2}.
\end{align*}
Or in the other words $D(n) = \lfloor \frac{n(n-1)}{4} \rfloor$
\end{thm}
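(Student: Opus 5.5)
The plan is to prove matching lower and upper bounds. The lower bound follows from the preceding lower-bound theorem, sharpened with a parity observation. The upper bound requires an explicit sorting procedure for even permutations by consecutive $3$-cycles and their inverses.

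\textbf{Lower bound.} We specialize the argument of the preceding lower-bound theorem to $k=3$. Every consecutive $3$-cycle $(i,i+1,i+2)^{\pm1}$ has Coxeter length $2$, so $d(\sigma)\ge \ell(\sigma)/2$. In fact we will use the sharper observation that such a generator changes $\ell$ by exactly $-2$, $0$ or $+2$: it moves one letter across two neighbours and affects only the two pairs involving that letter. Since $k=3$ is odd, the generated group is $A_n$. The reversal $\sigma_0$ has parity $\binom n2 \bmod 2$.
\begin{itemize}
\item For $n\equiv 0,1 \pmod 4$, $\sigma_0\in A_n$, and we get $D_3(n)\ge n(n-1)/4$.
\item For $n\equiv 2,3\pmod 4$, $\sigma_0$ is odd, so we use $\sigma_0 s_i$ with $\ell=\binom n2-1$, giving $D_3(n)\ge (n(n-1)-2)/4$.
\end{itemize}
In both cases this equals $\lfloor n(n-1)/4\rfloor$.

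\textbf{Upper bound: reformulation.} We must show $d(\sigma)\le \lfloor n(n-1)/4\rfloor$ for every even $\sigma$ and every $n\ge 6$. Since each step lowers $\ell$ by at most $2$, the natural reformulation is as follows. Call a generator application \emph{wasted} if it does not decrease $\ell$. We want a sorting procedure whose number of wasted moves $w(\sigma)$ satisfies
\[
2\,w(\sigma)\le 2\lfloor n(n-1)/4\rfloor-\ell(\sigma).
\]
Then $d(\sigma)\le (\ell(\sigma)+2w(\sigma))/2\le\lfloor n(n-1)/4\rfloor$. Thus permutations close to the reversal have to be sorted almost perfectly greedily, while permutations with few inversions can afford many wasted moves.

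\textbf{Upper bound: inductive procedure.} We argue by induction on $n$ and remove the largest letter $n$. If $n$ sits at distance $m$ from the last position, we carry it right by $(i,i+1,i+2)$ moves, each of which decreases $\ell$ by $2$, provided the two letters it jumps over are both smaller (they are, since $n$ is the maximum). When $m$ is odd, a single final step of length one is needed. We realize it by a $3$-cycle that moves $n$ one place and simultaneously moves some other letter $x$ one place, choosing the cycle and direction so that $x$ moves in the direction that reduces its own inversions whenever possible. The correction is therefore wasted only in a controlled set of configurations. The bookkeeping then compares the budget $\lfloor n(n-1)/4\rfloor-\lfloor (n-1)(n-2)/4\rfloor\in\{\lfloor (n-1)/2\rfloor,\lceil (n-1)/2\rceil\}$ with the cost $\lceil m/2\rceil$, using the inversion surplus $\binom n2-\ell(\sigma)$ to pay for wasted parity corrections.

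\textbf{Main obstacle.} The hard step is showing that the parity corrections accumulated along the induction never exceed what the slack $\binom n2-\ell(\sigma)$ can pay for. A naive analysis gives $\sum\lceil (m-1)/2\rceil\approx n^2/4+n/4$, which is too large by a linear term. To fix this I would first try to remove letters from both ends alternately (largest to the right, smallest to the left), so that two odd parity defects can be cancelled by one $3$-cycle moving both the offending letters toward their targets. This pairs up the corrections so that at most one wasted move occurs per two induction steps, matching the alternation of the budget between $\lfloor (n-1)/2\rfloor$ and $\lceil (n-1)/2\rceil$ in the periods $n\bmod 4$. The restriction $n\ge 6$ indicates that small cases behave differently. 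So I would check the base cases $6\le n\le 14$ directly by the CayleyPy BFS mentioned above, and start the induction from there.
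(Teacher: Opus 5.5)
Your lower bound is correct. It is the paper's general lower-bound theorem specialized to $k=3$, where $\lceil (n(n-1)-2)/4\rceil$ already equals $f(n):=\lfloor n(n-1)/4\rfloor$. The paper gives no separate argument for the matching upper bound: apart from the general $n(n-1)/(2(k-1))+O(n)$ estimate, its support is BFS for $n\le 14$ and AI pathfinding beyond that. So the upper bound is exactly where your proposal has to supply new content, and as written it does not. Your waste reformulation is equivalent to the target, not stronger. With neutral wasted moves, $d(\sigma)=\ell(\sigma)/2+w(\sigma)$, so $2w\le 2f(n)-\ell$ is literally $d\le f(n)$. It can only help if the induction hypothesis itself depends on $\ell$, and yours ($d\le f(n-1)$ for the remainder) does not. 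Hence every removal step must cost at most $f(n)-f(n-1)$.

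That already fails with zero waste. For $n\equiv 2\pmod 4$ the budget is $(n-2)/2$. But if $n$ stands in position $1$, it must travel the odd distance $n-1$, which takes at least $\lceil (n-1)/2\rceil=n/2$ moves however the parity step is placed. This happens for $\sigma_0 s_i$ with $2\le i\le n-2$. These lie in $A_n$ and have $\ell=\binom n2-1=2f(n)$, so they must be sorted with no waste at all. The remainder left after moving $n$ out of such an element must then be sorted with no waste, one move better than $f(n-1)$ guarantees. Letter $1$ sits in position $n$, so removing the smallest letter fails in the same way.

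Your two-ended remedy fails on the same elements. There are $2n-3$ inversions involving $1$ or $n$, and each move removes at most two of them, so at least $n-1$ moves are needed, whereas $f(n)-f(n-2)=n-2$.

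The parity correction is also misdescribed. A consecutive $3$-cycle moving $n$ one place right necessarily moves a second letter one place right as well, and makes a third letter jump two places left over both. There is no choice of direction, and the move is non-wasted if and only if the third letter is smaller than the second.

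What is missing is a strengthened inductive statement: a bound on $d(\tau)$ in terms of $\ell(\tau)$ as well as $m$, so that inner inversions removed during a step are credited to the subproblem, together with a proof that each removal step respects it. Until that is supplied, checking $6\le n\le 14$ by BFS does not start a valid induction.
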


\begin{Conj}
\textbf{Case $k=4$ from $n=6$:}
\begin{align*}
    & n \equiv 0, 1 \pmod{3}: && D_4(n) = \frac{n(n-1)}{6}-1, \\
    & n \equiv 2 \pmod{3}: && D_4(n) = \frac{n(n-1)}{6}+\frac{2}{3}.
\end{align*}

\begin{equation*}
\mathrm{OGF}
= \frac{-x^{9} + x^{8} + 4x^{6} + x^{5} + 4x^{4} + x^{2} - x}{(1 - x^{3})^{3}}.
\end{equation*}


\begin{figure}[H]
    \centering
    \includegraphics[width=0.7\linewidth]{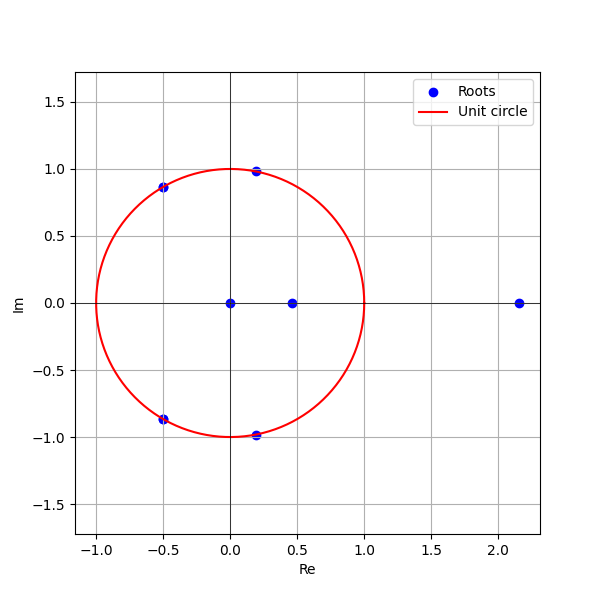}
    \caption{case $k=4$}
\end{figure}

\textbf{Case $k=5$ from $n=8$:}
\begin{align*}
    & n \equiv 0, 1 \pmod{8}: & & D_5(n) = \frac{n(n-1)}{8}, \\
    & n \equiv 2, 7 \pmod{8}: && D_5(n) = \frac{n(n-1)}{8}-\frac{1}{4} = \frac{n(n-1)-2}{8}, \\
    & n \equiv 3, 6 \pmod{8}: && D_5(n) = \frac{n(n-1)}{8}+\frac{1}{4} = \frac{n(n-1)+2}{8}, \\
    & n \equiv 4, 5 \pmod{8}: && D_5(n) = \frac{n(n-1)}{8}+\frac{1}{2} = \frac{n(n-1)+4}{8}.
\end{align*}

\begin{equation*}
\mathrm{OGF}
= \frac{H(x)}{(1 - x^{8})^{3}},
\end{equation*}
where
\begin{multline*}
H(x) = 
x^{22} + 2x^{21} + 3x^{20} + 4x^{19} + 5x^{18} + 7x^{17} + \\ + 9x^{16} + 11x^{15} + 11x^{14} + 11x^{13} + 11x^{12} + 11x^{11} + \\ + 11x^{10} + 9x^{9} + 7x^{8} + 5x^{7} + 4x^{6} + 3x^{5} + 2x^{4} + x^{3}.
\end{multline*}

\begin{table}[h]
\centering
\begin{tabular}{cccc}
\toprule
Nonnegative & Symmetric & Unimodal & Weakly log concave \\
\midrule
True & False & True & False \\
\bottomrule
\end{tabular}
\end{table}

\begin{figure}[H]
    \centering
    \includegraphics[width=0.75\linewidth]{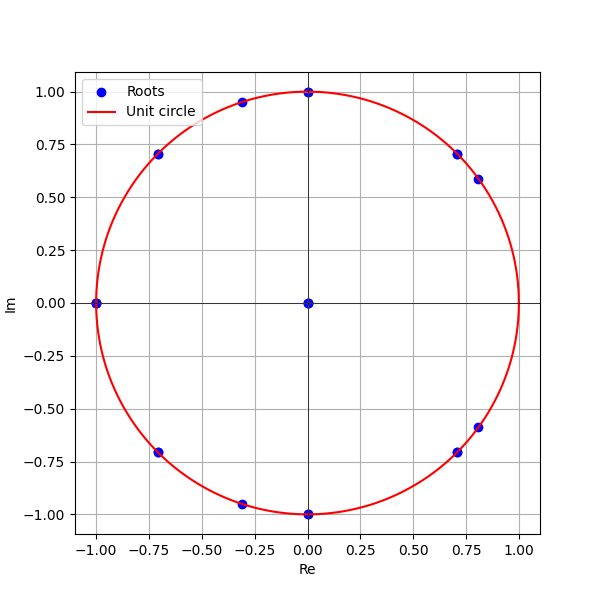}
    \caption{case $k=5$}
\end{figure}

\textbf{Case $k=6$ from $n=10$:}
\begin{align*}
    & n \equiv 0, 1 \pmod{5}: && D_6(n) = \frac{n(n-1)}{10}+1, \\
    & n \equiv 2, 4 \pmod{5}: && D_6(n) = \frac{n(n-1)}{10}+\frac{4}{5} = \frac{n(n-1)+8}{10}, \\
    & n \equiv 3 \pmod{5}: && D_6(n) = \frac{n(n-1)}{10}+\frac{7}{5} = \frac{n(n-1)+14}{10}.
\end{align*}

\begin{equation*}
\mathrm{OGF}
= \frac{H(x)}{(1 - x^{5})^{3}},
\end{equation*}
where
\begin{multline*}
H(x) = 
x^{14} + 2x^{13} + 2x^{12} + 3x^{11} + 4x^{10} + 2x^{9} + \\ + x^{8} + 2x^{7} + x^{6} + 2x^{4} + 2x^{3} + x^{2} + x + 1.
\end{multline*}

\begin{table}[h]
\centering
\begin{tabular}{cccc}
\toprule
Nonnegative & Symmetric & Unimodal & Weakly log concave \\
\midrule
True & False & False & False \\
\bottomrule
\end{tabular}
\end{table}

\begin{figure}[H]
    \centering
    \includegraphics[width=0.75\linewidth]{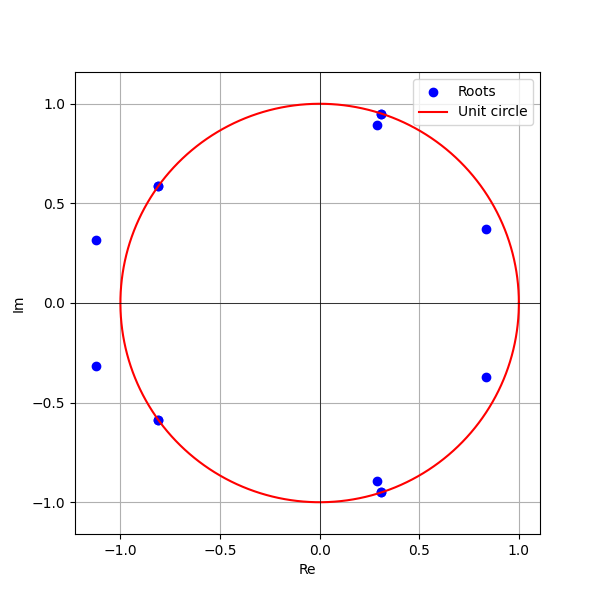}
    \caption{case $k=6$}
\end{figure}

\textbf{Case $k=7$ (period $12$):}
\begin{align*}
    & n \equiv 0,1,4,9 \pmod{12}:
    && D_7(n) = \frac{n(n-1)}{12}, \\
    & n \equiv 2,11 \pmod{12}:
    && D_7(n) = \frac{n(n-1)}{12}-\frac{1}{6}
      = \frac{n(n-1)-2}{12}, \\
    & n \equiv 3,6,7,10 \pmod{12}:
    && D_7(n) = \frac{n(n-1)}{12}+\frac{1}{2}
      = \frac{n(n-1)+6}{12}, \\
    & n \equiv 5,8 \pmod{12}:
    && D_7(n) = \frac{n(n-1)}{12}+\frac{1}{3}
      = \frac{n(n-1)+4}{12}.
\end{align*}

\begin{equation*}
\mathrm{OGF}
= \frac{H(x)}{(1 - x^{12})^3},
\end{equation*}
where
\begin{multline*}
H(x) = 
x^{34} + x^{33} + 2x^{32} + 3x^{31} + 4x^{30} + 5x^{29} + 6x^{28} + 8x^{27} + \\ + 9x^{26} + 11x^{25} + 13x^{24} + 15x^{23} + 15x^{22} + 17x^{21} + 17x^{20} + \\ + 17x^{19} + 17x^{18} + 17x^{17} + 17x^{16} + 15x^{15} + 15x^{14} + 13x^{13} + \\ + 11x^{12} + 9x^{11} + 8x^{10} + 6x^9 + 5x^8 + 4x^7 + 3x^6 + 2x^5 + x^4 + x^3.
\end{multline*}

\begin{table}[h]
\centering
\begin{tabular}{cccc}
\toprule
Nonnegative & Symmetric & Unimodal & Weakly log concave \\
\midrule
True & False & True & False \\
\bottomrule
\end{tabular}
\end{table}

\begin{figure}[H]
    \centering
    \includegraphics[width=0.75\linewidth]{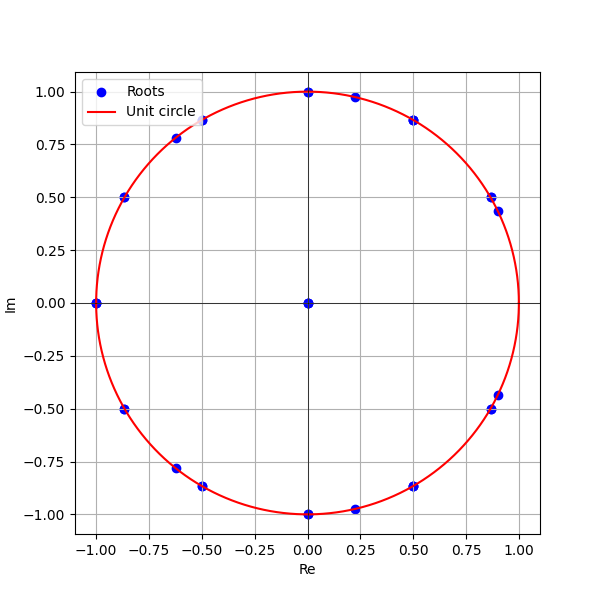}
    \caption{case $k=7$}
\end{figure}

\end{Conj}

More generally: 
\begin{Conj} Let $D_k(n)$ be the diameter of $S_n$  Cayley graphs generated by consecutive cycles
of length $k$ and their inverses. They are given by quadratic quasi-polynomials in $n$ (for $n$ large enough) and the following holds:
\begin{enumerate}

    \item All the statements below  are valid for $ n \ge 2k. $ (The starting point for quasi-polynomials to be valid.)

    \item The periods of  quasi-polynomials  are:
    \[
    \tau =
    \begin{cases}
    k - 1, & \text{if } k \text{ is even}, \\[4pt]
    2(k - 1), & \text{if } k \text{ is odd}.
    \end{cases}
    \]
    
        
    \item The general form of the diameter:
    \[
    D_k(n) =  \frac{n(n-1)}{2(k-1)} + Q_0(n),
    \]    
    where $Q_0(n)$ is quasi-polynomial of degree zero, i.e. just the periodic function of $n$. 
    Moreover in many cases $Q_0(n)$ is just the correction to make the first term expression  to be the  integer 
    (floor/ceil), however it is not always the case.

    \item The last layer always contains either the full reversal
    $
    R = (n{-}1,\, n{-}2,\, \dots,\, 1,\, 0),
    $
    or its modified form with one adjacent inversion:
    $
    R' = R \circ (i,\, i{+}1).
    $
    
    \item The alternation between $R$ and $R'$ in the last layers has a period
    $
    T = k - 1.
    $ I.e.  the sequence $S(n) =$ $R$ or $R'$ whether the last layer contains $R$ or $R'$ will be periodic with the period $k-1$. 


\end{enumerate}

\end{Conj}

For diameter $k$ we have the following formulas for polynomials
\[
H_5(x)=\Phi_2(x)^3\,\Phi_4(x)^2\,\Phi_8(x)^2\,\Phi_{10}(x).
\]

\[
H_7(x)=\Phi_2(x)^3\,\Phi_3(x)^2\,\Phi_4(x)^2\,\Phi_6(x)^3\,\Phi_{12}(x)^2\,\Phi_{14}(x).
\]

\medskip

\noindent\textbf{Conjecture (odd \(k\ge 5\)).}
For odd \(k\),
\[
H_k(x)\;=\;\frac{(x^{2k-2}-1)^2\,(x^k+1)\,(x^{k-4}+1)}{(x-1)^2(x+1)}.
\]

Equivalently, in cyclotomic form,
\[
H_k(x)
=
\Phi_2(x)\!
\prod_{\substack{d\mid(2k-2)\\ d>1}}\Phi_d(x)^2\;
\prod_{\substack{d\mid k\\ d>1}}\Phi_{2d}(x)\;
\prod_{\substack{d\mid (k-4)\\ d>1}}\Phi_{2d}(x).
\]

\clearpage
\subsection{\texorpdfstring{Schreier coset graph: $S_n / \bigl(S_{\lfloor n/2 \rfloor} \times S_{n-\lfloor n/2 \rfloor}\bigr)$ (not inverse-closed)}{Schreier coset graph: S_n/(S_{floor(n/2)} x S_{n-floor(n/2)}) (not inverse-closed)}}


Here we present a conjectural formula for the diameters of Schreier coset graphs of the form
$
S_n \big/ \bigl(S_{\lfloor n/2 \rfloor} \times S_{n-\lfloor n/2 \rfloor}\bigr).$
In this setting, we were able to obtain an explicit formula for all values of $k$.
The graph can be equivalently described as follows: its vertices correspond to binary vectors with components in $\{0,1\}$, containing exactly $\lfloor n/2 \rfloor$ zeros and $n-\lfloor n/2 \rfloor$ ones.
In \texttt{CayleyPy}, we define these graphs by setting the \texttt{central\_state} to be
$
[0]^{n//2} + [1]^{\,n-n//2}.
$
The generating set consists of consecutive $k$-cycles; we consider the case where the generating set is not inverse-closed.

The leading term of the diameter formulas are  $n^2/(4(k-1))$ as it is expected from the $(k-1)$-shrinkage principle,
since for the standard $k=2$ (Coxeter or neighbor transposition) case the diameter is exactly 
${\lfloor n/2 \rfloor} (n-\lfloor n/2 \rfloor)$. 

Informally, one may think of this graph as a ``$k$-shrunken'' version of Grassmannian
$Gr(\lfloor n/2 \rfloor, n)$ over the field with one element. From the point of view of that analogy
diameter corresponds to dimension of the manifold, and Poincare polynomial correspond to growth polynomial of the graph.

\begin{Conj}
Diameters of the Schreier coset graphs $S_n / ( S_{\lfloor n/2 \rfloor} \times S_{n-\lfloor n/2 \rfloor} )$
with consecutive $k$-cycles, not inverse closed case are given by :
\[
\begin{aligned}
n &\equiv 0 \!\pmod{(2k-2)} \;:& \!\!\!
D_k(n) &= \frac{n^2 + 4(k-1)}{4(k-1)}, \\[6pt]
n &\equiv -1 \!\pmod{(2k-2)} \;:& \!\!\!
D_k(n) &=  \frac{n^2 + 4(k-1)-1}{4(k-1)},
\end{aligned}
\]
and else, denoting $p = n \pmod{(2k-2)} $,
\[
\begin{aligned}
 D_k(n)= &  \frac{n(n+2k-p-2)}{4(k-1)}, & p  {\rm \ is \ even \ }
\\[6pt]
 D_k(n)= &   \frac{(n-1)(n+2k-p-2)}{4(k-1)}, & p  {\rm \ is \ odd \ }.
\end{aligned}
\]
\end{Conj}
The formula is expected to be valid for $n$ large enough, exact bound is not clear, 
for $k=2,3,4,5,6$  it starts from $4,4,7,16,31$, so in general it might be for $n>2^k$, which would be
quite surprising since in the other cases it is typical to have linear bound in terms of $k$.

For general $k$ the generating function reads
\beq
    \text{OGF}=\frac{H(x)}{(1-x^{2k-2})^3}
\eeq
where where the $H$-polynomial is
\beqa \nonumber
H(x)&=&\frac{\left(x-x^k\right) \left(x^k+x\right) }{(x-1)^2 x^7 (x+1)} (-x^{4 k}+x^{2 k+2}-2 x^{2 k+3}+k x^{2 k+4}-x^{2 k+4}+x^{2 k+5}
\\ \nonumber &&
-k x^{2 k+6}+x^{4 k+1}+x^{4 k+2}-x^{4
   k+3}+x^8-x^6+x^5
   )
\eeqa
One can show that in fact $H(x)$ is a polynomial.

Below we present picture of roots the $H$-polynomials for small $k$, not all roots have modules equal to 1,
so "Riemann conjecture is not true" for these cases. Nevertheless it is natural to believe that there are some patterns
which remains to be understood.
\begin{figure}[H]
    \centering
    \begin{minipage}{0.48\linewidth}
        \centering
        \includegraphics[width=\linewidth]{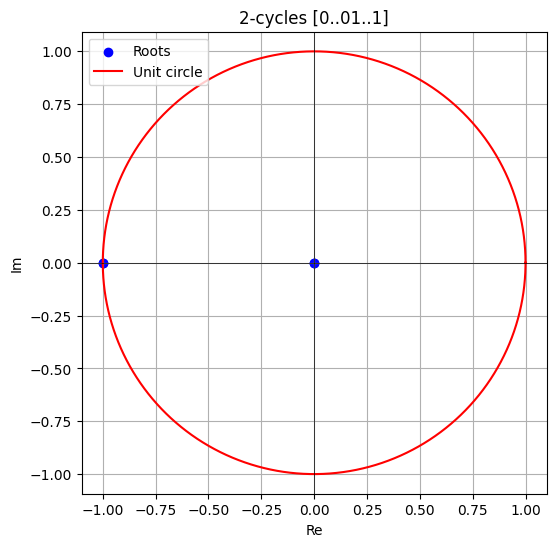}
        \caption{Consecutive 2-cycles [0..01..1]}
        \label{fig:cons2_0011}
    \end{minipage}\hfill
    \begin{minipage}{0.48\linewidth}
        \centering
        \includegraphics[width=\linewidth]{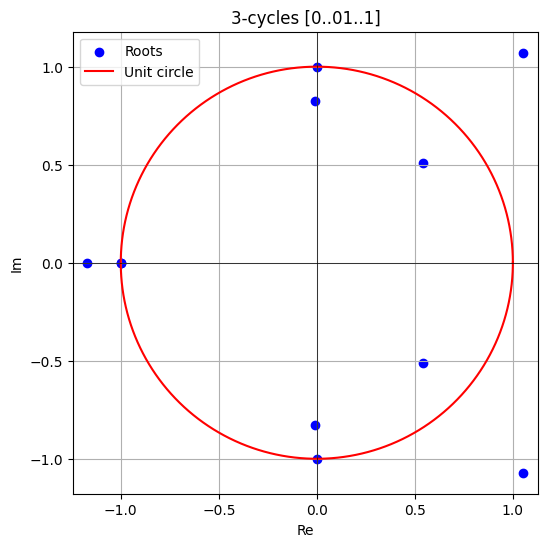}
        \caption{Consecutive 3-cycles [0..01..1]}
        \label{fig:cons3_0011}
    \end{minipage}
\end{figure}

\begin{figure}[H]
    \centering
    \begin{minipage}{0.48\linewidth}
        \centering
        \includegraphics[width=\linewidth]{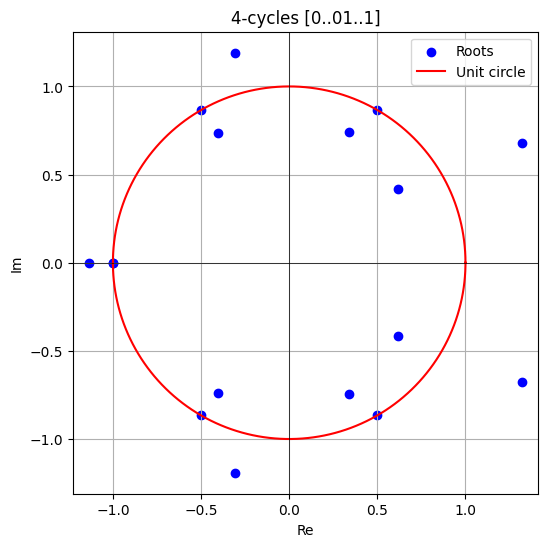}
        \caption{Consecutive 4-cycles [0..01..1]}
        \label{fig:cons4_0011}
    \end{minipage}\hfill
    \begin{minipage}{0.48\linewidth}
        \centering
        \includegraphics[width=\linewidth]{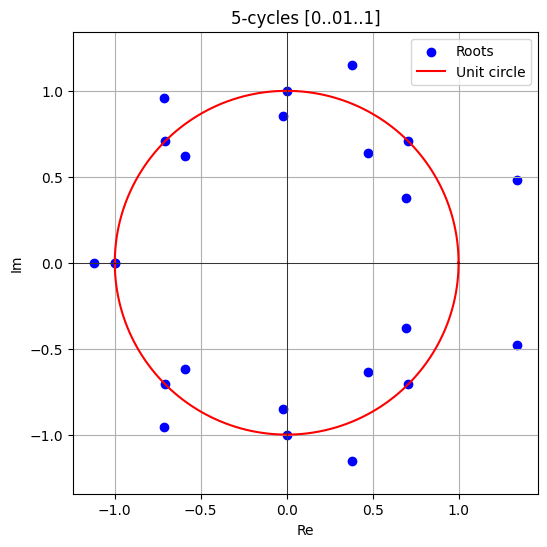}
        \caption{Consecutive 5-cycles [0..01..1]}
        \label{fig:cons5_0011}
    \end{minipage}
\end{figure}





\begin{figure}[H]
    \centering
    \includegraphics[width=0.75\linewidth]{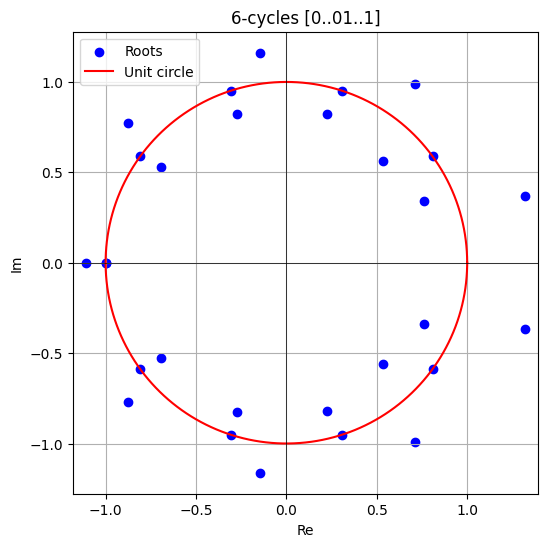}
    \caption{Consecutive 6-cycles [0..01..1]}
    \label{fig:cons6_0011}
\end{figure}

\clearpage
\subsection{\texorpdfstring{Schreier coset graph: $S_n / \bigl(S_{\lfloor n/2 \rfloor} \times S_{n-\lfloor n/2 \rfloor}\bigr)$ (inverse-closed)}{Schreier coset graph: S_n/(S_{floor(n/2)} x S_{n-floor(n/2)}) (inverse-closed)}}

Almost the same setting as in the previous subsection - but now inverse-closed generators.
I.e.  we study the Schreier coset graphs of the form
$
S_n \big/ \bigl(S_{\lfloor n/2 \rfloor} \times S_{n-\lfloor n/2 \rfloor}\bigr).$
The graph can be equivalently described as follows: its vertices correspond to binary vectors with components in $\{0,1\}$, containing exactly $\lfloor n/2 \rfloor$ zeros and $n-\lfloor n/2 \rfloor$ ones.
In \texttt{CayleyPy}, we define these graphs by setting the \texttt{central\_state} to be
$
[0]^{n//2} + [1]^{\,n-n//2}.
$
The generating set consists of consecutive $k$-cycles; we consider the case where the generating set is  inverse-closed.
Informally, one may think of this graph as a ``$k$-shrunken'' version of Grassmanian
$Gr(\lfloor n/2 \rfloor, n)$ over the field with one element. From the point of view of that analogy
diameter corresponds to dimension of the manifold, and Poincare polynomial correspond to growth polynomial of the graph.

We found quasi-polynomials and $H$-polynomials for small $k$. 
Surprisingly  the roots of $H$-polynomials have modules equal one ("Riemann conjecture holds") in the considered cases.
However it is not clear whether it would be true for larger $k$, there are plenty examples when
it holds true for the beginning of the series, but not in general. 

\begin{Conj}
    All the quasi-polynomials below have leading term $n^2/(4(k-1))$ and liner term is absent,
it natural to expect that it holds true for all $k$. 
\end{Conj}

\[
H_k(x)=\frac{\left(x^{2k-2}-1\right)^2\left(x^{2k-3}+1\right)}{(x-1)^2(x+1)}, \quad k\ge 2.
\]

\[
H_k(x)=\Phi_2(x)^2
\prod_{\substack{d\mid(2k-2)\\ d>2}}\Phi_d(x)^2
\prod_{\substack{d\mid(2k-3)\\ d>1}}\Phi_{2d}(x), \quad k\ge 2.
\]

\begin{prp}
Let \(k\ge 2\). Define integers \(a_{k,n}\) for \(0\le n\le 6k-10\) by
\[
a_{k,n}=
\begin{cases}
\left\lfloor \dfrac n2 \right\rfloor+1,
& 0\le n\le 2k-5,\\[1.0ex]
k-1+\bigl((n-(2k-4))\bmod 2\bigr),
& 2k-4\le n\le 4k-6,\\[1.0ex]
k-2-\left\lfloor \dfrac{n-(4k-5)}{2}\right\rfloor,
& 4k-5\le n\le 6k-10.
\end{cases}
\]
Let
\[
H_k(x):=\sum_{n=0}^{6k-10} a_{k,n}\,x^n\in \mathbb Z[x].
\]
Then
\[
H_k(x)=\frac{(x^{2k-2}-1)^2\,(x^{2k-3}+1)}{(x-1)^2(x+1)}\in \mathbb Z[x].
\]
\end{prp}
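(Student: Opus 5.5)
The plan is to avoid expanding the quotient directly. Instead I would factor the right-hand side into a product of three explicit polynomials with $0/1$ coefficients and then read off the coefficients by a convolution count. Put $m=2k-2$; it is even, and $2k-3$ is odd.

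\textbf{Factorization.} Because $m$ is even,
\[
\frac{x^{m}-1}{x-1}=\sum_{i=0}^{m-1}x^i=(1+x)\,C(x),\qquad C(x):=\sum_{i=0}^{k-2}x^{2i}.
\]
Because $2k-3$ is odd, $x^{2k-3}+1$ is divisible by $x+1$. Cancelling one factor $(1+x)$ against the denominator therefore gives
\[
\frac{(x^{2k-2}-1)^2(x^{2k-3}+1)}{(x-1)^2(x+1)} = A(x)\,C(x)\,\bigl(1+x^{2k-3}\bigr),\qquad A(x):=\sum_{i=0}^{2k-3}x^i .
\]
This already shows that the quotient lies in $\mathbb Z[x]$, with nonnegative coefficients and degree $(2k-3)+(2k-4)+(2k-3)=6k-10$. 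That degree matches the range of $n$ in the definition of $a_{k,n}$.

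\textbf{Coefficients of $B=AC$.} The coefficient $b_n$ of $B$ counts the indices $i\in\{0,\dots,k-2\}$ with $0\le n-2i\le 2k-3$. The polynomial $B$ has degree $4k-7$, and its coefficients are
\begin{align*}
b_n &= \lfloor n/2\rfloor+1 \quad (0\le n\le 2k-3),\\
b_n &= k-1-\lceil (n-2k+3)/2\rceil \quad (2k-3\le n\le 4k-7).
\end{align*}
The first formula holds because the constraint $n-2i\le 2k-3$ is automatic there. The second holds because the lower bound $i\ge \lceil (n-2k+3)/2\rceil$ becomes active. The two formulas agree at $n=2k-3$, where both give $k-1$.

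\textbf{Applying $1+x^{2k-3}$.} Multiplying by $1+x^{2k-3}$ gives $a_{k,n}=b_n+b_{n-2k+3}$, with $b_j=0$ outside $[0,4k-7]$. I would then check the three ranges of the statement.
\begin{itemize}
\item For $0\le n\le 2k-4$ only $b_n$ contributes, so $a_{k,n}=\lfloor n/2\rfloor+1$. This gives the first branch for $n\le 2k-5$. At $n=2k-4$ it gives $k-1$, which agrees with the second branch's value there.
\item For $2k-3\le n\le 4k-7$ both terms contribute. Writing $n=2k-3+j$ with $0\le j\le 2k-4$, the sum is $\bigl(k-1-\lceil j/2\rceil\bigr)+\bigl(\lfloor j/2\rfloor+1\bigr)=k-(j\bmod 2)$. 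Since $(n-(2k-4))\bmod 2=(j+1)\bmod 2$, this equals $k-1+\bigl((n-(2k-4))\bmod 2\bigr)$, as claimed. The endpoint $n=4k-6$ belongs to the tail below but must also match the second branch.
\item For $4k-6\le n\le 6k-10$ only $b_{n-2k+3}$ contributes, and the second formula for $b$ gives the claimed decreasing values. At $n=4k-6$ one gets $k-1$, consistent with the second branch, and at $n=4k-5$ one gets $k-2$, matching the third branch.
\end{itemize}

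\textbf{Main obstacle and sanity checks.} The only real difficulty is the parity and floor/ceil bookkeeping at the seams $n=2k-5,2k-4,2k-3$ and $n=4k-7,4k-6,4k-5$, where the stated ranges overlap or abut. I would verify these endpoints explicitly. For $k=2$ the formula should return $(1+x)^2$, and for $k=3$ it should return $(1+x+x^2+x^3)(1+x^2)(1+x^3)$. These small cases catch off-by-one errors, including the case where the first range is empty.
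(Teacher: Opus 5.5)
Your proof is correct and follows essentially the same route as the paper. Both arguments factor the right-hand side as $\bigl(\sum_{i=0}^{2k-3}x^i\bigr)\bigl(\sum_{j=0}^{k-2}x^{2j}\bigr)(1+x^{2k-3})$ and check the three coefficient ranges by a convolution count. The only difference is the grouping: the paper multiplies the last two factors into a $0/1$ polynomial and counts its support in a window of length $2k-2$, whereas you first compute the coefficients $b_n$ of the first two factors and then add the shifted copy $b_{n-(2k-3)}$, which makes the seam checks slightly more transparent.
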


\begin{proof}
Define
\[
U(x):=\frac{x^{2k-2}-1}{x-1}=\sum_{i=0}^{2k-3}x^i,
\qquad
V(x):=\frac{x^{2k-3}+1}{x+1}=\sum_{r=0}^{2k-4}(-1)^r x^r.
\]
Since \(x^{2k-2}-1\) is divisible by \(x-1\) and \(2k-3\) is odd (hence \(x^{2k-3}+1\) is divisible by \(x+1\)),
the rational expression
\[
H_k^\star(x):=\frac{(x^{2k-2}-1)^2\,(x^{2k-3}+1)}{(x-1)^2(x+1)}
\]
lies in \(\mathbb Z[x]\). Moreover,
\[
H_k^\star(x)=U(x)^2\,V(x).
\]

Set \(W(x):=U(x)V(x)\). We compute \(W(x)\) explicitly:
\[
W(x)=\frac{(x^{2k-2}-1)(x^{2k-3}+1)}{(x-1)(x+1)}
=\frac{(x^{2k-2}-1)(x^{2k-3}+1)}{x^2-1}.
\]
Using
\[
\frac{x^{2k-2}-1}{x^2-1}=1+x^2+x^4+\cdots+x^{2k-4}=\sum_{j=0}^{k-2}x^{2j},
\]
we obtain
\[
W(x)=(1+x^{2k-3})\sum_{j=0}^{k-2}x^{2j}
=\sum_{j=0}^{k-2}x^{2j}+\sum_{j=0}^{k-2}x^{2k-3+2j}.
\]
Therefore
\[
H_k^\star(x)=U(x)W(x)
=\Bigl(\sum_{i=0}^{2k-3}x^i\Bigr)
\Bigl(\sum_{j=0}^{k-2}x^{2j}+\sum_{j=0}^{k-2}x^{2k-3+2j}\Bigr).
\]

Let
\[
S:=\{0,2,4,\dots,2k-4\}\ \cup\ \{2k-3,2k-1,\dots,4k-7\}.
\]
Then the coefficient of \(x^m\) in \(W(x)\) equals \(1\) if \(m\in S\) and \(0\) otherwise.
Since the coefficient of \(x^i\) in \(U(x)\) is \(1\) for \(0\le i\le 2k-3\) (and \(0\) otherwise),
the coefficient of \(x^n\) in \(H_k^\star(x)=U(x)W(x)\) is
\[
c_{k,n}=\#\{m\in S:\ n-(2k-3)\le m\le n\}
=\#\bigl(S\cap [\,n-(2k-3),\,n\,]\bigr).
\]
We show that \(c_{k,n}=a_{k,n}\) for every \(0\le n\le 6k-10\). This proves \(H_k^\star(x)=H_k(x)\).

\medskip
\noindent\textbf{Range I: \(0\le n\le 2k-5\).}
Here \(n<2k-3\), so \(S\cap[\,n-(2k-3),\,n\,]\) contains no element from the odd block
\(\{2k-3,2k-1,\dots\}\), and it contains precisely the even integers in \([0,n]\).
Thus
\[
c_{k,n}=\#\{0,2,4,\dots\le n\}=\left\lfloor \frac n2\right\rfloor+1=a_{k,n}.
\]

\medskip
\noindent\textbf{Range II: \(2k-4\le n\le 4k-6\).}
Put \(L:=n-(2k-3)\) (so \(-1\le L\le 2k-3\) in this range). Split \(S=E\cup O\) with
\[
E:=\{0,2,\dots,2k-4\},\qquad O:=\{2k-3,2k-1,\dots,4k-7\}.
\]
We count \(E\cap[L,n]\) and \(O\cap[L,n]\) separately.

First, since \(n\ge 2k-4\), the interval \([L,n]\) contains all evens up to \(2k-4\) except those \(<L\).
The set \(E\) has exactly \(k-1\) elements. The number of even integers \(<L\) (and \(\ge 0\)) equals
\(\left\lfloor\dfrac{L+1}{2}\right\rfloor\) for every integer \(L\ge -1\).
Hence
\[
\#(E\cap[L,n])=(k-1)-\left\lfloor\frac{L+1}{2}\right\rfloor.
\]
Second, since \(L\le 2k-3\), the interval \([L,n]\) meets the odd block \(O\) starting at \(2k-3\), and it
contains precisely those odds between \(2k-3\) and \(n\). The number of such odds is
\[
\#(O\cap[L,n])=\left\lfloor\frac{n-(2k-3)}{2}\right\rfloor+1=\left\lfloor\frac{L}{2}\right\rfloor+1.
\]
Adding,
\[
c_{k,n}=(k-1)-\left\lfloor\frac{L+1}{2}\right\rfloor+\left\lfloor\frac{L}{2}\right\rfloor+1
= k+\left\lfloor\frac{L}{2}\right\rfloor-\left\lfloor\frac{L+1}{2}\right\rfloor.
\]
Now \(\left\lfloor\frac{L}{2}\right\rfloor-\left\lfloor\frac{L+1}{2}\right\rfloor=0\) if \(L\) is even and equals \(-1\) if \(L\) is odd.
Thus \(c_{k,n}=k\) if \(L\) is even and \(c_{k,n}=k-1\) if \(L\) is odd.
Since \(L=n-(2k-3)\) and \(2k-3\) is odd, the parity condition ``\(L\) even'' is equivalent to ``\(n\) odd''.
Therefore
\[
c_{k,n}=k-1+\bigl(n\bmod 2\bigr)
= k-1+\bigl((n-(2k-4))\bmod 2\bigr)=a_{k,n}.
\]

\medskip
\noindent\textbf{Range III: \(4k-5\le n\le 6k-10\).}
Again put \(L=n-(2k-3)\). Then \(2k-2\le L\le 4k-7\).
The interval \([L,n]\) lies entirely above \(2k-2\), so it contains no element of \(E=\{0,2,\dots,2k-4\}\).
Also, since \(n\ge 4k-5>4k-7\), intersecting with \(O\) truncates at the maximal element \(4k-7\). Hence
\[
c_{k,n}=\#\{t\in \mathbb Z:\ L\le t\le 4k-7,\ t\ \text{odd}\}.
\]
The odd integers in that interval form an arithmetic progression of step \(2\), ending at \(4k-7\), so
\[
c_{k,n}= \left(2k-3\right)-\left\lfloor\frac{L}{2}\right\rfloor
=2k-3-\left\lfloor\frac{n-(2k-3)}{2}\right\rfloor.
\]
Write \(n=(4k-5)+t\) with \(0\le t\le 2k-5\). Then \(n-(2k-3)=2k-2+t\) with \(2k-2\) even, so
\[
\left\lfloor\frac{n-(2k-3)}{2}\right\rfloor
=\left\lfloor\frac{2k-2+t}{2}\right\rfloor
=(k-1)+\left\lfloor\frac{t}{2}\right\rfloor.
\]
Substituting,
\[
c_{k,n}=2k-3-\Bigl((k-1)+\left\lfloor\frac{t}{2}\right\rfloor\Bigr)
= k-2-\left\lfloor\frac{t}{2}\right\rfloor
= k-2-\left\lfloor\frac{n-(4k-5)}{2}\right\rfloor
= a_{k,n}.
\]

\medskip
We have shown \(c_{k,n}=a_{k,n}\) for every \(0\le n\le 6k-10\), hence \(H_k^\star(x)=H_k(x)\), as claimed.
\end{proof}

\textbf{Case $k=3$ from $n=4$}
\[
\begin{aligned}
n &\equiv 0 \pmod{4} :&\quad D_3(n) &= \frac{n^2}{8}, \\[6pt]
n &\equiv 1 \pmod{4} :&\quad D_3(n) &= \frac{n^2 - 1}{8}, \\[6pt]   
n &\equiv 2 \pmod{4} :&\quad D_3(n) &= \frac{n^2 + 4}{8}, \\[6pt]
n &\equiv 3 \pmod{4} :&\quad D_3(n) &= \frac{n^2 - 1}{8}.
\end{aligned}
\]

\[
\mathrm{OGF}
=\frac{x^{10}+x^{9}+2x^{8}+3x^{7}+2x^{6}+3x^{5}+2x^{4}+x^{3}+x^{2}}{(1-x^{4})^{3}}.
\]

\begin{table}[h]
\centering
\begin{tabular}{cccc}
\toprule
Nonnegative & Symmetric & Unimodal & Weakly log concave \\
\midrule
True & False & False & False \\
\bottomrule
\end{tabular}
\end{table}

\begin{figure}[H]
    \centering
    \includegraphics[width=0.75\linewidth]{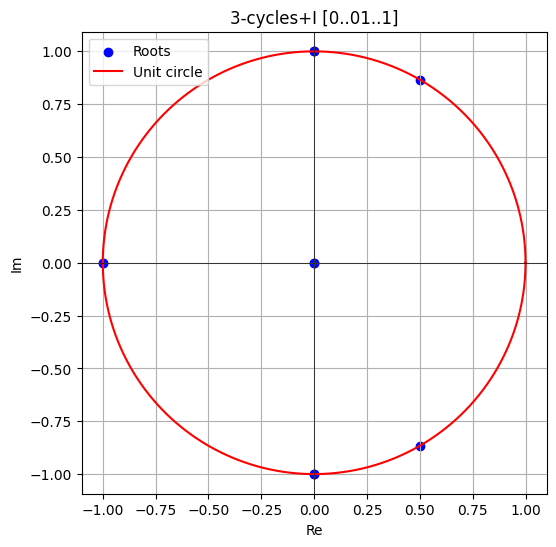}
    \caption{Consecutive 3-cycles +inv [0..01..1]}
    \label{fig:cons3inv_0011}
\end{figure}

\textbf{Case $k=4$ from $n=11$}
\[
\begin{aligned}
n &\equiv 0 \pmod{6}:  &\quad D_4(n) &= \dfrac{n^2}{12},\\[6pt]
n &\equiv 1,5 \pmod{6}: &\quad D_4(n) &= \dfrac{n^2 - 1}{12},\\[6pt]
n &\equiv 2,4 \pmod{6}: &\quad D_4(n) &= \dfrac{n^2 + 8}{12},\\[6pt]
n &\equiv 3 \pmod{6}:   &\quad D_4(n) &= \dfrac{n^2 + 3}{12}.
\end{aligned}
\]

\[
\mathrm{OGF}
=\frac{H(x)}{(1-x^{6})^{3}},
\]
where
\[
\begin{aligned}
H(x)=\;&x^{16}+x^{15}+2x^{14}+2x^{13}+3x^{12}+4x^{11}+3x^{10}+4x^{9} \\
&+3x^{8}+4x^{7}+3x^{6}+2x^{5}+2x^{4}+x^{3}+x^{2}.
\end{aligned}
\]

\begin{table}[h]
\centering
\begin{tabular}{cccc}
\toprule
Nonnegative & Symmetric & Unimodal & Weakly log concave \\
\midrule
True & False & False & False \\
\bottomrule
\end{tabular}
\end{table}

\begin{figure}[H]
    \centering
    \includegraphics[width=0.75\linewidth]{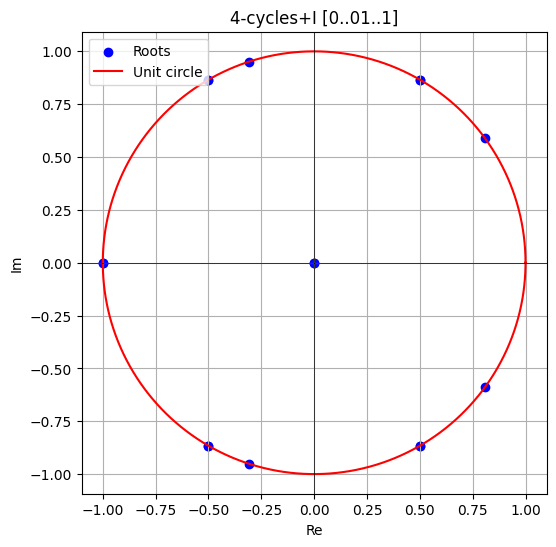}
    \caption{Consecutive 4-cycles +inv [0..01..1]}
    \label{fig:cons4inv_0011}
\end{figure}

\textbf{Case $k=5$ from $n=10$}
\[
\begin{aligned}
n &\equiv 0 \pmod{8}:&\; D_5(n) &= \frac{n^2}{16},\\[4pt]
n &\equiv 1,7 \pmod{8}:&\; D_5(n) &= \frac{n^2-1}{16},\\[4pt]
n &\equiv 2,6 \pmod{8}:&\; D_5(n) &= \frac{n^2+12}{16},\\[4pt]
n &\equiv 3,5 \pmod{8}:&\; D_5(n) &= \frac{n^2+7}{16},\\[4pt]
n &\equiv 4 \pmod{8}:&\; D_5(n) &= \frac{n^2+16}{16}.
\end{aligned}
\]

\[
\mathrm{OGF}
=\frac{H(x)}{(1-x^{8})^{3}},
\]
where
\[
\begin{aligned}
H(x)=\;&x^{22}+x^{21}+2x^{20}+2x^{19}+3x^{18}+3x^{17}+4x^{16}+5x^{15} \\
&+4x^{14}+5x^{13}+4x^{12}+5x^{11}+4x^{10}+5x^{9}+4x^{8}+3x^{7} \\
&+3x^{6}+2x^{5}+2x^{4}+x^{3}+x^{2}.
\end{aligned}
\]

\begin{figure}[H]
    \centering
    \includegraphics[width=0.75\linewidth]{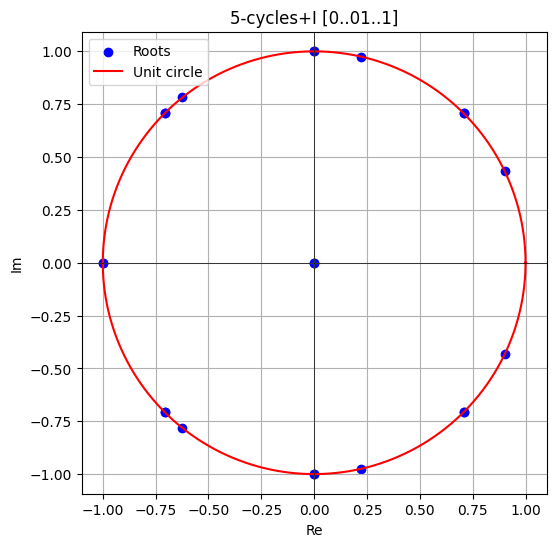}
    \caption{Consecutive 5-cycles +inv [0..01..1]}
    \label{fig:cons5inv_0011}
\end{figure}

\begin{table}[h]
\centering
\begin{tabular}{cccc}
\toprule
Nonnegative & Symmetric & Unimodal & Weakly log concave \\
\midrule
True & False & False & False \\
\bottomrule
\end{tabular}
\end{table}

\textbf{Case $k=6$, inverse-closed (empirical fit from $n\ge 12$)}
\[
\begin{aligned}
n &\equiv 0 \pmod{10} \;:\;& 
D_6(n) &= \frac{n^2}{20}, \\[6pt]
n &\equiv 1 \pmod{10} \;:\;& 
D_6(n) &= \frac{n^2 - 1}{20}, \\[6pt]
n &\equiv 2 \pmod{10} \;:\;& 
D_6(n) &= \frac{n^2 + 16}{20}, \\[6pt]
n &\equiv 3 \pmod{10} \;:\;& 
D_6(n) &= \frac{n^2 + 11}{20}, \\[6pt]
n &\equiv 4 \pmod{10} \;:\;& 
D_6(n) &= \frac{n^2 + 24}{20}, \\[6pt]
n &\equiv 5 \pmod{10} \;:\;& 
D_6(n) &= \frac{n^2 + 15}{20}, \\[6pt]
n &\equiv 6 \pmod{10} \;:\;& 
D_6(n) &= \frac{n^2 + 24}{20}, \\[6pt]
n &\equiv 7 \pmod{10} \;:\;& 
D_6(n) &= \frac{n^2 + 11}{20}, \\[6pt]
n &\equiv 8 \pmod{10} \;:\;& 
D_6(n) &= \frac{n^2 + 16}{20}, \\[6pt]
n &\equiv 9 \pmod{10} \;:\;& 
D_6(n) &= \frac{n^2 - 1}{20}.
\end{aligned}
\]

\[
\mathrm{OGF}
=\frac{H(x)}{(1-x^{10})^{3}},
\]
where
\[
\begin{aligned}
H(x)=\;&x^{28}+x^{27}+2x^{26}+2x^{25}+3x^{24}+3x^{23}+4x^{22}+4x^{21} \\
&+5x^{20}+6x^{19}+5x^{18}+6x^{17}+5x^{16}+6x^{15}+5x^{14}+6x^{13} \\
&+5x^{12}+6x^{11}+5x^{10}+4x^{9}+4x^{8}+3x^{7}+3x^{6}+2x^{5}+2x^{4}+x^{3}+x^{2}.
\end{aligned}
\]

\begin{table}[h]
\centering
\begin{tabular}{cccc}
\toprule
Nonnegative & Symmetric & Unimodal & Weakly log concave \\
\midrule
True & False & False & False \\
\bottomrule
\end{tabular}
\end{table}

\begin{figure}[H]
    \centering
    \includegraphics[width=0.75\linewidth]{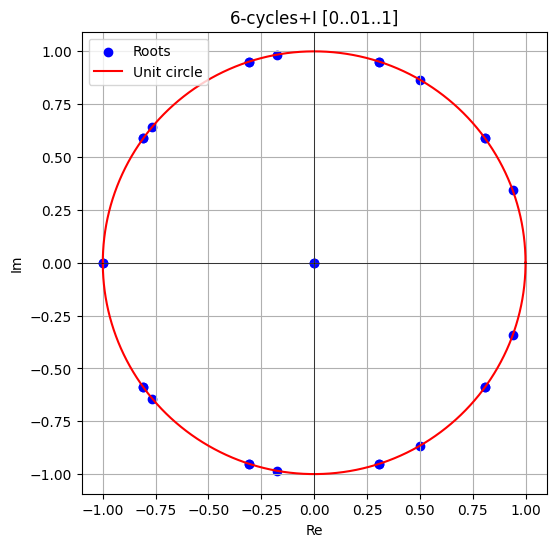}
    \caption{Consecutive 6-cycles +inv [0..01..1]}
    \label{fig:cons6inv_0011}
\end{figure}

\begin{table}[H]
\centering
\small
\setlength{\tabcolsep}{4pt}
\begin{tabular}{rrrrrrrrrrrrr}
\toprule
$n$ & \multicolumn{3}{c}{k=2} & \multicolumn{3}{c}{k=3} & \multicolumn{3}{c}{k=4} & \multicolumn{3}{c}{k=5} \\
\cmidrule(lr){2-4} \cmidrule(lr){5-7} \cmidrule(lr){8-10} \cmidrule(lr){11-13}
 & $d$ & $\Delta d$ & $\Delta^2 d$ & $d$ & $\Delta d$ & $\Delta^2 d$ & $d$ & $\Delta d$ & $\Delta^2 d$ & $d$ & $\Delta d$ & $\Delta^2 d$ \\
\midrule
4 & 4 & 2 & 1 & 2 & 1 & 1 & 2 & 1 & 0 &  &  &  \\
5 & 6 & 3 & 0 & 3 & 2 & -1 & 3 & 1 & -1 & 2 & 1 & 0 \\
6 & 9 & 3 & 1 & 5 & 1 & 1 & 4 & 0 & 2 & 3 & 1 & 0 \\
7 & 12 & 4 & 0 & 6 & 2 & 0 & 4 & 2 & -1 & 4 & 1 & -1 \\
8 & 16 & 4 & 1 & 8 & 2 & 1 & 6 & 1 & 1 & 5 & 0 & 2 \\
9 & 20 & 5 & 0 & 10 & 3 & -1 & 7 & 2 & -1 & 5 & 2 & -1 \\
10 & 25 & 5 & 1 & 13 & 2 & 1 & 9 & 1 & 1 & 7 & 1 & 1 \\
11 & 30 & 6 & 0 & 15 & 3 & 0 & 10 & 2 & 0 & 8 & 2 & -1 \\
12 & 36 & 6 & 1 & 18 & 3 & 1 & 12 & 2 & 1 & 10 & 1 & 1 \\
13 & 42 & 7 & 0 & 21 & 4 & -1 & 14 & 3 & -1 & 11 & 2 & -1 \\
14 & 49 & 7 & 1 & 25 & 3 & 1 & 17 & 2 & 1 & 13 & 1 & 1 \\
15 & 56 & 8 & 0 & 28 & 4 & 0 & 19 & 3 & -1 & 14 & 2 & 0 \\
16 & 64 & 8 & 1 & 32 & 4 & 1 & 22 & 2 & 1 & 16 & 2 & 1 \\
17 & 72 & 9 & 0 & 36 & 5 & -1 & 24 & 3 & 0 & 18 & 3 & -1 \\
18 & 81 & 9 & 1 & 41 & 4 & 1 & 27 & 3 & 1 & 21 & 2 & 1 \\
19 & 90 & 10 & 0 & 45 & 5 & 0 & 30 & 4 & -1 & 23 & 3 & -1 \\
20 & 100 & 10 & 1 & 50 & 5 & 1 & 34 & 3 & 1 & 26 & 2 & 1 \\
21 & 110 & 11 & 0 & 55 & 6 & -1 & 37 & 4 & -1 & 28 & 3 & -1 \\
22 & 121 & 11 & 1 & 61 & 5 & 1 & 41 & 3 & 1 & 31 & 2 & 1 \\
23 & 132 & 12 & 0 & 66 & 6 & 0 & 44 & 4 & 0 & 33 & 3 & 0 \\
24 & 144 & 12 & 1 & 72 & 6 & 1 & 48 & 4 & 1 & 36 & 3 & 1 \\
25 & 156 & 13 & 0 & 78 & 7 & -1 & 52 & 5 & -1 & 39 & 4 & -1 \\
26 & 169 & 13 & 1 & 85 & 6 & 1 & 57 & 4 & 1 & 43 & 3 & 1 \\
27 & 182 & 14 & 0 & 91 & 7 & 0 & 61 & 5 & -1 & 46 & 4 & -1 \\
28 & 196 & 14 & 1 & 98 & 7 & 1 & 66 & 4 & 1 & 50 & 3 & 1 \\
29 & 210 & 15 &  & 105 & 8 &  & 70 & 5 &  & 53 & 4 & -1 \\
30 & 225 &  &  & 113 &  &  & 75 &  &  & 57 & 3 & 1 \\
31 &  &  &  &  &  &  &  &  &  & 60 & 4 & 0 \\
32 &  &  &  &  &  &  &  &  &  & 64 & 4 &  \\
33 &  &  &  &  &  &  &  &  &  & 68 &  &  \\
\bottomrule
\end{tabular}
\caption{Diameters $D_k(n)$ and increments, consecutive $k$-cycle, inverse closed }
\label{tab:consecutive_cycle_growth_k=2..5_inv}
\end{table}

\clearpage
\subsection{\texorpdfstring{Schreier coset graph: $S_n / \bigl(S_{l} \times S_{n-l}\bigr)$ (``$k$-shrunken'' Grassmannian $Gr(l,n,k)$)}{Schreier coset graph: S_n/(S_l x S_{n-l}) ("k-shrunken" Grassmannian Gr(l,n,k))}}


Here we present conjectural formula for the diameters of Schreier coset graphs of the form
$S_n / \bigl(S_{l} \times S_{n-l}\bigr)$. 
The graph can be equivalently described as follows: its vertices correspond to binary vectors with components in $\{0,1\}$, containing exactly $l$ zeros and $n-l$ ones.
In \texttt{CayleyPy}, we define these graphs by setting the \texttt{central\_state} to be
$
[0]^l + [1]^{n-l}.
$
The generating set consists of consecutive $k$-cycles. We discuss both cases
inverse-closed and not in the present section.

The leading term of the formulas are  $l(n-l)/(k-1)$ as it is expected from the $(k-1)$-shrinkage principle,
since for the standard $k=2$ (Coxeter or neighbor transposition) case the diameter is exactly 
$l(n-l)$. 

Informally, one may think of this graph as a ``$k$-shrunken'' version of Grassmannian
$Gr(l, n, k)$ over the field with one element, since for $k=2$ by standard analogies it is indeed
Grassmanian over field with one element. From the point of view of that analogy
diameter corresponds to dimension of the manifold, and Poincare polynomial correspond to growth polynomial of the graph.

{\bf Striking new phenomena - bi-variable quasi-polynomiality} of the diameter formulas,
i.e. formulas behave as quasi-polynomials in both variables $n$ and $n-l$. 
It is expected to be true in larger generality e.g. for more general vector with repeats  ("partial flag manifolds") and
more general families of generators. 

\begin{Conj} (Not inverse closed case).
For consecutive $k$-cycle generators not inverse closed, the diameters of the 
Schreier coset graph: $S_n / \bigl(S_{l} \times S_{n-l}\bigr)$ (``$k$-shrunken''  Grassmannian $Gr(l,n,k)$)
are given by quasi-polynomials in two variables $n$ and $t = n-(l+1)$.

For $k=3$ with  period~$2$:
\[
D_l(n) = l\!\left(\left\lfloor \frac{t}{2} \right\rfloor + 1\right) + (t \bmod 2),
\qquad
t = n - (l+1), \quad n \ge l+1.
\]
Under the condition:  $l \ge 3$. For $l=2$, the formula is: 
$D_2(n) = n - 1, \qquad n \ge 4.$

For $k=4$ with  period~$3$ (again $t = n - (l+1)$):
\[
D_l(n) =
\begin{cases}
l, & t = 0, \\[2pt]
l+1, & t = 1, \\[2pt]
l+2, & t = 2, \\[6pt]
l\!\left( 2 + \left\lfloor \dfrac{t-3}{3} \right\rfloor \right),
& t \ge 3,\; t \equiv 0,1 \pmod{3}, \\[10pt]
l\!\left( 2 + \left\lfloor \dfrac{t-3}{3} \right\rfloor \right) + 1,
& t \ge 3,\; t \equiv 2 \pmod{3}.
\end{cases}
\]

For $k=5$ with  period~$4$ (again $t = n - (l+1)$,
for all $L \ge 4$, starting from
$ t+1 \ge 9 \qquad \text{that is,} \qquad N \ge L + 9$):
\[
D(L,N) = L \left\lfloor \frac{t+1}{4} \right\rfloor + \mathbf{1}_{(t+1) \equiv 0 \; (\mathrm{mod}\ 4)} .
\]

\end{Conj}

\begin{Conj} (Inverse closed case). Same setup as above , but inverse closed case.

For $k=3$ from $L\ge 2$(?): 
\[
D(L, N) = 
\left\lceil \frac{L (N - L)}{2} \right\rceil
\]

For $k=4$ from $L\ge 2$: 
\[
D(L, N) = 
\begin{cases} 
\frac{L}{3}(N - L), & 3 \mid L, \\[10pt]
\left\lfloor \frac{L(N - L) + 2}{3} \right\rfloor, & 3 \nmid L.
\end{cases}
\]

For $k=5$ for $N,L,N-L$ large enough: 
\[
D(L, N) = \left\lfloor \frac{LN + 2}{4} \right\rfloor - 2\left\lfloor \frac{L^2}{8} \right\rfloor - \mathbf{1}_{L \equiv 2 \pmod{4}} \, \mathbf{1}_{N \equiv 2 \pmod{4}}
\]

\end{Conj}

\clearpage
\subsection{Schreier coset graph: "few coincide"}
Here we present some partial results on diameters of the Schreier coset graphs of the form $S_n/S_D$ which can be equivalently described as graphs with nodes corresponding to vectors where $D$ elements coincide.
In CayleyPy we define them by setting "central\_state" to be 
 $(0,1,2,...n-D-1,n-D, n-D, ... , n-D)$ ($D$ coincide at the end). 
The generators are the  same - consecutive $k$ cycles, as in the everywhere in this section.

\subsubsection{Inverse closed case}

\begin{Conj}

For $k=3$ and coincide 4 the next quasi polynomial formulas are correct from $n=5$:

\[
\begin{aligned}
n&\equiv 0,1 \pmod{4}: & D_3(n)&=\frac{n^2-n-12}{4},\\[4pt]
n&\equiv 2,3 \pmod{4}: & D_3(n)&=\frac{n^2-n-10}{4},\\[4pt]
\end{aligned}
\]

For $k=3$ and coincide 3 the next quasi polynomial formulas are correct from $n=2$:

\[
\begin{aligned}
n&\equiv 0,1 \pmod{4}: & D_3(n)&=\frac{n^2-n-4}{4},\\[4pt]
n&\equiv 2,3 \pmod{4}: & D_3(n)&=\frac{n^2-n-6}{4},\\[4pt]
\end{aligned}
\]
\end{Conj}


For $k = 3$ and coincide 3 D: 
\[
\mathrm{OGF}
=\frac{H(x)}{(1-x^{4})^{3}},
\]
where
\[
\begin{aligned}
H(x)=- x^{12} - x^{11} + 2 x^{9} + 7 x^{8} + 9 x^{7} + 9 x^{6} + 7 x^{5} + 2 x^{4} - x^{2} - x.
\end{aligned}
\]

For $k = 3$ and coincide 4 D: 
\[
\mathrm{OGF}
=\frac{H(x)}{(1-x^{4})^{3}},
\]
where
\[
\begin{aligned}
H(x)=- 3 x^{12} - 2 x^{11} - x^{10} + 11 x^{8} + 11 x^{7} + 11 x^{6} + 11 x^{5} - x^{3} - 2 x^{2} - 3 x
\end{aligned}
\]

For the case $k=4$ and the consecutive cycles with inverses coset coincide, God's number $D_{4}(n)$ for different coincidence conditions $D$ is described by the following quasipolynomials:

For coincide 3D (from $n = 7$):

\[
D_{4}(n) = \left\lceil \frac{n(n-1)}{6} \right\rceil - 1
\]

For coincide 4D (from $n = 8$):
\[
D_{4}(n) = \left\lceil \frac{n(n-1)}{6} \right\rceil - 2
\]

For $k = 4$ and coincide 3 D: 
\[
\mathrm{OGF}
=\frac{H(x)}{(1-x^{3})^{3}},
\]
where
\[
\begin{aligned}
H(x)=- x^{9} + 4 x^{6} + 3 x^{5} + 4 x^{4} - x
\end{aligned}
\]

For $k = 4$ and coincide 4 D: 
\[
\mathrm{OGF}
=\frac{H(x)}{(1-x^{3})^{3}},
\]
where
\[
\begin{aligned}
H(x)=- 2 x^{9} - x^{8} - x^{7} + 6 x^{6} + 5 x^{5} + 6 x^{4} - x^{3} - x^{2} - 2 x
\end{aligned}
\]

where $n$ is the length of the graph.

\begin{figure}[H]
    \centering
    \includegraphics[width=0.6\linewidth]{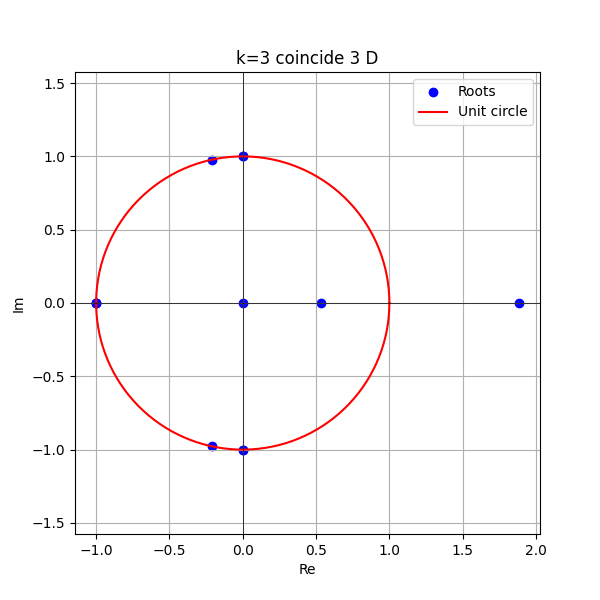}
    \caption{Consecutive k=3 +inv (coincide 3)}
    \label{fig:consk3inv_3D}
\end{figure}

\begin{figure}[H]
    \centering
    \includegraphics[width=0.6\linewidth]{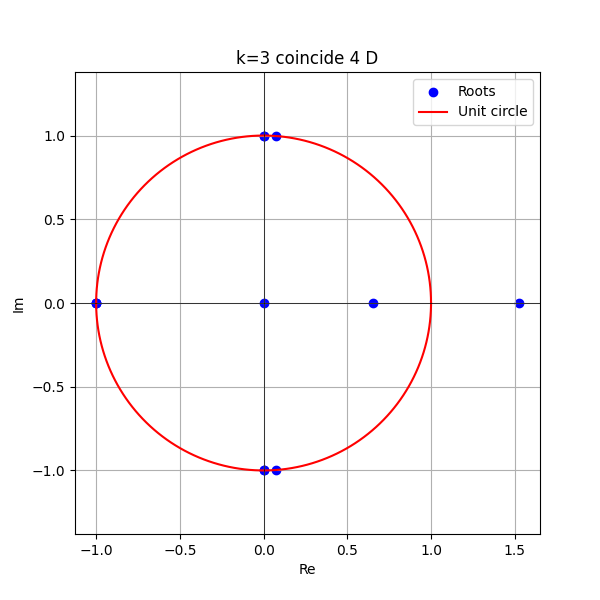}
    \caption{Consecutive k=3 +inv (coincide 4)}
    \label{fig:consk3inv_4D}
\end{figure}

\begin{figure}[H]
    \centering
    \includegraphics[width=0.6\linewidth]{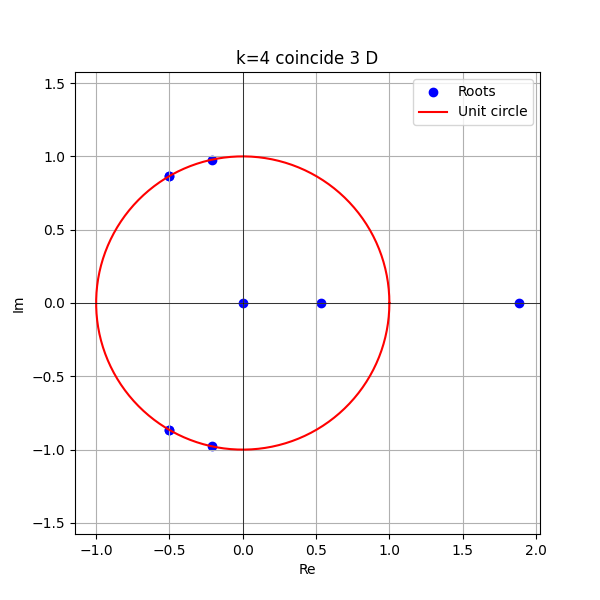}
    \caption{Consecutive k=4 +inv (coincide 3)}
    \label{fig:consk4inv_3D}
\end{figure}

\begin{figure}[H]
    \centering
    \includegraphics[width=0.6\linewidth]{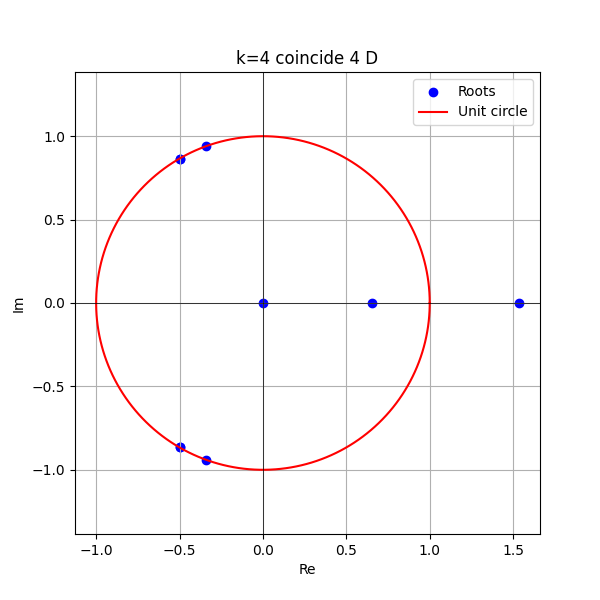}
    \caption{Consecutive k=4 +inv (coincide 4)}
    \label{fig:consk4inv_4D}
\end{figure}

\begin{table}[htbp]
\centering
\caption{God's numbers for consecutive cycles \emph{with inverses} coset: 
central state  $(0,1,2,...n-D-1,n-D, n-D, ... , n-D)$ ($D$ coincide at the end) }
\label{tab:coset-with-inverses}
\begin{tabular}{lccccc}
\toprule
\textbf{k} & \textbf{n} & \textbf{no coincide D} & \textbf{coincide 2 D} & \textbf{coincide 3 D} & \textbf{coincide 4 D}\\
\midrule
3 & 4 & 3 & 3 & 2 & --\\
3 & 5 & 5 & 5 & 4 & 2\\
3 & 6 & 7 & 7 & 6 & 5\\
3 & 7 & 10 & 10 & 9 & 8\\
3 & 8 & 14 & 14 & 13 & 11\\
3 & 9 & 18 & 18 & 17 & 15\\
3 & 10 & 22 & 22 & 21 & 20\\
3 & 11 & 27 & 27 & 26 & 25\\
3 & 12 & 33 & 33 & 32 & 30\\
\midrule[1.2pt]
4 & 5 & 7 & 5 & 3 & 2\\
4 & 6 & 7 & 5 & 5 & 4\\
4 & 7 & 8 & 7 & 6 & 6\\
4 & 8 & 10 & 9 & 9 & 8\\
4 & 9 & 13 & 12 & 11 & 10\\
4 & 10 & 16 & 15 & 14 & 13\\
4 & 11 & 19 & 18 & 18 & 17\\
4 & 12 & 23 & 22 & 21 & 20\\
4 & 13 &  &  & 25 & 24\\
\midrule[1.2pt]
5 & 6 & 9 & 9 & 7 & 4\\
5 & 7 & 8 & 8 & 7 & 5\\
5 & 8 & 8 & 8 & 7 & 7\\
5 & 9 & 9 & 9 & 9 & 8\\
5 & 10 & 11 & 11 & 11 & 10\\
5 & 11 & 14 & 14 & 13 & 13\\
5 & 12 & 17 & 17 & 16 & 15\\
\midrule[1.2pt]
6 & 7 & 13 & 11 & 9 & 8\\
6 & 8 & 12 & 10 & 9 & 7\\
6 & 9 & 12 & 10 & 9 & 8\\
6 & 10 & 12 & 10 & 10 & 9\\
6 & 11 & 13 & 12 & 11 & 10\\
6 & 12 & 14 & 13 & 13 & 12\\
\bottomrule
\end{tabular}
\end{table}

\clearpage
\subsubsection{Not inverse closed case}

\setlength{\LTpre}{0pt}
\setlength{\LTpost}{0pt}

\begin{longtable}{lccccccc}
\caption{God's numbers for consecutive cycles \emph{without inverses} coset: 
central state $(0,1,2,\dots,n-D-1,n-D,n-D,\dots,n-D)$ ($D$ coincide at the end)}
\label{tab:coset-without-inverses}\\

\toprule
\textbf{k} & \textbf{n} & \textbf{no c. D} & \textbf{D = 2} & \textbf{D = 3} & \textbf{D = 4} & \textbf{D = 5} & \textbf{D = 6}\\
\midrule
\endfirsthead

\toprule
\textbf{k} & \textbf{n} & \textbf{no c. D} & \textbf{D = 2} & \textbf{D = 3} & \textbf{D = 4} & \textbf{D = 5} & \textbf{D = 6}\\
\midrule
\endhead

\midrule
\multicolumn{8}{r}{\emph{Continued on next page}}
\\
\endfoot

\bottomrule
\endlastfoot
2 & 3 & 3 & 2 & -- & -- & -- & --\\
2 & 4 & 6 & 5 & 3 & -- & -- & --\\
2 & 5 & 10 & 9 & 7 & 4 & -- & --\\
2 & 6 & 15 & 14 & 12 & 9 & 5 & --\\
2 & 7 & 21 & 20 & 18 & 15 & 11 & 6\\
2 & 8 & 28 & 27 & 25 & 22 & 18 & 13\\
2 & 9 & 36 & 35 & 33 & 30 & 26 & 21\\
2 & 10 & 45 & 44 & 42 & 39 & 35 & 30\\
2 & 11 & 55 & 54 & 52 & 49 & 45 & 40\\
2 & 12 & 66 & 65 & 63 & 60 & 56 & 51\\
2 & 13 & 78 & 77 & 75 & 72 & 68 & 63\\
2 & 14 & ?? & ?? & 88 & 85 & 81 & 76\\
\midrule[1.2pt]

3 & 4 & 4 & 4 & 3 & -- & -- & --\\
3 & 5 & 6 & 6 & 5 & 3 & -- & --\\
3 & 6 & 9 & 9 & 8 & 6 & 4 & --\\
3 & 7 & 12 & 12 & 11 & 9 & 7 & 4\\
3 & 8 & 16 & 16 & 15 & 13 & 11 & 8\\
3 & 9 & 20 & 20 & 19 & 17 & 15 & 12\\
3 & 10 & 25 & 25 & 24 & 22 & 20 & 17\\
3 & 11 & 30 & 30 & 29 & 27 & 25 & 22\\
3 & 12 & 36 & 36 & 35 & 33 & 31 & 28\\
3 & 13 & 42 & 42 & 41 & 39 & 37 & 34\\
3 & 14 & ?? & ?? & 48 & 46 & 44 & 41\\
\midrule[1.2pt]

4 & 5 & 9 & 8 & 5 & 4 & -- & --\\
4 & 6 & 9 & 7 & 6 & 5 & 4 & --\\
4 & 7 & 10 & 9 & 8 & 7 & 6 & 4\\
4 & 8 & 13 & 12 & 11 & 10 & 8 & 7\\
4 & 9 & 16 & 15 & 14 & 13 & 11 & 9\\
4 & 10 & 19 & 18 & 17 & 16 & 14 & 12\\
4 & 11 & 23 & 22 & 21 & 20 & 18 & 16\\
4 & 12 & 27 & 26 & 25 & 24 & 22 & 20\\
4 & 13 & 31 & 30 & 29 & 28 & 26 & 24\\
4 & 14 & ?? & ?? & ?? & 33 & 31 & 29\\
\midrule[1.2pt]

5 & 6 & 11 & 11 & 9 & 6 & 5 & --\\
5 & 7 & 10 & 10 & 9 & 7 & 6 & 5\\
5 & 8 & 11 & 11 & 10 & 8 & 7 & 6\\
5 & 9 & 13 & 13 & 12 & 10 & 9 & 8\\
5 & 10 & 15 & 15 & 14 & 13 & 12 & 10\\
5 & 11 & 18 & 18 & 17 & 16 & 15 & 13\\
5 & 12 & 21 & 21 & 20 & 19 & 18 & 16\\
5 & 13 & 24 & 24 & 23 & 22 & 21 & 19\\
5 & 14 & ?? & ?? & ?? & 26 & 25 & 23\\
\midrule[1.2pt]

6 & 7 & 19 & 16 & 13 & 11 & 7 & 6\\
6 & 8 & 14 & 13 & 11 & 10 & 8 & 7\\
6 & 9 & 14 & 13 & 11 & 10 & 9 & 8\\
6 & 10 & 15 & 13 & 12 & 11 & 10 & 9\\
6 & 11 & 17 & 15 & 14 & 13 & 12 & 11\\
6 & 12 & 19 & 18 & 17 & 16 & 15 & 14\\
6 & 13 & 22 & 21 & 20 & 19 & 18 & 17\\
6 & 14 & ?? & ?? & ?? & 22 & 21 & 20\\

\end{longtable}

\clearpage
\subsection{Schreier coset graph:  "L-Different"  inverse closed}
Here we present some partial results on diameters of the Schreier coset graphs of the form $S_n/S_{n-L}$ which can be equivalently described as graphs with nodes corresponding to vectors where only $L$ elements are different.
In CayleyPy we define them by setting the central state to 
$$
(0,1,2,\ldots,L-2,L-1,\ldots, L-1) = 0^11^12^1\ldots (L-2)^1(L-1)^{n - L - 1}.
$$
The generators are the same --- consecutive $k$ cycles, as everywhere in this section. The next two conjectures for 2- and 3-Different coset graphs are based on the data obtained in this \href{https://www.kaggle.com/code/fedmug/cayleypy-l-different-k-cycles-growth}{Kaggle notebook}. 

\subsubsection{2-Different coset}
\begin{Conj}
The diameter for $k$-cycles 2-Different coset graph with central state of length $n$ is
\begin{equation*}
    d_k(n) = \Big\lfloor \frac{2n + k(k-4)}{2k-2} \Big\rfloor + 1 - \mathbb I[k \text{ is odd}, 2 \leqslant n \mod (k - 1) \leqslant (k-1)/2]
\end{equation*}
\end{Conj}

\subsubsection{3-Different}


\begin{Conj}
We conjecture that the diameter for $k$-cycles 3-Different coset graph with central state of length $n$ equals to
\begin{equation}
    d_k(n) = \left\lfloor \frac{2n + c_k}{k-1} \right\rfloor,\quad
    k \geqslant 2, \quad n \geqslant N_k,
    \label{eq:3diff-formula}
\end{equation}
where $c_2 = -3$, $c_3 = -1$, and for $k\geqslant 4$ the offset $c_k$ is a quadratic function of $k$ whose form depends on the parity of $k$:


\begin{equation}
    c_k = \begin{cases}
        \dfrac{k(k-1)}{2} - 5 & \text{if } k \text{ is even}, \\[6pt]
        \dfrac{(k-1)^2}{2} - 4 & \text{if } k \text{ is odd.}
    \end{cases}
    \label{eq:ck-formula}
\end{equation}
    
\end{Conj}

The indices $N_k$, from which the formula~\eqref{eq:3diff-formula} starts working, and $c_k$ are tabulated in Table~\ref{tab:ldiff-3-2k20}.

\begin{table}[ht]
\centering
\caption{3-Different diameter conjectures for $2 \leqslant k \leqslant 20$}
\label{tab:ldiff-3-2k20}
\begin{tabular}{| l | l | l | l |}
\hline
\textbf{$k$} & \textbf{$N_k$} & \textbf{Offset ($c_k$)} & \textbf{$d_k(n)$} \\
\hline
\textbf{2}  & 3  & \(-3\)  & $2n-3$\\
\textbf{3}  & 4  & \(-1\)  & $n - 1$ \\
\textbf{4}  & 5  & \(1\)   & \(\big\lfloor \frac{2n + 1}{3} \big\rfloor\) \\
\textbf{5}  & 6\rule{0pt}{4mm}  & \(4\)   & \(\big\lfloor \frac{2n + 4}{4} \big\rfloor\) \\
\textbf{6}  & 8\rule{0pt}{4mm} & \(10\)  & \(\left\lfloor \frac{2n + 10}{5} \right\rfloor\) \\
\textbf{7}  & 8\rule{0pt}{4mm}  & \(14\)  & \(\left\lfloor \frac{2n + 14}{6} \right\rfloor\) \\
\textbf{8}  & 10\rule{0pt}{4mm} & \(23\)  & \(\left\lfloor \frac{2n + 23}{7} \right\rfloor\) \\
\textbf{9}  & 18\rule{0pt}{4mm} & \(28\)  & \(\left\lfloor \frac{2n + 28}{8} \right\rfloor\) \\
\textbf{10} & 13\rule{0pt}{4mm} & \(40\)  & \(\left\lfloor \frac{2n + 40}{9} \right\rfloor\) \\
\textbf{11} & 32\rule{0pt}{4mm} & \(46\)  & \(\left\lfloor \frac{2n + 46}{10} \right\rfloor\) \\
\textbf{12} & 25\rule{0pt}{4mm} & \(61\)  & \(\left\lfloor \frac{2n + 61}{11} \right\rfloor\) \\
\textbf{13} & 50\rule{0pt}{4mm} & \(68\)  & \(\left\lfloor \frac{2n + 68}{12} \right\rfloor\) \\
\textbf{14} & 42\rule{0pt}{4mm} & \(86\)  & \(\left\lfloor \frac{2n + 86}{13} \right\rfloor\) \\
\textbf{15} & 72\rule{0pt}{4mm} & \(94\)  & \(\left\lfloor \frac{2n + 94}{14} \right\rfloor\) \\
\textbf{16} & 63\rule{0pt}{4mm} & \(115\) & \(\left\lfloor \frac{2n + 115}{15} \right\rfloor\) \\
\textbf{17} & 98\rule{0pt}{4mm} & \(124\)  & \(\left\lfloor \frac{2n + 124}{16} \right\rfloor\) \\
\textbf{18} & 88\rule{0pt}{4mm} & \(148\) & \(\left\lfloor \frac{2n + 148}{17} \right\rfloor\) \\
\textbf{19} & 128\rule{0pt}{4mm} & \(158\)  & \(\left\lfloor \frac{2n + 158}{18} \right\rfloor\) \\
\textbf{20} & 117\rule{0pt}{4mm} & \(185\) & \(\left\lfloor \frac{2n + 185}{19} \right\rfloor\) \\
\hline
\end{tabular}
\end{table}

The values $N_k$ also eventually seem to obey the quadratic law:
$$
    N_k = \begin{cases}
        \dfrac{k^2 - 9k + 14}2, & k \text{ is even}, \\[6pt]
        \dfrac{(k-3)^2}{2}, & k \text{ is odd.}
    \end{cases}
$$

Our calculations for $L = 4$ show that
$d_k(n)\approx  \left\lfloor \frac{3n}{k-1} + \frac{k-5}2 \right\rfloor$ if $k \geqslant 5$. Combining this observation with two previous conjectures, we propose the following:

\begin{Conj}
The diameter for $k$-cycles L-Different coset graph with central state of length $n$ is
\begin{equation*}
    d_k(n) = \Big\lfloor \frac{Ln}{2(k-1)} + c_k \Big\rfloor +r_k(n),
\end{equation*}
where $c_k = O(k)$, $r_k(n)$ is a "small" remainder (possibly $O(1)$).
\end{Conj}

\subsubsection{Last layer size (3-Different)}

Let $s_k(n)$ denote the size of the last layer of the $k$-cycle 3-Different inverse-closed Schreier coset graph with central state of length $n$. We conjecture that, for fixed $k$, the sequence $s_k(n)$ is eventually periodic in $n$, with behavior depending on the parity of $k$.

\begin{Conj}[Even $k$]
For even $k \ge 6$, we conjecture that, for all sufficiently large $n$, the sequence $s_k(n)$ follows a single periodic pattern of odd values. Let
\begin{equation*}
    C_k = \frac{k(k-1)}{2} - 5,
    \qquad
    m = (2n + C_k) \bmod (k-1).
\end{equation*}
Then
\begin{equation*}
    s_k(n) = m + \big(1 - (m \bmod 2)\big)
    + \left\lfloor \frac{1}{|m - (k-2)| + 1} \right\rfloor
    + 2\left\lfloor \frac{1}{|m - (k-3)| + 1} \right\rfloor.
\end{equation*}
\end{Conj}

\begin{Conj}[Odd $k$]
For odd $k \ge 7$, we conjecture that, for all sufficiently large $n$, the sequence $s_k(n)$ follows a two-half alternating periodic pattern, where each half has length
\begin{equation*}
    H = \frac{k-1}{2}.
\end{equation*}
Let
\begin{equation*}
    C_k = \frac{(k-1)^2}{2} - 4,
    \qquad
    m = \left(n + \frac{C_k}{2} + H\right) \bmod (k-1).
\end{equation*}
Define the two correction locations by
\begin{equation*}
    (m_{\mathrm{origin}}, m_{\mathrm{spike}}) =
    \begin{cases}
        (H,\, 2H-1), & k \equiv 1 \pmod{4},\\
        (0,\, H-1), & k \equiv 3 \pmod{4}.
    \end{cases}
\end{equation*}
Then
\begin{equation*}
    s_k(n) = m + \big(1 - (m \bmod 2)\big)
    + \left\lfloor \frac{1}{|m - m_{\mathrm{origin}}| + 1} \right\rfloor
    + 2\left\lfloor \frac{1}{|m - m_{\mathrm{spike}}| + 1} \right\rfloor.
\end{equation*}
\end{Conj}

\begin{figure}
    \centering
    \includegraphics[width=0.5\linewidth]{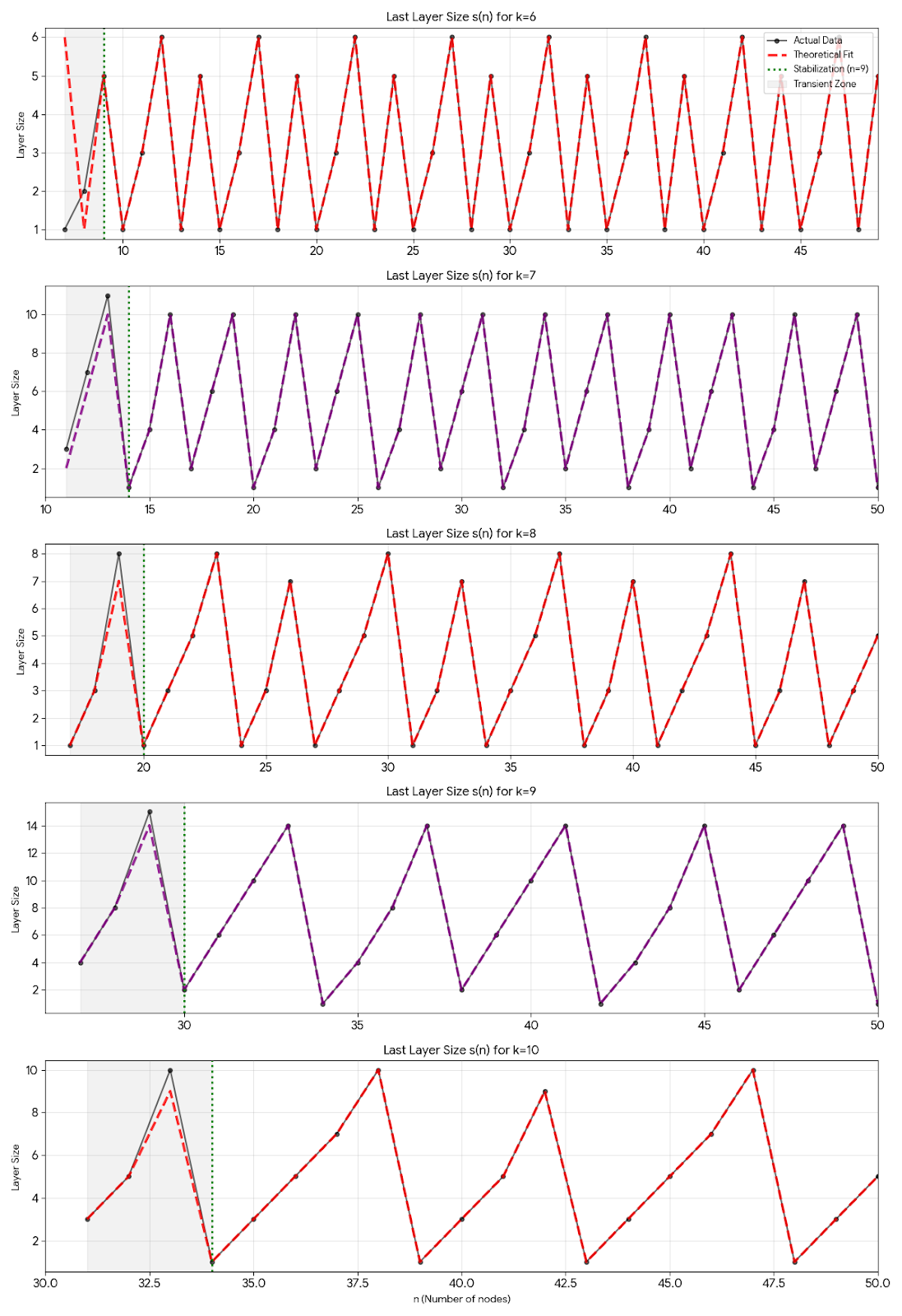}
    \caption{Last layer size for k=6...10. Empirical data coincides with the proposed formula for large enough $n$.}
    \label{fig:ldiff-inv-closed-lastlayer-k-6-10}
\end{figure}

\begin{figure}
    \centering
    \includegraphics[width=0.5\linewidth]{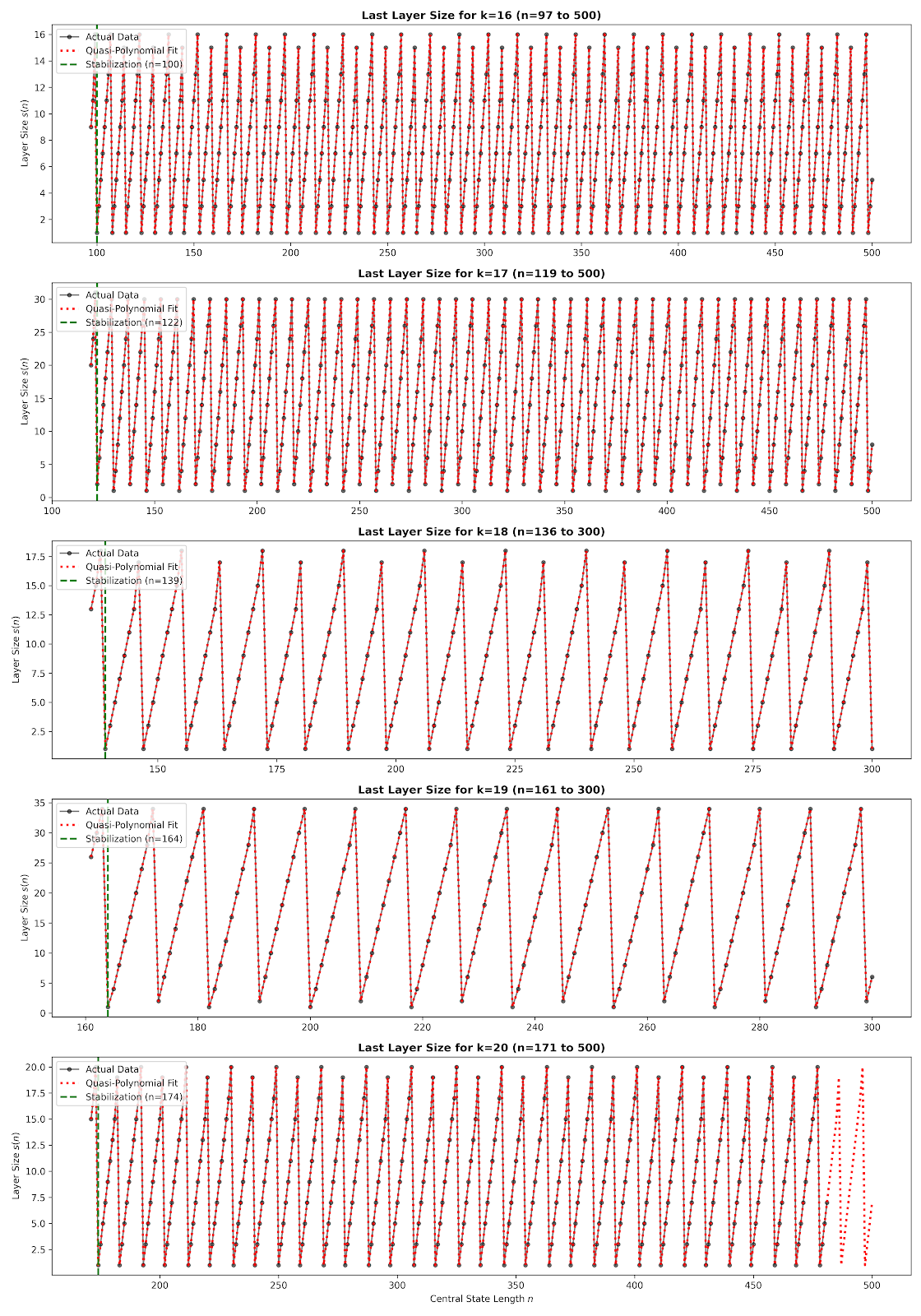}
    \caption{Last layer size for k=16...20. Empirical data coincides with the proposed formula for large enough $n$.}
    \label{fig:ldiff-inv-closed-lastlayer-k-16-20}
\end{figure}

\clearpage
\subsection{Word-metrics for [012]-repeated full flips, 3-cycles}

Here we consider consecutive 3 cycles generators. 
The Schreir coset graph of the form $S_n/(S_{n//3}\times S_{n//3}\times S_{n-2n//3})$, which alternatively can be described as graph with nodes given by taking all vectors with only $0,1,2$ components
where $0$ is repeated $n//3$ times, $1$ is repeated $n//3$ times, and the rest is $2$. 
In CayleyPy such coset Schreier graph is defined by setting "central\_state" appropriately. 

Here we present a conjectural quasi-polynomial expression for the word-metric between two 
families vertices of such graphs.

Let us denote by $r$ residue of $n$ modulo 3 and by $m$ its integer division by 3:
\begin{equation*}
    n = 3m + r \quad \text{with} \quad r \in \{0,1,2\}.
\end{equation*}

Consider the first family of states given by: 
\begin{equation*}
\begin{aligned}
    & [0,1,2]^m \quad \text{if} \quad r = 0 \\
    & [0,1,2]^m + [0] \quad \text{if} \quad r = 1 \\
    & [0,1,2]^m + [0,1] \quad \text{if} \quad r = 2
\end{aligned}    
\end{equation*}
And the other states: 
\begin{equation}
\begin{aligned}
    & [2]^m + [1]^m + [0]^m \quad \text{if} \quad r = 0 \\
    & [2]^m + [1]^m + [0]^{m+1} \quad \text{if} \quad r = 1 \\
    & [2]^m + [1]^{m+1} + [0]^{m+1} \quad \text{if} \quad r = 2
\end{aligned}    
\end{equation}

\begin{Conj}
    \begin{enumerate}
        \item The second states  are  farthest states from the first ones 
        \medskip
        
        \item Starting from $n = 5$ the  distance between them 
        is given by the following degree two quasipolynomial
        \[
        \begin{aligned}
        n &\equiv 0 \pmod{12}:&\; D(n) &= \frac{1}{12}n^2 + \frac{1}{4}n,\\[4pt]
        n &\equiv 1 \pmod{12}:&\; D(n) &= \frac{1}{12}n^2 + \frac{1}{12}n - \frac{1}{6},\\[4pt]
        n &\equiv 2 \pmod{12}:&\; D(n) &= \frac{1}{12}n^2 + \frac{1}{12}n + \frac{1}{2},\\[4pt]
        n &\equiv 3 \pmod{12}:&\; D(n) &= \frac{1}{12}n^2 + \frac{1}{4}n + \frac{1}{2},\\[4pt]
        n &\equiv 4 \pmod{12}:&\; D(n) &= \frac{1}{12}n^2 + \frac{1}{12}n + \frac{1}{3},\\[4pt]
        n &\equiv 5 \pmod{12}:&\; D(n) &= \frac{1}{12}n^2 + \frac{1}{12}n + \frac{1}{2},\\[4pt]
        n &\equiv 6 \pmod{12}:&\; D(n) &= \frac{1}{12}n^2 + \frac{1}{4}n + \frac{1}{2},\\[4pt]
        n &\equiv 7 \pmod{12}:&\; D(n) &= \frac{1}{12}n^2 + \frac{1}{12}n + \frac{1}{3},\\[4pt]
        n &\equiv 8 \pmod{12}:&\; D(n) &= \frac{1}{12}n^2 + \frac{1}{12}n,\\[4pt]
        n &\equiv 9 \pmod{12}:&\; D(n) &= \frac{1}{12}n^2 + \frac{1}{4}n,\\[4pt]
        n &\equiv 10 \pmod{12}:&\; D(n) &= \frac{1}{12}n^2 + \frac{1}{12}n - \frac{1}{6},\\[4pt]
        n &\equiv 11 \pmod{12}:&\; D(n) &= \frac{1}{12}n^2 + \frac{1}{12}n,\\[4pt]
        \end{aligned}
        \]
        of period $12$. Note that the leading term is always $n^2/12$ and the period appears as its denominator. Also note that we have
        \begin{equation*}
            D(0) = D(1) = 0.
        \end{equation*}

        \medskip
        \item The corresponding generating function is given by 
        \begin{equation}\label{eq:generating-function-012}
        \frac{H(x)}{(1 - x^{12})^3},
        \end{equation}
        where
        \begin{multline*}
        H(x) = x^{32} + 2x^{31} + 2x^{30} + 4x^{29} + 5x^{28} + 5x^{27} + 8x^{26} + 9x^{25} + \\9x^{24} + 13x^{23} + 15x^{22} + 15x^{21} + 17x^{20} + 17x^{19} + 17x^{18} + 17x^{17} + \\17x^{16} + 17x^{15} + 15x^{14} + 15x^{13} + 15x^{12} + 11x^{11} + 9x^{10} + 9x^9  \\ + 6x^8 + 5x^7 + 5x^6 + 3x^5 + 2x^4 + 2x^3 + x^2
        \end{multline*}
        The polynomial $H(x)$ has the following properties:
        \begin{itemize}
            \item its coefficients are nonnegative and unimodal;
            \item its coefficients are neither symmetric nor weakly log concave;
            \item its roots are not necessarily on the unit circle (can be both inside and outside).
        \end{itemize}

    \medskip
    \item The $H$-polynomial factors as
    \begin{equation*}
        H(x) = x^2  (x + 1)^3  (x^2 + x + 1)  (x^2 + 1)^2  (x^2 - x + 1)^3  (x^4 - x^2 + 1)^2  (x^7 + x^6 - x^3 + x + 1)
    \end{equation*}
    Hence, the generating function \eqref{eq:generating-function-012} can be rewritten as
    \begin{equation*}
    \frac{x^2 (x^7 + x^6 - x^3 + x + 1)}
    {(1-x)^3  (x^2 + 1) (x^2 + x + 1)^2 (x^4 - x^2 + 1)},
    \end{equation*}
    \end{enumerate}
  
\end{Conj}

The conjecture is checked up to $n=42$. 
The following data was obtained on Kaggle using CayleyPy in this
\href{https://www.kaggle.com/code/msmirnov18/cayleypy-coset012repeats-consecutive-cycles}{notebook}.
It is a dictionary of value pairs $n : v$, where $n$ is as above and $v$ is the (experimentally) shortest path from the initial state to the central state.\footnote{\textbf{There might be some issue with the value for $n = 4$, but we don't need it below anyway.}}
\begin{multline*}
\{4: 3, 5: 3, 6: 5, 7: 5, 8: 6, 9: 9, 10: 9, 11: 11, 12: 15, 13: 15, 
14: 18, 15: 23, 16: 23, \\ 17: 26,  18: 32, 19: 32, 20: 35, 21: 42, 22: 42, 23: 46, 24: 54, 25: 54, 26: 59, 27: 68, 28: 68,\\  29: 73, 30:83, 31:83, 32:88, 33:99, 34:99, 35:105, 36:117, 37:117,  38:124, 39:137, 42:158\}
\end{multline*}
The computations regarding the properties of the $H$-polynomial are done in this 
\href{https://www.kaggle.com/code/msmirnov18/h-polynomial-coset012repeats-consecutive-cycles}{notebook}.

\clearpage
\section{Wrapped ("affine" or "periodic") (k)-consecutive cycles. Quasi-polynomiality, etc.}
\subsection{Section outline}
\textbf{Quasi-polynomiality.}
In this section, we present several quasi-polynomial formulas for the diameters and word metrics associated with the "wrapped" version of the consecutive $k$-cycle generators (also can be called "affine" or "cyclic" or "periodic") defined below. We also compute the corresponding $H$-polynomials and study their properties.
Both Cayley graphs and Schreier coset graphs are considered. We also provide some other results like theoretical
lower and upper bounds on diameters which are in agreement with our experimental studies.

The section is quite parallel to the previous one devoted to consecutive cycles case.


{\bf Generators definition. Related works.}
Fix an integer $k$. For $n > k$, we consider elements of $S_n$ given by the cyclic permutations 
$(i, (i+1) \mod n, \dots, (i+k-1) \mod n )$ for $i = 0, \dots, n-1$. For $k = 2$, these are the neighbor transpositions 
(cyclic Coxeter) considered previously. 
In CayleyPy these generators are denoted:
\href{https://cayleypy.github.io/cayleypy-docs/generated/cayleypy.PermutationGroups.html#cayleypy.PermutationGroups.consecutive_k_cycleshttps://cayleypy.github.io/cayleypy-docs/generated/cayleypy.PermutationGroups.html#cayleypy.PermutationGroups.wrapped_k_cycles}{wrapped\_k\_cycles}.
For odd $k$, they generate $A_n$ 
inside $S_n$, for even $k$ they give $S_n$. There are two natural options: whether or not to include inverses in the generating set.
We consider both cases.
To the best of our knowledge, these generators were not studied systematically in the literature, 
brief discussion is in our previous work: \cite{Cayley3Growth}).
From the physical point of view, the Laplacian of such graphs corresponds to the  situation
when $k$ neighboring spins interact. The simplest case $k=2$ corresponds to "periodic" or "closed" or "affine" spin chain.


\subsection{\texorpdfstring{$2(k-1)$-Shrinkage heuristics: Results and Difficulties}{2(k-1)-Shrinkage heuristics: Results and Difficulties}}

Before going into details let us first give some informal heuristic principle 
summarizing the results: to obtain results on diameters,
word-metrics and other characteristics one should take results for the standard Coxeter (i.e. neighbor
transpositions case)  case 
and just divide them by $2(k-1)$. 
In all considered cases it gives correct leading terms and moreover in some rare cases simple correction
like adding ceil-rounding would suffice to get the correct results. 
However exact results typically are more complicated. 



\subsection{\texorpdfstring{Theoretical diameters estimate. Schreier coset graphs $S_n / \bigl(S_{\lfloor n/2 \rfloor} \times S_{n-\lfloor n/2 \rfloor}\bigr)$}{Theoretical diameters estimate. Schreier coset graphs S_n/(S_{floor(n/2)} x S_{n-floor(n/2)})}}

\begin{thm}
    The diameter of the coset with $\lfloor n/2\rfloor$ zeros and $n-\lfloor n/2\rfloor $ ones is equal to $\frac{n^2}{8(k-1)}+O(n).$
\end{thm}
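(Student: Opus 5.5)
We must show both an upper bound $n^2/(8(k-1))+O(n)$ and a matching lower bound for the Schreier graph of binary words with $\lfloor n/2\rfloor$ zeros under wrapped consecutive $k$-cycles. The generators may or may not include inverses; the argument below works for either choice, up to $O(n)$ corrections. The plan imitates the non-wrapped proof for consecutive $k$-cycles given above, together with a wrapped potential function. The extra factor $2$ comes from the fact that on a circle each zero only needs to travel about half as far.

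\textbf{Lower bound.} We need a potential that changes by at most $k-1$ per generator and whose maximal range over the graph is $\approx n^2/8$. Place positions on $\mathbb Z_n$. For a word $x$ with zeros at $a_1<\dots<a_l$, $l=\lfloor n/2\rfloor$, let
\[
\Phi(x)=\min_{r}\sum_j \operatorname{dist}_{\mathbb Z_n}\bigl(a_j,\,b_j^{(r)}\bigr),
\]
where $b^{(r)}$ is the block $\{r,r+1,\dots,r+l-1\}$ and the matching is order-preserving up to cyclic relabeling. This is a cyclic earth-mover distance to the nearest sorted (rotated) block configuration.

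A single $k$-cycle moves one symbol by $k-1$ steps and $k-1$ symbols by one step each. Hence it changes the transport cost by at most $k-1$ up to a bounded additive term; more carefully, any symbol crossing is charged consistently, so the change is at most $k-1$. The sorted word has $\Phi=0$, and it will be the base vertex.

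For the far vertex take the alternating-blocks word $0^{l/2}1^{\dots}0^{l/2}1^{\dots}$, i.e.\ two antipodal blocks of zeros. Any rotated block target forces about half the zeros to travel about $n/4$ on average, which gives $\Phi\ge n^2/8-O(n)$. I expect this to be the optimum among the natural candidates. Therefore $D\ge n^2/(8(k-1))-O(n)$.

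\textbf{Upper bound.} Fix a target word $y$; by vertex-transitivity-like symmetry (rotations act as automorphisms of the wrapped generating set) we can route any $x$ to any $y$. Choose the rotation $r$ and the cyclic matching of zeros of $x$ to zeros of $y$ minimizing total cyclic transport. A standard averaging over the $n$ rotations shows this minimum is at most $n^2/8+O(n)$: the average cyclic distance under a random rotation is $n/4$ per zero, with $n/2$ zeros.

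Then apply the grouping trick from the non-wrapped proof. Bundle the zeros to be moved in the same direction into consecutive groups of $k-1$. A single cycle shifts a group of $\le k-1$ zeros by one position (clockwise via the cycle, counterclockwise via its inverse). If inverses are not allowed, $k-1$ applications of the cycle emulate the inverse on that window, but this costs too much. Instead, note that moving a group of zeros one step one way is the same as moving a single $1$ the other way by $k-1$, so both directions are available at cost $1$ per $(k-1)$ units of transport. The total move count is thus (transport)$/(k-1)+O(n)$, where the $O(n)$ covers the final regrouping and reconstruction steps exactly as in the earlier proof.

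\textbf{Main obstacle.} The delicate step is making the lower-bound potential rigorous. One must check that a $k$-cycle straddling the wrap point, or crossing many matched pairs, changes the cyclic transport cost by at most $k-1+O(1)$, and that the $\min_r$ does not jump. I would first try to prove the Lipschitz bound for fixed $r$, since the min of Lipschitz functions is Lipschitz, and handle the matching by the fact that order-preserving matchings of points on a circle are optimal.
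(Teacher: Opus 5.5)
\textbf{Lower bound: the potential is minimized over the wrong thing.} Your $\Phi$ is indeed $(k-1)$-Lipschitz, but minimizing over the rotated blocks $b^{(r)}$ makes it vanish on every rotation of the sorted word $e=0^{l}1^{n-l}$. In particular it vanishes on $\bar e=1^{n-l}0^{l}$. Rotations are automorphisms of the graph, not moves of cost zero, and $\bar e$ is exactly where the $n^2/8$ sits. Half of its zeros must travel about $n/4$ one way and half about $n/4$ the other way; for $k=2$ it attains the tabulated value $n^2/8$.

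Your substitute $0^{n/4}1^{n/4}0^{n/4}1^{n/4}$ is not far at all. Sliding one block of $n/4$ zeros by $n/4$ onto the other gives $\Phi\le n^2/16+O(n)$. Your own count, half the zeros travelling $n/4$, is $(n/4)(n/4)=n^2/16$, not $n^2/8$. Its distance from $e$ is only about $n^2/(16(k-1))$, so as written you lose a factor $2$.

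The repair is to drop the minimum and use $\Psi(x)=W_1(x,e)$, the optimal transport cost on $\mathbb{Z}_n$ between the zero sets of $x$ and $e$. A generator rotates the contents of an arc of $k$ consecutive positions by one step. This changes each of the $k-1$ internal prefix counts of zeros in that arc by at most $1$, so $W_1(x,x')\le k-1$ for adjacent vertices. By the triangle inequality $\Psi$ is then $(k-1)$-Lipschitz along edges, wrap-around windows included, so your ``main obstacle'' is not an obstacle. Since $\Psi(\bar e)\ge n^2/8-O(n)$, you get $d(e,\bar e)\ge n^2/(8(k-1))-O(n)$, for directed and inverse-closed generators alike.

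\textbf{Upper bound: the averaging step is invalid, though the claim is true.} Averaging $\sum_j\operatorname{dist}(a_j,b_{j+s})$ over the cyclic shifts $s$ of the order-preserving matching does not give $n^2/8$. For $x=e$, $y=\bar e$ the average is about $n^2/6$, because matched zeros are not at uniformly random relative positions. A ``random rotation'' of $y$ is not available either, for the reason above.

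What you need is the minimum, and that bound holds. Write $W_1(x,y)=\min_\alpha\sum_t|h(t)-\alpha|$, where $h$ is the difference of the zero-counting functions of $x$ and $y$, and take $\alpha$ a median of $h$. Since $h$ is $1$-Lipschitz, if $h(t)>\alpha$ then $h(t)-\alpha$ is at most the distance from $t$ to $\{h\le\alpha\}$, and symmetrically for $h(t)<\alpha$. Each of $\{h>\alpha\}$ and $\{h<\alpha\}$ has at most $n/2$ points, so each contributes at most $n^2/16+O(n)$.

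Alternatively, follow the paper, whose proof gives only this upper-bound direction and only from the sorted word. It puts the zeros in one arc and groups zeros and ones into blocks of $k-1$. It swaps adjacent blocks in $k-1$ moves, delivers each zero-block the shorter way round, and repairs in $O(n)$ moves.

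Finally, your reason that both transport directions are available describes a single move twice. The actual point is that the generator that shifts a block of $k-1$ zeros one step also carries a lone zero $k-1$ steps the opposite way, on a window $1^{k-1}0$ or $01^{k-1}$ depending on the rotation convention.
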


\begin{proof}
    Since the $k$-cycle is wrapped, one can consider a modified setup where we are to obtain any permutation from the one where the zeros are located at the first $\lfloor n/4\rfloor$ points and the last $\lfloor n/2\rfloor-\lfloor n/4\rfloor$ points.

    Suppose that we want to achieve the sequence of zeros and ones where the zeros are located at places $a_1,a_2,\dots,a_l.$ Create a modified sequence where the zeros are located at $$(k-1)\lfloor a_1/(k-1)\rfloor,\dots,(k-1)\lfloor a_1/(k-1)\rfloor+k-2,$$ $$ (k-1)\lfloor a_k/(k-1)\rfloor,\dots,(k-1)\lfloor a_k/(k-1)\rfloor+k-2,\dots,$$ $$ (k-1)\lfloor a_{m(k-1)+1}/(k-1)\rfloor,\dots,(k-1)\lfloor a_{m(k-1)+1}/(k-1)\rfloor,\dots$$ This modified sequence has the zeros \textit{and} ones come in groups of $k-1$. Since a group of size $k-1$ or less can be moved to a distance of 1 to the right by 1 turn of the cycle, one can exchange two such groups in $k-1$ moves. This time the modified position is reachable in at most $(k-1)\lceil n/4/(k-1)\rceil (\lfloor n/2/(k-1)\rfloor+1)$ moves, because one can choose whether any group will be delivered from the left or from the right.

    Next we reconstruct the original sequence by guiding every zero to its rightful place in $O(n)$ turns. 
\end{proof}

\clearpage
\subsection{\texorpdfstring{Schreier coset graph: $S_n / \bigl(S_{l} \times S_{n-l}\bigr)$ (``$k$-shrunken'' affine Grassmannian $Gr^{aff}(l,n,k)$)}{Schreier coset graph: S_n/(S_l x S_{n-l}) ("k-shrunken" affine Grassmannian Gr^{aff}(l,n,k))}}


Here we present conjectural formula for the diameters of Schreier coset graphs of the form
$S_n / \bigl(S_{l} \times S_{n-l}\bigr)$. 
The graph can be equivalently described as follows: its vertices correspond to binary vectors with components in $\{0,1\}$, containing exactly $l$ zeros and $n-l$ ones.
In \texttt{CayleyPy}, we define these graphs by setting the \texttt{central\_state} to be
$
[0]^l + [1]^{n-l}.
$
The generating set consists of wrapped consecutive $k$-cycles. We discuss both cases
inverse-closed and not in the present section.

The leading term of the formulas are  $l(n-l)/2(k-1)$ as it is expected from the $2(k-1)$-shrinkage principle,
since for the standard $k=2$ (Coxeter or neighbor transposition) case the diameter is exactly 
$l(n-l)$. 

Informally, one may think of this graph as a ``$k$-shrunken'' version of affine Grassmannian
$Gr^{aff}(l, n, k)$ over the field with one element, since for $k=2$ by standard analogies it is indeed
Grassmanian over field with one element. From the point of view of that analogy
diameter corresponds to dimension of the manifold, and Poincare polynomial correspond to growth polynomial of the graph.

{\bf Striking new phenomena - bi-variable quasi-polynomiality} of the diameter formulas,
i.e. formulas behave as quasi-polynomials in both variables $n$ and $n-l$. 
It is expected to be true in larger generality e.g. for more general vector with repeats  ("partial flag manifolds") and
more general families of generators. 

\begin{Conj} For $k=3$, inverse closed, diameters are given by:

\[
d_n = \left\lceil \frac{\ell(n - \ell)}{4} \right\rceil + [n \equiv 0 \pmod 4, \; \ell \equiv 2 \pmod 4]
\]

\end{Conj}

\begin{figure}[H]
    \centering
    \includegraphics[width=0.75\linewidth]{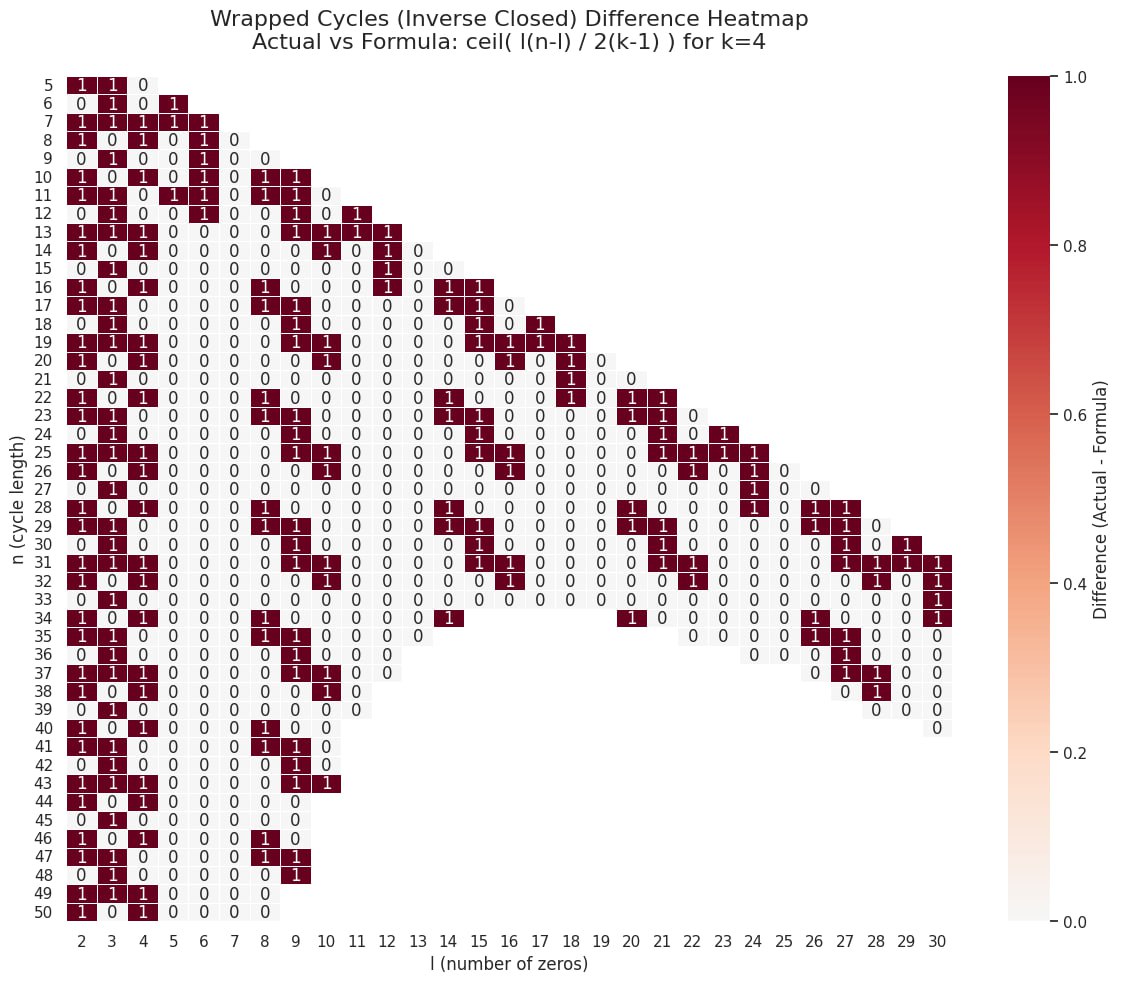}
    \caption{Difference to naive estimation $\lceil l(n-l)/2(k-1) \rceil $ wrapped 4 cycles inverse closed.
    Bi-variable quasi-polynomials pattern.}
    \label{fig:wrapped_4_grassman_inv_closed}
\end{figure}
\begin{figure}[H]
    \centering
    \includegraphics[width=0.75\linewidth]{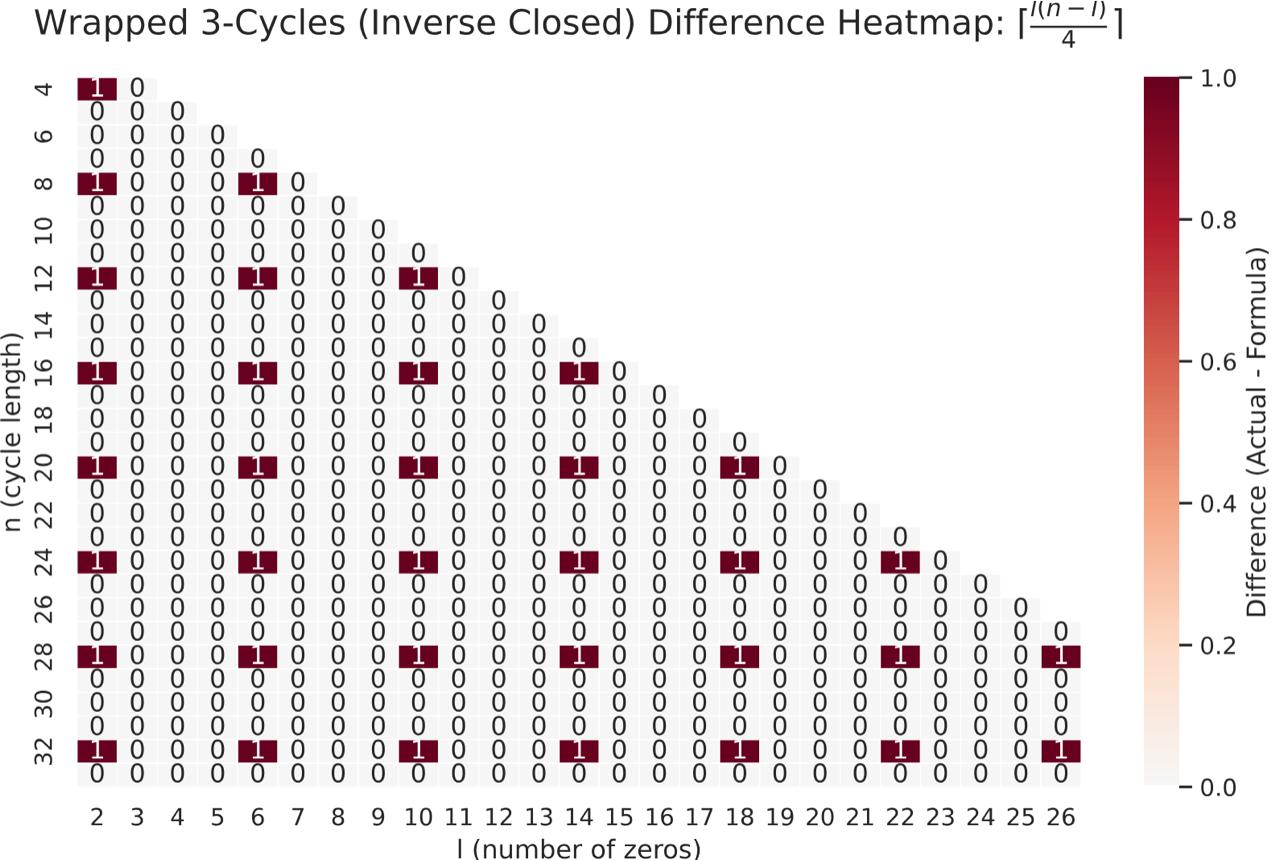}
    \caption{Difference to naive estimation $\lceil l(n-l)/2(k-1) \rceil $ wrapped 3 cycles inverse closed.
    Bi-variable quasi-polynomials pattern.}
    \label{fig:wrapped_3_grassman_inv_closed}
\end{figure}
\begin{figure}[H]
    \centering
    \includegraphics[width=0.75\linewidth]{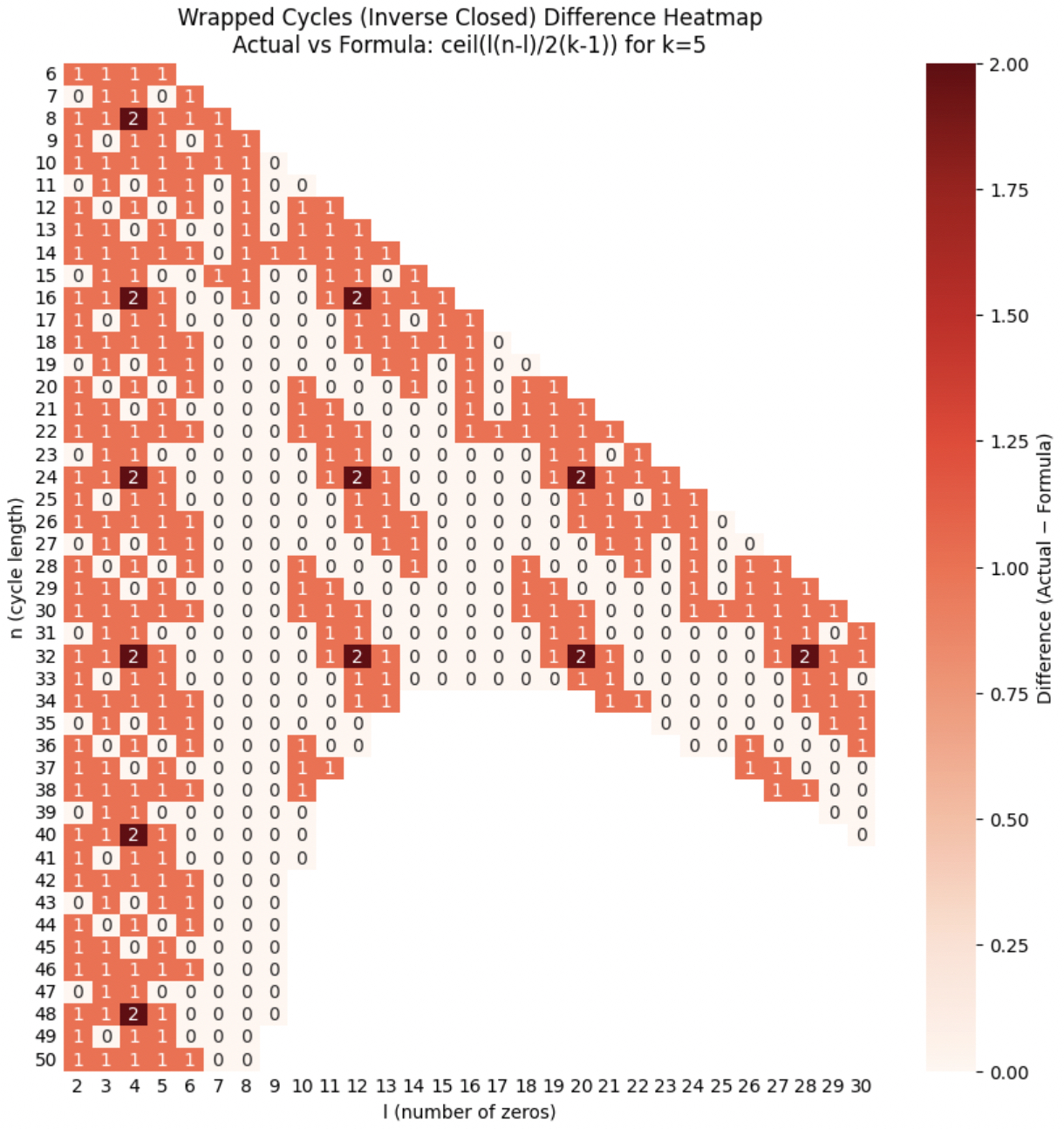}
    \caption{Difference to naive estimation $\lceil l(n-l)/2(k-1) \rceil $ wrapped 5 cycles inverse closed.
    Bi-variable quasi-polynomials pattern.}
    \label{fig:wrapped_5_grassman_inv_closed}
\end{figure}

\clearpage
\subsection{\texorpdfstring{Schreier coset graph: $S_n / \bigl(S_{\lfloor n/2 \rfloor} \times S_{n-\lfloor n/2 \rfloor}\bigr)$ (not inverse-closed)}{Schreier coset graph: S_n/(S_{floor(n/2)} x S_{n-floor(n/2)}) (not inverse-closed)}}


Here we present a conjectural formula for the diameters of Schreier coset graphs of the form
$
S_n \big/ \bigl(S_{\lfloor n/2 \rfloor} \times S_{n-\lfloor n/2 \rfloor}\bigr).$
The graph can be equivalently described as follows: its vertices correspond to binary vectors with components in $\{0,1\}$, containing exactly $\lfloor n/2 \rfloor$ zeros and $n-\lfloor n/2 \rfloor$ ones.
In \texttt{CayleyPy}, we define these graphs by setting the \texttt{central\_state} to be
$
[0]^{n//2} + [1]^{\,n-n//2}.
$
The generating set consists of wrappped consecutive $k$-cycles; we consider the case where the generating set is not inverse-closed.

The leading term of the diameter formulas are  $n^2/(8(k-1))$ as it is expected from the $2(k-1)$-shrinkage principle,
since for the standard $k=2$ (Coxeter or neighbor transposition) case the diameter is exactly 
${\lfloor n/2 \rfloor} (n-\lfloor n/2 \rfloor)$. 

Informally, one may think of this graph as a ``$k$-shrunken'' version of affine Grassmannian
$Gr^{aff}(\lfloor n/2 \rfloor, n)$ over the field with one element. From the point of view of that analogy
diameter corresponds to dimension of the manifold, and Poincare polynomial correspond to growth polynomial of the graph.

We experimentally have found the coefficients and we conjecture that the formula is the following polynomial
\[
H_k(x)=\frac{(x^{4k-4}-1)^2\,(x^{2k-1}+1)}{(x-1)^2(x+1)}
\in \mathbb Z[x].
\]
based on the following proposition.
\begin{prp}[Closed form for \(H_k\)]
\label{prop:Hk-closed}
Let \(k\ge 2\) and \(m:=3(k-1)\). Define integers \(h_i\) for \(0\le i\le 10(k-1)\) by
\[
h_i=
\begin{cases}
1, & i=0,1,\\[0.4ex]
2+\left\lfloor\dfrac{i-2}{2}\right\rfloor, & 2\le i\le 2k-3,\\[1.2ex]
i-(k-2), & 2k-2\le i\le 4k-5,\\[0.8ex]
(m-1)+\bigl((i-(4k-4))\bmod 2\bigr), & 4k-4\le i\le 6k-6,\\[0.8ex]
(m-2)-(i-(6k-5)), & 6k-5\le i\le 8k-10,\\[0.8ex]
(k-1)-\left\lfloor\dfrac{i-(8k-9)}{2}\right\rfloor, & 8k-9\le i\le 10k-14,\\[1.2ex]
1, & i=10k-13,\,10k-12,\\[0.4ex]
0, & i=10k-11,\,10k-10,
\end{cases}
\]
where ``\(\bmod 2\)'' denotes the remainder in \(\{0,1\}\).
Let
\[
H_k(x):=\sum_{i=0}^{10(k-1)} h_i\,x^i\in\mathbb Z[x].
\]
Then
\[
H_k(x)=\frac{(x^{4k-4}-1)^2\,(x^{2k-1}+1)}{(x-1)^2(x+1)}
\in \mathbb Z[x].
\]
\end{prp}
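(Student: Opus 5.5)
The plan is to copy the proof of the analogous Proposition for the inverse-closed consecutive-cycle case: factor the rational expression into a product of two explicit polynomials, one of which is a plain geometric sum. Put
\[
U(x):=\frac{x^{4k-4}-1}{x-1}=\sum_{i=0}^{4k-5}x^i,\qquad
V(x):=\frac{x^{2k-1}+1}{x+1}=\sum_{r=0}^{2k-2}(-1)^r x^r .
\]
Since $2k-1$ is odd, both are in $\mathbb Z[x]$, so the right-hand side $H_k^\star:=U^2V$ is a polynomial. It has degree $2(4k-5)+(2k-2)=10k-12$, which already explains $h_{10k-11}=h_{10k-10}=0$.

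Next I will compute $W:=UV$ by combining the denominators into $x^2-1$:
\[
W(x)=\frac{(x^{4k-4}-1)(x^{2k-1}+1)}{x^2-1}=(1+x^{2k-1})\sum_{j=0}^{2k-3}x^{2j}.
\]
So $W$ is the indicator polynomial of the set $S=E\cup O$, where $E=\{0,2,\dots,4k-6\}$ (evens) and $O=\{2k-1,2k+1,\dots,6k-7\}$ (odds). These two sets are disjoint by parity, so every coefficient of $W$ is $0$ or $1$. Then $H_k^\star=UW$, and with $L:=i-(4k-5)$ the coefficient of $x^i$ is
\[
c_i=\#\bigl(S\cap[L,i]\bigr)=\#\bigl(E\cap[L,i]\bigr)+\#\bigl(O\cap[L,i]\bigr).
\]
Each of the two counts is a difference of floor functions. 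The remaining task is to check $c_i=h_i$ on each of the ranges in the statement.

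The ranges are cut exactly where either the right end $i$ or the left end $L$ of the window crosses an endpoint of $E$ or $O$.
\begin{itemize}
\item For $i\le 2k-3$ only evens in $[0,i]$ count, giving $\lfloor i/2\rfloor+1$. This is $1$ for $i=0,1$ and equals $2+\lfloor (i-2)/2\rfloor$ otherwise.
\item For $2k-2\le i\le 4k-5$ we have $L\le 0$. The count is $\lfloor i/2\rfloor+1+\lfloor (i-2k+1)/2\rfloor+1$, which simplifies to $i-k+2$ in both parities, matching $i-(k-2)$.
\item For $4k-4\le i\le 6k-6$, $L$ runs from $1$ to $2k-1$. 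Evens are lost from below while odds are gained from above, and the parity-dependent floors combine to $m-1+(\text{parity})$ with $m=3(k-1)$, as in Range II of the earlier proof.
\item For $6k-5\le i\le 8k-10$, $i$ has passed the end of $O$ (near $6k-7$), so only losses occur. Both blocks now lose one element per unit step between them, which gives the linear decrease $(m-2)-(i-(6k-5))$.
\item For $8k-9\le i\le 10k-14$, $L>4k-6$, so $E$ is gone. Only odds in $[L,6k-7]$ remain, giving $(k-1)-\lfloor (i-(8k-9))/2\rfloor$.
\item The last two indices, $10k-13$ and $10k-12$, each see only $6k-7$, giving $1$.
\end{itemize}

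The main obstacle is the boundary bookkeeping in the middle ranges, especially $6k-5\le i\le 8k-10$. There I must verify that the window's left end crossing even elements of $E$ and its right end passing $6k-7$ combine into an exact decrement by $1$ per step, with no parity hiccup. I will handle this by writing $i=(6k-5)+t$ and computing both floor terms explicitly in $t$, as in Range III before. I will also double-check the endpoint values at each range junction (for example $i=4k-4$ and $i=6k-6$) against the formulas directly, since an off-by-one in the stated ranges is the likeliest failure point. If a junction disagrees, I will recheck that junction against a direct expansion for $k=2,3$.
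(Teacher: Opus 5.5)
Your proposal is correct and follows the paper's proof essentially verbatim. It uses the same factorization $H_k=U\cdot W$ with $W=UV$ the $0/1$ indicator of $S=\{0,2,\dots,4k-6\}\cup\{2k-1,\dots,6k-7\}$, the same window count $c_i=\#\bigl(S\cap[i-(4k-5),i]\bigr)$, and the same range-by-range check. Your remark that $[2k-1,4k-6]\subset S$, so each step in $6k-5\le i\le 8k-10$ loses exactly one element, is a slightly cleaner way to see the linear decrease than the paper's floor identity, but the argument is the same.
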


Before a proof we need two lemmas.
\begin{lemma}[Counting even/odd integers in an interval]
\label{lem:count-parity}
Let \(A,B\in\mathbb Z\) with \(A\le B\). Then
\[
\#\{t\in\mathbb Z:\ A\le t\le B,\ t\text{ even}\}
=\left\lfloor\frac{B}{2}\right\rfloor-\left\lfloor\frac{A-1}{2}\right\rfloor,
\]
and
\[
\#\{t\in\mathbb Z:\ A\le t\le B,\ t\text{ odd}\}
=\left\lfloor\frac{B+1}{2}\right\rfloor-\left\lfloor\frac{A}{2}\right\rfloor.
\]
Moreover, for every \(n\in\mathbb Z\),
\[
\left\lfloor\frac{n}{2}\right\rfloor+\left\lfloor\frac{n-1}{2}\right\rfloor=n-1.
\]
\end{lemma}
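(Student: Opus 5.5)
The statement to prove is Lemma~\ref{lem:count-parity}, which is elementary. I will argue each of the three identities directly from the basic counting fact that for any integer $N\ge -1$ the number of even integers in $[0,N]$ is $\lfloor N/2\rfloor+1$. To avoid sign issues I will not use that fact as is. Instead I will work with the prefix-count function
\[
E(N):=\#\{t\in\mathbb Z:\ t\le N,\ t \text{ even}\}
\]
understood relative to a fixed base point. Concretely, the number of even integers in $[A,B]$ equals $\lfloor B/2\rfloor-\lfloor (A-1)/2\rfloor$. The reason is that the even integers $t=2s$ with $A\le 2s\le B$ correspond bijectively to integers $s$ with $\lceil A/2\rceil\le s\le \lfloor B/2\rfloor$. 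Hence the count is $\lfloor B/2\rfloor-\lceil A/2\rceil+1$.

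The key step is the identity $\lceil A/2\rceil-1=\lfloor (A-1)/2\rfloor$, valid for every integer $A$. I will check it by splitting into the cases $A=2a$ and $A=2a+1$:
\begin{itemize}
\item if $A=2a$, both sides equal $a-1$;
\item if $A=2a+1$, both sides equal $a$.
\end{itemize}
This gives the first formula. The count is nonnegative and correct also when the interval contains no even number, since then $\lceil A/2\rceil=\lfloor B/2\rfloor+1$.

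For the odd count I will use the same substitution with $t=2s+1$. The condition $A\le 2s+1\le B$ is equivalent to $\lceil (A-1)/2\rceil\le s\le\lfloor (B-1)/2\rfloor$. Using $\lfloor (B-1)/2\rfloor+1=\lfloor (B+1)/2\rfloor$ and $\lceil (A-1)/2\rceil=\lfloor A/2\rfloor$, the count becomes $\lfloor (B+1)/2\rfloor-\lfloor A/2\rfloor$. Both of these auxiliary identities are again checked by the parity split of $A$ and of $B$. Alternatively, I could shift the interval by one and apply the even formula to $[A-1,B-1]$.

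Finally, I will prove the identity $\lfloor n/2\rfloor+\lfloor (n-1)/2\rfloor=n-1$ by the same parity split:
\begin{itemize}
\item for $n=2a$, the left side is $a+(a-1)=2a-1$;
\item for $n=2a+1$, the left side is $a+a=2a$.
\end{itemize}

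There is no genuine obstacle here. The only point requiring care is getting the ceiling/floor conversions right at odd and negative endpoints. That care is why I reduce everything to integer parity cases, which hold uniformly for negative integers as well, since floor is defined for all of $\mathbb Z$.
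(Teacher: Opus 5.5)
Your proof is correct and follows the paper's argument. The paper likewise uses the bijection $t\mapsto t/2$ to get $\lfloor B/2\rfloor-\lceil A/2\rceil+1$, converts $\lceil A/2\rceil-1=\lfloor (A-1)/2\rfloor$, and uses the parity split $n=2q$, $n=2q+1$ for the last identity; for the odd count the paper takes the shift to $[A-1,B-1]$ that you mention as an alternative, but your direct substitution $t=2s+1$ works equally well. One cleanup: the prefix count $E(N)$ you define at the start counts an infinite set as written and is never used, so simply drop it.
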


\begin{proof}
For the even-count, the map \(t\mapsto t/2\) is a bijection between even integers \(t\in[A,B]\) and integers
\(u\in[\lceil A/2\rceil,\ \lfloor B/2\rfloor]\), hence the count equals
\(\lfloor B/2\rfloor-\lceil A/2\rceil+1\).
Using \(\lceil A/2\rceil=\lfloor (A+1)/2\rfloor=\lfloor (A-1)/2\rfloor+1\), we obtain the stated formula.

For the odd-count, odd \(t\in[A,B]\) correspond bijectively to even \(t-1\in[A-1,B-1]\), so the first formula gives
\[
\#\{t\in[A,B]\cap\mathbb Z:\ t\text{ odd}\}
=\left\lfloor\frac{B-1}{2}\right\rfloor-\left\lfloor\frac{A-2}{2}\right\rfloor
=\left\lfloor\frac{B+1}{2}\right\rfloor-\left\lfloor\frac{A}{2}\right\rfloor.
\]

Finally, write \(n=2q\) or \(n=2q+1\). If \(n=2q\), then
\(\lfloor n/2\rfloor+\lfloor(n-1)/2\rfloor=q+(q-1)=2q-1=n-1\).
If \(n=2q+1\), then
\(\lfloor n/2\rfloor+\lfloor(n-1)/2\rfloor=q+q=2q=n-1\).
\end{proof}

\begin{lemma}[A floor identity used in Range II]
\label{lem:range2-floor}
Let \(k\ge 2\) and \(i\in\mathbb Z\). Then
\[
\left\lfloor\frac{i}{2}\right\rfloor+\left\lfloor\frac{i-(2k-1)}{2}\right\rfloor=i-k.
\]
\end{lemma}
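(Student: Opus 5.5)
The identity
\[
\left\lfloor\frac{i}{2}\right\rfloor+\left\lfloor\frac{i-(2k-1)}{2}\right\rfloor=i-k
\]
reduces to the last part of Lemma~\ref{lem:count-parity} once we pull the even integer out of the second floor.

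First, write $i-(2k-1)=(i-1)-2(k-1)$. Since $2(k-1)$ is even, for every integer $a$ we have
\[
\left\lfloor\frac{a-2(k-1)}{2}\right\rfloor=\left\lfloor\frac{a}{2}\right\rfloor-(k-1).
\]
This holds because $\lfloor x-m\rfloor=\lfloor x\rfloor-m$ for any integer $m$. Applying it with $a=i-1$ gives
\[
\left\lfloor\frac{i-(2k-1)}{2}\right\rfloor=\left\lfloor\frac{i-1}{2}\right\rfloor-(k-1).
\]

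Second, add $\lfloor i/2\rfloor$ to both sides. The last statement of Lemma~\ref{lem:count-parity}, applied with $n=i$, says $\lfloor i/2\rfloor+\lfloor (i-1)/2\rfloor=i-1$. Therefore the left-hand side equals $(i-1)-(k-1)=i-k$, as claimed.

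There is no real obstacle. The only point needing care is that the shift by $2k-1$ is odd, so it must be split as $1+2(k-1)$ to invoke the parity identity. If one prefers to avoid citing the lemma, a direct check of the two cases $i=2q$ and $i=2q+1$ works just as quickly. For $i=2q$ we get $q+(q-k)=2q-k$. For $i=2q+1$ we get $q+(q-k+1)=2q+1-k$. Both equal $i-k$.
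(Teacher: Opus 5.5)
Your proof is correct, and your fallback computation for $i=2q$ and $i=2q+1$ is exactly the paper's proof, which simply checks the two parity cases directly. Your main argument is a slightly different packaging: you split the odd shift as $1+2(k-1)$ and use $\lfloor x-m\rfloor=\lfloor x\rfloor-m$ to reduce to the identity $\lfloor n/2\rfloor+\lfloor (n-1)/2\rfloor=n-1$ from Lemma~\ref{lem:count-parity}. This makes explicit that the lemma is just a translate of that earlier identity; like the paper's argument, it never uses the hypothesis $k\ge 2$.
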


\begin{proof}
Write \(i=2q\) or \(i=2q+1\).

If \(i=2q\), then \(i-(2k-1)=2q-2k+1\) is odd, so
\[
\left\lfloor\frac{i}{2}\right\rfloor+\left\lfloor\frac{i-(2k-1)}{2}\right\rfloor
=q+\left\lfloor q-k+\frac12\right\rfloor
=q+(q-k)=2q-k=i-k.
\]

If \(i=2q+1\), then \(i-(2k-1)=2q+2-2k=2(q-k+1)\) is even, so
\[
\left\lfloor\frac{i}{2}\right\rfloor+\left\lfloor\frac{i-(2k-1)}{2}\right\rfloor
=q+(q-k+1)=2q-k+1=i-k.
\]
\end{proof}

\begin{proof}
Define
\[
U(x):=\frac{x^{4k-4}-1}{x-1}=\sum_{a=0}^{4k-5}x^a,
\qquad
V(x):=\frac{x^{2k-1}+1}{x+1}=x^{2k-2}-x^{2k-3}+\cdots-x+1.
\]
Since \(2k-1\) is odd, \(x^{2k-1}+1\) is divisible by \(x+1\), hence \(V(x)\in\mathbb Z[x]\); clearly also \(U(x)\in\mathbb Z[x]\).
Therefore
\[
H_k^\star(x):=\frac{(x^{4k-4}-1)^2\,(x^{2k-1}+1)}{(x-1)^2(x+1)}=U(x)^2V(x)\in\mathbb Z[x].
\]
We prove that \(H_k^\star(x)=H_k(x)\) by comparing coefficients.

Set \(W(x):=U(x)V(x)\). Then
\[
W(x)=\frac{(x^{4k-4}-1)(x^{2k-1}+1)}{(x-1)(x+1)}
=\frac{(x^{4k-4}-1)(x^{2k-1}+1)}{x^2-1}.
\]
Because \(4k-4\) is even,
\[
\frac{x^{4k-4}-1}{x^2-1}=1+x^2+x^4+\cdots+x^{4k-6}=\sum_{j=0}^{2k-3}x^{2j}.
\]
Hence
\[
W(x)=(x^{2k-1}+1)\sum_{j=0}^{2k-3}x^{2j}
=\sum_{j=0}^{2k-3}x^{2j}+\sum_{j=0}^{2k-3}x^{2k-1+2j}.
\]
The first sum has only even exponents, the second only odd exponents; thus the supports are disjoint and every nonzero coefficient of \(W\) equals \(1\).

Let
\[
S:=\{0,2,4,\dots,4k-6\}\ \cup\ \{2k-1,2k+1,\dots,6k-7\}.
\]
Then the coefficient of \(x^t\) in \(W(x)\) is \(1\) if \(t\in S\) and \(0\) otherwise. Since
\[
H_k^\star(x)=U(x)W(x)=\Bigl(\sum_{a=0}^{4k-5}x^a\Bigr)\,W(x),
\]
the coefficient of \(x^i\) in \(H_k^\star(x)\) is
\[
c_i=\#\{t\in S:\ i-(4k-5)\le t\le i\}
=\#\bigl(S\cap[\,i-(4k-5),\,i\,]\bigr).
\]
We show \(c_i=h_i\) for all \(0\le i\le 10(k-1)\), by splitting into the index ranges defining \(h_i\).

\medskip
\noindent\textbf{Range 0: \(i=0,1\).}
Here \(i-(4k-5)\le 0\) and \(i<2\), so \(S\cap[\,i-(4k-5),i\,]=\{0\}\). Thus \(c_i=1=h_i\).

\medskip
\noindent\textbf{Range I: \(2\le i\le 2k-3\).}
Then \(i<2k-1\), so the odd block \(\{2k-1,2k+1,\dots\}\) does not contribute. Also \(i-(4k-5)<0\), hence all even elements of \(S\) up to \(i\) are counted:
\[
c_i=\#\{0,2,4,\dots\le i\}=\left\lfloor\frac{i}{2}\right\rfloor+1
=2+\left\lfloor\frac{i-2}{2}\right\rfloor=h_i.
\]

\medskip
\noindent\textbf{Range II: \(2k-2\le i\le 4k-5\).}
Again \(i-(4k-5)\le 0\), hence \(c_i\) counts all elements of \(S\) in \([0,i]\).
The even contribution equals \(\lfloor i/2\rfloor+1\).
The odd contribution counts the odd numbers \(2k-1,2k+1,\dots\le i\), which is
\(\left\lfloor\frac{i-(2k-1)}{2}\right\rfloor+1\) (this is \(0\) when \(i=2k-2\)).
Thus
\[
c_i=\left\lfloor\frac{i}{2}\right\rfloor+1+\left\lfloor\frac{i-(2k-1)}{2}\right\rfloor+1.
\]
By Lemma~\ref{lem:range2-floor}, \(\lfloor i/2\rfloor+\lfloor(i-(2k-1))/2\rfloor=i-k\), so
\[
c_i=i-k+2=i-(k-2)=h_i.
\]

\medskip
\noindent\textbf{Range III: \(4k-4\le i\le 6k-6\).}
Write \(i=4k-4+s\) with \(0\le s\le 2k-2\). Then the interval becomes
\([\,i-(4k-5),i\,]=[\,1+s,\ 4k-4+s\,]\).
Since \(4k-4+s\le 6k-6<6k-7\), the odd block contributes odds from \(2k-1\) up to \(4k-4+s\), while the even block contributes evens from \(1+s\) up to \(4k-6\):
\[
\#\bigl(\{0,2,\dots,4k-6\}\cap[1+s,\,4k-4+s]\bigr)
=\#\{t\in\mathbb Z:\ 1+s\le t\le 4k-6,\ t\text{ even}\}
=2k-3-\left\lfloor\frac{s}{2}\right\rfloor,
\]
\[
\#\bigl(\{2k-1,2k+1,\dots\}\cap[1+s,\,4k-4+s]\bigr)
=\#\{t\in\mathbb Z:\ 2k-1\le t\le 4k-4+s,\ t\text{ odd}\}
=\left\lfloor\frac{4k-3+s}{2}\right\rfloor-(k-1),
\]
where we used Lemma~\ref{lem:count-parity} in each line.
Adding,
\[
c_i
=\Bigl(2k-3-\left\lfloor\frac{s}{2}\right\rfloor\Bigr)
+\Bigl(\left\lfloor\frac{4k-3+s}{2}\right\rfloor-(k-1)\Bigr)
= k-2+\left\lfloor\frac{4k-3+s}{2}\right\rfloor-\left\lfloor\frac{s}{2}\right\rfloor.
\]
Since \(4k-3\) is odd, writing \(s=2r\) or \(s=2r+1\) gives
\[
\left\lfloor\frac{4k-3+s}{2}\right\rfloor-\left\lfloor\frac{s}{2}\right\rfloor
=
\begin{cases}
2k-2, & s\ \text{even},\\
2k-1, & s\ \text{odd}.
\end{cases}
\]
Hence \(c_i=3k-4\) for even \(s\) and \(c_i=3k-3\) for odd \(s\). Since \(m=3k-3\) and \(s=i-(4k-4)\),
\[
c_i=(m-1)+\bigl(s\bmod 2\bigr)=(m-1)+\bigl((i-(4k-4))\bmod 2\bigr)=h_i.
\]

\medskip
\noindent\textbf{Range IV: \(6k-5\le i\le 8k-10\).}
Write \(i=6k-5+t\) with \(0\le t\le 2k-5\). Then
\([\,i-(4k-5),i\,]=[\,2k+t,\ 6k-5+t\,]\).
Since \(6k-5+t\ge 6k-5>6k-7\), the odd block contributes odd integers from the moving lower bound up to the fixed top \(6k-7\),
and the even block contributes even integers from the same lower bound up to the fixed top \(4k-6\):
\[
\#\{t'\in\mathbb Z:\ 2k+t\le t'\le 6k-7,\ t'\text{ odd}\}
=(3k-3)-\left\lfloor\frac{2k+t}{2}\right\rfloor,
\]
\[
\#\{t'\in\mathbb Z:\ 2k+t\le t'\le 4k-6,\ t'\text{ even}\}
=(2k-3)-\left\lfloor\frac{2k+t-1}{2}\right\rfloor,
\]
again by Lemma~\ref{lem:count-parity}. Adding and using Lemma~\ref{lem:count-parity} with \(n=2k+t\),
\[
c_i=(5k-6)-\left(\left\lfloor\frac{2k+t}{2}\right\rfloor+\left\lfloor\frac{2k+t-1}{2}\right\rfloor\right)
=(5k-6)-(2k+t-1)=3k-5-t.
\]
Since \(m-2=3k-5\) and \(t=i-(6k-5)\), this gives
\(c_i=(m-2)-(i-(6k-5))=h_i\).

\medskip
\noindent\textbf{Range V: \(8k-9\le i\le 10k-14\).}
Write \(i=8k-9+t\) with \(0\le t\le 2k-5\). Then
\([\,i-(4k-5),i\,]=[\,4k-4+t,\ 8k-9+t\,]\).
Because \(4k-4+t>4k-6\), the even block contributes nothing.
The odd block contributes odd integers from the moving lower bound up to the fixed top \(6k-7\), hence
\[
c_i=\#\{t'\in\mathbb Z:\ 4k-4+t\le t'\le 6k-7,\ t'\text{ odd}\}
=(3k-3)-\left\lfloor\frac{4k-4+t}{2}\right\rfloor.
\]
Since \(4k-4\) is even,
\(\left\lfloor\frac{4k-4+t}{2}\right\rfloor=2k-2+\left\lfloor\frac{t}{2}\right\rfloor\),
so
\[
c_i=(3k-3)-\Bigl(2k-2+\left\lfloor\frac{t}{2}\right\rfloor\Bigr)
=(k-1)-\left\lfloor\frac{t}{2}\right\rfloor
=(k-1)-\left\lfloor\frac{i-(8k-9)}{2}\right\rfloor
=h_i.
\]

\medskip
\noindent\textbf{Range VI: \(i=10k-13,\,10k-12\).}
We show \(S\cap[\,i-(4k-5),i\,]=\{6k-7\}\), hence \(c_i=1\).
For \(i=10k-13\), the interval is
\[
[\,i-(4k-5),i\,]=[\,10k-13-(4k-5),\ 10k-13\,]=[\,6k-8,\ 10k-13\,],
\]
which contains \(6k-7\) and is strictly above \(4k-6\), so it meets \(S\) only at the last odd element \(6k-7\).
For \(i=10k-12\), the interval is
\[
[\,i-(4k-5),i\,]=[\,6k-7,\ 10k-12\,],
\]
again containing \(6k-7\) and lying above \(4k-6\). Thus \(c_i=1=h_i\) in both cases.

\medskip
\noindent\textbf{Range VII: \(i=10k-11,\,10k-10\).}
For \(i=10k-11\), the interval is \([\,6k-6,\ 10k-11\,]\); for \(i=10k-10\), it is \([\,6k-5,\ 10k-10\,]\).
In both cases the lower bound is \(>6k-7\), the maximum element of \(S\), so the intersection is empty and \(c_i=0=h_i\).

\medskip
Thus \(c_i=h_i\) for all \(i\), hence \(H_k^\star(x)=H_k(x)\), proving the closed form.
\end{proof}

\begin{cor}[Cyclotomic factorization]
\label{cor:cyclotomic}
For \(k\ge 2\),
\[
H_k(x)=
\prod_{\substack{d\mid(4k-4)\\ d>1}}\Phi_d(x)^2\;
\prod_{\substack{d\mid(2k-1)\\ d>1}}\Phi_{2d}(x),
\]
hence every irreducible factor of \(H_k\) is cyclotomic.
\end{cor}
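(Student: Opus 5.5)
The plan is to start from the closed form in Proposition~\ref{prop:Hk-closed},
\[
H_k(x)=\frac{(x^{4k-4}-1)^2\,(x^{2k-1}+1)}{(x-1)^2(x+1)},
\]
and factor each piece separately into cyclotomic polynomials. The only ingredients are the standard identities
\[
x^N-1=\prod_{d\mid N}\Phi_d(x), \qquad x^N+1=\frac{x^{2N}-1}{x^N-1}=\prod_{d\mid 2N,\ d\nmid N}\Phi_d(x).
\]

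\textbf{Step 1: the squared factor.} Since $\Phi_1(x)=x-1$, we have
\[
\frac{x^{4k-4}-1}{x-1}=\prod_{d\mid(4k-4),\,d>1}\Phi_d(x).
\]
Squaring gives $\bigl(\tfrac{x^{4k-4}-1}{x-1}\bigr)^2=\prod_{d\mid(4k-4),\,d>1}\Phi_d(x)^2$.

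\textbf{Step 2: the term $x^{2k-1}+1$.} Put $N=2k-1$, which is odd. The divisors $d$ of $2N$ that do not divide $N$ are exactly the numbers $d=2e$ with $e\mid N$, because $N$ is odd. Hence
\[
x^{2k-1}+1=\prod_{e\mid(2k-1)}\Phi_{2e}(x).
\]
The factor with $e=1$ is $\Phi_2(x)=x+1$, and it cancels the denominator $x+1$. What remains is $\prod_{e\mid(2k-1),\,e>1}\Phi_{2e}(x)$.

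\textbf{Step 3: multiply.} Multiplying the results of Steps 1 and 2 gives the stated product. Each factor $\Phi_d$ is irreducible over $\mathbb Q$, so every irreducible factor of $H_k$ is cyclotomic. In particular all roots of $H_k$ lie on the unit circle, which is the ``Riemann conjecture'' property.

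The argument has no real obstacle. The only point needing care is the parity bookkeeping in Step 2: we must confirm that the $\Phi_2$ cancelled against $x+1$ comes from $x^{2k-1}+1$ and not from the squared factor. If instead it were taken from the squared factor, then $\Phi_2$ would appear with exponent $1$ on the left while the stated formula has exponent $2$ for $d=2\mid 4k-4$. Cancelling it from $x^{2k-1}+1$, as in Step 2, is what keeps the exponent of $\Phi_2$ equal to $2$ and matches the statement.
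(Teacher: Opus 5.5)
Your proof is correct and follows the paper's argument: both start from the closed form of Proposition~\ref{prop:Hk-closed}, apply $x^N-1=\prod_{d\mid N}\Phi_d(x)$ and, for odd $N$, $x^N+1=\prod_{e\mid N}\Phi_{2e}(x)$, and cancel $\Phi_1^2$ and $\Phi_2$ against the denominator. Your closing worry about which copy of $\Phi_2$ gets cancelled is unnecessary: the quotient is one well-defined polynomial, and $\Phi_2$ has exponent $2+1-1=2$ in it however you group the cancellation, so the choice affects only the presentation, not the result.
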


\begin{proof}
Use \(x^n-1=\prod_{d\mid n}\Phi_d(x)\) and, for odd \(n\), \(x^n+1=\prod_{d\mid n}\Phi_{2d}(x)\).
Also \((x^{4k-4}-1)^2/(x-1)^2\) removes the \(\Phi_1(x)^2\) factor, and \((x^{2k-1}+1)/(x+1)\) removes the \(\Phi_2(x)\) factor.
\end{proof}

\begin{thm}
\textbf{Case $k=2$ from $n=4$}
\[
\begin{aligned}
n &\equiv 0 \pmod{4}: & D_2(n) &= \frac{n^2}{8},\\[4pt]
n &\equiv 1 \pmod{4}: & D_2(n) &= \frac{n^2-1}{8},\\[4pt]
n &\equiv 2 \pmod{4}: & D_2(n) &= \frac{n^2+4}{8},\\[4pt]
n &\equiv 3 \pmod{4}: & D_2(n) &= \frac{n^2-1}{8}.
\end{aligned}
\]
\[
\mathrm{OGF}
=\frac{x^{10}+x^{9}+2x^{8}+3x^{7}+2x^{6}+3x^{5}+2x^{4}+x^{3}+x^{2}}{(1-x^{4})^{3}}.
\]
\end{thm}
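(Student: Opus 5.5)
The plan is to reduce the statement to a closed formula for the distance between two binary words on the cycle, and then maximize that formula. First I would record that the four cases can be written uniformly as $D_2(n)=\lceil \ell(n-\ell)/2\rceil$ with $\ell=\lfloor n/2\rfloor$. For instance, $n=4,5,6,7$ give $2,3,5,6$, and the general check is a one-line case split on $n \bmod 4$. From the values $2,3,5,6$ at $n=4,\dots,7$ and the four residue formulas, the OGF then follows by routine summation, giving the stated $H$-polynomial over $(1-x^4)^3$. So the real content is the diameter formula.

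\textbf{Step 1 (distance formula).} For $x,y$ with $\ell$ zeros, set $\Delta(i)=\#\{j\le i: x_j=0\}-\#\{j\le i: y_j=0\}$ for $0\le i\le n-1$, so that $\Delta(n-1)=0$. I would prove
\[
d_{\mathrm{wrap}}(x,y)=\min_{c\in\mathbb Z}\sum_{i=0}^{n-1}\bigl|\Delta(i)-c\bigr| .
\]
\emph{Lower bound.} Every effective generator, whether $(i,i+1)$ or the wrap $(0,n-1)$, moves one zero across exactly one edge of the $n$-cycle. Record the net signed number of zeros transported across each edge. Conservation at each vertex forces the net flow across edge $i$ to be $\Delta(i)-c$ for a single constant $c$ (the circulation). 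Each swap contributes $1$ to the total absolute flow, which gives the lower bound.
\emph{Upper bound.} Fix the optimal $c$. Realize the flow greedily: while some edge carries nonzero required flow, there is an edge whose source side holds a zero and whose target side holds a one in the required direction. Swapping across it decreases $\sum|\Delta-c|$ by exactly $1$. This is the cyclic analogue of the inversion-counting argument in Corollary~\ref{inv_cor}. The wrapped-duality Hypothesis fails in the Example only because it fixes $c=0$ after shifting; allowing a free $c$ repairs it.

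\textbf{Step 2 (maximization).} The optimal $c$ is a median of the values $\Delta(0),\dots,\Delta(n-1)$. The walk $\Delta$ is closed, has steps in $\{-1,0,1\}$, and its numbers of up-steps and down-steps are each at most $\min(\ell,n-\ell)=\ell$, because an up-step needs $x_j=0,y_j=1$. I expect the extremal configuration to be a triangular wave, rising by $1$ for about $n/2$ steps and falling for about $n/2$ steps, with peak about $\ell$ and median about $\ell/2$. This gives deviation sum $\approx n^2/8$.

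\emph{Lower bound on the diameter.} I would take $x=0^{\ell}1^{n-\ell}$ and $y$ its cyclic rotation by $\lfloor n/2\rfloor$, compute $\sum|\Delta-\mathrm{med}|$ exactly in each residue class mod $4$, and check that it equals $\lceil \ell(n-\ell)/2\rceil$.

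\emph{Upper bound on the diameter.} This is the main obstacle. I would sort the values $\Delta(i)$ and pair the largest with the smallest, so that $\sum|\Delta-\mathrm{med}|\le\sum_{\text{pairs}}(\Delta_{\max\text{-side}}-\Delta_{\min\text{-side}})$. Then I would bound each pair's difference by the cyclic distance constraint, namely $|\Delta(i)-\Delta(j)|\le$ the circular distance between $i$ and $j$, capped by $\ell$. Summing these caps gives the floor of a sum of the form $\sum_t \min(t,\ell,\dots)$, whose value should be $\ell(n-\ell)/2$ up to rounding. The parity/rounding bookkeeping for $n\equiv 2\pmod 4$, which produces the $+4$ term, is where I expect the most care to be needed. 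There I would first try to show that an odd total forces one extra unit, using that $\sum_i\Delta(i)$ has a fixed parity determined by $x$ and $y$.
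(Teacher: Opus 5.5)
The paper states this result without proof (only the formulas, the OGF and the computed table), so I assess your argument on its own terms.

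\textbf{What is correct.} Step~1 is right: $d_{\mathrm{wrap}}(x,y)=\min_c\sum_i|\Delta(i)-c|$ holds. Your greedy realization also works. Take a maximal run of edges with positive required flow: it starts at a zero of $x$ and ends at a one of $x$, so some edge inside it carries a $01$ pair in the right direction. One bookkeeping point: a swap across the wrap edge shifts $\Delta(0),\dots,\Delta(n-2)$ by one, so $c$ must shift with it. The lower bound is correct too. For $x=0^{\ell}1^{n-\ell}$ and $y=1^{n-\ell}0^{\ell}$, the walk $\Delta$ rises through $1,\dots,\ell$, stays at $\ell$ for $n-2\ell$ steps, and descends through $\ell-1,\dots,0$. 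Its median deviation is $\lceil\ell(n-\ell)/2\rceil$ in all four residues.

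\textbf{The gap: the upper bound.} The upper bound is the real content, and your mechanism for it fails. Write $d_1\le\dots\le d_n$ for the sorted values of $\Delta$. Pairing $d_{n+1-t}$ with $d_t$ pairs indices by value, not by position, so nothing makes their circular distance small. Already for the extremal wave with $n=2\ell$, every such pair can be antipodal, and your cap then yields only $\ell^2\approx n^2/4$, twice the truth.

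The strongest form of your idea does not rescue it. The sets $\{\Delta\ge d_{n+1-t}\}$ and $\{\Delta\le d_t\}$ each have at least $t$ elements and are separated on both arcs by at least $d_{n+1-t}-d_t-1$ positions. Hence $d_{n+1-t}-d_t\le\lfloor n/2\rfloor-t+1$, and summing gives $\ell(\ell+1)/2$. This is exact for odd $n$. For even $n$ it exceeds $\lceil\ell^2/2\rceil$ by $\lfloor\ell/2\rfloor\approx n/4$; for $n=4$ it gives $3$, whereas $D_2(4)=2$. So the defect is linear in $n$, occurs for $n\equiv 0$ as well as $n\equiv 2\pmod 4$, and no parity argument on $\sum_i\Delta(i)$ can remove it.

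\textbf{The missing idea.} Every level strictly between $\min\Delta$ and $\max\Delta$ is attained at least twice, once on each arc between a minimiser and a maximiser of the closed unit-step walk. Consequences:
\begin{itemize}
\item The increments $d_{s+1}-d_s$ are $0$ or $1$.
\item The jump positions $s_1<\dots<s_R$ (where the increment is $1$) satisfy $s_{r+1}-s_r\ge 2$.
\end{itemize}
Counting each jump in the pairs that straddle it gives
\[
\sum_i|\Delta(i)-\mathrm{med}|=\sum_{t=1}^{\lfloor n/2\rfloor}(d_{n+1-t}-d_t)=\sum_{r=1}^{R}\min(s_r,n-s_r).
\]
Let $a$ be the largest jump position $\le n/2$ and $n-b$ the smallest one $>n/2$, taking $a=0$ or $b=0$ if there is none. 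Then $a\le\lfloor n/2\rfloor$, $b\le\lceil n/2\rceil-1$ and $a+b\le n-2$. By the $2$-separation, the sum is at most $F(a)+F(b)$, where $F(a)$ is the sum of the positive terms $a,a-2,a-4,\dots$, i.e.\ $F(a)=\lfloor (a+1)^2/4\rfloor$.

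Since $F$ is increasing, one may take $a+b=n-2$. That leaves only $(a,b)=(\lfloor n/2\rfloor,\lceil n/2\rceil-2)$ or $(\lfloor n/2\rfloor-1,\lceil n/2\rceil-1)$. These give $2m^2$, $2m^2+m$, $2m^2+2m+1$, $2m^2+3m+1$ for $n=4m,4m+1,4m+2,4m+3$, which is exactly $\lceil\ell(n-\ell)/2\rceil$. The extra unit at $n\equiv 2\pmod 4$ comes from $F(2m+1)+F(2m-1)$ beating $2F(2m)$, i.e.\ from putting the jumps on odd positions.

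With this replacing your pairing step, your outline becomes a complete proof. Since it bounds every pair $(x,y)$, it covers both the eccentricity of $0^{\ell}1^{n-\ell}$ and the graph diameter. Note also that for $\ell=\lfloor n/2\rfloor$ the range bound $\max\Delta-\min\Delta\le\ell$ is not even needed.
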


\begin{Conj}
\textbf{Case $k=3$ from $n=8$}
\[
\begin{aligned}
n&\equiv 0,6 \pmod{8}: & D_3(n)&=\frac{n^2+2n}{16}=\frac{n(n+2)}{16},\\[4pt]
n&\equiv 1,5 \pmod{8}: & D_3(n)&=\frac{n^2+2n-3}{16},\\[4pt]
n&\equiv 2,4 \pmod{8}: & D_3(n)&=\frac{n^2+2n+8}{16},\\[4pt]
n&\equiv 3,7 \pmod{8}: & D_3(n)&=\frac{n^2+2n+1}{16}=\frac{(n+1)^2}{16}.
\end{aligned}
\]

\[
\mathrm{OGF}
=\frac{H(x)}{(1-x^{8})^{3}},
\]
where
\[
\begin{aligned}
H(x)=\;&x^{20}+x^{19}+2x^{18}+2x^{17}+3x^{16}+4x^{15}+5x^{14}+6x^{13} \\
&+5x^{12}+6x^{11}+5x^{10}+6x^{9}+5x^{8}+4x^{7}+3x^{6}+2x^{5}+2x^{4}+x^{3}+x^{2}.
\end{aligned}
\]

\textbf{Case $k=4$ from $n=12$}
\[
\begin{aligned}
n&\equiv 0,8 \pmod{12}: & D_4(n)&=\frac{n(n+4)}{24},\\[4pt]
n&\equiv 1,7 \pmod{12}: & D_4(n)&=\frac{n^2+4n-5}{24},\\[4pt]
n&\equiv 2,6 \pmod{12}: & D_4(n)&=\frac{n^2+4n+12}{24},\\[4pt]
n&\equiv 3,9,11 \pmod{12}: & D_4(n)&=\frac{n^2+4n+3}{24},\\[4pt]
n&\equiv 4 \pmod{12}: & D_4(n)&=\frac{n^2+4n+16}{24},\\[4pt]
n&\equiv 5 \pmod{12}: & D_4(n)&=\frac{n^2+4n+3}{24},\\[4pt]
n&\equiv 10 \pmod{12}: & D_4(n)&=\frac{n^2+4n+4}{24}.
\end{aligned}
\]

\[
\mathrm{OGF}
=\frac{H(x)}{(1-x^{12})^{3}},
\]
where
\[
\begin{aligned}
H(x)=\;&x^{30}+x^{29}+2x^{28}+2x^{27}+3x^{26}+3x^{25}+4x^{24}+5x^{23} \\
&+6x^{22}+7x^{21}+8x^{20}+9x^{19}+8x^{18}+9x^{17}+8x^{16}+9x^{15} \\
&+8x^{14}+9x^{13}+8x^{12}+7x^{11}+6x^{10}+5x^{9}+4x^{8}+3x^{7} \\
&+3x^{6}+2x^{5}+2x^{4}+x^{3}+x^{2}.
\end{aligned}
\]
\end{Conj}

\begin{figure}[H]
    \centering
    \begin{minipage}{0.32\linewidth}
        \centering
        \includegraphics[width=\linewidth]{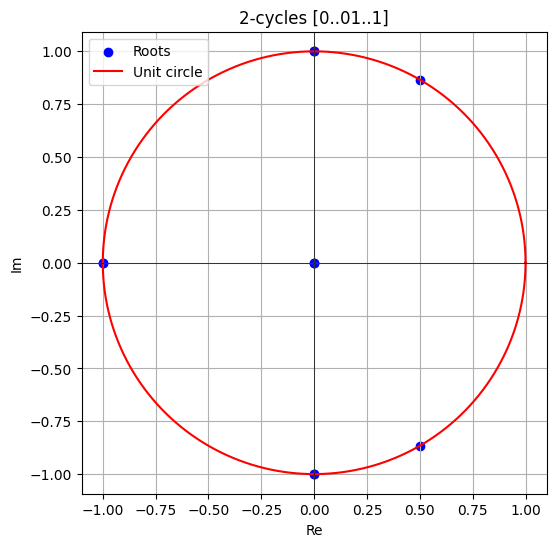}
        wrapped 2-cycles [0..01..1]
    \end{minipage}
    \hfill
    \begin{minipage}{0.32\linewidth}
        \centering
        \includegraphics[width=\linewidth]{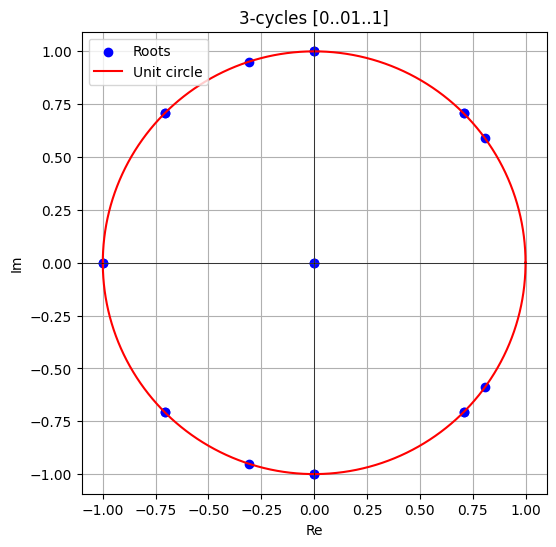}
        wrapped 3-cycles [0..01..1]
    \end{minipage}
    \hfill
    \begin{minipage}{0.32\linewidth}
        \centering
        \includegraphics[width=\linewidth]{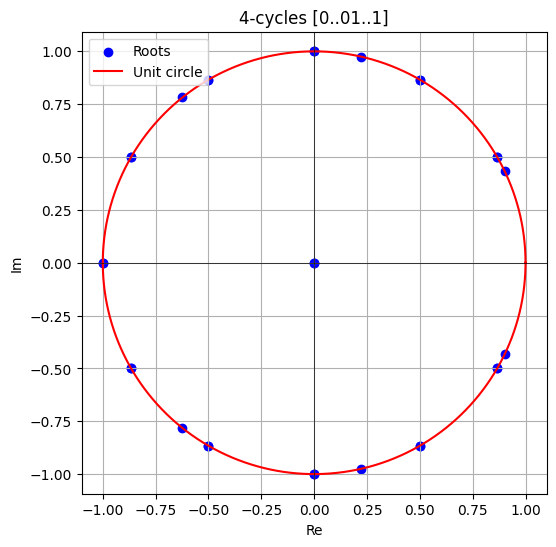}
        wrapped 4-cycles [0..01..1]
    \end{minipage}
    \caption{Zeros of $H$-polynomials for wrapped $k$-cycle Schreier graphs $S_n / \bigl(S_{\lfloor n/2 \rfloor} \times S_{n-\lfloor n/2 \rfloor}\bigr)$ (not inverse-closed)  for $k=2,3,4$.}
\end{figure}




\begin{table}[H]
\centering
\small
\setlength{\tabcolsep}{4pt}
\begin{tabular}{r|rrr|rrr|rrr|rrr|rrr}
\toprule
$n$ &
\multicolumn{3}{c|}{k=2} &
\multicolumn{3}{c|}{k=3} &
\multicolumn{3}{c|}{k=4} &
\multicolumn{3}{c|}{k=5} &
\multicolumn{3}{c}{k=6} \\
\cmidrule(lr){2-4}
\cmidrule(lr){5-7}
\cmidrule(lr){8-10}
\cmidrule(lr){11-13}
\cmidrule(lr){14-16}
 & $d$ & $\Delta d$ & $\Delta^2 d$
 & $d$ & $\Delta d$ & $\Delta^2 d$
 & $d$ & $\Delta d$ & $\Delta^2 d$
 & $d$ & $\Delta d$ & $\Delta^2 d$
 & $d$ & $\Delta d$ & $\Delta^2 d$ \\
\midrule
3 & 1 & 1 & 0 &   &   &   &   &   &   &   &   &   &   &   &   \\
4 & 2 & 1 & 1 & 2 & 0 & 1 &   &   &   &   &   &   &   &   &   \\
5 & 3 & 2 & -1 & 2 & 1 & 0 & 3 & 0 & 0 &   &   &   &   &   &   \\
6 & 5 & 1 & 1 & 3 & 1 & 0 & 3 & 0 & 1 & 4 & 0 & 0 &   &   &   \\
7 & 6 & 2 & 0 & 4 & 1 & 0 & 3 & 1 & 0 & 4 & 0 & 0 & 5 & 0 & 0 \\
8 & 8 & 2 & 1 & 5 & 1 & 1 & 4 & 1 & 0 & 4 & 0 & 1 & 5 & 0 & 0 \\
9 &10 & 3 & -1 & 6 & 2 & -1 & 5 & 1 & 0 & 4 & 1 & 0 & 5 & 0 & 0 \\
10 &13 & 2 & 1 & 8 & 1 & 1 & 6 & 1 & 0 & 5 & 1 & 0 & 5 & 0 & 1 \\
11 &15 & 3 & 0 & 9 & 2 & -1 & 7 & 1 & 0 & 6 & 1 & 0 & 5 & 1 & 0 \\
12 &18 & 3 & 1 &11 & 1 & 1 & 8 & 1 & 1 & 7 & 1 & 0 & 6 & 1 & 0 \\
13 &21 & 4 & -1 &12 & 2 & 0 & 9 & 2 & -1 & 8 & 1 & 0 & 7 & 1 & 0 \\
14 &25 & 3 & 1 &14 & 2 & 0 &11 & 1 & 1 & 9 & 1 & 0 & 8 & 1 & 0 \\
15 &28 & 4 & 0 &16 & 2 & 0 &12 & 2 & -1 &10 & 1 & 0 & 9 & 1 & 0 \\
16 &32 & 4 & 1 &18 & 2 & 1 &14 & 1 & 1 &11 & 1 & 1 &10 & 1 & 0 \\
17 &36 & 5 & -1 &20 & 3 & -1 &15 & 2 & -1 &12 & 2 & -1 &11 & 1 & 0 \\
18 &41 & 4 & 1 &23 & 2 & 1 &17 & 1 & 1 &14 & 1 & 1 &12 & 1 & 0 \\
19 &45 & 5 & 0 &25 & 3 & -1 &18 & 2 & 0 &15 & 2 & -1 &13 & 1 & 0 \\
20 &50 & 5 & 1 &28 & 2 & 1 &20 & 2 & 0 &17 & 1 & 1 &14 & 1 & 1 \\
21 &55 & 6 & -1 &30 & 3 & 0 &22 & 2 & 0 &18 & 2 & -1 &15 & 2 & -1 \\
22 &61 & 5 & 1 &33 & 3 & 0 &24 & 2 & 0 &20 & 1 & 1 &17 & 1 & 1 \\
23 &66 & 6 & 0 &36 & 3 & 0 &26 & 2 & 0 &21 & 2 & -1 &18 & 2 & -1 \\
24 &72 & 6 & 1 &39 & 3 & 1 &28 & 2 & 1 &23 & 1 & 1 &20 & 1 & 1 \\
25 &78 & 7 & -1 &42 & 4 & -1 &30 & 3 & -1 &24 & 2 & 0 &21 & 2 & -1 \\
26 &85 & 6 & 1 &46 & 3 & 1 &33 & 2 & 1 &26 & 2 & 0 &23 & 1 & 1 \\
27 &91 & 7 & 0 &49 & 4 & -1 &35 & 3 & -1 &28 & 2 & 0 &24 & 2 & -1 \\
28 &98 & 7 & 1 &53 & 3 & 1 &38 & 2 & 1 &30 & 2 & 0 &26 & 1 & 1 \\
29 &105 & 8 & -1 &56 & 4 & 0 &40 & 3 & -1 &32 & 2 & 0 &27 & 2 & -1 \\
30 &113 & 7 & 1 &60 & 4 & 0 &43 & 2 & 1 &34 & 2 & 0 &29 & 1 & 1 \\
31 &120 & 8 & 0 &64 & 4 & 0 &45 & 3 & 0 &36 & 2 & 0 &30 & 2 & 0 \\
32 &128 & 8 & 1 &68 & 4 & 1 &48 & 3 & 0 &38 & 2 & 1 &32 & 2 & 0 \\
33 &136 & 9 & -1 &72 & 5 & -1 &51 & 3 & 0 &40 & 3 & -1 &34 & 2 & 0 \\
34 &145 & 8 &   &77 & 4 &   &54 & 3 &   &43 & 2 &   &36 & 2 &   \\
35 &153 &   &   &81 &   &   &57 &   &   &45 &   &   &38 &   &   \\
\bottomrule
\end{tabular}
\caption{Diameters $D_k(n)$ and their increments and second increments for wrapped $k$-cycle coset $S_n / \bigl(S_{\lfloor n/2 \rfloor} \times S_{n-\lfloor n/2 \rfloor}\bigr)$ (not inverse-closed). In that example second increments allow to see periodic structure, i.e. fit quasi-polynomials.}
\label{tab:wrapped_cycle_growth_k=2..6}
\end{table}

\clearpage
\subsection{\texorpdfstring{Schreier coset graph: $S_n / \bigl(S_{\lfloor n/2 \rfloor} \times S_{n-\lfloor n/2 \rfloor}\bigr)$ (inverse-closed)}{Schreier coset graph: S_n/(S_{floor(n/2)} x S_{n-floor(n/2)}) (inverse-closed)}}


The setup is almost identical to the previous subsection, but now consider inverse closed generators.

\begin{Conj}

\textbf{Case $k=3$ from $n=3$}
\[
\begin{aligned}
n&\equiv 0 \pmod{8}: & D_3(n)&=\frac{n^2}{16},\\[4pt]
n&\equiv 1,7 \pmod{8}: & D_3(n)&=\frac{n^2-1}{16},\\[4pt]
n&\equiv 2,6 \pmod{8}: & D_3(n)&=\frac{n^2+12}{16},\\[4pt]
n&\equiv 3,5 \pmod{8}: & D_3(n)&=\frac{n^2+7}{16},\\[4pt]
n&\equiv 4 \pmod{8}: & D_3(n)&=\frac{n^2+16}{16}.
\end{aligned}
\]

\[
\mathrm{OGF}
=\frac{H(x)}{(1-x^{8})^{3}},
\]
where
\[
\begin{aligned}
H(x)=\;&x^{22}+x^{21}+2x^{20}+2x^{19}+3x^{18}+3x^{17}+4x^{16}+5x^{15} \\
&+4x^{14}+5x^{13}+4x^{12}+5x^{11}+4x^{10}+5x^{9}+4x^{8}+3x^{7} \\
&+3x^{6}+2x^{5}+2x^{4}+x^{3}+x^{2}.
\end{aligned}
\]

\textbf{Case $k=4$ from $n=12$}
\[
\begin{aligned}
n&\equiv 0 \pmod{12}: & D_6(n)&=\frac{n^2-2n+48}{24}=\frac{(n-1)^2+47}{24},\\[4pt]
n&\equiv 1,11 \pmod{12}: & D_6(n)&=\frac{n^2-1}{24},\\[4pt]
n&\equiv 2,10 \pmod{12}: & D_6(n)&=\frac{n^2+20}{24},\\[4pt]
n&\equiv 3,9 \pmod{12}: & D_6(n)&=\frac{n^2+15}{24},\\[4pt]
n&\equiv 4,8 \pmod{12}: & D_6(n)&=\frac{n^2+32}{24},\\[4pt]
n&\equiv 5,7 \pmod{12}: & D_6(n)&=\frac{n^2+23}{24},\\[4pt]
n&\equiv 6 \pmod{12}: & D_6(n)&=\frac{n^2+36}{24}.
\end{aligned}
\]

\[
\mathrm{OGF}
=\frac{H(x)}{(1-x^{12})^{3}},
\]
where
\[
\begin{aligned}
H(x)=\;&x^{34}+x^{33}+2x^{32}+2x^{31}+3x^{30}+3x^{29}+4x^{28}+4x^{27} \\
&+5x^{26}+5x^{25}+9x^{24}+7x^{23}+6x^{22}+7x^{21}+6x^{20}+7x^{19} \\
&+6x^{18}+7x^{17}+6x^{16}+7x^{15}+6x^{14}+7x^{13}+x^{12}+5x^{11} \\
&+5x^{10}+4x^{9}+4x^{8}+3x^{7}+3x^{6}+2x^{5}+2x^{4}+x^{3}+x^{2}+2.
\end{aligned}
\]
    
\end{Conj}

\begin{figure}[H]
    \centering
    \begin{minipage}{0.48\linewidth}
        \centering
        \includegraphics[width=\linewidth]{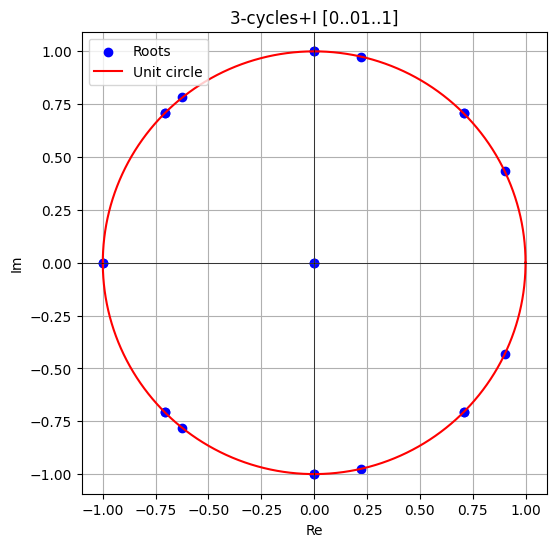}
        \caption{wrapped 3-cycles inverse closed [0..01..1]}
        \label{fig:wrapped3inv}
    \end{minipage}
    \hfill
    \begin{minipage}{0.48\linewidth}
        \centering
        \includegraphics[width=\linewidth]{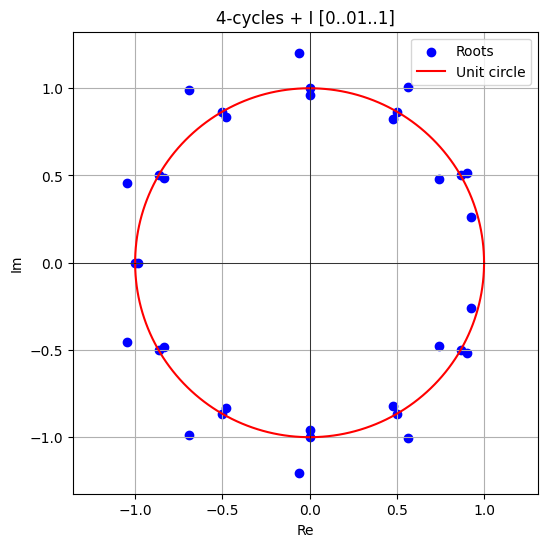}
        \caption{wrapped 4-cycles inverse closed [0..01..1]}
        \label{fig:wrapped4inv}
    \end{minipage}
\end{figure}



\begin{table}[H]
\centering
\small
\setlength{\tabcolsep}{4pt}
\begin{tabular}{r|rrr|rrr|rrr|rrr|rrr}
\toprule
$n$ &
\multicolumn{3}{c|}{k=2 inv} &
\multicolumn{3}{c|}{k=3 inv} &
\multicolumn{3}{c|}{k=4 inv} &
\multicolumn{3}{c|}{k=5 inv} &
\multicolumn{3}{c}{k=6 inv} \\
\cmidrule(lr){2-4}
\cmidrule(lr){5-7}
\cmidrule(lr){8-10}
\cmidrule(lr){11-13}
\cmidrule(lr){14-16}
 & $d$ & $\Delta d$ & $\Delta^2 d$
 & $d$ & $\Delta d$ & $\Delta^2 d$
 & $d$ & $\Delta d$ & $\Delta^2 d$
 & $d$ & $\Delta d$ & $\Delta^2 d$
 & $d$ & $\Delta d$ & $\Delta^2 d$ \\
\midrule
3 & 1 & 1 & 0 &   &   &   &   &   &   &   &   &   &   &   &   \\
4 & 2 & 1 & 1 & 2 & 0 & 1 &   &   &   &   &   &   &   &   &   \\
5 & 3 & 2 & -1 & 2 & 1 & -1 & 2 & 1 & -1 &   &   &   &   &   &   \\
6 & 5 & 1 & 1 & 3 & 0 & 1 & 3 & 0 & 1 & 3 & 0 & 1 &   &   &   \\
7 & 6 & 2 & 0 & 3 & 1 & 0 & 3 & 1 & -1 & 3 & 1 & -1 & 3 & 1 & -1 \\
8 & 8 & 2 & 1 & 4 & 1 & 1 & 4 & 0 & 1 & 4 & 0 & 1 & 4 & 0 & 1 \\
9 &10 & 3 & -1 & 5 & 2 & -1 & 4 & 1 & 0 & 4 & 1 & -1 & 4 & 1 & -1 \\
10 &13 & 2 & 1 & 7 & 1 & 1 & 5 & 1 & 0 & 5 & 0 & 1 & 5 & 0 & 1 \\
11 &15 & 3 & 0 & 8 & 2 & -1 & 6 & 1 & -1 & 5 & 1 & -1 & 5 & 1 & -1 \\
12 &18 & 3 & 1 &10 & 1 & 1 & 7 & 0 & 2 & 6 & 0 & 1 & 6 & 0 & 1 \\
13 &21 & 4 & -1 &11 & 2 & -1 & 7 & 2 & -1 & 6 & 1 & 0 & 6 & 1 & -1 \\
14 &25 & 3 & 1 &13 & 1 & 1 & 9 & 1 & 1 & 7 & 1 & 0 & 7 & 0 & 1 \\
15 &28 & 4 & 0 &14 & 2 & 0 &10 & 2 & -1 & 8 & 1 & -1 & 7 & 1 & -1 \\
16 &32 & 4 & 1 &16 & 2 & 1 &12 & 1 & 1 & 9 & 0 & 2 & 8 & 0 & 1 \\
17 &36 & 5 & -1 &18 & 3 & -1 &13 & 2 & -1 & 9 & 2 & -1 & 8 & 1 & 0 \\
18 &41 & 4 & 1 &21 & 2 & 1 &15 & 1 & 1 &11 & 1 & 1 & 9 & 1 & 0 \\
19 &45 & 5 & 0 &23 & 3 & -1 &16 & 2 & -1 &12 & 2 & -1 &10 & 1 & -1 \\
20 &50 & 5 & 1 &26 & 2 & 1 &18 & 1 & 1 &14 & 1 & 1 &11 & 0 & 2 \\
21 &55 & 6 & -1 &28 & 3 & -1 &19 & 2 & -1 &15 & 2 & -1 &11 & 2 & -1 \\
22 &61 & 5 & 1 &31 & 2 & 1 &21 & 1 & 1 &17 & 1 & 1 &13 & 1 & 1 \\
23 &66 & 6 & 0 &33 & 3 & 0 &22 & 2 & 0 &18 & 2 & -1 &14 & 2 & -1 \\
24 &72 & 6 & 1 &36 & 3 & 1 &24 & 2 & 1 &20 & 1 & 1 &16 & 1 & 1 \\
25 &78 & 7 & -1 &39 & 4 & -1 &26 & 3 & -1 &21 & 2 & -1 &17 & 2 & -1 \\
26 &85 & 6 & 1 &43 & 3 & 1 &29 & 2 & 1 &23 & 1 & 1 &19 & 1 & 1 \\
27 &91 & 7 & 0 &46 & 4 & -1 &31 & 3 & -1 &24 & 2 & -1 &20 & 2 & -1 \\
28 &98 & 7 & 1 &50 & 3 & 1 &34 & 2 & 1 &26 & 1 & 1 &22 & 1 & 1 \\
29 &105 & 8 & -1 &53 & 4 & -1 &36 & 3 & -1 &27 & 2 & -1 &23 & 2 & -1 \\
30 &113 & 7 & 1 &57 & 3 & 1 &39 & 2 & 1 &29 & 1 & 1 &25 & 1 & 1 \\
31 &120 & 8 & 0 &60 & 4 & 0 &41 & 3 & -1 &30 & 2 & 0 &26 & 2 & -1 \\
32 &128 & 8 & 1 &64 & 4 & 1 &44 & 2 & 1 &32 & 2 & 1 &28 & 1 & 1 \\
33 &136 & 9 & -1 &68 & 5 & -1 &46 & 3 & -1 &34 & 3 & -1 &29 & 2 & -1 \\
34 &145 & 8 &   &73 & 4 &   &49 & 2 &   &37 & 2 &   &31 & 1 &   \\
35 &153 &   &   &77 &   &   &51 &   &   &39 &   &   &32 &   &   \\
\bottomrule
\end{tabular}
\caption{Diameters $D_k(n)$ and their increments, wrapped k-cycle, inverse closed, $S_n / \bigl(S_{\lfloor n/2 \rfloor} \times S_{n-\lfloor n/2 \rfloor}\bigr)$ }
\label{tab:wrapped_cycle_growth_k=2..6_inv}
\end{table}

\clearpage
\subsection{\texorpdfstring{Some eccentricities for Schreier coset graph: $S_n / \bigl(S_{\lfloor n/2 \rfloor} \times S_{n-\lfloor n/2 \rfloor}\bigr)$}{Some eccentricities for Schreier coset graph: S_n/(S_{floor(n/2)} x S_{n-floor(n/2)})}}


Here we consider eccentricities for elements  $[0,1]^{\lfloor n/2 \rfloor} + [0]^{\,n - 2\lfloor n/2 \rfloor}$)
for the Schreier coset graph: $S_n / \bigl(S_{\lfloor n/2 \rfloor} \times S_{n-\lfloor n/2 \rfloor}\bigr)$,
wrapped $k$ cycles, both inverse-closed and not cases. And propose quasi-polynomial expressions for them, and study 
corresponding $H$-polynomials.

\begin{Conj}
\textbf{Case $k = 2$ from $n = 5$}
\[
p_r(n) = \frac{n^2}{16} + a_r n + b_r.
\]

\begin{table}[H] 
\centering
\begin{tabular}{r|c|c|l}
\textbf{residue ($r$)} & \textbf{$a_r$} & \textbf{$b_r$} & \textbf{polynomial $p_r(n)$} \\ \hline
$(0)$ & $0$ & $0$ &
$\displaystyle \frac{n^{2}}{16}$ \\[6pt]
$(1)$ & $\tfrac{1}{8}$ & $-\tfrac{3}{16}$ &
$\displaystyle \frac{n^{2}}{16} + \frac{n}{8} - \frac{3}{16}$ \\[6pt]
$(2)$ & $0$ & $\tfrac{12}{16}=\tfrac{3}{4}$ &
$\displaystyle \frac{n^{2}}{16} + \tfrac{3}{4}$ \\[6pt]
$(3)$ & $\tfrac{1}{8}$ & $\tfrac{1}{16}$ &
$\displaystyle \frac{n^{2}}{16} + \frac{n}{8} + \frac{1}{16}$ \\
\end{tabular}
\end{table}
\[
\mathrm{OGF}
=\frac{H(x)}{(1-x^{4})^{3}},
\qquad
H(x)=x^{10}+x^{8}+x^{7}+2x^{5}+x^{4}+x^{3}+x^{2}.
\]

\noindent\textbf{Inverse close or not - coincide, case k = 3 from n = 9}

\[
p_r(n) = \frac{n^2}{32} + a_r n + b_r.
\]

\begin{table}[H] 
\centering
\begin{tabular}{r|c|c|l}
\textbf{residue $r$} & $\mathbf{a_r}$ & $\mathbf{b_r}$ & \textbf{polynomial $p_r(n)$} \\ \hline
0 & $\tfrac{1}{8}$ & $0$ &
$\displaystyle \frac{n^2}{32} + \frac{n}{8}$ \\[6pt]
1 & $\tfrac{3}{16}$ & $-\tfrac{7}{32}$ &
$\displaystyle \frac{n^2}{32} + \frac{3n}{16} - \frac{7}{32}$ \\[6pt]
2 & $\tfrac{1}{8}$ & $\tfrac{5}{8}$ &
$\displaystyle \frac{n^2}{32} + \frac{n}{8} + \frac{5}{8}$ \\[6pt]
3 & $\tfrac{3}{16}$ & $\tfrac{5}{32}$ &
$\displaystyle \frac{n^2}{32} + \frac{3n}{16} + \frac{5}{32}$ \\[6pt]
4 & $\tfrac{1}{8}$ & $0$ &
$\displaystyle \frac{n^2}{32} + \frac{n}{8}$ \\[6pt]
5 & $\tfrac{3}{16}$ & $\tfrac{9}{32}$ &
$\displaystyle \frac{n^2}{32} + \frac{3n}{16} + \frac{9}{32}$ \\[6pt]
6 & $\tfrac{1}{8}$ & $\tfrac{1}{8}$ &
$\displaystyle \frac{n^2}{32} + \frac{n}{8} + \frac{1}{8}$ \\[6pt]
7 & $\tfrac{3}{16}$ & $\tfrac{5}{32}$ &
$\displaystyle \frac{n^2}{32} + \frac{3n}{16} + \frac{5}{32}$ \\
\end{tabular}
\end{table}

\[
\mathrm{OGF}
=\frac{H(x)}{(1-x^{8})^{3}},
\]
where
\[
\begin{aligned}
H(x)=\;&x^{18}+x^{16}+x^{15}+2x^{14}+2x^{13}+3x^{12}+3x^{11}+2x^{10} \\
&+4x^{9}+3x^{8}+3x^{7}+2x^{6}+2x^{5}+x^{4}+x^{3}+x^{2}.
\end{aligned}
\]

\noindent\textbf{Not inverse closed, case $k = 4$ from $n = 7$}
\[
p_r(n) = \frac{n^2}{48} + a_r n + b_r.
\]

\begin{table}[H] 
\centering
\begin{tabular}{r|c|c|l}
\textbf{residue ($r$)} & $\mathbf{a_r}$ & $\mathbf{b_r}$ & \textbf{polynomial $p_r(n)$} \\ \hline
0 & $\tfrac{1}{6}$ & $0$ &
$\displaystyle \frac{n^2}{48} + \frac{n}{6}$ \\[6pt]
1 & $\tfrac{5}{24}$ & $-\tfrac{11}{48}$ &
$\displaystyle \frac{n^2}{48} + \frac{5n}{24} - \frac{11}{48}$ \\[6pt]
2 & $\tfrac{1}{6}$ & $\tfrac{7}{12}$ &
$\displaystyle \frac{n^2}{48} + \frac{n}{6} + \frac{7}{12}$ \\[6pt]
3 & $\tfrac{5}{24}$ & $\tfrac{3}{16}$ &
$\displaystyle \frac{n^2}{48} + \frac{5n}{24} + \frac{3}{16}$ \\[6pt]
4 & $\tfrac{1}{6}$ & $0$ &
$\displaystyle \frac{n^2}{48} + \frac{n}{6}$ \\[6pt]
5 & $\tfrac{5}{24}$ & $\tfrac{7}{16}$ &
$\displaystyle \frac{n^2}{48} + \frac{5n}{24} + \frac{7}{16}$ \\[6pt]
6 & $\tfrac{1}{6}$ & $\tfrac{1}{4}$ &
$\displaystyle \frac{n^2}{48} + \frac{n}{6} + \frac{1}{4}$ \\[6pt]
7 & $\tfrac{5}{24}$ & $\tfrac{25}{48}$ &
$\displaystyle \frac{n^2}{48} + \frac{5n}{24} + \frac{25}{48}$ \\[6pt]
8 & $\tfrac{1}{6}$ & $\tfrac{1}{3}$ &
$\displaystyle \frac{n^2}{48} + \frac{n}{6} + \frac{1}{3}$ \\[6pt]
9 & $\tfrac{5}{24}$ & $\tfrac{7}{16}$ &
$\displaystyle \frac{n^2}{48} + \frac{5n}{24} + \frac{7}{16}$ \\[6pt]
10 & $\tfrac{1}{6}$ & $\tfrac{1}{4}$ &
$\displaystyle \frac{n^2}{48} + \frac{n}{6} + \frac{1}{4}$ \\[6pt]
11 & $\tfrac{5}{24}$ & $\tfrac{3}{16}$ &
$\displaystyle \frac{n^2}{48} + \frac{5n}{24} + \frac{3}{16}$ \\
\end{tabular}
\end{table}

\[
\mathrm{OGF}
=\frac{H(x)}{(1-x^{12})^{3}},
\]

where
\[
\begin{aligned}
H(x)=\;&x^{26}+x^{24}+x^{23}+2x^{22}+2x^{21}+3x^{20}+3x^{19}+4x^{18}+4x^{17} \\
&+5x^{16}+5x^{15}+4x^{14}+6x^{13}+5x^{12}+5x^{11}+4x^{10}+4x^{9} \\
&+3x^{8}+3x^{7}+2x^{6}+2x^{5}+x^{4}+x^{3}+x^{2}.
\end{aligned}
\]

\noindent\textbf{Case $k = 4$ from $n = 7$}

\[
p_r(n) = \frac{n^2}{48} + a_r n + b_r.
\]

\begin{table}[H] 
\centering
\begin{tabular}{r|c|c|l}
\textbf{residue $r$} & $\mathbf{a_r}$ & $\mathbf{b_r}$ & \textbf{polynomial $p_r(n)$} \\ \hline
0 & $0$ & $2$ &
$\displaystyle \frac{n^2}{48} + 2$ \\[6pt]
1 & $\tfrac{1}{24}$ & $\tfrac{31}{16}$ &
$\displaystyle \frac{n^2}{48} + \frac{n}{24} + \frac{31}{16}$ \\[6pt]
2 & $\tfrac{1}{12}$ & $\tfrac{3}{4}$ &
$\displaystyle \frac{n^2}{48} + \frac{n}{12} + \frac{3}{4}$ \\[6pt]
3 & $\tfrac{1}{8}$ & $\tfrac{7}{16}$ &
$\displaystyle \frac{n^2}{48} + \frac{n}{8} + \frac{7}{16}$ \\[6pt]
4 & $\tfrac{1}{12}$ & $\tfrac{1}{3}$ &
$\displaystyle \frac{n^2}{48} + \frac{n}{12} + \frac{1}{3}$ \\[6pt]
5 & $\tfrac{1}{8}$ & $\tfrac{41}{48}$ &
$\displaystyle \frac{n^2}{48} + \frac{n}{8} + \frac{41}{48}$ \\[6pt]
6 & $\tfrac{1}{12}$ & $\tfrac{3}{4}$ &
$\displaystyle \frac{n^2}{48} + \frac{n}{12} + \frac{3}{4}$ \\[6pt]
7 & $\tfrac{1}{8}$ & $\tfrac{5}{48}$ &
$\displaystyle \frac{n^2}{48} + \frac{n}{8} + \frac{5}{48}$ \\[6pt]
8 & $\tfrac{1}{6}$ & $-\tfrac{5}{3}$ &
$\displaystyle \frac{n^2}{48} + \frac{n}{6} - \frac{5}{3}$ \\[6pt]
9 & $\tfrac{5}{24}$ & $-\tfrac{25}{16}$ &
$\displaystyle \frac{n^2}{48} + \frac{5n}{24} - \frac{25}{16}$ \\[6pt]
10 & $\tfrac{1}{6}$ & $-\tfrac{7}{4}$ &
$\displaystyle \frac{n^2}{48} + \frac{n}{6} - \frac{7}{4}$ \\[6pt]
11 & $\tfrac{5}{24}$ & $-\tfrac{29}{16}$ &
$\displaystyle \frac{n^2}{48} + \frac{5n}{24} - \frac{29}{16}$ \\
\end{tabular}
\end{table}

\par 
\[
\mathrm{OGF}
=\frac{H(x)}{(1-x^{12})^{3}},
\]

where
\[
\begin{aligned}
H(x)=\;&-2x^{35}-2x^{34}-2x^{33}-2x^{32}+x^{30}+x^{29}+x^{28}+x^{27}+2x^{26}+4x^{25} \\
&+5x^{24}+5x^{23}+6x^{22}+6x^{21}+7x^{20}+4x^{19}+3x^{18}+3x^{17}+4x^{16}+4x^{15} \\
&+3x^{14}-x^{12}+3x^{11}+2x^{10}+2x^{9}+x^{8}+2x^{7}+2x^{6}+2x^{5}+x^{4}+x^{3}+x^{2}+2x+2.
\end{aligned}
\]

\end{Conj}  

\begin{figure}[H]
    \centering
    \begin{minipage}{0.48\linewidth}
        \centering
        \includegraphics[width=\linewidth]{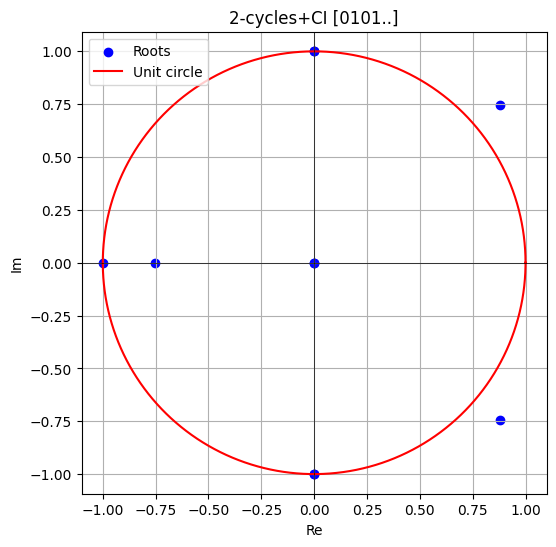}
        \caption{wrapped 2-cycles [0101...]}
        \label{fig:wrapped2inv_0101}
    \end{minipage}
    \hfill
    \begin{minipage}{0.48\linewidth}
        \centering
        \includegraphics[width=\linewidth]{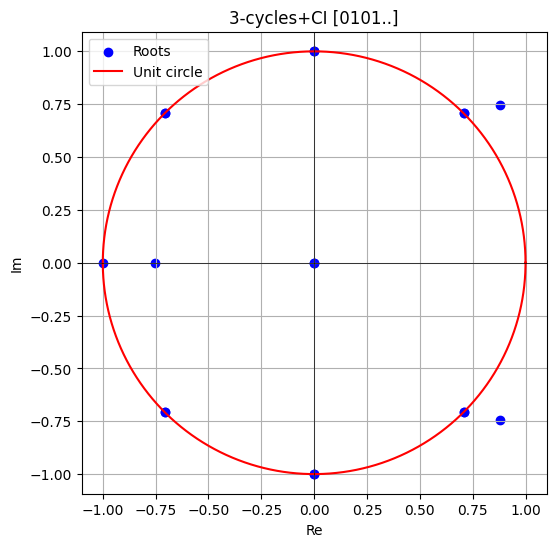}
        \caption{wrapped 3-cycles [0101...] (inverse closed or not coincide)}
        \label{fig:wrapped3inv_0101}
    \end{minipage}
\end{figure}

\begin{figure}[H]
    \centering
    \begin{minipage}{0.48\linewidth}
        \centering
        \includegraphics[width=\linewidth]{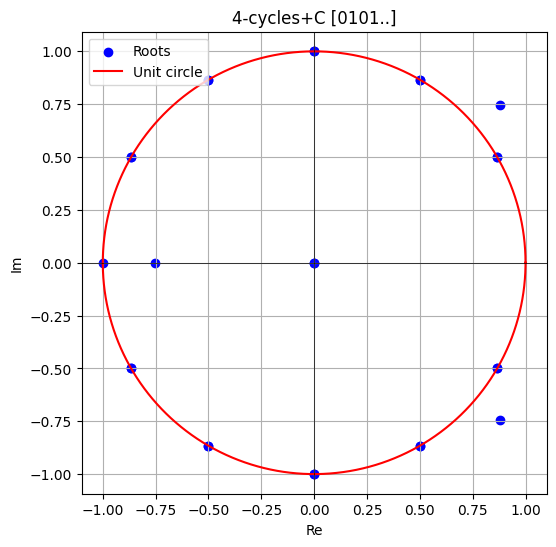}
        \caption{wrapped 4-cycles [0101...]}
        \label{fig:wrapped4_0101}
    \end{minipage}
    \hfill
    \begin{minipage}{0.48\linewidth}
        \centering
        \includegraphics[width=\linewidth]{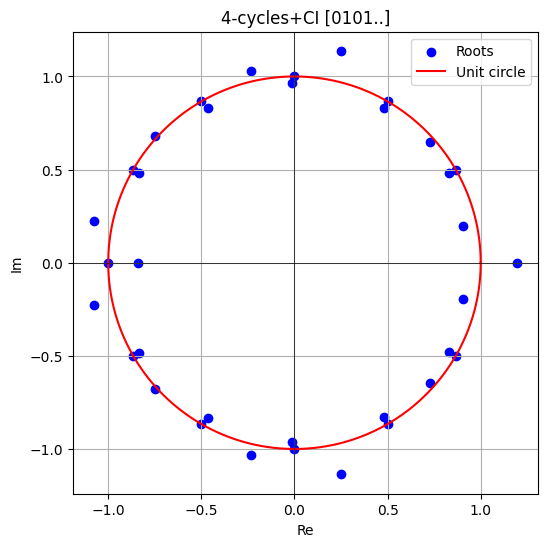}
        \caption{wrapped 4-cycles inverse closed [0101...]}
        \label{fig:wrapped4inv_0101}
    \end{minipage}
\end{figure}





\begin{table}[H]
\centering
\small
\setlength{\tabcolsep}{4pt}
\begin{tabular}{r|rrr|rrr|rrr|rrr}
\toprule
$n$ &
\multicolumn{3}{c|}{k=2} &
\multicolumn{3}{c|}{k=3} &
\multicolumn{3}{c|}{k=4} &
\multicolumn{3}{c}{k=5} \\
\cmidrule(lr){2-4}
\cmidrule(lr){5-7}
\cmidrule(lr){8-10}
\cmidrule(lr){11-13}
 & $d$ & $\Delta d$ & $\Delta^2 d$
 & $d$ & $\Delta d$ & $\Delta^2 d$
 & $d$ & $\Delta d$ & $\Delta^2 d$
 & $d$ & $\Delta d$ & $\Delta^2 d$ \\
\midrule
3  & 1 & 1 & -1 &   &   &   &   &   &   &   &   &   \\
4  & 2 & 0 &  1 & 2 & 0 & 1 &   &   &   &   &   &   \\
5  & 2 & 1 &  0 & 2 & 1 & -1 & 2 & 1 & -1 &   &   &   \\
6  & 3 & 1 & -1 & 3 & 0 &  1 & 3 & 0 &  0 & 3 & 0 & 0 \\
7  & 4 & 0 &  2 & 3 & 1 &  0 & 3 & 0 &  1 & 3 & 0 & 1 \\
8  & 4 & 2 & -1 & 4 & 1 & -1 & 3 & 1 & -1 & 3 & 1 & -1 \\
9  & 6 & 1 &  1 & 5 & 0 &  1 & 4 & 0 &  1 & 4 & 0 & 1 \\
10 & 7 & 2 & -2 & 5 & 1 & -1 & 4 & 1 & -1 & 4 & 1 & -1 \\
11 & 9 & 0 &  3 & 6 & 0 &  2 & 5 & 0 &  1 & 5 & 0 & 1 \\
12 & 9 & 3 & -2 & 6 & 2 & -2 & 5 & 1 &  0 & 5 & 1 & -1 \\
13 &12 & 1 &  2 & 8 & 0 &  2 & 6 & 1 &  0 & 6 & 0 & 1 \\
14 &13 & 3 & -3 & 8 & 2 & -2 & 7 & 1 & -1 & 6 & 1 & -1 \\
15 &16 & 0 &  4 &10 & 0 &  2 & 8 & 0 &  2 & 7 & 0 & 1 \\
16 &16 & 4 & -3 &10 & 2 & -1 & 8 & 2 & -2 & 7 & 1 & 0 \\
17 &20 & 1 &  3 &12 & 1 &  1 &10 & 0 &  2 & 8 & 1 & 0 \\
18 &21 & 4 & -4 &13 & 2 & -2 &10 & 2 & -2 & 9 & 1 & -1 \\
19 &25 & 0 &  5 &15 & 0 &  3 &12 & 0 &  2 &10 & 0 & 2 \\
20 &25 & 5 & -4 &15 & 3 & -3 &12 & 2 & -2 &10 & 2 & -2 \\
21 &30 & 1 &  4 &18 & 0 &  3 &14 & 0 &  2 &12 & 0 & 2 \\
22 &31 & 5 & -5 &18 & 3 & -3 &14 & 2 & -2 &12 & 2 & -2 \\
23 &36 & 0 &  6 &21 & 0 &  3 &16 & 0 &  2 &14 & 0 & 2 \\
24 &36 & 6 & -5 &21 & 3 & -2 &16 & 2 & -1 &14 & 2 & -2 \\
25 &42 & 1 &  5 &24 & 1 &  2 &18 & 1 &  1 &16 & 0 & 2 \\
26 &43 & 6 & -6 &25 & 3 & -3 &19 & 2 & -2 &16 & 2 & -2 \\
27 &49 & 0 &  7 &28 & 0 &  4 &21 & 0 &  3 &18 & 0 & 2 \\
28 &49 & 7 & -6 &28 & 4 & -4 &21 & 3 & -3 &18 & 2 & -2 \\
29 &56 & 1 &  6 &32 & 0 &  4 &24 & 0 &  3 &20 & 0 & 2 \\
30 &57 & 7 & -7 &32 & 4 & -4 &24 & 3 & -3 &20 & 2 & -2 \\
31 &64 & 0 &  8 &36 & 0 &  4 &27 & 0 &  3 &22 & 0 & 2 \\
32 &64 & 8 & -7 &36 & 4 & -3 &27 & 3 & -3 &22 & 2 & -1 \\
33 &72 & 1 &  7 &40 & 1 &  3 &30 & 0 &  3 &24 & 1 & 1 \\
34 &73 & 8 &    &41 & 4 &    &30 & 3 &    &25 & 2 &   \\
35 &81 &   &    &45 &   &    &33 &   &    &27 &   &   \\
\bottomrule
\end{tabular}
\caption{eccentricities  and increments for $[0,1]^{\lfloor n/2 \rfloor} + [0]^{\,n - 2\lfloor n/2 \rfloor}$)
, wrapped k-cycle, not inverse-closed }
\label{tab:wrapped_cycle_growth_k=2..5}
\end{table}

\begin{table}[H]
\centering
\small
\setlength{\tabcolsep}{4pt}
\begin{tabular}{r|rrr|rrr|rrr|rrr}
\toprule
$n$ &
\multicolumn{3}{c|}{k=2 inv} &
\multicolumn{3}{c|}{k=3 inv} &
\multicolumn{3}{c|}{k=4 inv} &
\multicolumn{3}{c}{k=5 inv} \\
\cmidrule(lr){2-4}
\cmidrule(lr){5-7}
\cmidrule(lr){8-10}
\cmidrule(lr){11-13}
 & $d$ & $\Delta d$ & $\Delta^2 d$
 & $d$ & $\Delta d$ & $\Delta^2 d$
 & $d$ & $\Delta d$ & $\Delta^2 d$
 & $d$ & $\Delta d$ & $\Delta^2 d$ \\
\midrule
3  & 1 & 1 & -1 &   &   &   &   &   &   &   &   &   \\
4  & 2 & 0 &  1 & 2 & 0 & 1 &   &   &   &   &   &   \\
5  & 2 & 1 &  0 & 2 & 1 & -1 & 2 & 1 & -2 &   &   &   \\
6  & 3 & 1 & -1 & 3 & 0 &  1 & 3 & -1 &  2 & 3 & -1 & 2 \\
7  & 4 & 0 &  2 & 3 & 1 &  0 & 2 &  1 &  0 & 2 &  1 & 0 \\
8  & 4 & 2 & -1 & 4 & 1 & -1 & 3 &  1 & -1 & 3 &  1 & -1 \\
9  & 6 & 1 &  1 & 5 & 0 &  1 & 4 &  0 &  1 & 4 &  0 & 0 \\
10 & 7 & 2 & -2 & 5 & 1 & -1 & 4 &  1 & -1 & 4 &  0 & 1 \\
11 & 9 & 0 &  3 & 6 & 0 &  2 & 5 &  0 &  1 & 4 &  1 & 0 \\
12 & 9 & 3 & -2 & 6 & 2 & -2 & 5 &  1 & -1 & 5 &  1 & -1 \\
13 &12 & 1 &  2 & 8 & 0 &  2 & 6 &  0 &  1 & 6 &  0 & 1 \\
14 &13 & 3 & -3 & 8 & 2 & -2 & 6 &  1 & -1 & 6 &  1 & -1 \\
15 &16 & 0 &  4 &10 & 0 &  2 & 7 &  0 &  2 & 7 &  0 & 1 \\
16 &16 & 4 & -3 &10 & 2 & -1 & 7 &  2 & -2 & 7 &  1 & -1 \\
17 &20 & 1 &  3 &12 & 1 &  1 & 9 &  0 &  1 & 8 &  0 & 1 \\
18 &21 & 4 & -4 &13 & 2 & -2 & 9 &  1 & -1 & 8 &  1 & -1 \\
19 &25 & 0 &  5 &15 & 0 &  3 &10 &  0 &  2 & 9 &  0 & 1 \\
20 &25 & 5 & -4 &15 & 3 & -3 &10 &  2 & -2 & 9 &  1 & 0 \\
21 &30 & 1 &  4 &18 & 0 &  3 &12 &  0 &  2 &10 &  1 & 0 \\
22 &31 & 5 & -5 &18 & 3 & -3 &12 &  2 & -2 &11 &  1 & -1 \\
23 &36 & 0 &  6 &21 & 0 &  3 &14 &  0 &  2 &12 &  0 & 2 \\
24 &36 & 6 & -5 &21 & 3 & -2 &14 &  2 & -1 &12 &  2 & -2 \\
25 &42 & 1 &  5 &24 & 1 &  2 &16 &  1 &  1 &14 &  0 & 1 \\
26 &43 & 6 & -6 &25 & 3 & -3 &17 &  2 & -2 &14 &  1 & 0 \\
27 &49 & 0 &  7 &28 & 0 &  4 &19 &  0 &  3 &15 &  1 & 1 \\
28 &49 & 7 & -6 &28 & 4 & -4 &19 &  3 & -3 &16 &  2 & -2 \\
29 &56 & 1 &  6 &32 & 0 &  4 &22 &  0 &  2 &18 &  0 & 2 \\
30 &57 & 7 & -7 &32 & 4 & -4 &22 &  2 & -1 &18 &  2 & -2 \\
31 &64 & 0 &  8 &36 & 0 &  4 &24 &  1 &  2 &20 &  0 & 2 \\
32 &64 & 8 & -7 &36 & 4 & -3 &25 &  3 & -3 &20 &  2 & -2 \\
33 &72 & 1 &  7 &40 & 1 &  3 &28 &  0 &  3 &22 &  0 & 2 \\
34 &73 & 8 &    &41 & 4 &    &28 &  3 &    &22 &  2 &   \\
35 &81 &   &    &45 &   &    &31 &   &    &24 &   &   \\
\bottomrule
\end{tabular}
\caption{eccentricities  and increments for $[0,1]^{\lfloor n/2 \rfloor} + [0]^{\,n - 2\lfloor n/2 \rfloor}$)
, wrapped k-cycle, inverse-closed}
\label{tab:wrapped_cycle_growth_k=2..5_inv}
\end{table}

\subsection{\texorpdfstring{Word metrics to full flips for Cayley and Schreier ($S_n/S_d)$ graphs}{Word metrics to full flips for Cayley and Schreier (S_n/S_d) graphs}}


Here we will consider inverse closed wrapped $k$-cycles generators.
And will look on both Cayley and  Schreier ($S_n/S_d)$ graphs. 
We will study the word metric to the "full flipp", i.e. the element
$[(n-d, n-d, \dots, n-d, n-d-1, \dots, 2, 1, 0)$,
of in the other words lengths of the shortest paths between
$(0, 1, 2, \dots, n-d-1, n-d, n-d, \dots, n-d)$  and $(n-d, n-d, \dots, n-d, n-d-1, \dots, 2, 1, 0)$.

Computations are performed with CayleyPy for small $n$ by BFS and for $n>14$ by  AI-compenent.


\begin{longtable}{lccccccc}
\caption{Experimental data ($d$ coincide)}\\

\toprule
\textbf{k} & \textbf{n} & \textbf{no c. d} & \textbf{d = 2} & \textbf{d = 3} & \textbf{d = 4} & \textbf{d = 5} & \textbf{d = 6}\\
\midrule
\endfirsthead

\toprule
\textbf{k} & \textbf{n} & \textbf{no c. d} & \textbf{d = 2} & \textbf{d = 3} & \textbf{d = 4} & \textbf{d = 5} & \textbf{d = 6}\\
\midrule
\endhead

\midrule
\multicolumn{8}{r}{\emph{Continued on next page}}
\\
\endfoot

\bottomrule
\endlastfoot

4 & 5  & 3  & 3  & 2  & 1  &    &    \\
4 & 6  & 4  & 4  & 3  & 2  & 1  &    \\
4 & 7  & 6  & 5  & 5  & 4  & 3  & 1  \\
4 & 8  & 7  & 6  & 6  & 5  & 4  & 3  \\
4 & 9  & 7  & 7  & 6  & 6  & 5  & 4  \\
4 & 10 & 10 & 8  & 8  & 7  & 7  & 6  \\
4 & 11 & 10 & 10 & 9  & 9  & 9  & 8  \\
4 & 12 & 11 & 10 & 10 & 10 & 10 & 10 \\
4 & 13 & 13 & 12 & 12 & 12 & 12 & 11 \\
4 & 14 & 14 & 14 & 14 & 13 & 13 & 13 \\
4 & 15 & 16 & 16 & 16 & 15 & 15 & 15 \\
4 & 16 & 19 & 19 & 19 & 18 & 17 & 17 \\
4 & 17 & 21 & 21 & 21 & 20 & 19 & 19 \\
4 & 18 & 24 & 24 & 24 & 23 & 22 & 22 \\
4 & 19 & 28 & 27 & 26 & 25 & 24 & 24 \\
4 & 20 & 31 & 30 & 29 & 28 & 27 & 27 \\
4 & 21 & 35 & 34 & 33 & 32 & 31 & 29 \\
4 & 22 & 38 & 37 & 36 & 35 & 34 & 32 \\
4 & 23 & 42 & 41 & 40 & 39 & 38 & 36 \\
4 & 24 & 44 & 44 & 43 & 42 & 41 & 39 \\
4 & 25 & 48 & 48 & 47 & 46 & 45 & 43 \\
4 & 26 & 50 & 52 & 52 & 51 & 49 & 48 \\
4 & 27 & 56 & 56 & 56 & 55 & 53 & 52 \\
4 & 28 & 61 & 61 & 61 & 60 & 58 & 57 \\
4 & 29 &    & 65 & 65 & 64 & 62 & 62 \\
4 & 30 &    & 70 & 70 & 69 & 67 & 67 \\
4 & 31 &    & 75 & 74 & 73 & 72 & 70 \\
4 & 32 &    & 80 & 79 & 78 & 77 & 75 \\
4 & 33 &    &    & 85 & 84 & 83 & 83 \\

\end{longtable}

\clearpage

\textbf{Case $k=4$, $d=2$ from $n=12$}
\[
\begin{aligned}
n &\equiv 0 \pmod{12}:&\; D(n) &= \frac{n(n-2)}{12},\\[4pt]
n &\equiv 1 \pmod{12}:&\; D(n) &= \frac{n(n-2)}{12} + \frac{1}{12},\\[4pt]
n &\equiv 2 \pmod{12}:&\; D(n) &= \frac{n(n-2)}{12},\\[4pt]
n &\equiv 3 \pmod{12}:&\; D(n) &= \frac{n(n-2)}{12} - \frac{1}{4},\\[4pt]
n &\equiv 4 \pmod{12}:&\; D(n) &= \frac{n(n-2)}{12} + \frac{1}{3},\\[4pt]
n &\equiv 5 \pmod{12}:&\; D(n) &= \frac{n(n-2)}{12} - \frac{1}{4},\\[4pt]
n &\equiv 6 \pmod{12}:&\; D(n) &= \frac{n(n-2)}{12},\\[4pt]
n &\equiv 7 \pmod{12}:&\; D(n) &= \frac{n(n-2)}{12} + \frac{1}{12},\\[4pt]
n &\equiv 8 \pmod{12}:&\; D(n) &= \frac{n(n-2)}{12},\\[4pt]
n &\equiv 9 \pmod{12}:&\; D(n) &= \frac{n(n-2)}{12} + \frac{3}{4},\\[4pt]
n &\equiv 10 \pmod{12}:&\; D(n) &= \frac{n(n-2)}{12} + \frac{1}{3},\\[4pt]
n &\equiv 11 \pmod{12}:&\; D(n) &= \frac{n(n-2)}{12} + \frac{3}{4}.\\[4pt]
\end{aligned}
\]

The corresponding generating function is given by 
\begin{equation*}
\frac{H(x)}{(1 - x^{12})^3},
\end{equation*}
where
\begin{multline*}
H(x) = x^{35}+x^{34}+2x^{33}+2x^{32}+3x^{31}+4x^{30}+5x^{29}+7x^{28}+\\+8x^{27}+10x^{26}+12x^{25}+14x^{24}+14x^{23}+16x^{22}+16x^{21}+18x^{20}+\\+18x^{19}+18x^{18}+18x^{17}+16x^{16}+16x^{15}+14x^{14}+12x^{13}+10x^{12}+\\+9x^{11}+7x^{10}+6x^{9}+4x^{8}+3x^{7}+2x^{6}+x^{5}+x^{4}
\end{multline*}

\newpage

The polynomial $H(x)$ has the following properties:
    \begin{itemize}
        \item its coefficients are nonnegative and unimodal;
        \item its coefficients are not symmetric;
        \item its roots are not necessarily on the unit circle (can be both inside and outside).
    \end{itemize}

\begin{figure}[H]
    \centering
    \includegraphics[width=0.75\linewidth]{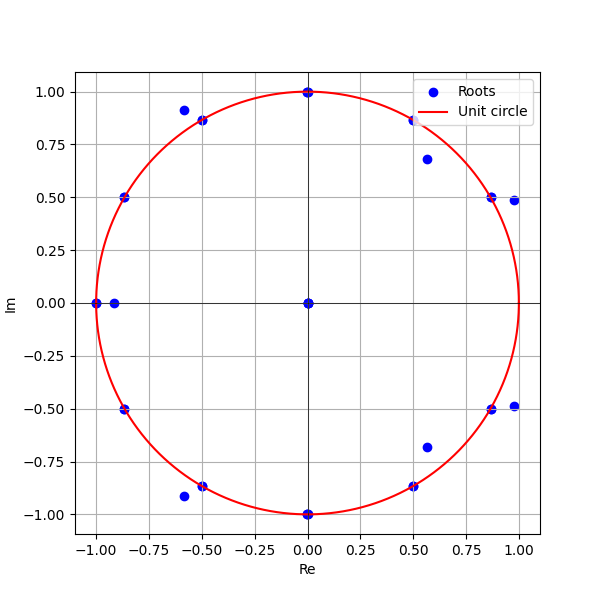}
    \caption{case $k=4$, $d=2$}
    \label{fig:wrapped4_0103}
\end{figure}

The $H$-polynomial factors as
\begin{equation*}
    H(x) = x^{4}(x+1)^{2}(x^{2}+1)^{3}(x^{2}-x+1)^{2}(x^{2}+x+1)^{2}(x^{4}-x^{2}+1)^{2}(x^{7}-x^{6}+x^{3}-x+1)
\end{equation*}
        
Hence, the generating function can be rewritten as
\begin{equation*}
\frac{x^{4}(x^{7}-x^{6}+x^{3}-x+1)}
{(1+x)(1-x)^{3}(x^{2}+x+1)(x^{2}-x+1)(x^{4}-x^{2}+1)}
\end{equation*}

\textbf{Case $k=4$, $d=3$ from $n=14$}
\[
\begin{aligned}
n &\equiv 0 \pmod{12}:&\; D(n) &= \frac{n(n-2)}{12} - 1,\\[4pt]
n &\equiv 1 \pmod{12}:&\; D(n) &= \frac{n(n-2)}{12} - \frac{11}{12},\\[4pt]
n &\equiv 2 \pmod{12}:&\; D(n) &= \frac{n(n-2)}{12},\\[4pt]
n &\equiv 3 \pmod{12}:&\; D(n) &= \frac{n(n-2)}{12} - \frac{1}{4},\\[4pt]
n &\equiv 4 \pmod{12}:&\; D(n) &= \frac{n(n-2)}{12} + \frac{1}{3},\\[4pt]
n &\equiv 5 \pmod{12}:&\; D(n) &= \frac{n(n-2)}{12} - \frac{1}{4},\\[4pt]
n &\equiv 6 \pmod{12}:&\; D(n) &= \frac{n(n-2)}{12},\\[4pt]
n &\equiv 7 \pmod{12}:&\; D(n) &= \frac{n(n-2)}{12} - \frac{11}{12},\\[4pt]
n &\equiv 8 \pmod{12}:&\; D(n) &= \frac{n(n-2)}{12} - 1,\\[4pt]
n &\equiv 9 \pmod{12}:&\; D(n) &= \frac{n(n-2)}{12} - \frac{1}{4},\\[4pt]
n &\equiv 10 \pmod{12}:&\; D(n) &= \frac{n(n-2)}{12} - \frac{2}{3},\\[4pt]
n &\equiv 11 \pmod{12}:&\; D(n) &= \frac{n(n-2)}{12} - \frac{1}{4}.\\[4pt]
\end{aligned}
\]

The corresponding generating function is given by 
\begin{equation*}
\frac{H(x)}{(1 - x^{12})^3},
\end{equation*}
where
\begin{multline*}
H(x) = 
- x^{36} + x^{33} + x^{32} + 2x^{31} + 4x^{30} + 5x^{29} + 7x^{28} + \\ + 8x^{27} + 10x^{26} + 11x^{25} + 16x^{24} + 16x^{23} + 18x^{22} + 18x^{21} + 20x^{20} + \\ + 20x^{19} + 18x^{18} + 18x^{17} + 16x^{16} + 16x^{15} + 14x^{14} + 14x^{13} + 9x^{12} + \\ + 8x^{11} + 6x^{10} + 5x^{9} + 3x^{8} + 2x^{7} + 2x^{6} + x^{5} + x^{4} - x
\end{multline*}

\newpage

The polynomial $H(x)$ has the following properties:
    \begin{itemize}
        \item its coefficients are neither nonnegative nor unimodal;
        \item its coefficients are not symmetric;
        \item its roots are not necessarily on the unit circle (can be both inside and outside).
    \end{itemize}

\begin{figure}[H]
    \centering
    \includegraphics[width=0.75\linewidth]{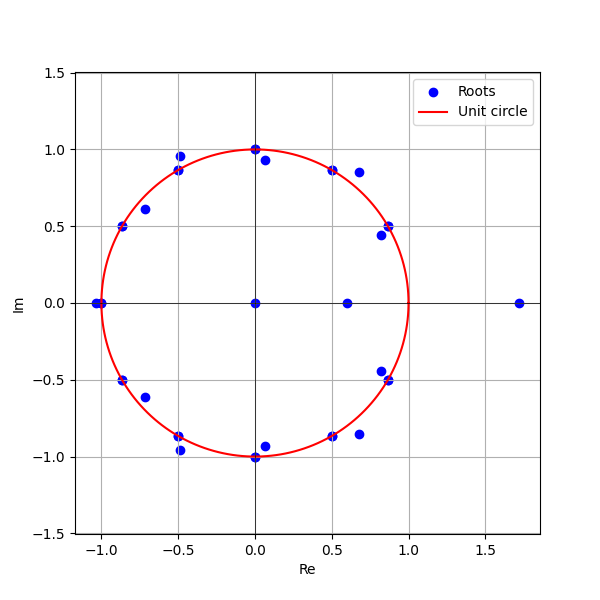}
    \caption{case $k=4$, $d=3$}
    \label{fig:wrapped4_0102}
\end{figure}

The $H$-polynomial factors as
\begin{multline*}
    H(x) = - x (x+1)^2 (x^2+1)^2 (x^2-x+1)^2 (x^2+x+1)^2 (x^4-x^2+1)^2 \\ (x^{13}-2x^{12}+x^{11}-x^{10}+x^9-x^8-x^7+x^6-x^5+x^4-x^3+x^2-2x+1)
\end{multline*}

Hence, the generating function can be rewritten as
\begin{equation*}
\frac{- x (x^{13}-2x^{12}+x^{11}-x^{10}+x^9-x^8-x^7+x^6-x^5+x^4-x^3+x^2-2x+1)}
{(1+x^2)(1+x)(1-x)^3(x^2+x+1)(x^2-x+1)(x^4-x^2+1)}
\end{equation*}

\textbf{Case $k=4$, $d=4$ from $n=14$}
\[
\begin{aligned}
n &\equiv 0 \pmod{12}:&\; D(n) &= \frac{n(n-2)}{12} - 2,\\[4pt]
n &\equiv 1 \pmod{12}:&\; D(n) &= \frac{n(n-2)}{12} - \frac{23}{12},\\[4pt]
n &\equiv 2 \pmod{12}:&\; D(n) &= \frac{n(n-2)}{12} - 1,\\[4pt]
n &\equiv 3 \pmod{12}:&\; D(n) &= \frac{n(n-2)}{12} - \frac{5}{4},\\[4pt]
n &\equiv 4 \pmod{12}:&\; D(n) &= \frac{n(n-2)}{12} - \frac{2}{3},\\[4pt]
n &\equiv 5 \pmod{12}:&\; D(n) &= \frac{n(n-2)}{12} - \frac{5}{4},\\[4pt]
n &\equiv 6 \pmod{12}:&\; D(n) &= \frac{n(n-2)}{12} - 1,\\[4pt]
n &\equiv 7 \pmod{12}:&\; D(n) &= \frac{n(n-2)}{12} - \frac{23}{12},\\[4pt]
n &\equiv 8 \pmod{12}:&\; D(n) &= \frac{n(n-2)}{12} - 2,\\[4pt]
n &\equiv 9 \pmod{12}:&\; D(n) &= \frac{n(n-2)}{12} - \frac{5}{4},\\[4pt]
n &\equiv 10 \pmod{12}:&\; D(n) &= \frac{n(n-2)}{12} - \frac{5}{3},\\[4pt]
n &\equiv 11 \pmod{12}:&\; D(n) &= \frac{n(n-2)}{12} - \frac{5}{4}.\\[4pt]
\end{aligned}
\]

The corresponding generating function is given by 
\begin{equation*}
\frac{H(x)}{(1 - x^{12})^3},
\end{equation*}
where
\begin{multline*}
H(x) = 
-2x^{36} - x^{35} - x^{34} + x^{31} + 3x^{30} + 4x^{29} + 6x^{28} + \\ + 7x^{27} + 9x^{26} + 10x^{25}
+ 18x^{24} + 18x^{23} + 20x^{22} + 20x^{21} + 22x^{20} + \\ + 22x^{19} + 20x^{18} + 20x^{17}
+ 18x^{16} + 18x^{15} + 16x^{14} + 16x^{13} + 8x^{12} + \\ + 7x^{11} + 5x^{10} + 4x^{9}
+ 2x^{8} + x^{7} + x^{6} - x^{3} - x^{2} - 2x
\end{multline*}

\newpage

The polynomial $H(x)$ has the following properties:
    \begin{itemize}
        \item its coefficients are neither nonnegative nor unimodal;
        \item its coefficients are not symmetric;
        \item its roots are not necessarily on the unit circle (can be both inside and outside).
    \end{itemize}

\begin{figure}[H]
    \centering
    \includegraphics[width=0.75\linewidth]{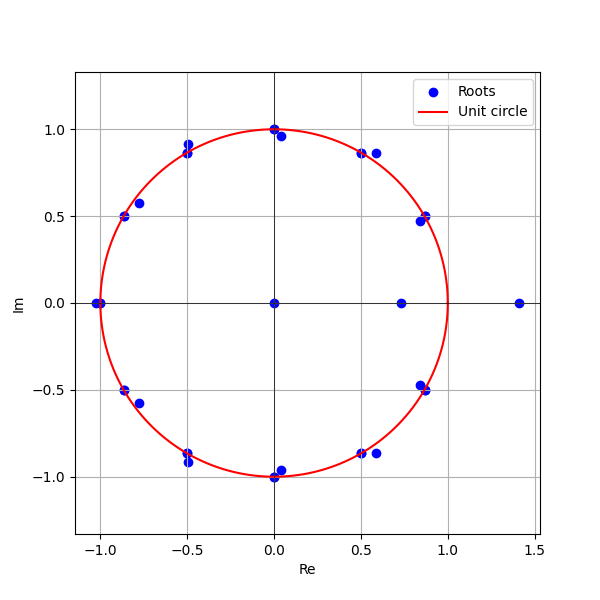}
    \caption{case $k=4$, $d=4$}
    \label{fig:wrapped4_0104}
\end{figure}

The $H$-polynomial factors as
\begin{multline*}
    H(x) = - x (x + 1)^2 (x^2 + 1)^2 (x^2 - x + 1)^2 (x^2 + x + 1)^2 (x^4 - x^2 + 1)^2  \\ (2x^{13} - 3x^{12} + x^{11} - x^{10} + x^9 - x^8 - x^7 + x^6 - x^5 + x^4 - x^3 + x^2 - 3x + 2)
\end{multline*}

Hence, the generating function can be rewritten as
\begin{equation*}
\frac{- x (2x^{13} - 3x^{12} + x^{11} - x^{10} + x^9 - x^8 - x^7 + x^6 - x^5 + x^4 - x^3 + x^2 - 3x + 2)}
{(1+x^2)(1+x)(1-x)^3(x^2+x+1)(x^2-x+1)(x^4-x^2+1)}
\end{equation*}

\textbf{Case $k=4$, $d=5$ from $n=19$}
\[
\begin{aligned}
n &\equiv 0 \pmod{12}:&\; D(n) &= \frac{n(n-2)}{12} - 3,\\[4pt]
n &\equiv 1 \pmod{12}:&\; D(n) &= \frac{n(n-2)}{12} - \frac{35}{12},\\[4pt]
n &\equiv 2 \pmod{12}:&\; D(n) &= \frac{n(n-2)}{12} - 3,\\[4pt]
n &\equiv 3 \pmod{12}:&\; D(n) &= \frac{n(n-2)}{12} - \frac{13}{4},\\[4pt]
n &\equiv 4 \pmod{12}:&\; D(n) &= \frac{n(n-2)}{12} - \frac{8}{3},\\[4pt]
n &\equiv 5 \pmod{12}:&\; D(n) &= \frac{n(n-2)}{12} - \frac{13}{4},\\[4pt]
n &\equiv 6 \pmod{12}:&\; D(n) &= \frac{n(n-2)}{12} - 3,\\[4pt]
n &\equiv 7 \pmod{12}:&\; D(n) &= \frac{n(n-2)}{12} - \frac{35}{12},\\[4pt]
n &\equiv 8 \pmod{12}:&\; D(n) &= \frac{n(n-2)}{12} - 3,\\[4pt]
n &\equiv 9 \pmod{12}:&\; D(n) &= \frac{n(n-2)}{12} - \frac{9}{4},\\[4pt]
n &\equiv 10 \pmod{12}:&\; D(n) &= \frac{n(n-2)}{12} - \frac{8}{3},\\[4pt]
n &\equiv 11 \pmod{12}:&\; D(n) &= \frac{n(n-2)}{12} - \frac{9}{4}.\\[4pt]
\end{aligned}
\]

The corresponding generating function is given by 
\begin{equation*}
\frac{H(x)}{(1 - x^{12})^3},
\end{equation*}
where
\begin{multline*}
H(x) = 
-3x^{36} - 2x^{35} - 2x^{34} - x^{33} - x^{32} + x^{30} + 2x^{29} + 4x^{28} + \\ + 5x^{27} + 7x^{26} + 9x^{25} + 20x^{24} + 20x^{23} + 22x^{22} + 22x^{21} + 24x^{20} + \\ + 24x^{19}
+ 24x^{18} + 24x^{17} + 22x^{16} + 22x^{15} + 20x^{14} + 18x^{13} + 7x^{12} + \\ + 6x^{11}
+ 4x^{10} + 3x^{9} + x^{8} - x^{6} - 2x^{5} - 2x^{4} - 3x^{3} - 3x^{2} - 3x
\end{multline*}

\newpage

The polynomial $H(x)$ has the following properties:
    \begin{itemize}
        \item its coefficients are neither nonnegative nor unimodal;
        \item its coefficients are not symmetric;
        \item its roots are not necessarily on the unit circle (can be both inside and outside).
    \end{itemize}

\begin{figure}[H]
    \centering
    \includegraphics[width=0.75\linewidth]{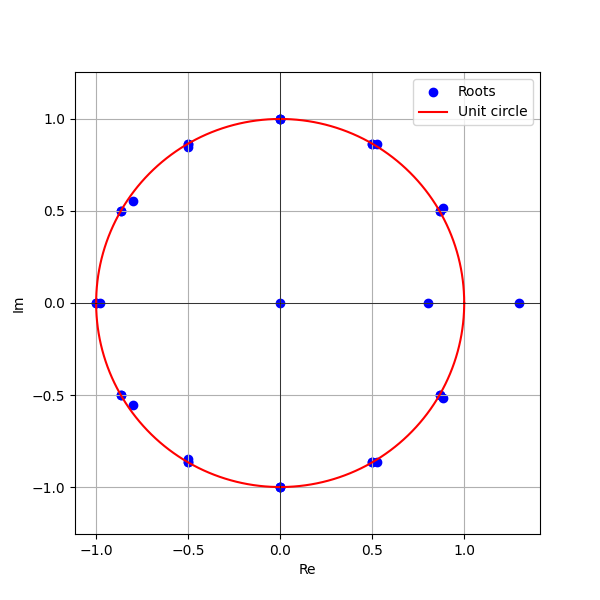}
    \caption{case $k=4$, $d=5$}
    \label{fig:wrapped4_0105}
\end{figure}

The $H$-polynomial factors as
\begin{multline*}
    H(x) = - x (x + 1)^2 (x^2 + 1)^3 (x^2 - x + 1)^2 (x^2 + x + 1)^2 (x^4 - x^2 + 1)^2 \\ (3x^{11} - 4x^{10} - 2x^9 + 3x^8 + 3x^7 - 4x^6 - 3x^5 + 4x^4 + 2x^3 - 3x^2 - 3x + 3)
\end{multline*}

Hence, the generating function can be rewritten as
\begin{equation*}
\frac{- x (3x^{11} - 4x^{10} - 2x^9 + 3x^8 + 3x^7 - 4x^6 - 3x^5 + 4x^4 + 2x^3 - 3x^2 - 3x + 3)}
{(1+x)(1-x)^3(x^2+x+1)(x^2-x+1)(x^4-x^2+1)}
\end{equation*}

\subsection{Coset 2-different} 
Consider inverse-closed coset with central state $011\ldots 11 = 0^1 1^{n-1}$. According to calculations in the \href{https://www.kaggle.com/code/fedmug/cayleypy-consecutive-cycles-growth-diff}{notebook}, one can hypothesize that the diameter of the wrapped $k$-cycles coset is given by

$$
d_k(n) = \Big\lfloor \frac{n+(k-1)(k-2)}{2k - 2} \Big\rfloor + r_k(n),
$$
where $r_k(n)$ is a "small" remainder. This remainder term seems to be $(2k-2)$-periodic function of $n$ for $n \geqslant N_k$. See table~\ref{tab:2diff-r-k-n} for details.

\begin{table}[ht]
\centering
\caption{$r_k(n)$ period}
\label{tab:2diff-r-k-n}
\begin{tabular}{| l | l | l | l |}
\hline
\textbf{$k$} & \textbf{$N_k$} & \textbf{period} & \textbf{$r_k(n)$} \\
\hline
\textbf{3}  & $4$  & $4$  & $(0, 0, 0, 0)$ \\
\textbf{4}  & $5$  & $6$  & $(0, 0, 0, 0, 0, 1)$ \\
\textbf{5}  & $6$  & $8$   & $(0, 0, 0, 0, 0, 0, 0, 0)$ \\
\textbf{6}  & $7$  & $10$   & $(0, 0, 0, 0, 0, 0, -1, 0, 0, 1)$ \\
\textbf{7}  & $9$  & $12$   & $(-1, -1, 0, 0, 0, 0, 0, 0, 0, 0)$ \\
\textbf{8}  & $10$  & $14$   & $(0, -1, 0, 0, 0, 0, 0, -1, -1, -1, 0, 0, 1, 0)$ \\
\textbf{9}  & $14$  & $16$   & $(-1, 0, 0, 0, 0, -1, -1, -1, 0, 0, 0, 0, 0, -1, -1, -1)$ \\
\textbf{10}  & $13$  & $18$   & $(-1, -1, -1, 0, 0, 0, 0, 0, -1, -1, -2, -1, -1, 0, 0, 1, 0, 0)$ \\
\textbf{11}  & $18$  & $20$   & $(-1, 0, 0, 0, 0, -1, -1, -1, -1, -1, 0, 0, 0, 0, 0, -1, -1, -2, -2, -1)$ \\
\textbf{12}  & $20$  & $22$   & $(0, 0, 0, 0, 0,  -2, -2, -2, -2, -2, -1, -1, 0, 0, 1, 0, 0, -1, -1, -2, -1, -1)$ \\
\hline
\end{tabular}
\end{table}

\subsection{Coset 3-different}
The following data correspond to the graph name part4  3Different  $0122\dots 2222$ line in the code.
\[ [t]:=\lfloor t\rfloor,\qquad p:=2k-2,\qquad r:=n\bmod p. \]
For each $k$, the formula below is exact for all observed data points with $n\ge N_k$ in the current spreadsheet,
\[ f_k(n)=2\left[\frac{n+x_k}{2k-2}\right]+e_k(r),\qquad \min_r e_k(r)=0. \]
\begin{itemize}
\item $k=3$ ($p=4$, $x=0$, $N_k=4$, observed to $n=200$; tail covers $\approx 49.25$ periods, well verified):
\[ f_{3}(n)=2\left[\frac{n+0}{4}\right]+e_{3}(r),\qquad r=n\bmod 4. \]
\[ e_{3}=(0,0,1,1). \]

\item $k=4$ ($p=6$, $x=2$, $N_k=7$, observed to $n=200$; tail covers $\approx 32.33$ periods, well verified):
\[ f_{4}(n)=2\left[\frac{n+2}{6}\right]+e_{4}(r),\qquad r=n\bmod 6. \]
\[ e_{4}=(0,1,1,1,0,0). \]

\item $k=5$ ($p=8$, $x=7$, $N_k=6$, observed to $n=200$; tail covers $\approx 24.38$ periods, well verified):
\[ f_{5}(n)=2\left[\frac{n+7}{8}\right]+e_{5}(r),\qquad r=n\bmod 8. \]
\[ e_{5}=(1,0,0,0,0,1,1,1). \]

\item $k=6$ ($p=10$, $x=6$, $N_k=8$, observed to $n=200$; tail covers $\approx 19.30$ periods, well verified):
\[ f_{6}(n)=2\left[\frac{n+6}{10}\right]+e_{6}(r),\qquad r=n\bmod 10. \]
\[ e_{6}=(2,2,2,2,0,1,1,2,1,2). \]

\item $k=7$ ($p=12$, $x=17$, $N_k=17$, observed to $n=200$; tail covers $\approx 15.33$ periods, well verified):
\[ f_{7}(n)=2\left[\frac{n+17}{12}\right]+e_{7}(r),\qquad r=n\bmod 12. \]
\[ e_{7}=(0,1,1,0,0,1,1,0,0,0,0,0). \]

\item $k=8$ ($p=14$, $x=19$, $N_k=35$, observed to $n=200$; tail covers $\approx 11.86$ periods, well verified):
\[ f_{8}(n)=2\left[\frac{n+19}{14}\right]+e_{8}(r),\qquad r=n\bmod 14. \]
\[ e_{8}=(1,2,1,1,1,1,1,2,2,1,0,0,0,0). \]

\item $k=9$ ($p=16$, $x=23$, $N_k=55$, observed to $n=200$; tail covers $\approx 9.12$ periods, well verified):
\[ f_{9}(n)=2\left[\frac{n+23}{16}\right]+e_{9}(r),\qquad r=n\bmod 16. \]
\[ e_{9}=(2,2,2,2,1,2,1,2,2,1,1,1,0,0,1,1). \]

\item $k=10$ ($p=18$, $x=35$, $N_k=69$, observed to $n=200$; tail covers $\approx 7.33$ periods, well verified):
\[ f_{10}(n)=2\left[\frac{n+35}{18}\right]+e_{10}(r),\qquad r=n\bmod 18. \]
\[ e_{10}=(2,1,0,0,0,0,0,0,0,1,1,2,1,1,0,1,1,2). \]

\item $k=11$ ($p=20$, $x=40$, $N_k=88$, observed to $n=200$; tail covers $\approx 5.65$ periods, well verified):
\[ f_{11}(n)=2\left[\frac{n+40}{20}\right]+e_{11}(r),\qquad r=n\bmod 20. \]
\[ e_{11}=(1,1,1,1,0,0,0,0,0,1,2,2,2,2,1,1,1,1,2,2). \]

\item $k=12$ ($p=22$, $x=43$, $N_k=129$, observed to $n=200$; tail covers $\approx 3.27$ periods, well verified):
\[ f_{12}(n)=2\left[\frac{n+43}{22}\right]+e_{12}(r),\qquad r=n\bmod 22. \]
\[ e_{12}=(3,2,1,2,1,0,0,1,1,1,2,2,2,3,2,2,1,2,1,2,2,3). \]

\item $k=13$ ($p=24$, $x=60$, $N_k=129$, observed to $n=200$; tail covers $\approx 3.00$ periods, well verified):
\[ f_{13}(n)=2\left[\frac{n+60}{24}\right]+e_{13}(r),\qquad r=n\bmod 24. \]
\[ e_{13}=(2,2,2,2,2,1,1,1,1,1,2,2,1,1,1,1,0,0,0,0,0,0,1,1). \]

\item $k=14$ ($p=26$, $x=65$, $N_k=165$, observed to $n=200$; tail covers $\approx 1.38$ periods, moderate evidence):
\[ f_{14}(n)=2\left[\frac{n+65}{26}\right]+e_{14}(r),\qquad r=n\bmod 26. \]
\[ e_{14}=(2,3,2,3,2,2,1,1,1,2,2,2,3,2,1,2,1,1,0,0,0,1,0,1,1,2). \]

\item $k=15$ ($p=28$, $x=84$, $N_k=151$, observed to $n=200$; tail covers $\approx 1.79$ periods, moderate evidence):
\[ f_{15}(n)=2\left[\frac{n+84}{28}\right]+e_{15}(r),\qquad r=n\bmod 28. \]
\[ e_{15}=(1,1,1,1,1,0,0,0,0,0,0,0,1,2,2,2,2,2,2,1,0,0,1,1,1,1,2,2). \]

\item $k=16$ ($p=30$, $x=89$, $N_k=149$, observed to $n=200$; tail covers $\approx 1.73$ periods, moderate evidence):
\[ f_{16}(n)=2\left[\frac{n+89}{30}\right]+e_{16}(r),\qquad r=n\bmod 30. \]
\[ e_{16}=(3,2,1,2,1,1,0,0,0,0,1,1,1,2,2,3,2,3,2,2,2,1,1,1,2,2,2,2,2,3). \]

\item $k=17$ ($p=32$, $x=97$, $N_k=127$, observed to $n=200$; tail covers $\approx 2.31$ periods, well verified):
\[ f_{17}(n)=2\left[\frac{n+97}{32}\right]+e_{17}(r),\qquad r=n\bmod 32. \]
\[ e_{17}=(2,2,2,2,2,1,1,0,0,1,1,2,2,2,2,3,3,3,3,3,3,2,2,1,1,2,2,2,3,3,3,2). \]

\item $k=18$ ($p=34$, $x=119$, $N_k=167$, observed to $n=200$; tail covers $\approx 1.00$ periods, moderate evidence):
\[ f_{18}(n)=2\left[\frac{n+119}{34}\right]+e_{18}(r),\qquad r=n\bmod 34. \]
\[ e_{18}=(2,3,3,3,2,2,1,1,0,1,1,2,2,2,2,3,3,2,1,2,1,2,1,0,0,0,0,0,1,1,1,2,2,2). \]

\item $k=19$ ($p=36$, $x=127$, $N_k=230$, observed to $n=300$; tail covers $\approx 1.97$ periods, moderate evidence):
\[ f_{19}(n)=2\left[\frac{n+127}{36}\right]+e_{19}(r),\qquad r=n\bmod 36. \]
\[ e_{19}=(3,3,3,3,3,2,2,1,1,1,2,2,2,2,2,3,3,2,2,2,2,2,2,2,1,0,0,0,0,1,1,1,1,2,2,3). \]

\item $k=20$ ($p=38$, $x=133$, $N_k=262$, observed to $n=300$; tail covers $\approx 1.03$ periods, moderate evidence):
\[ f_{20}(n)=2\left[\frac{n+133}{38}\right]+e_{20}(r),\qquad r=n\bmod 38. \]
\[ e_{20}=(3,4,3,4,3,3,2,2,1,1,2,2,2,3,3,3,3,4,4,3,2,3,2,3,2,2,1,0,0,1,1,2,2,1,2,2,3,4). \]

\item $k=21$ ($p=40$, $x=162$, $N_k=259$, observed to $n=300$; tail covers $\approx 1.05$ periods, moderate evidence):
\[ f_{21}(n)=2\left[\frac{n+162}{40}\right]+e_{21}(r),\qquad r=n\bmod 40. \]
\[ e_{21}=(2,2,2,2,2,2,1,0,0,0,0,0,1,1,2,1,2,2,2,3,3,3,3,3,3,3,2,2,1,1,1,1,2,2,2,3,3,3,2,2). \]

\item $k=22$ ($p=42$, $x=169$, $N_k=253$, observed to $n=300$; tail covers $\approx 1.14$ periods, moderate evidence):
\[ f_{22}(n)=2\left[\frac{n+169}{42}\right]+e_{22}(r),\qquad r=n\bmod 42. \]
\[ e_{22}=(2,3,3,3,2,2,2,1,0,0,0,0,1,1,2,2,2,2,3,3,3,4,4,4,3,4,3,3,2,2,1,1,2,2,2,3,3,3,3,4,4,3). \]
\vspace{2em}
\textbf{Hypothesis (Last-Layer Periodicity, 3-Different Coset).}
Retain the notation of \S11.11: let $p := 2k - 2$ and $r := n \bmod p$. 
The diameter of the 3-different wrapped coset graph is given by
\begin{equation*}
    f_k(n) = 2\left\lceil \frac{n + x_k}{2k-2} \right\rceil + e_k(r), 
    \qquad \min_r\, e_k(r) = 0,
\end{equation*}
where $e_k$ is $p$-periodic for $n \geq N_k$. 
Let $\ell_k(n)$ denote the size of the last BFS layer at depth $n$.

We conjecture that $\ell_k(n)$ is also eventually $p$-periodic 
(i.e., with the same period $p = 2k - 2$) for $n \geq N_k$. 
Equivalently, the quasi-polynomial period of $f_k$ and the period 
of $\ell_k$ coincide.

This is supported empirically for $k = 11, 12, 15, 17$ 
(see Figures~\ref{fig:winvk11}--\ref{fig:winvk17}); in each case both 
$e_k(r)$ and $\ell_k(n)$ exhibit synchronized periodicity 
with period $p = 2k - 2$.

\begin{figure}
    \centering
    \includegraphics[width=0.75\linewidth]{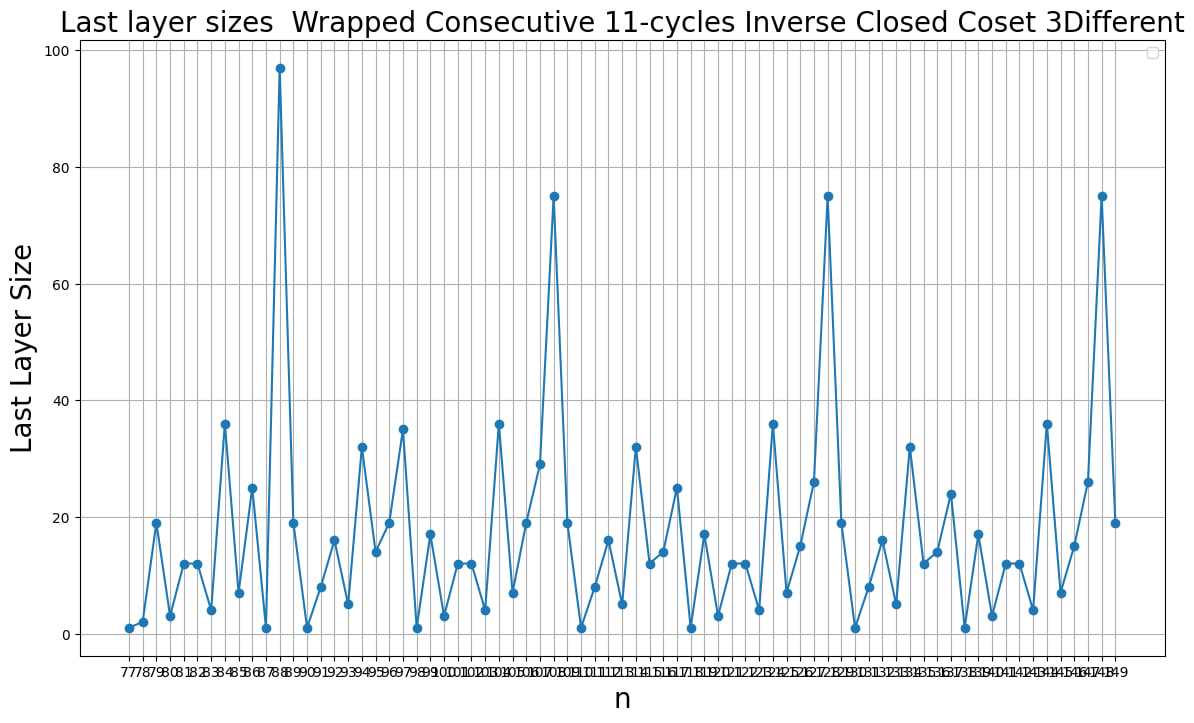}
    \caption{$k = 11$: period $p = 2k-2 = 20$}
    \label{fig:winvk11}
\end{figure}
\begin{figure}
    \centering
    \includegraphics[width=0.75\linewidth]{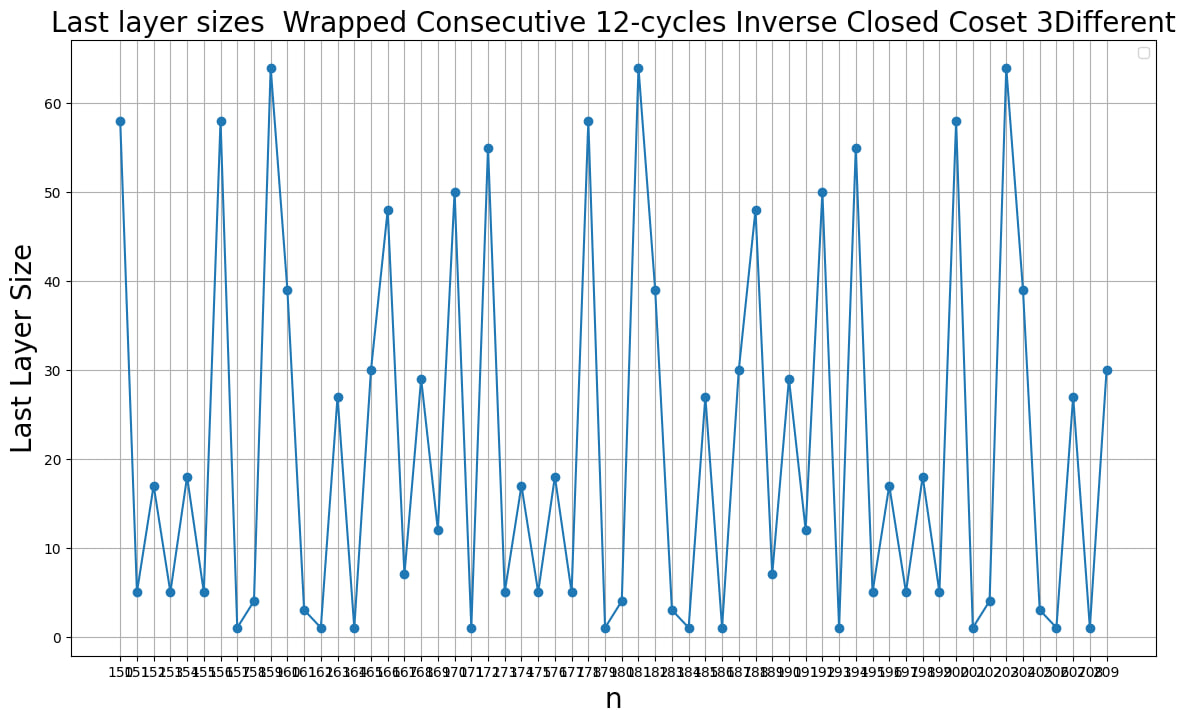}
    \caption{$k = 12$: period $p = 2k-2 = 22$}
    \label{fig:winvk12}
\end{figure}
\begin{figure}
    \centering
    \includegraphics[width=0.75\linewidth]{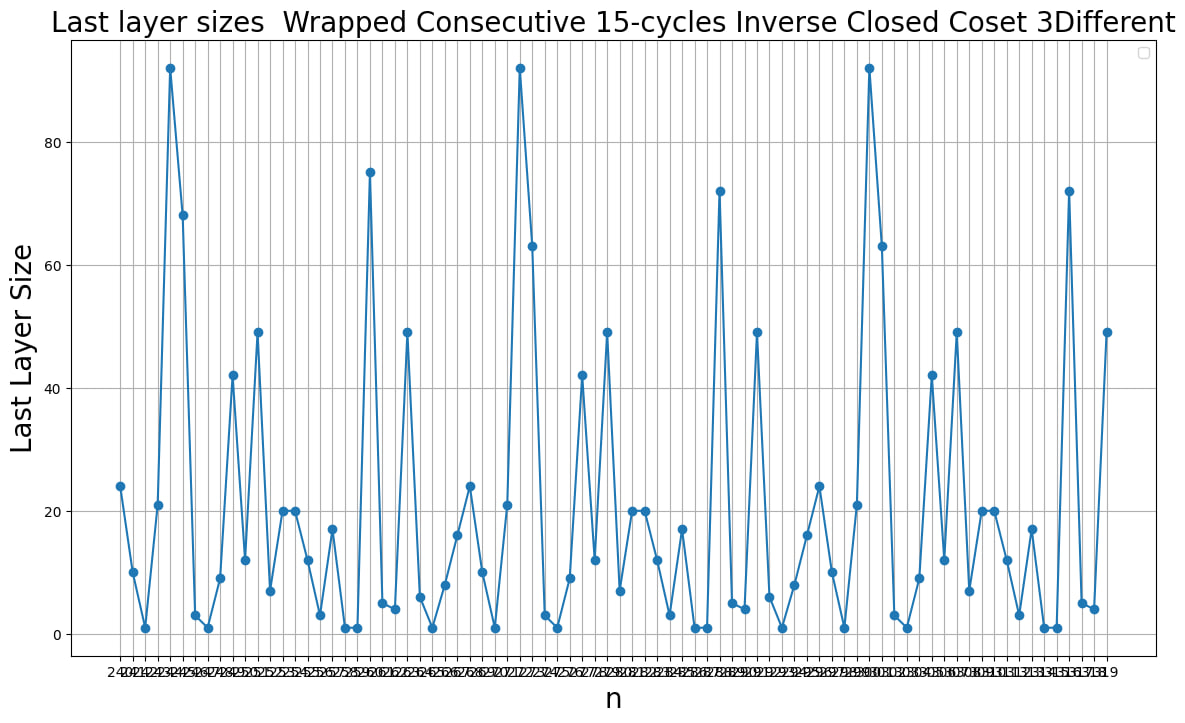}
    \caption{$k = 15$: period $p = 2k-2 = 28$}
    \label{fig:winvk15}
\end{figure}
\begin{figure}
    \centering
    \includegraphics[width=0.75\linewidth]{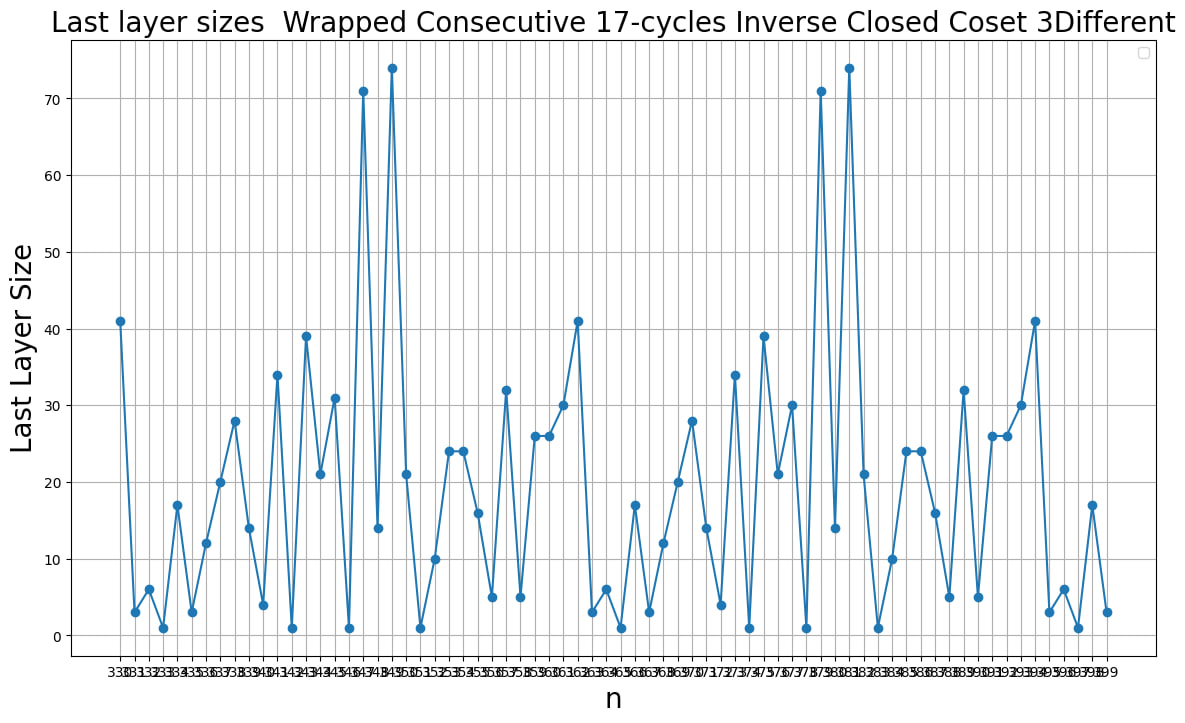}
    \caption{$k = 17$: period $p = 2k-2 = 32$}
    \label{fig:winvk17}
\end{figure}
\end{itemize}
\clearpage
\subsection{Coset 4-different}
\href{https://www.kaggle.com/code/alexandervc/cayleypy-consecutive-cycles-growth}{Notebook link}

\begin{table}[ht]
\centering
\caption{Last layer size period for wrapped consecutive coset 4-different.}
\label{tab:2diff-r-k-n}
\begin{tabular}{| l | l | l | l |}
\hline
\textbf{$k$} & \textbf{$N_k$} & \textbf{period} & \textbf{$l_k(n)$} \\
\hline
\textbf{3} & $8$ & $4$ & $(16, 4, 2, 30)$ \\
\textbf{4} & $11$ & $6$ & $(2, 29, 2, 34, 114, 34)$ \\
\textbf{5} & $22$ & $8$ & $(62, 124, 10, 84, 2, 6, 87, 6)$ \\
\textbf{6} & $37$ & $10$ & $(258, 5, 26, 168, 10, 28, 48, 203, 18, 117)$ \\
\textbf{7} & $44$ & $12$ & $(37, 14, 98, 278, 34, 96, 225, 218, 1, 34, 235, 2)$ \\
\textbf{8} & $76$ & $14$ & $(1, 14, 61, 118, 137, 190, 3, 20, 225, 2, 4, 2, 9, 82)$ \\
\textbf{9} & $92$ & $16$ & $(658, 2, 84, 276, 11, 20, 40, 34, 54, 170, 6, 22, 154, 252, 451, 392)$ \\
\textbf{10}  & $-$  & $-$   & $-$ \\
\textbf{11}  & $-$  & $-$   & $-$ \\
\textbf{12}  & $-$  & $-$   & $-$ \\
\hline
\end{tabular}
\end{table}


\clearpage
\section{Reminder. Background and related works}

\subsection{Cayley and Schreier graphs, diameters, growth}

Here we recollect basic definitions: Cayley and Schreier graphs, diameters, God's numbers, growth, quasi-polynomials, discuss previous and related works. There are not many works on Cayley graphs of consecutive cycles per se, and we focus on more general perspectives.




 Cayley graphs give a way of treating groups as geometric objects. In what follows, $G$ denotes a (finite) group with a set of {\em generators} $S$. 
 
\begin{definition}
The Cayley graph of $G$ with respect to $S$ is a directed graph $\mathrm{Cay}(G,S)$ (or $\Gamma_{G,S}$) such that its set of vertices is precisely $G$, and its oriented edges are all pairs $(g,gs)$ with $g \in G$ and $s \in S$. 
\end{definition}

It is known that Cayley graphs are connected. 

The word metric on $G$ with respect to $S$ can be defined in two natural ways,
depending on whether one considers the directed or the undirected Cayley graph.

The directed word metric $\mathrm{dist}_S^{\rightarrow}(g,h)$ is defined by as the length of the shortest word in letters from $S$ representing the element $g^{-1}h\in G$, if such a word exists. This metric corresponds to the path metric on the directed Cayley graph
$\mathrm{Cay}(G,S)$.

In contrast, the \emph{symmetric word metric} $\mathrm{dist}_S(g,h)$ is defined as the length of the shortest word in letters from $S\cup S^{-1}$ representing $g^{-1}h\in G$. Geometrically, this is the standard path metric on the underlying undirected graph obtained from $\mathrm{Cay}(G,S)$ by forgetting edge orientations.


Unless explicitly stated otherwise, all distances and diameters in this paper are understood with respect to the symmetric word metric induced by $S\cup S^{-1}$.



\begin{figure}
    \centering
    \begin{tikzpicture}[scale=2, >=Stealth]
\foreach \i/\name in {
  0/{$e$},
  1/{(12)},
  2/{(123)},
  3/{(13)},
  4/{(132)},
  5/{(23)}
}{
  \node[fill=black, circle, inner sep=1.5pt] (v\i) at (60*\i:1) {};

  \node[font=\small] at (60*\i:1.25) {\name};
}
\foreach \i/\j in {0/1,1/2,2/3,3/4,4/5,5/0}{
  \draw[->, thick] (v\i) -- (v\j);
  \draw[->, thick] (v\j) -- (v\i);
}
\end{tikzpicture}
    \caption{The Cayley graph $\Gamma_{S_3, S}$ where $S = \{(12), (23)\}$}
    \label{fig:s3-cayley}
\end{figure}
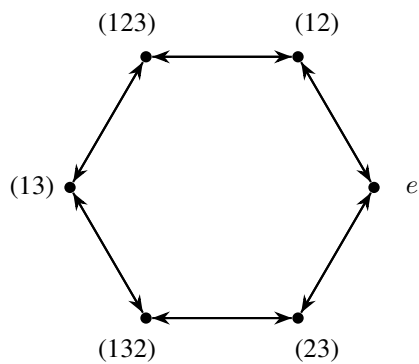

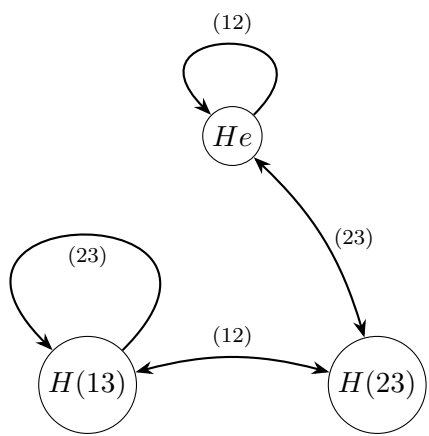
\begin{figure}[t]
\centering
\begin{tikzpicture}[scale=2.2, >=Stealth]

\node[draw, circle, inner sep=2pt] (A) at (90:1.0) {$He$};
\node[draw, circle, inner sep=2pt] (B) at (210:1.0) {$H(13)$};
\node[draw, circle, inner sep=2pt] (C) at (330:1.0) {$H(23)$};

\draw[->, thick, looseness=8] (A) to node[above, font=\scriptsize] {$(12)$} (A);

\draw[<->, thick, bend left=15]
  (B) to node[midway, above, font=\scriptsize] {$(12)$} (C);

\draw[<->, thick, bend left=15]
  (A) to node[midway, right, font=\scriptsize] {$(23)$} (C);
  
\draw[->, thick, looseness=8] (B) to node[below, font=\scriptsize] {$(23)$} (B);
\end{tikzpicture}

\caption{Schreier graph $\mathrm{Sch}(S_3,H,S)$ for $H=\langle(12)\rangle$ and
$S=\{(12),(23)\}$. Vertices are right cosets $Hg$ (not group elements).}
\label{fig:schreier_s3}
\end{figure}

\begin{definition}
Let $\Gamma=(V,E)$ be a (directed or undirected) graph equipped with the path
metric $d(v,w)$, defined as the length of a shortest path from $v$ to $w$
(if such a path exists).
The \emph{diameter} of $\Gamma$ is defined as
\[
\mathrm{diam}(\Gamma):=\sup_{v,w\in V} d(v,w).
\]
If $\Gamma$ is finite and strongly connected (or connected in the undirected
case), the supremum is a maximum.
\end{definition}

\begin{definition}
Let $H \leq G$ be a subgroup. We define the  Schreier coset graph (or simply the Schreier graph) $\mathrm{Sch}(G,H,S)$ as follows:
\begin{itemize}
  \item the vertices are the right cosets $Hg$ of $H$ in $G$, where $g \in G$;
  \item for each vertex $Hg$ and each generator $s \in S$, there is a directed edge
        \[
           Hg \xrightarrow{\;s\;} Hg s.
        \]
\end{itemize}
\end{definition}
Notice that if $H=\{e\}$, the Schreier graph $\mathrm{Sch}(G,H,S)$ coincides with the Cayley graph $\mathrm{Cay}(G,S)$.

In contrast to Cayley graphs, Schreier graphs are generally not vertex-transitive.
As a consequence, metric properties such as growth and distances may depend on
the choice of a basepoint.


\begin{definition}
Let $\Gamma=(V,E)$ be a finite graph endowed with a path metric $d$, and let
$v_0\in V$ be a fixed basepoint.
The \emph{God's number} of $\Gamma$ relative to $v_0$ is defined as
\[
\mathrm{God}(\Gamma,v_0):=\max_{v\in V} d(v_0,v).
\]
\end{definition}
In graph theory, the quantity $\mathrm{God}(\Gamma,v_0)$ defined above is
classically known as the \emph{eccentricity} of the vertex $v_0$, that is,
\[
\mathrm{ecc}(v_0):=\max_{v\in V} d(v_0,v).
\]
We adopt the term \emph{God's number} to emphasize its interpretation as the
maximal number of moves required to reach any state from a fixed initial state,
following the terminology commonly used in the theory of combinatorial puzzles.

\begin{example}
For the Rubik's Cube group $G$ with the standard generating set $S$ of face turns, the diameter of the Cayley graph $\mathrm{Cay}(G,S)$ is $20$. Since Cayley graphs are vertex-transitive, the God's number (equivalently, the eccentricity of any vertex) coincides with the diameter.
\end{example}

If $\Gamma$ is vertex-transitive (in particular, if $\Gamma$ is a Cayley graph),
then $\mathrm{God}(\Gamma,v_0)$ does not depend on the choice of $v_0$ and
coincides with the diameter $\mathrm{diam}(\Gamma)$.
For Schreier graphs this need not be the case.

For a Schreier graph $\mathrm{Sch}(G,H,S)$, the basepoint is typically chosen to
be the trivial coset $H$. In this case, the God's number measures the maximal
distance from $H$ to any coset $Hg$.

\begin{definition}
Let $\Gamma=(V,E)$ be a graph endowed with the path metric $d$, and let $v_0\in V$
be a fixed basepoint.
The growth function of $\Gamma$ relative to $v_0$ is defined as
\[
\gamma_{\Gamma,v_0}(n):=\bigl|\{\, v\in V \mid d(v_0,v)\le n \,\}\bigr|.
\]
\end{definition}

For Cayley graphs we write $\gamma_{G,S}$, omitting the basepoint $v_0$, since by vertex-transitivity the growth function is independent of the choice of basepoint.

We say that the group $G$ (or the Cayley graph $\mathrm{Cay}(G,S)$) has
\begin{itemize}
\item \emph{polynomial growth} if there exist constants $C,d>0$ such that
\[
\gamma_{G,S}(n)\le C n^d \quad \text{for all } n,
\]
\item \emph{exponential growth} if there exists $\lambda>1$ such that
\[
\gamma_{G,S}(n)\ge \lambda^n \quad \text{for all sufficiently large } n.
\]
\end{itemize}

From this point on, when working with Cayley graphs, we denote the diameter by $\mathrm{diam}(G,S)$.
For a finite group $G$, the diameter $\mathrm{diam}(G,S)$ is the minimal radius
$n$ such that the ball of radius $n$ in the Cayley graph exhausts the group, i.e.
\[
\gamma_{G,S}(n)=|G|.
\]
Thus, the diameter measures the extremal behavior of the growth function.

\subsection{Quasi-polynomial functions}

\begin{definition}
A function $f:\mathbb{N}\to\mathbb{Q}$ is called a \emph{quasi-polynomial} if there exists a positive integer $m$ and polynomials
\[
P_0, P_1, \dots, P_{m-1} \in \mathbb{Q}[n]
\]
such that
\[
f(n) = P_r(n) \quad \text{for all } n \in \mathbb{N} \text{ with } n \equiv r \pmod m.
\]
The minimal such $m$ is called the \emph{period} of $f$.
\end{definition}

\begin{example}
Let $S=\{(1,2),(2,3),\dots,(n-1,n)\}$ be the set of Coxeter generators of $S_n$.
Then the diameter of the Cayley graph $\mathrm{Cay}(S_n,S)$ is given by
\[
\mathrm{diam}(S_n,S)=\frac{n(n-1)}{2},
\]
which is a quasi-polynomial of period $1$ (i.e.\ an ordinary polynomial).
\end{example}

\begin{example}
Let $S = \{(1, 2, \ldots, n-1, n),(1, 2)\}$ be the set of LX generators of $S_n$.
In our previous paper~\cite{Cayley3Growth}, we conjectured that the diameter of the LX Cayley graph $\mathrm{Cay}(S_n, S)$ is given by
\[
\mathrm{diam}(S_n,S)=
\begin{cases}
\frac{3n^2}4 - 2n + 3, & n \equiv 0 \pmod 2,\\[4pt]
\frac{3n^2}4 - 2n + \frac 94, & n \equiv 1 \pmod 2,
\end{cases}
\]
which is a quasi-polynomial of period $2$.
\end{example}


\subsection{Ehrhart polynomials }

\begin{definition}
Let $\mathcal{L}\subset \mathbb{R}^d$ be a lattice, and let
$P\subset \mathbb{R}^d$ be a $d$-dimensional convex polytope such that all
vertices of $P$ lie in $\mathcal{L}$.
For a positive integer $t$, let $tP$ denote the $t$-fold dilation of $P$, that is,
the polytope obtained by multiplying the coordinates of each vertex of $P$,
with respect to a fixed basis of $\mathcal{L}$, by the factor $t$.
Define
\[
L(P,t):=\#\bigl(tP\cap \mathcal{L}\bigr),
\]
the number of lattice points contained in the polytope $tP$.
Then $L(P,t)$ is a polynomial in $t$ of degree $d$ with rational coefficients,
called the \emph{Ehrhart polynomial} of $P$.
\end{definition}

\begin{example}
Let
\[
P=\mathrm{conv}\{(0,0),(0,1),(1,0)\}\subset\mathbb{R}^2.
\]
Then
\[
tP=\{(x,y)\in\mathbb{R}^2:\ x\ge 0,\ y\ge 0,\ x+y\le t\}.
\]
Hence the lattice points in $tP$ are exactly the integer pairs $(i,j)$ with
$i,j\ge 0$ and $i+j\le t$, so
\[
L_P(t)=\#(tP\cap\mathbb{Z}^2)=\sum_{i=0}^{t}(t-i+1)=\frac{(t+1)(t+2)}{2}.
\]
Thus the Ehrhart polynomial of $P$ equals
\[
L_P(t)=\frac{t^2+3t+2}{2}.
\]
\end{example}

\begin{definition}[Ehrhart counting function]
Let $\mathcal{L}\subset \mathbb{R}^d$ be a full-rank lattice, and let
$P\subset \mathbb{R}^d$ be a $d$-dimensional convex polytope.
For $t\in\mathbb{Z}_{>0}$ define the lattice-point counting function
\[
L_{\mathcal{L}}(P,t):=\#\bigl(tP\cap \mathcal{L}\bigr),
\]
where $tP=\{tx\mid x\in P\}$ is the $t$-fold dilation of $P$.
\end{definition}

We say that $P$ is \emph{$\mathcal{L}$-integral} if all vertices of $P$ lie in $\mathcal{L}$, and
\emph{$\mathcal{L}$-rational} if all vertices of $P$ lie in $\mathcal{L}\otimes_{\mathbb{Z}}\mathbb{Q}$.
Then:
\begin{itemize}
\item If $P$ is $\mathcal{L}$-integral, then $L_{\mathcal{L}}(P,t)$ is a polynomial in $t$ of degree $d$
(with rational coefficients), called the \emph{Ehrhart polynomial} of $P$ (with respect to $\mathcal{L}$).
\item If $P$ is $\mathcal{L}$-rational, then $L_{\mathcal{L}}(P,t)$ is a quasi-polynomial in $t$ of degree $d$,
called the \emph{Ehrhart quasi-polynomial} of $P$ (with respect to $\mathcal{L}$).
\end{itemize}

In the standard lattice $\mathcal{L}=\mathbb{Z}^d$, if
\[
P=\{x\in\mathbb{R}^d \mid Ax\le b\}, \qquad A\in\mathbb{Q}^{k\times d},\; b\in\mathbb{Q}^k,
\]
then
\[
L_{\mathbb{Z}^d}(P,t)=\#\bigl(tP\cap\mathbb{Z}^d\bigr)
=\#\{x\in\mathbb{Z}^d \mid Ax\le tb\}.
\]
In particular, when $A,b$ are integral (equivalently, $P$ is $\mathbb{Z}^d$-integral), this becomes an Ehrhart
polynomial; otherwise it is an Ehrhart quasi-polynomial.

When $\mathcal L=\mathbb Z^d$,  we write
$L(P,t):=L_{\mathbb Z^d}(P,t).$


\begin{example}
Let
\[
P=\mathrm{conv}\{(0,0),(1,0),(0,\tfrac12)\}\subset\mathbb{R}^2.
\]
For a positive integer $t$, the dilation $tP$ is given by
\[
tP=\{(x,y)\in\mathbb{R}^2 \mid x\ge 0,\ y\ge 0,\ x+2y\le t\}.
\]
Hence the lattice points in $tP$ are exactly the integer pairs
$(x,y)\in\mathbb{Z}_{\ge 0}^2$ satisfying $x+2y\le t$, and therefore
\[
L_P(t)=\#(tP\cap\mathbb{Z}^2)=\sum_{y=0}^{\lfloor t/2\rfloor}(t-2y+1).
\]
A direct computation shows that $L_P(t)$ is a quasi-polynomial of period $2$,
given explicitly by
\[
L_P(t)=
\begin{cases}
\dfrac{t^2}{4}+t+1, & t\equiv 0 \pmod 2,\\[6pt]
\dfrac{t^2+4t+3}{4}, & t\equiv 1 \pmod 2.
\end{cases}
\]
\end{example}

We refer to \cite{stanley1997enumerative1,stanley2001enumerative2} for more information. 

\subsection{ROC curves and AUC }
	Given a ranked list of examples with binary ground-truth labels, the induced label sequence $x\in\{0,1\}^n$ defines the same monotone path $P(x)$. In this encoding, the ROC curve is a scaled version of $P(x)$, while the AUC is the corresponding normalized area. Theorem~ TODO 
    shows that, in the binary orbit model, this area also measures the Cayley-graph distance to the sorted baseline, i.e., the number of adjacent misorderings (inversions) in the ranked sequence.

Examples below  illustrate that for $x$ with $n=10$ and $k=5$, 
$d(e,x)=\text{AAC}$, area above ROC curve measured in unit squares. 
In the context of binary classification we can fix outputs $R$ of a hypothetical binary classificaion model listed in ascending order, 
theshold values $T$, and predictions of classes $\text{PRED}_j$ corresponding to each selected threshold $T_j\in T$.
In this context $x$ should be interpreted as "ground truth". True Positive Rate (TPR) and False Positive Rate (FPR) are used to build corresponding ROC curves.
	$$
	\begin{array}{rcl}
		R&=&[0.05, 0.15, 0.25, 0.35, 0.45, 0.55, 0.65, 0.75, 0.85, 0.95],\\
		T&=&P+0.01,\\
		\text{PRED}_j &=& \text{int}(P>T_j), \quad \forall j=0,\dots,n-1.
	\end{array}
	$$
	
\subsubsection{Example 1}
	
	$$
	\begin{array}{rcl}
		x&=&[0 , 0 , 0 , 0 , 0 , 1 , 1 , 1 , 1 , 1 ],\\
		\mathrm{FPR}&=&[ 1.0, 0.8, 0.6, 0.4, 0.2, 0.0, 0.0, 0.0, 0.0, 0.0, 0],\\
		\mathrm{TPR} &=& [1.0 , 1.0, 1.0,  1.0, 1.0, 1.0, 0.8, 0.6, 0.4, 0.2, 0].
	\end{array}
	$$
	
	In this example the area above the ROC curve is $\text{AAC}=0$ (see  Figure~\ref{fig:roc_curve_perfect}).
	Since $e=x$, we have $d(x,e)=0=\text{AAC}$.

	Now suppose 	$x=[0,0,0,0,1,0,1,1,1,1]$, then $\text{Inv}(x)=1$ and $d(x,e)=1$.
	Observe that exactly one unit square is above ROC curve 
    (see  Figure~\ref{fig:roc_curve_one_swap}). So $\text{AAC}=1=d(x,e)$.

\subsubsection{Example 2}
	
	$$
	\begin{array}{rcl}
		x&=&[0, 0, 0, 1, 0, 1, 1, 0, 1, 1],\\
		\mathrm{FPR}&=&[1,0.8, 0.6, 0.4, 0.4, 0.2, 0.2, 0.2, 0.0, 0.0, 0.0],\\
		\mathrm{TPR} &=& [1,1.0, 1.0, 1.0, 0.8, 0.8, 0.6, 0.4, 0.4, 0.2, 0.0].
	\end{array}
	$$
	
	\begin{figure}[!htbp]
		\centering
		\includegraphics[width=0.7\textwidth]{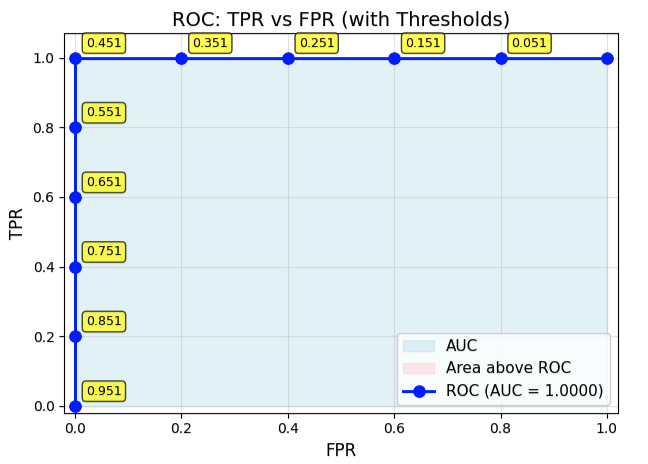}
		\caption{ROC: TPR vs.\ FPR --- perfect predictions.}
		\label{fig:roc_curve_perfect}
	\end{figure}

	Here $d(x,e)=4$.
	On the other hand, the ROC AUC is $0.84$ (see  Figure~\ref{fig:roc_curve_four_swaps}).
	The maximum possible area is $(n-k)k=25$ (a $5\times 5$ grid), so the area above the curve is
	$(1-0.84)\cdot 25=4$.
	Thus $\text{AAC}=4=d(x,e)$.

	\begin{figure}[!htbp]
		\centering
		\includegraphics[width=0.7\textwidth]{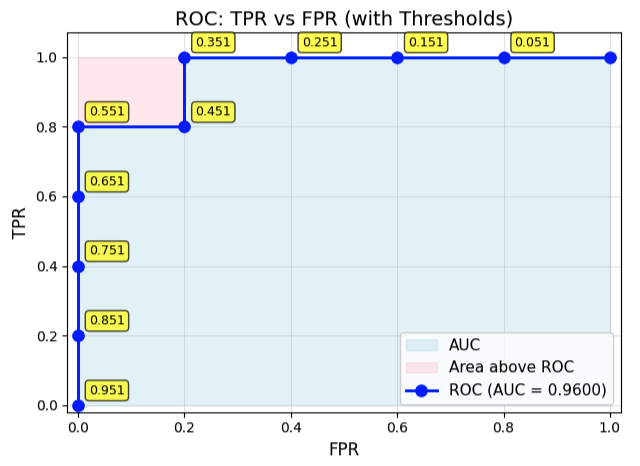}
		\caption{ROC: TPR vs.\ FPR --- one adjacent swap.}
		\label{fig:roc_curve_one_swap}
	\end{figure}

	\begin{figure}[htbp]
		\centering
		\includegraphics[width=0.7\textwidth]{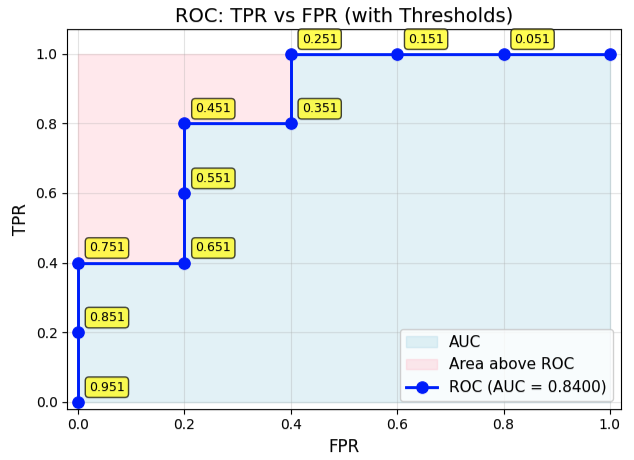}
		\caption{ROC curve showing TPR vs.\ FPR --- four adjacent swaps.}
		\label{fig:roc_curve_four_swaps}
	\end{figure}

\clearpage

\section*{Acknowledgments}
A.C.  is deeply grateful to M. Douglas,  A. Hayt, C .Simpson,  P.A. Melies,  F. Charton, Y.  Fregier, S. Nechaev, V. Rubtsov, G. Williamson,
J. Ellenberg, for stimulating discussions, interest and encouragement,   without whom the project may not exist in the present form. 



A.C. is grateful to J. Mitchel for involving  into the Kaggle Santa 2023 challenge, from which this project originated,
to M.Kontsevich, N.Nekrasov, T.Rokicki,  M.Shapiro, A.Mironov, A.Gorski, V. Fock, V. Gorbunov, V.Golyshev,  Y.Soibelman,  S.Gukov, 
T. Smirnova-Nagnibeda,  D.Osin, V. Kleptsyn, G.Olshanskii, 
 A. Sutherland,   I. Vlassopoulos,  A.Zinovyev, M. Alekseyev,  S.Klevtsov, A. Mellit, V.Dotsenko, L.Rybnikov, D.Grinberg 
for the discussions, interest and comments,
to his wife A.Chervova and daugther K.Chervova for support, understanding and help with computational experiments.

We are deeply grateful to  many colleagues who have contributed to the CayleyPy project at various stages of its development, including: D. Kamenetsky, S.Shakirov,  N.Bukhal, J.Naghiev,  K.Khoruzhii, A.Romanov, A. Naumov, A.Sychev,  A.Lenin, E.Uryvanov,  A. Abramov, M.Urakov, A.Kuchin,  B.Bulatov,  F.Faizullin,  U.Kniaziuk, D.Naumov,  S.Botman, A.Kostin,
R.Vinogradov,   N.Narynbaev, A.Korolkova, N. Rokotyan, S.Kovalev, A.Eliseev, 
A.Ogurtsov, G.Antiufeev, G.Verbii,   A.Rozanov, V.Nelin, S.Ermilov,
 A. Trepetsky, A. Dolgorukova,  N. Narynbaev, S. Nikolenko,  R. Turtayev,
K.Yakovlev, V.Shitov, E.Durymanov,  R.Magdiev, M.Krinitskiy, P.Snopov, M. Evseev ,  A.Aparnev, A.Titarenko,
 M. Litvinov, N. Vilkin-Krom, A. Bidzhiev, A. Krasnyi,  E. Geraseva, E. Koldunov, S. Diner,  E. Kudasheva,
 A. Kravchenko,  V. Zamkovoy, D. Kovalenko, O. Papulov,  D. Mamayeva, M.Kazemina, et. al.
 


The work of F. Levkovich-Maslyuk was supported by the STFC grant APP69281.

\onecolumn
\printbibliography

\end{document}